\definecolor{green}{rgb}{0.0, 0.42, 0.24} %
\definecolor{orange}{rgb}{0.8, 0.33, 0.} %
\definecolor{blue}{rgb}{0.16, 0.32, 0.75} %
\definecolor{cobalt}{rgb}{0.0, 0.28, 0.67} %
\definecolor{egred}{rgb}{1.0, 0.25, 0.25}
\definecolor{codegreen}{rgb}{0,0.6,0}
\definecolor{codegray}{rgb}{0.5,0.5,0.5}
\definecolor{codepurple}{rgb}{0.58,0,0.82}
\definecolor{backcolour}{rgb}{0.95,0.95,0.92}
\definecolor{highlight}{rgb}{0.16, 0.32, 0.75}
\def\eqref#1{equation~(\ref{#1})}
\def\Eqref#1{Equation~(\ref{#1})}
\def\plaineqref#1{\ref{#1}}
\def\1{\bm{1}}
\DeclareMathAlphabet{\mathsfit}{\encodingdefault}{\sfdefault}{m}{sl}
\SetMathAlphabet{\mathsfit}{bold}{\encodingdefault}{\sfdefault}{bx}{n}
\DeclareMathOperator*{\argmax}{arg\,max}
\newtheorem*{rep@theorem}{\rep@title}
\newcommand{\newreptheorem}[2]{%
\newenvironment{rep#1}[1]{%
 \def\rep@title{#2 \ref{##1}}%
 \begin{rep@theorem}}%
 {\end{rep@theorem}}}
\newtheorem{theorem}{Theorem}
\newtheorem{proposition}{Proposition}
\newtheorem{corollary}{Corollary}
\newtheorem{lemma}{Lemma}
\newtheorem{assumption}{Assumption}
\newtheorem{definition}{Definition}
\newcommand\myeq{\mathrel{\overset{\makebox[0pt]{\mbox{\normalfont\tiny\sffamily def}}}{=}}}
\newcommand{\br}{\texttt{BR}}
\newcommand{\abr}{\texttt{aBR}}
\title{Approximating Nash Equilibria in Normal-Form Games via Stochastic Optimization}
\author{%
  Ian Gemp \\
  DeepMind \\
  London, UK \\
  \texttt{imgemp@google.com} \\
   \And
   Luke Marris \\
   DeepMind \\
   London, UK \\
   \texttt{marris@google.com} \\
   \And
   Georgios Piliouras \\
   DeepMind \\
   London, UK \\
   \texttt{gpil@google.com} \\
}
\begin{document}

\maketitle

\begin{abstract}
We propose the first loss function for approximate Nash equilibria of normal-form games that is amenable to unbiased Monte Carlo estimation. This construction allows us to deploy standard non-convex stochastic optimization techniques for approximating Nash equilibria, resulting in novel algorithms  with provable guarantees. We complement our theoretical analysis with experiments demonstrating that stochastic gradient descent can outperform previous state-of-the-art approaches.
\end{abstract}

\section{Introduction}

Nash equilibrium (NE) famously encodes stable behavioral outcomes in multi-agent systems and is arguably the most influential solution concept in game theory. %
Formally speaking, if $n$ players independently choose $n$, possibly mixed, strategies ($x_k$ for $k \in [n]$) and their joint strategy ($\boldsymbol{x} = \prod_k x_k$) constitutes a \emph{Nash equilibrium}, then no player has any incentive to unilaterally deviate from their strategy. This concept has sparked extensive research in various fields, ranging from economics~\citep{milgrom1982theory} to machine learning~\citep{goodfellow2014generative}, and has even inspired behavioral theory  generalizations such as quantal response equilibria (QREs) which allow for more realistic models of boundedly rational agents~\citep{mckelvey1995quantal}.
 
Unfortunately, when considering Nash equilibria beyond the $2$-player, zero-sum scenario, two significant challenges arise. First, it becomes unclear how $n$ independent players would collectively identify a Nash equilibrium when multiple equilibria are possible, giving rise to the \emph{equilibrium selection} problem~\citep{harsanyi1988general}. Secondly, even approximating a single Nash equilibrium is known to be computationally intractable and specifically  PPAD-complete~\citep{daskalakis2009complexity}. Combining both problems together, e.g., testing for the existence of equilibria with welfare greater than some fixed threshold is NP-hard, and it is in fact even hard to approximate~\citep{austrin2011inapproximability}.

From a machine learning practitioner's perspective, such computational complexity results hardly give pause for thought as collectively we have become all too familiar with the unreasonable effectiveness of heuristics in circumventing such obstacles. Famously, non-convex optimization is NP-hard, even if the goal is to compute a local minimizer~\citep{murty1985some}, however, stochastic gradient descent (SGD) and variants succeed in training billion parameter models~\citep{brown2020language}. %

Unfortunately, computational techniques for Nash equilibrium have so far not achieved anywhere near the same level of success. In contrast, most modern NE solvers for $n$-player, $m$-action, general-sum, normal-form games (NFGs) are practically restricted to a handful of %
players and/or  actions per player except in special cases, e.g., symmetric~\citep{wiedenbeck2023data} or mean-field games~\citep{perolat2022scaling}.
For example, when running the suite of all $7$ applicable methods from the hallmark \texttt{gambit} library~\citep{mckelvey2014gambit} on a 4-player Blotto game, we find only brute-force pure-NE enumeration is able to return any NE within a $1$ hour time limit. Scaling solvers to large games is difficult partially due to the fact that an NFG is represented by a tensor with an exponential $nm^n$ entries; even \textit{reading} this description into memory can be computationally prohibitive.  
More to the point, any computational technique that presumes \textit{exact} computation of the \textit{expectation} of this tensor %
sampled according to $\boldsymbol{x}$ similarly does not have any hope of scaling beyond small 
instances.

This inefficiency arguably lies at the core of the differential success between machine learning (ML) optimization and equilibrium computation. For example, numerous techniques exist that reduce the problem of Nash computation to
the minimization of the expectation of a random variable (Section~\ref{sec:related_work}). Unfortunately, unlike the source of randomness in ML applications where batch learning suffices to produce unbiased estimators, these techniques do not extend easily to game theory which incorporates non-linear functions such as maximum and best-response. This raises our motivating goal:
\[
\textbf{Can we solve for Nash equilibria via unbiased stochastic optimization?}
\]

\textbf{Our results.} Following in the successful steps of the interplay between ML and stochastic optimization, we reformulate the approximation of Nash equilibria in a normal-form game as a stochastic non-convex optimization problem admitting unbiased Monte-Carlo estimation. This enables the use of powerful solvers and advances in parallel computing to efficiently enumerate Nash equilibria for $n$-player, general-sum games. Furthermore, this re-casting allows practitioners to incorporate other desirable objectives into the problem such as ``find an approximate Nash equilibrium with welfare above $\omega$'' or ``find an approximate Nash equilibrium nearest the current observed joint strategy''
resolving the equilibrium selection problem in an effectively ad-hoc and application tailored manner.
 Concretely, we make the following contributions by producing:
\begin{itemize}[leftmargin=*]
    \item A loss $\mathcal{L}^{\tau}(\boldsymbol{x})$ 1) whose global minima well approximate Nash equilibria in normal form games, 2) admits unbiased Monte-Carlo estimation, and 3) is Lipschitz and bounded.
    \item Efficient randomized algorithms for approximating Nash equilibria in a novel class of games. The algorithms emerge by employing the family of $\mathcal{X}$-armed bandit approaches to $\mathcal{L}^{\tau}(\boldsymbol{x})$ and connecting their global stochastic optimization guarantees to global approximate Nash guarantees. %
    \item An empirical comparison of SGD against state-of-the-art baselines for approximating NEs in large games. In some games, vanilla SGD actually improves upon previous state-of-the-art; in others, SGD is slowed by saddle points, a familiar challenge in deep learning~\citep{dauphin2014identifying}.
\end{itemize}

Overall, this perspective showcases a promising new route to approximating equilibria at scale in practice. We conclude the paper with discussion for future work.

\section{Preliminaries}\label{sec:prelims}
In an $n$-player, normal-form game, each player $k \in \{1,\ldots,n\} = [n]$ has a strategy set $\mathcal{A}_k = \{a_{k1}, \ldots, a_{km_k}\}$ consisting of $m_k$ pure strategies. These strategies can be naturally indexed, so we redefine $\mathcal{A}_k = [m_k]$ as an abuse of notation.
Each player $k$ also has a utility function, $u_k: \mathcal{A} = \prod_k \mathcal{A}_k \rightarrow [0, 1]$, (equiv. ``payoff tensor'') that maps joint actions to payoffs in the unit-interval.
We denote the average cardinality of the players' action sets by $\bar{m} = \frac{1}{n} \sum_k m_k$ and maximum by $m^* = \max_k m_k$. Player $k$ may play a mixed strategy by sampling from a distribution over their pure strategies. Let player $k$'s mixed strategy be represented by a vector $x_k \in \Delta^{m_k-1}$ where $\Delta^{m_k-1}$ is the $(m_k-1)$-dimensional probability simplex embedded in $\mathbb{R}^{m_k}$. Each function $u_k$ is then extended to this domain so that $u_k(\boldsymbol{x}) = \sum_{\boldsymbol{a} \in \mathcal{A}} u_k(\boldsymbol{a}) \prod_{j} x_{ja_j}$ where $\boldsymbol{x} = (x_1, \ldots, x_n)$ and $a_j \in \mathcal{A}_j$ denotes player $j$'s component of the joint action $\boldsymbol{a} \in \mathcal{A}$. For convenience, let $x_{-k}$ denote all components of $\boldsymbol{x}$ belonging to players other than player $k$.

The joint strategy $\boldsymbol{x} \in \prod_k \Delta^{m_k-1}$ is a Nash equilibrium if and only if, for all $k \in [n]$, $u_k(z_k, x_{-k}) \le u_k(\boldsymbol{x})$ for all $z_k \in \Delta^{m_k-1}$, i.e., no player has any incentive to unilaterally deviate from $\boldsymbol{x}$. Nash is typically relaxed with $\epsilon$-Nash, our focus: $u_k(z_k, x_{-k}) \le u_k(\boldsymbol{x}) + \epsilon$ for all $z_k \in \Delta^{m_k-1}$.

As an abuse of notation, let the atomic action $a_{k} = e_k$ also denote the $m_k$-dimensional ``one-hot" vector with all zeros aside from a $1$ at index $a_{k}$; its use should be clear from the context. We also introduce $\nabla^k_{x_k}$ as player $k$'s utility gradient. And for convenience, denote by $H^k_{kl} = \mathbb{E}_{x_{-kl}}[u_k(a_k, a_l, x_{-kl})]$ the bimatrix game approximation~\citep{janovskaja1968_polymatrix} between players $k$ and $l$ with all other players marginalized out; $x_{-kl}$ denotes all strategies belonging to players other than $k$ and $l$ and $u_k(a_k, a_l, x_{-kl})$ separates out $l$'s strategy $x_l$ from the rest of the players $x_{-k}$.
Similarly, denote by $T^k_{klq} = \mathbb{E}_{x_{-klq}}[u_k(a_k, a_l, a_q, x_{-klq})]$ the $3$-player tensor approximation to the game. Note player $k$'s utility can now be written succinctly as $u_k(x_k, x_{-k}) = x_k^\top \nabla^k_{x_k} = x_k^\top H^k_{kl} x_l = T^k_{klq} x_k x_l x_q$ for any $l, q$ where we use Einstein notation for tensor arithmetic.
\begin{table}[ht!]
    \centering
    \begin{tabular}{l|c|c}
        Loss & Function & Obstacle \\ \hline
        Exploitabilty ($\epsilon$) & $\max_k \epsilon_k(\boldsymbol{x})$ & $\max$ of r.v. \\
        Nikaido-Isoda (NI) & $\sum_k \epsilon_k(\boldsymbol{x})$ & $\max$ of r.v. \\
        Fully-Diff. Exp & $\sum_k \sum_{a_k \in \mathcal{A}_k} [\max(0, u_k(a_k, x_{-k}) - u_k(\boldsymbol{x}))]^2$ & $\max$ of r.v. \\
        Gradient-based NI & NI w/ $\br_k \leftarrow \abr_k = \Pi_{\Delta} \Big(x_k + \eta \nabla_{x_k} u_k(\boldsymbol{x})\Big)$ & $\Pi_{\Delta}$ of r.v. \\
        Unconstrained & Loss + Simplex Deviation Penalty & sampling from $x_k \in \mathbb{R}^{m_k}$
    \end{tabular}
    \caption{Previous loss functions for NFGs and their obstacles to unbiased estimation. Note that $\epsilon_k(\boldsymbol{x}) = \max_z u_k(z, x_{-k}) - u_k(\boldsymbol{x})$ contains a max operator (see equivalent definition in~\eqref{eqn:eps_def}).}
    \label{tab:related_work}
\end{table}
For convenience, define $\texttt{diag}(z)$ as the function that places a vector $z$ on the diagonal of a square matrix, and $\texttt{diag3}: z \in \mathbb{R}^d \rightarrow \mathbb{R}^{d \times d \times d}$ as a 3-tensor of shape $(d, d, d)$ where $\texttt{diag3}(z)_{lll} = z_l$. Following convention from differential geometry, let $T_v \mathcal{M}$ be the tangent space of a manifold $\mathcal{M}$ at $v$. For the interior of the $d$-action simplex $\Delta^{d-1}$, the tangent space is the same at every point, so we drop the $v$ subscript, i.e., $T\Delta^{d-1}$. We denote the projection of a vector $z \in \mathbb{R}^d$ onto this tangent space as $\Pi_{T\Delta^{d-1}}(z) = [I - \frac{1}{d} \mathbf{1} \mathbf{1}^\top] z$ and call $\Pi_{T\Delta^{d-1}}(\nabla^k_{x_k})$ a \emph{projected-gradient}. We drop $d-1$ when the dimensionality is clear from the context.
Finally, let $\mathcal{U}(S)$ denote a discrete uniform distribution over elements from set $S$.

\section{Related Work}\label{sec:related_work}

Representing the problem of computing a Nash equilibrium as an optimization problem is not new. A variety of loss functions and pseudo-distance functions have been proposed. Most of them measure some function of how much each player can exploit the joint strategy by unilaterally deviating:
\begin{align}
    \epsilon_k(\boldsymbol{x}) \myeq u_k(\br_k, x_{-k}) - u_k(\boldsymbol{x}) \text{ where } \br_k \in \argmax_z u_k(z, x_{-k}). \label{eqn:eps_def}
\end{align}

As argued in the introduction, we believe it is important to be able to subsample payoff tensors of normal-form games in order to scale to large instances. As Nash equilibria can consist of mixed strategies, it is advantageous to be able to sample from an equilibrium to estimate its exploitability $\epsilon$. However none of these losses is amenable to unbiased estimation under sampled play. Each of the functions currently explored in the literature is biased under sampled play either because 1) a random variable appears as the argument of a complex, nonlinear (non-polynomial) function or because 2) how to sample play is unclear. Exploitability, Nikaido-Isoda (NI)~\citep{nikaido1955note} (also known by \texttt{NashConv}~\citep{lanctot2017unified} and ADI~\citep{gemp2022sample}), as well as fully-differentiable options~\citep[p. 106, Eqn 4.31]{shoham2008multiagent} introduce bias when a $\max$ over payoffs is estimated using samples from $\boldsymbol{x}$. Gradient-based NI~\citep{raghunathan2019game} requires projecting the result of a gradient-ascent step onto the simplex; for the same reason as the $\max$, this projection is prohibitive because it is a nonlinear operation which introduces bias. Lastly, unconstrained optimization approaches~\citep[p. 106]{shoham2008multiagent} that instead penalize deviation from the simplex lose the ability to sample from strategies when each iterate $\boldsymbol{x}$ is no longer a distribution (i.e., $x_k \not\in \Delta^{m_k-1}$). Table~\ref{tab:related_work} summarizes these complications.

\section{Nash Equilibrium as Stochastic Optimization}

We will now develop our proposed loss function which is amenable to unbiased estimation. Subsections~\ref{subsec:stationarity}-\ref{subsec:unbiased_est} provide a warm-up in which we assume an interior (fully-mixed) Nash equilibrium exists. Subsection~\ref{subsec:interior_eq} then shows how to relax that assumption allowing us to approximate partially mixed equilibria as well (including pure equilibria). Our key technical insight is to pay special attention to the geometry of the simplex. To our knowledge, prior works have failed to recognize the role of the tangent space $T\Delta$. Proofs are in the appendix.

\subsection{Stationarity on the Simplex Interior}
\label{subsec:stationarity}

\begin{replemma}{lemma:zero_exp_implies_zero_proj_grad_norm}
Assuming player $k$'s utility, $u_k(x_k, x_{-k})$, is concave in its own strategy $x_k$, a strategy in the interior of the simplex is a best response $\br_k$ if and only if it has zero projected-gradient\footnote{Not to be confused with the nonlinear (biased) projected gradient operator in~\citep{hazan2017efficient}.} norm.

\end{replemma}

In NFGs, each player's utility is linear in $x_k$, thereby satisfying the concavity condition of Lemma~\ref{lemma:zero_exp_implies_zero_proj_grad_norm}.%

\subsection{Projected-Gradient Norm as a Loss}
\label{subsec:proj_grad_as_loss}

An equivalent description of a Nash equilibrium is a joint strategy $\boldsymbol{x}$ where every player's strategy is a best response to the equilibrium (i.e., $x_k = \br_k$ so that $\epsilon_k(\boldsymbol{x})=0$). Lemma~\ref{lemma:zero_exp_implies_zero_proj_grad_norm} states that any interior best response has zero \emph{projected-gradient} norm, which inspires the following loss function
\begin{align}
\mathcal{L}(\boldsymbol{x}) = \sum_k \eta_k ||\Pi_{T\Delta}(\nabla^k_{x_k})||^2  \label{eqn:loss_notau}
\end{align}
where each $\eta_k > 0$ represents a scalar weight, or equivalently, a step size to be explained next.

\begin{repproposition}{prop:loss_to_approx_nashconv}
The loss $\mathcal{L}$ is equivalent to \texttt{NashConv}, but where player $k$'s best response is approximated by a single step of projected-gradient ascent with step size $\eta_k$: $\abr_k = x_k + \eta_k \Pi_{T\Delta}(\nabla^k_{x_k})$.
\end{repproposition}

This connection was already pointed out in prior work for unconstrained problems~\citep{gemp2022sample,raghunathan2019game}, but this result is the first for strategies constrained to the simplex.

\subsection{Connection to True Exploitability}
\label{subsec:loss_vs_exp}

In general, we can bound exploitability in terms of the projected-gradient norm as long as each player's utility is concave (this result extends to subgradients of non-smooth functions).

\begin{replemma}{lemma:exp_to_grad_norm}
The amount a player can gain by exploiting a joint strategy $\boldsymbol{x}$ is upper bounded by a quantity proportional to the norm of the projected-gradient:
\begin{align}
    \epsilon_k(\boldsymbol{x}) &\le \sqrt{2} ||\Pi_{T\Delta}(\nabla^{k}_{x_k})||.
\end{align}
\end{replemma}

This bound is not tight on the boundary of the simplex, which can be seen clearly by considering $x_k$ to be part of a pure strategy equilibrium. In that case, this analysis assumes $x_k$ can be improved upon by a projected-gradient ascent step (via the equivalence pointed out in Proposition~\ref{prop:loss_to_approx_nashconv}). However, that is false because the probability of a pure strategy cannot be increased beyond $1$. We mention this to provide further intuition for why our ``warm-up'' loss $\mathcal{L}(\boldsymbol{x})$ is only valid for interior equilibria.

Note that $||\Pi_{T\Delta}(\nabla^{k}_{x_k})|| \le ||\nabla^{k}_{x_k}||$ because $\Pi_{T\Delta}$ is a projection. Therefore, this improves the naive bounds on exploitability and distance to best responses given using the ``raw'' gradient $\nabla^{k}_{x_k}$.

\begin{replemma}{lemma:loss_to_eps}
The exploitability of a joint strategy $\boldsymbol{x}$, is upper bounded by a function of $\mathcal{L}(\boldsymbol{x})$:
\begin{align}
    \epsilon &\le \sqrt{\frac{2n}{\min_k \eta_k}} \sqrt{ \mathcal{L}(\boldsymbol{x}) } \myeq f(\mathcal{L}).
\end{align}
\end{replemma}

\subsection{Unbiased Estimation}
\label{subsec:unbiased_est}

As discussed in Section~\ref{sec:related_work}, a primary obstacle to unbiased estimation of $\mathcal{L}(\boldsymbol{x})$ is the presence of complex, nonlinear functions of random variables, with the projection of a point onto the simplex being one such example (see $\Pi_{\Delta}$ in Table~\ref{tab:related_work}). However, $\Pi_{T\Delta}$, the projection onto the \emph{tangent space of the simplex}, is linear! This is the insight that allows us to design an unbiased estimator (Lemma~\ref{lemma:unbiased_estimation}).

Our proposed loss requires computing the squared norm of the \emph{expected value} of the projected-gradient under the players' mixed strategies, i.e., the $l$-th entry of player $k$'s gradient equals $\nabla^{k}_{x_{kl}} = \mathbb{E}_{a_{-k} \sim x_{-k}} u_k(a_{kl}, a_{-k})$. By analogy, consider a random variable $Y$.
In general, $\mathbb{E}[Y]^2 \ne \mathbb{E}[Y^2]$. This means that we cannot just sample projected-gradients and then compute their average norm to estimate our loss. However, consider taking two independent samples from two corresponding identically distributed, independent random variables $Y^{(1)}$ and $Y^{(2)}$.
\begin{table}[ht!]
    \centering
    \begin{tabular}{r|l|l|l}
        & Exact & Sample Others & Sample All \\ \hline
        Estimator of $\nabla^{k(p)}_{x_k}$ & $[u_k(a_{kl}, x_{-k})]_l$ & $[u_k(a_{kl}, a_{-k} \sim x_{-k})]_l$ & $m_k u_k(a_{kl} \sim \mathcal{U}(\mathcal{A}_k), a_{-k} \sim x_{-k}) e_l$ \\ %
        $\hat{\nabla}^{k(p)}_{x_k}$ Bounds & $[0, 1]$ & $[0, 1]$ & $[0, m_k]$ \\ %
        $\hat{\nabla}^{k(p)}_{x_k}$ Query Cost & $\prod_{k=1}^n m_k$ & $m_k$ & $1$ \\ %
        $\hat{\mathcal{L}}$ Bounds & $\pm \sfrac{1}{4} \sum_k \eta_k m_k$ & $\pm \sfrac{1}{4} \sum_k \eta_k m_k$ & $\pm \sfrac{1}{4} \sum_k \eta_k m_k^3$ \\ %
        $\hat{\mathcal{L}}$ Query Cost & $n \prod_{k=1}^n m_k$ & $2n\bar{m}$ & $2n$
    \end{tabular}
    \caption{Examples and Properties of Unbiased Estimators of Loss and Player Gradients ($\hat{\nabla}^{k(p)}_{x_k}$).}
    \label{tab:estimators}
\end{table}
Then $\mathbb{E}[Y^{(1)}]^2 = \mathbb{E}[Y^{(1)}] \mathbb{E}[Y^{(2)}] = \mathbb{E}[Y^{(1)} Y^{(2)}]$ by properties of expected value over products of independent random variables. This is a common technique to construct unbiased estimates of expectations over polynomial functions of random variables. Proceeding in this way, define $\nabla^{k(1)}_{x_k}$ as a random, unbiased gradient estimate (see Table~\ref{tab:estimators}). Let $\nabla^{k(2)}_{x_k}$ be independent and distributed identically to $\nabla^{k(1)}_{x_k}$. Then Lemma~\ref{lemma:unbiased_estimation} shows
\begin{align}
    \mathcal{L}(\boldsymbol{x}) &= \mathbb{E}[\sum_k \eta_k (\underbrace{\hat{\nabla}^{k(1)}_{x_k} - \frac{\mathbf{1}}{m_k} (\mathbf{1}^\top \hat{\nabla}^{k(1)}_{x_k}) \mathbf{1}}_{\text{projected-gradient 1}})^\top (\underbrace{\hat{\nabla}^{k(2)}_{x_k} - \frac{\mathbf{1}}{m_k} (\mathbf{1}^\top \hat{\nabla}^{k(2)}_{x_k}) \mathbf{1}}_{\text{projected-gradient 2}})]
\end{align}
where $\hat{\nabla}^{k(p)}_{x_k}$ is an unbiased estimator of player $k$'s gradient. This estimator can be constructed in several ways. The most expensive, an exact estimator, is constructed by marginalizing player $k$'s payoff tensor over all other players' strategies. However, a cheaper estimate can be obtained at the expense of higher variance by approximating this marginalization with a Monte Carlo (MC) estimate of the expectation. Specifically, if we sample a single action for each of the remaining players, we can construct an unbiased estimate of player $k$'s gradient by considering the payoff of each of its actions against the sampled background strategy. Lastly, we can consider constructing an estimate of player $k$'s gradient by sampling only a single action from player $k$ to represent their entire gradient. Each of these approaches is outlined in Table~\ref{tab:estimators} along with the query complexity~\citep{babichenko2016query} of computing the estimator and bounds on the values it can take (Lemma~\ref{lemma:var_bnd}).

We can extend Lemma~\ref{lemma:loss_to_eps} to one that holds under $T$ samples with probability $1 - \delta$ by applying, for example, a Hoeffding bound:
$\epsilon \le f\big( \hat{\mathcal{L}}(\boldsymbol{x}) + \mathcal{O}(\sqrt{\frac{1}{T} \ln(1 / \delta}) \big)$ where $\hat{\mathcal{L}}$ is an MC estimate of $\mathcal{L}$.

\subsection{Interior Equilibria}
\label{subsec:interior_eq}

We discussed earlier that $\mathcal{L}(\boldsymbol{x})$ captures interior equilibria.  But some games may only have \emph{partially mixed} equilibria, i.e., equilibria that lie on the boundary of the simplex. We show how to circumvent this shortcoming by considering quantal response equilibria (QREs), specifically, logit equilibria. By adding an entropy bonus to each player's utility, we can
\begin{itemize}
    \item guarantee \textbf{all} equilibria are interior,
    \item still obtain unbiased estimates of our loss,
    \item maintain an upper bound on the exploitability $\epsilon$ of any approximate Nash equilibrium in the original game (i.e., the game without an entropy bonus).
\end{itemize}

Define $u^{\tau}_k(\boldsymbol{x}) = u_k(\boldsymbol{x}) + \tau S(x_k)$ where Shannon entropy $S(x_k) = - \sum_l x_{kl} \ln(x_{kl})$
is $1$-strongly concave with respect to the $1$-norm~\citep{beck2003mirror}.
It is known that Nash equilibria of entropy-regularized games satisfy the conditions for logit equilibria~\citep{leonardos2021exploration}, which are solutions to the fixed point equation
$x_k = \texttt{softmax}(\frac{1}{\tau} \nabla^k_{x_k})$.
The \texttt{softmax} should make it clear to the reader that all probabilities have positive mass at positive temperature.

\begin{figure}[ht!]
    \centering
    \includegraphics[width=\textwidth]{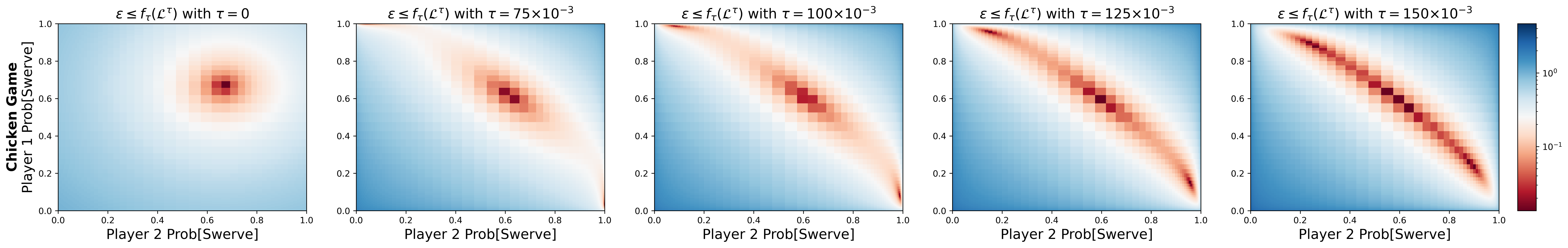}
    \includegraphics[width=\textwidth]{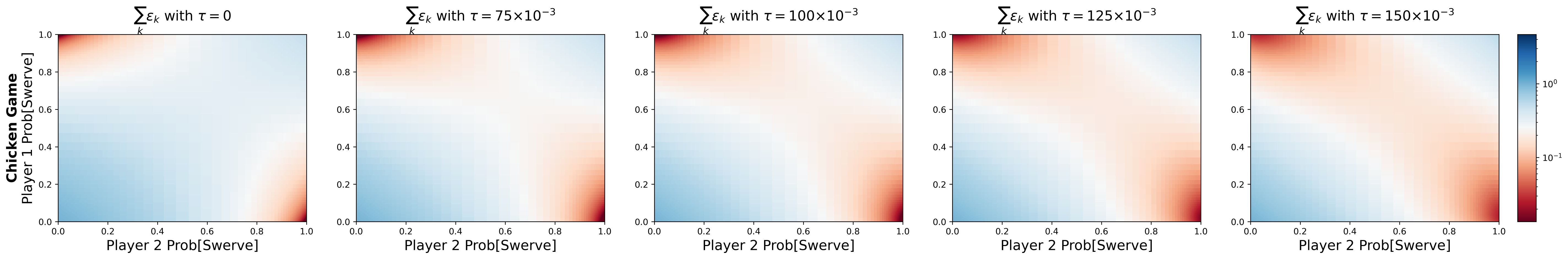}
    \caption{Effect of Sampled Play on a Biased Loss. The first row displays the expectation of the upper bound guaranteed by our proposed loss $\mathcal{L}^{\tau}$ with $\eta_k=1$ for all $k$. The second row displays the expectation of NashConv under sampled play, i.e., $\sum_k \epsilon_k$ where $\epsilon_k = \mathbb{E}_{a_{-k} \sim x_{-k}}[\max_{a_k} u_k^{\tau}(\boldsymbol{a})] - \mathbb{E}_{\boldsymbol{a} \sim \boldsymbol{x}}[u_k^{\tau}(\boldsymbol{a})]$.
    To be consistent, we subtract the offset $\tau \log(m^2)$ from $f_{\tau}(\mathcal{L}^{\tau})$ per Lemma~\ref{lemma:qre_to_ne}, which relates the exploitability at positive temperature to that at zero temperature. 
    The resulting loss surface clearly shows NashConv fails to recognize any interior Nash equilibrium due to its inherent bias.}
    \label{fig:bias_is_bad}
\end{figure}

Recall that in order to construct an unbiased estimate of our loss, we simply needed to construct unbiased estimates of player gradients. The introduction of the entropy term to player $k$'s utility is special in that it depends entirely on known quantities, i.e., the player's own mixed strategy. We can directly and deterministically compute $\tau \frac{dS}{dx_k} = -\tau (\ln(x_k) + \mathbf{1})$ and add this to our estimator of $\nabla^{k(p)}_{x_k}$: $\hat{\nabla}^{k\textcolor{blue}{\tau}(p)}_{x_k} = \hat{\nabla}^{k(p)}_{x_k} + \textcolor{blue}{\tau} \frac{dS}{dx_k}$. Consider our loss function refined from~(\plaineqref{eqn:loss_notau}) with changes in \textcolor{blue}{blue}:

\begin{align}
    \mathcal{L}^{\textcolor{blue}{\tau}}(\boldsymbol{x}) &= \sum_k \eta_k ||\Pi_{T\Delta}(\nabla^{k\textcolor{blue}{\tau}}_{x_k})||^2.\label{def:loss_qre}
\end{align}

As mentioned above, the utilities with entropy bonuses are still concave, therefore, a similar bound to Lemma~\ref{lemma:exp_to_grad_norm} applies. We use this to prove the QRE counterpart to Lemma~\ref{lemma:loss_to_eps} where $\epsilon_{QRE}$ is the exploitability of an approximate equilibrium in a game with entropy bonuses.

\begin{replemma}{lemma:loss_to_qre_eps}
The entropy regularized exploitability, $\epsilon_{QRE}$, of a joint strategy $\boldsymbol{x}$, is upper bounded as:%
\begin{align}
    \epsilon_{QRE} &\le \sqrt{\frac{2n}{\min_k \eta_k}} \sqrt{ \mathcal{L}^{\tau}(\boldsymbol{x}) } \myeq f(\mathcal{L}^{\tau}).
\end{align}
\end{replemma}

Lastly, we establish a connection between quantal response equilibria and Nash equilibria that allows us to approximate Nash equilibria in the original game via minimizing our modified loss $\mathcal{L}^{\tau}(\boldsymbol{x})$.

\begin{replemma}{lemma:qre_to_ne}[$\mathcal{L}^{\tau}$ Scores Nash Equilibria]
Let $\mathcal{L}^{\tau}(\boldsymbol{x})$ be our proposed entropy regularized loss function and $\boldsymbol{x}$ be an approximate QRE. Then it holds that
\begin{align}
    \epsilon &\le \tau \log\Big(\prod_k m_k\Big) + \sqrt{\frac{2n}{\min_k \eta_k}} \sqrt{\mathcal{L}^{\tau}(\boldsymbol{x}}) \myeq f_{\tau}(\mathcal{L}^{\tau}).
\end{align}
\end{replemma}

This upper bound is plotted as a heatmap for a familiar Chicken game in the top row of Figure~\ref{fig:bias_is_bad}. First, notice how pure equilibria are not visible as minima for zero temperature, but appear for slightly warmer temperatures. 
Secondly, notice that NashConv in the bottom row is unable to capture the interior Nash equilibrium because of its high bias under sampled play. In contrast, our proposed loss $\mathcal{L}^{\tau}$ is guaranteed to capture all equilibria at low temperature $\tau$.

\section{Analysis}\label{sec:analysis}

In the preceding section we established a loss function that upper bounds the exploitability of an approximate equilibrium. In addition, the zeros of this loss function have a one-to-one correspondence with quantal response equilibria (which approximate Nash equilibria at low temperature).

Here, we derive properties that suggest it is ``easy'' to optimize. While this function is generally non-convex
and may suffer from a proliferation of saddle points (Figure~\ref{fig:saddlepoints}), it is Lipschitz continuous (over the relevant subset of the interior) and bounded. These are two commonly made assumptions in the literature on non-convex optimization, which we leverage in Section~\ref{sec:convergence}. In addition, we can derive its gradient, its Hessian, and characterize its behavior around global minima.

\begin{figure}[ht!]
    \centering
    \includegraphics[width=1.0\textwidth]{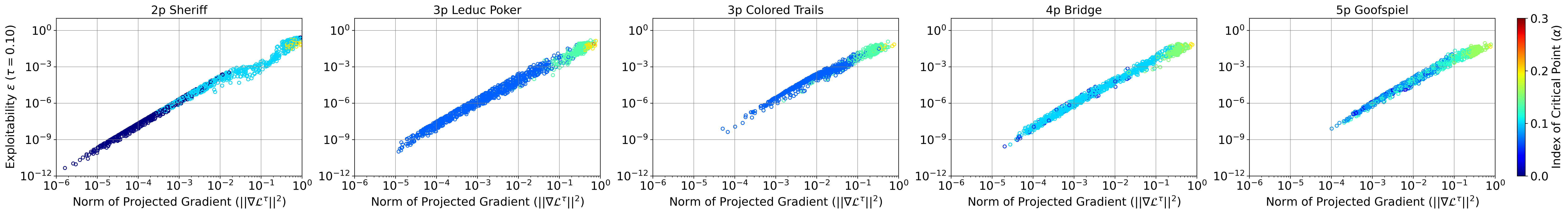}
    \caption{Analysis of Loss Landscape. We reapply the analysis of~\citep{dauphin2014identifying}, originally designed to understand the success of SGD in deep learning, to ``slices'' of several popular extensive form games. To construct a slice (or \emph{meta-game}), we randomly sample $6$ deterministic policies and then consider the corresponding $n$-player, $6$-action normal-form game at $\tau=0.1$ (with payoffs normalized to $[0, 1]$). The index of a critical point $\boldsymbol{x}_c$ ($\nabla_{\boldsymbol{x}} \mathcal{L}^{\tau}(\boldsymbol{x}_c) = \mathbf{0}$) indicates the fraction of negative eigenvalues in the Hessian of $\mathcal{L}^{\tau}$ at $\boldsymbol{x}_c$; $\alpha=0$ indicates a local minimum, $1$ a maximum, else a saddle point. We see a positive correlation between exploitability ($y$-axis), \emph{projected}-gradient norm ($x$-axis), and $\alpha$ (color) indicating a lower prevalence of local minima at high exploitability.}
    \label{fig:saddlepoints}
\end{figure}

\begin{replemma}{lemma:loss_grad}
The gradient of $\mathcal{L}^{\tau}(\boldsymbol{x})$ with respect to player $l$'s strategy $x_l$ is
\begin{align}
    \nabla_{x_l} \mathcal{L}^{\tau}(\boldsymbol{x}) &= 2 \sum_{k} \eta_k B_{kl}^\top \Pi_{T\Delta}(\nabla^{k\tau}_{x_k})
\end{align}
where $B_{ll} = -\tau [I - \frac{1}{m_l} \mathbf{1} \mathbf{1}^\top] \texttt{diag}\big(\frac{1}{x_l}\big)$ and $B_{kl} = [I - \frac{1}{m_k} \mathbf{1} \mathbf{1}^\top] H^{k}_{kl}$ for $k \ne l$.
\end{replemma}

\begin{replemma}{lemma:hess}
The Hessian of $\mathcal{L}^{\tau}(\boldsymbol{x})$ can be written
\begin{align}
    \texttt{Hess}(\mathcal{L}^{\tau}) &= 2 \big[ \tilde{B}^\top \tilde{B} + T \Pi_{T\Delta}(\tilde{\nabla}^{\tau}) \big]
\end{align}
where $\tilde{B}_{kl} = \sqrt{\eta_k} B_{kl}$, $\Pi_{T\Delta}(\tilde{\nabla}^{\tau}) = [\eta_1 \Pi_{T\Delta}(\nabla^{1\tau}_{x_1}), \ldots, \eta_n \Pi_{T\Delta}(\nabla^{n\tau}_{x_n})]$, and we augment $T$ (the $3$-player approximation to the game, $T^k_{lqk}$) so that $T^l_{lll} = \tau \texttt{diag3}\big(\frac{1}{x_l^2}\big)$.
\end{replemma}

At an NE, the latter term disappears because $\Pi_{T\Delta}(\nabla^{k\tau}_{x_k}) = \mathbf{0}$ for all $k$ (Lemma~\ref{lemma:zero_exp_implies_zero_proj_grad_norm}).
If $\mathcal{X}$ was $\mathbb{R}^{n\bar{m}}$, then we could simply check if $\tilde{B}$ is full-rank to determine if $Hess \succ 0$, i.e., if $\mathcal{L}^{\tau}$ is locally strongly-convex. However, $\mathcal{X}$ is a simplex product, and we only care about curvature in directions toward which we can update our strategy profile $\boldsymbol{x}$.
Toward that end, define $M$ to be the $n(\bar{m}+1) \times n\bar{m}$ matrix that stacks $\tilde{B}$ on top of a repeated identity matrix that encodes orthogonality to the simplex:
\begin{align}
    M(\boldsymbol{x}) &= \begin{bmatrix}
    -\tau \sqrt{\eta_1} \Pi_{T\Delta}(\frac{1}{x_1}) & \sqrt{\eta_1} \Pi_{T\Delta}(H^1_{12}) & \ldots & \sqrt{\eta_1} \Pi_{T\Delta}(H^1_{1n})
    \\ \vdots & \vdots & \vdots & \vdots
    \\ \sqrt{\eta_n} \Pi_{T\Delta}(H^n_{n1}) & \ldots & \sqrt{\eta_n} \Pi_{T\Delta}(H^n_{n,n-1}) & -\tau \sqrt{\eta_n} \Pi_{T\Delta}(\frac{1}{x_n})
    \\ \mathbf{1}_1^\top & 0 & \ldots & 0
    \\ \vdots & \vdots & \vdots & \vdots
    \\ 0 & \ldots & 0 & \mathbf{1}_n^\top
    \end{bmatrix} \label{eqn:test_matrix}
\end{align}

where $\Pi_{T\Delta}(z \in \mathbb{R}^{a \times b}) = [I_a - \frac{1}{a} \mathbf{1}_a \mathbf{1}_a^\top] z$ subtracts the mean from each column of $z$ and $\frac{1}{x_k}$ is shorthand for $\texttt{diag}\big(\frac{1}{x_k}\big)$. If $M(\boldsymbol{x}) z = \mathbf{0}$ for a nonzero vector $z \in \mathbb{R}^{n\bar{m}}$, this implies there exists a $z$ that 1) is orthogonal to the ones vectors of each simplex (i.e., is a valid equilibrium update direction) and 2) achieves zero curvature in the direction $z$, i.e., $z^\top (\tilde{B}^\top \tilde{B}) z = z^\top (Hess) z = 0$, and so $Hess$ is not positive definite. Conversely, if $M(\boldsymbol{x})$ is of rank $n\bar{m}$ for a quantal response equilibrium $\boldsymbol{x}$, then the Hessian of $\mathcal{L}^\tau$ at $\boldsymbol{x}$ in the tangent space of the simplex product ($\mathcal{X} = \prod_k \mathcal{X}_k$) is positive definite. In this case, we call $\boldsymbol{x}$ \emph{polymatrix}-isolated: \textbf{polymatrix} because we only require information of the local polymatrix approximation of the game (i.e., the $H^k_{kl}$ matrices) to construct $M$ and \textbf{isolated} because it implies $\boldsymbol{x}$ is not connected to any other equilibria.

\begin{definition}[\emph{Polymatrix}-Isolated Equilibrium]
A Nash equilibrium $\boldsymbol{x}^*$ is polymatrix-isolated iff $\boldsymbol{x}^*$ is isolated according to its local polymatrix game approximation.
\end{definition}

By analyzing the rank of $M$, we can confirm that many classical matrix games including Rock-Paper-Scissors, Chicken, Matching Pennies, and Shapley's game all induce strongly convex $\mathcal{L}^{\tau}$'s at zero temperature (i.e., they have unique mixed Nash equilibria). In contrast, a game like Prisoner's Dilemma has a unique pure strategy that will not be captured by our loss at zero temperature.

\begin{figure}[ht!]
    \centering
    \includegraphics[width=\textwidth]{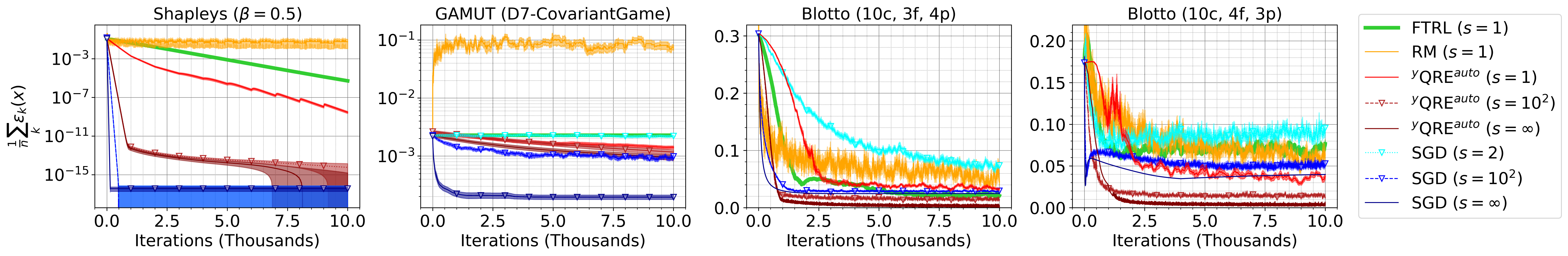}
    \caption{Comparison of SGD on $\mathcal{L}^{\tau=0}$ against baselines on four games evaluated in~\citep{gemp2022sample}. The number of samples used to estimate each update iteration (i.e., minibatch size) is indicated by $s$. From left to right: $2$-player, $3$-action, nonsymmetric; $6$-player, $5$-action, nonsymmetric; $4$-player, $66$-action, symmetric; $3$-player, $286$-action, symmetric. SGD struggles at saddle points in Blotto.}
    \label{fig:sgd}
\end{figure}
\nocite{LanctotEtAl2019OpenSpiel,nudelman2004run}

\section{Algorithms}\label{sec:convergence}

We have formally transformed the approximation of Nash equilibria in NFGs into a \textbf{stochastic} optimization problem. To our knowledge, this is the first such formulation that allows one-shot unbiased Monte-Carlo estimation which is critical to introduce the use of powerful algorithms capable of solving high dimensional optimization problems. We explore two off-the-shelf approaches.

\subsection{Stochastic Gradient Descent}

Stochastic gradient descent is the workhorse of high-dimensional stochastic optimization. It is guaranteed to converge to stationary points~\citep{cutkosky2023optimal}, however, it may converge to local, rather than global minima. It also enjoys implicit gradient regularization~\citep{barrettimplicit}, seeking ``flat'' minima and performs approximate Bayesian inference~\citep{mandt2017stochastic}.
Despite the lack of global convergence guarantee, we find it performs well empirically in games previously examined by the literature: modified Shapley's~\citep{ostrovski2013payoff}, GAMUT D7~\citep{nudelman2004run}, Blotto~\citep{arad2012multi}.
Figure~\ref{fig:sgd} shows SGD is competitive with scalable techniques to approximating NEs: FTRL~\citep{shalev2006convex,shalev2012online}, Regret Matching~\citep{hart2000simple}, ADIDAS/${}^yQRE^{auto}$~\citep{gemp2022sample}. Shapley's game induces a strongly convex $\mathcal{L}$ (see Section~\ref{sec:analysis}) leading to SGD's strong performance. Blotto reaches low, but nonzero $\epsilon$, demonstrating the challenges of saddle points.

\subsection{High Probability, Global Polynomial Convergence Rates via Bandits}

We explore one other algorithmic approach to non-convex optimization based on minimizing regret, which enjoys finite time \textbf{global} convergence rates.
$\mathcal{X}$-armed bandits~\citep{bubeck2011x} systematically explore the space of solutions by refining a mesh over the joint strategy space, trading off exploration versus exploitation of promising regions.
Several approaches exist~\citep{bartlett2019simple,valko2013stochastic} with open source implementations, e.g.,~\citep{Li2023PyXAB}. Applying $\mathcal{X}$-armed bandits to our $\mathcal{L}^{\tau}$ can be thought of as a stochastic generalization of the \emph{exclusion method} and other bandit approaches for Nash equilibria~\citep{berg2017exclusion,zhou2017identify}.

Equipped with these techniques, we establish a high probability
polynomial-time \textbf{global} convergence rate to Nash equilibria in $n$-player, general-sum games given all QREs($\tau$) are polymatrix-isolated. The quality of this approximation improves as $\tau \rightarrow 0$, at the same time increasing the constant on the convergence rate via the Lipschitz constant $\sqrt{\hat{L}}$ defined below. For clarity, we assume users provide a temperature in the form $\tau = \frac{1}{\ln(1/p)}$ with $p \in (0, 1)$ which ensures all equilibria have probability mass greater than $\frac{p}{m^*}$ for all actions (Lemma~\ref{lemma:set_tau}). Lower $p$ corresponds with lower temperature.

\begin{figure}[ht!]
    \centering
    \includegraphics[width=0.95\textwidth]{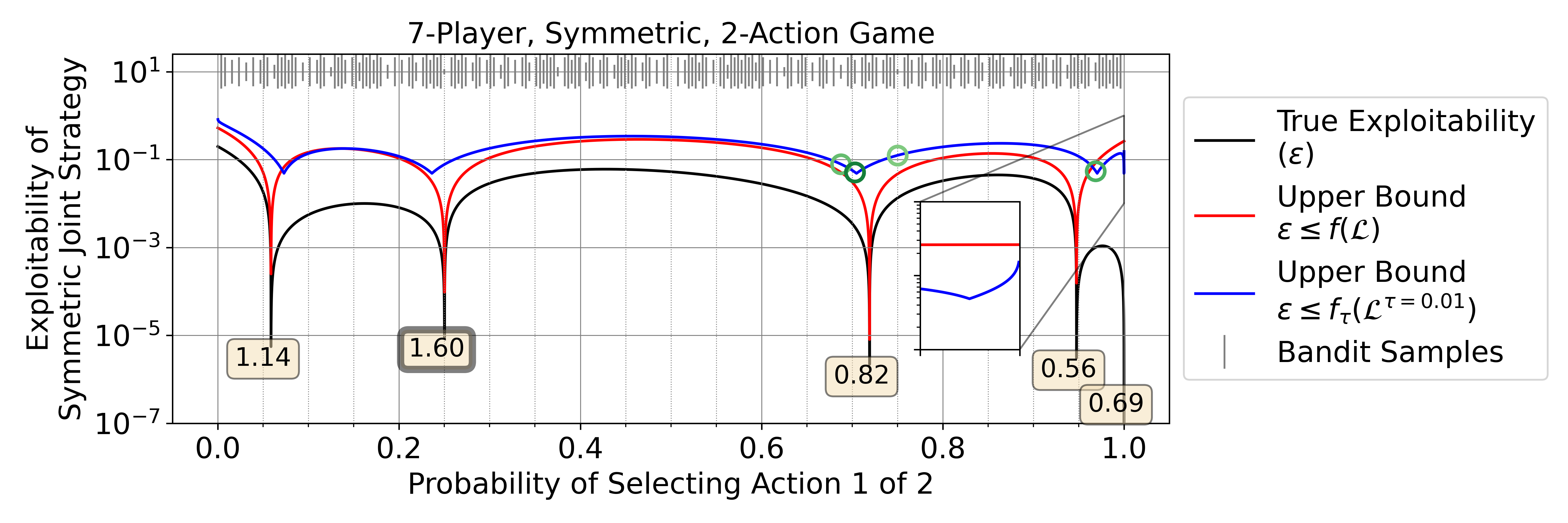}
    \caption{Bandit-based (BLiN) Nash solver applied to an artificial $7$-player, symmetric, $2$-action game. We search for a symmetric equilibrium, which is represented succinctly as the probability of selecting action $1$. The plot shows the true exploitability $\epsilon$ of all symmetric strategies in black and indicates there exist potentially $5$ NEs (the dips in the curve). Upper bounds on our unregularized loss $\mathcal{L}$ capture $4$ of these equilibria, missing only the pure NE on the right. By considering our regularized loss, $\mathcal{L}^{\tau}$, we are able to capture this pure NE (see zoomed inset). The bandit algorithm selects strategies to evaluate, using $10$ Monte-Carlo samples for each evaluation (arm pull) of $\mathcal{L}^{\tau}$. These samples are displayed as vertical bars above with the height of the vertical bar representing additional arm pulls. The best arms throughout search are denoted by green circles (darker indicates later in the search). The boxed numbers near equilibria display the welfare of the strategy profile.}
    \label{fig:7p_sym_2a}
\end{figure}

\begin{reptheorem}{theorem:blin_rate}[BLiN PAC Rate]
Assume $\eta_k = \eta = 2 / \hat{L}$, $\tau = \frac{1}{\ln(1/p)}$, and a previously pulled arm is returned uniformly at random (i.e., $t \sim U([T])$). Then for any $w > 0$
\begin{align}
    \epsilon_t &\le w \Big[ \frac{1}{\ln(1/p)} \log\big(\prod_k m_k\big) + 2 (1 + (4c^2 \textcolor{highlight}{C_z})^{1/3}) \sqrt{2 n \hat{L}} \Big(\frac{\ln T}{T}\Big)^{\frac{1}{2(\textcolor{highlight}{d_z} + 2)}} \Big]
\end{align}
with probability $(1 - w^{-1})(1 - 2 T^{-2})$ where
$m^* = \max_k m_k$, $2 \vert \mathcal{L}^{\tau} \vert \le c \le \frac{1}{4} \Big( \frac{\ln(m^*)}{\ln(1/p)} + 2 \Big)$ (Corollary~\ref{corr:loss_bound_sto}), $\hat{L} = \Big( \frac{\ln(m^*)}{\ln(1/p)} + 2 \Big) \big( \frac{m^{*2}}{p \ln(1/p)} + n \bar{m} \big)$ (Corollary~\ref{corr:loss_grad_inf_norm_p}), the zooming dimension $\textcolor{highlight}{d_z} = \frac{1}{2}n\bar{m}$, and the zooming constant $\textcolor{highlight}{C_z} = \vert \mathcal{X}^* \vert^{-1} ( \frac{1}{4} r_{\eta}^{2} \sigma_{-\infty} )^{-n\bar{m}}$ (Corollary~\ref{lemma:zooming_constant}).
\end{reptheorem}

The convergence rate for BLiN~\citep{fenglipschitz} depends on bounds on the exploitability in terms of the loss (Lemma~\ref{lemma:qre_to_ne}), bounds on estimates of the loss (Corollary~\ref{corr:loss_bound_sto}), Lipschitz bounds on the infinity norm of the gradient (Corollary~\ref{corr:loss_grad_inf_norm_p}), and the number of distinct strategies ($n\bar{m} = \sum_k m_k$).
This result further depends on the \emph{near-optimality} or \emph{zooming}-dimension $\textcolor{highlight}{d_z}$ and zooming constant $\textcolor{highlight}{C_z}$ which quantify the number of near optimal states. In particular, we assume $\mathcal{L}(s(z))$ is locally $(\sigma_{-\infty})$-strongly convex with respect to $||\cdot||_{\infty}$ about each global optimum within a ball of radius $r_{\eta}$. Here, $s: [0,1]^{n(\bar{m}-1)} \rightarrow \prod_i \Delta^{m_i - 1}$ is any function that maps from the unit hypercube to a product of simplices; we analyze two such maps in Appendix~\ref{app:cube_to_simplex}. Next, we present an additional convergence rate result using an alternative $\mathcal{X}$-bandit approach, StoSOO~\citep{valko2013stochastic}.

\begin{reptheorem}{theorem:stosoo_rate}[StoSOO Rate]
Corollary $1$ of~\cite{valko2013stochastic} implies that with probability $(1 - w^{-1})(1 - \delta)$ for any $w > 0$, a uniformly randomly drawn arm (i.e., $t \sim U([T])$) achieves
\begin{align}
    \epsilon_t &\le w \Big[ \frac{1}{\ln(1/p)} \log\big(\prod_k m_k\big) + \sqrt{n \hat{L}} \sqrt{ \xi_1 \sqrt{\frac{\log_b(Tk/\delta)}{2\log_b(e) k}} + \xi_2 b^{-\frac{1}{d\textcolor{highlight}{C}} \sqrt{T/k}} } \Big]
\end{align}
where $d=n(\bar{m}-1)$, $\xi_1 = (2 + 2^{2/d})$, $\xi_2 = \frac{1}{4} d b^{2(1+2/d)}$, $k = T \log_b(T)^{-3}$, $b$ is the branching factor for partitioning cells, and the near-optimality constant $\textcolor{highlight}{C} = \vert \mathcal{X}^* \vert^{-1} \sqrt{2 \pi d} \Big( \frac{b^2 d^2}{5 r_{\eta}^{2} \sigma_{-2}} \Big)^{d / 2}$ (Lemma~\ref{lemma:near_opt_constant}).
\end{reptheorem}

Here we assume $\mathcal{L}(s(z))$ is locally $(\sigma_{-2})$-strongly convex with respect to $||\cdot||_{2}$ about each global optimum within a ball of radius $r_{\eta}$. Theorem~\ref{theorem:stosoo_rate} implies a $\tilde{\mathcal{O}}(T^{-1/4})$ global convergence rate (Proposition~\ref{prop:stosoo_burnin}), however this is achieved only after an exponential number of burn-in iterations.

\section{Conclusion}

In this work, we proposed a stochastic loss for approximate Nash equilibria in normal-form games. An unbiased loss estimator of Nash equilibria is the ``key'' to the stochastic optimization ``door'' which holds a wealth of research innovations uncovered over several decades.
Thus, it allows the development of new algorithmic techniques for computing equilibria.
We consider bandit and vanilla SGD methods in this work, but these are only two of the many options now at our disposal (e.g, adaptive methods~\citep{antonakopoulos2022adagrad}, Gaussian processes~\citep{calandriello2022scaling}, evolutionary algorithms~\citep{hansen2003reducing}, etc.). Such approaches as well as generalizations of these techniques to extensive-form, imperfect-information games are promising directions for future work.
Similarly to how deep learning research first balked at and then marched on to train neural networks via NP-hard non-convex optimization, we hope computational game theory can march ahead to make useful equilibrium predictions of large multiplayer systems.

\nocite{milec2021complexity}

\bibliography{main}

\newpage
\renewcommand{\contentsname}{Appendix: Approximating Nash Equilibria in Normal-Form Games via Stochastic Optimization}
\tableofcontents
\addtocontents{toc}{\protect\setcounter{tocdepth}{2}}
\appendix

\newpage
\section{Loss with its Properties and Derivatives}\label{app:loss+}
In this section, we construct our loss function and derive many of its properties and derivatives (e.g., gradient and hessian) useful for analyzing and executing optimization algorithms.
\subsection{Loss: Connection to Exploitability, Unbiased Estimation, and Upper Bounds}

\subsubsection{KKT Conditions Imply Fixed Point Sufficiency}

Consider the following constrained optimization problem:
\begin{subequations}
\begin{align}
    \max_{\boldsymbol{x} \in \mathbb{R}^d} &\,\, f(\boldsymbol{x})
    \\ s.t. \,\, g_i(\boldsymbol{x}) &\le 0 \,\, \forall i
    \\ h_j(\boldsymbol{x}) &= 0 \,\, \forall j
\end{align}
\end{subequations}
where $f$ is concave and $g_i$ and $h_j$ represent inequality and equality constraints respectively. If $g_i$ and $h_i$ are affine functions, then any maximizer $\boldsymbol{x}^*$ of $f$ must satisfy the following necessary and sufficient KKT conditions~\citep{ghojogh2021kkt,boyd2004convex}:
\begin{itemize}
    \item Stationarity: $\mathbf{0} \in \partial f(\boldsymbol{x}^*) - \sum_{j} \lambda_j \partial h_j(\boldsymbol{x}^*) - \sum_i \mu_i \partial g_i(\boldsymbol{x}^*)$
    \item Primal feasibility: $h_j(\boldsymbol{x}^*) = 0$ for all $j$ and $g_i(\boldsymbol{x}^*) \le 0$ for all $i$
    \item Dual feasibility: $\mu_i \ge 0$ for all $i$
    \item Complementary slackness: $\mu_i g_i(\boldsymbol{x}^*) = 0$ for all $i$.
\end{itemize}

\begin{lemma}\label{lemma:zero_exp_implies_zero_proj_grad_norm}
Assuming player $k$'s utility, $u_k(x_k, x_{-k})$, is concave in its own strategy $x_k$, a strategy in the interior of the simplex is a best response $\br_k$ if and only if it has zero projected-gradient\footnote{Not to be confused with the nonlinear (i.e., introduces bias) projected gradient operator introduced in~\citep{hazan2017efficient}.} norm:

\begin{align*}
    \br_k \in \big( int \Delta \cap \argmax_z u_k(z, x_{-k}) - u_k(x_k, x_{-k} \big) \big) &\iff (\br_k \in int \Delta) \land (||\Pi_{T\Delta}[\nabla^k_{\br_k}]|| = 0).
\end{align*}
\end{lemma}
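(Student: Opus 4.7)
The plan is to prove this by directly invoking the KKT machinery stated earlier in the appendix, which applies since player $k$'s best-response problem is the maximization of a concave function $u_k(\cdot, x_{-k})$ over the simplex $\Delta^{m_k-1}$, a set cut out by the affine equality constraint $h(z) = \mathbf{1}^\top z - 1 = 0$ and the affine inequality constraints $g_i(z) = -z_i \le 0$. Because the objective is concave and the constraints affine, the KKT conditions are both necessary and sufficient for $z^*$ to be a global maximizer, so ``$z^*$ is a best response'' is equivalent to a system of equations and inequalities that I can manipulate.

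First I would write out the stationarity condition: there exist a multiplier $\lambda \in \mathbb{R}$ and nonnegative multipliers $\mu_i \ge 0$ with $\nabla_z u_k(z^*, x_{-k}) - \lambda \mathbf{1} + \sum_i \mu_i e_i = \mathbf{0}$, together with complementary slackness $\mu_i z^*_i = 0$. Specializing to the interior case, every $z^*_i > 0$, so complementary slackness forces $\mu_i = 0$ for all $i$, and stationarity collapses to
\begin{align*}
    \nabla_z u_k(z^*, x_{-k}) = \lambda \mathbf{1}.
\end{align*}
Hence $\br_k$ being an interior best response is equivalent to its utility gradient being a constant vector (proportional to $\mathbf{1}$).

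The final step is to connect that constant-gradient condition to the projected-gradient norm. Since $\Pi_{T\Delta}(v) = [I - \tfrac{1}{m_k}\mathbf{1}\mathbf{1}^\top] v$ annihilates exactly the span of $\mathbf{1}$, we have $\Pi_{T\Delta}(v) = \mathbf{0}$ iff $v$ is parallel to $\mathbf{1}$. Thus $\nabla^k_{\br_k} = \lambda \mathbf{1}$ iff $\|\Pi_{T\Delta}(\nabla^k_{\br_k})\| = 0$, which closes both directions of the iff. For the ``$\Leftarrow$'' direction I would explicitly exhibit the KKT certificate by setting $\lambda$ to the common value of the gradient coordinates and all $\mu_i = 0$, and note that primal and dual feasibility hold by assumption ($\br_k \in \mathrm{int}\,\Delta$ gives $g_i < 0$), so sufficiency of KKT under concavity delivers optimality.

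The only subtlety worth flagging, rather than a real obstacle, is making sure the sufficiency direction genuinely needs the concavity hypothesis: without it, zero projected gradient would only certify a critical point on the relative interior, not a maximizer. Since NFG utilities are linear in $x_k$, this is automatically satisfied, as noted in the paragraph right after the lemma, so no further work is needed for the application in the paper.
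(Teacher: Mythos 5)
Your proposal is correct and follows essentially the same route as the paper's proof: both reduce the best-response problem to the KKT conditions for maximizing a concave objective over the simplex (necessary and sufficient since the constraints are affine), use complementary slackness and dual feasibility to kill the $\mu_i$ at interior points, and conclude that $\nabla^k_{\br_k} \propto \mathbf{1}$ iff the projected gradient vanishes, since $\Pi_{T\Delta}$ annihilates exactly $\mathrm{span}(\mathbf{1})$. If anything, you are slightly more explicit than the paper in exhibiting the KKT certificate for the sufficiency direction, which is a welcome bit of care rather than a deviation.
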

\begin{proof}
Consider the problem of formally computing $\epsilon_k(\boldsymbol{x}) = \max_{z \in int \Delta} u_k(z, x_{-k}) - u_k(x_k, x_{-k})$:
\begin{subequations}
\begin{align}
    \max_{z \in \mathbb{R}^d} &\,\, u_k(z, x_{-k}) - u_k(x_k, x_{-k})
    \\ s.t. -z_k &\le 0 \,\, \forall k
    \\ 1 - \sum_k z_k &= 0.
\end{align}
\end{subequations}
Note that the objective is linear (concave) in $z$ and the constraints are affine, therefore the KKT conditions are necessary and sufficient for optimality. Recall that we assume that the solution $z^*$ lies in the interior of the simplex, i.e., $z^*_k > 0$ for each $k$. Also, let $e_k$ be a onehot vector, i.e., a zeros vector except with a $1$ at index $k$. Mapping the KKT conditions onto this problem yields the following:
\begin{itemize}
    \item Stationarity: $\mathbf{0} \in \partial u_k(z^*, x_{-k}) + \lambda \mathbf{1} + \sum_k \mu_k e_k$
    \item Primal feasibility: $\sum_k z^*_k = 1$ for all $k$%
    \item Dual feasibility: $\mu_i \ge 0$ for all $k$
    \item Complementary slackness: $-\mu_k z^*_k = 0$ for all $k$.
\end{itemize}

For any point $z^* \in int \Delta$, primal feasibility will be satisfied.
Given our assumption that $z^*_k > 0$, by complementary slackness and dual feasibility, each $\mu_k$ must be identically zero. This implies the stationarity condition can be simplified to $\mathbf{0} \in \partial u_k(z^*, x_{-k}) + \lambda \mathbf{1}$. Rearranging terms (and repurposing $\lambda$) we find that for any $z^*$, there exists a $\lambda$ such that
\begin{align}
    \lambda \mathbf{1} &\in \partial u_k(z^*, x_{-k}).
\end{align}
Equivalently, $\partial u_k(z^*, x_{-k}) \propto \mathbf{1}$ at $z^* \in int \Delta$. Any vector proportional to the ones vector has zero projected-gradient norm, completing the claim: $[I - \frac{1}{m_k} \mathbf{1} \mathbf{1}^\top] (\lambda \mathbf{1}) = \lambda (\mathbf{1} - \frac{m_k}{m_k} \mathbf{1}) = \mathbf{0}$.
\end{proof}

\subsubsection{Norm of Projected-Gradient and Equivalence to NFG Exploitability with Approximate Best Responses}

\begin{proposition}\label{prop:loss_to_approx_nashconv}
The loss $\mathcal{L}$ is equivalent to \texttt{NashConv}, but where player $k$'s best response is approximated by a single step of projected-gradient ascent with step size $\eta_k$: $\abr_k = x_k + \eta_k \Pi_{T\Delta}(\nabla^k_{x_k})$.
\end{proposition}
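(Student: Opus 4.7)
The plan is a short direct computation: substitute $\abr_k = x_k + \eta_k \Pi_{T\Delta}(\nabla^k_{x_k})$ into the approximate-best-response version of NashConv and show each summand collapses into $\eta_k ||\Pi_{T\Delta}(\nabla^k_{x_k})||^2$, i.e.\ into the corresponding term of $\mathcal{L}(\boldsymbol{x})$ as defined in~(\ref{eqn:loss_notau}).

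First I would invoke multilinearity of $u_k$, recorded in the Preliminaries as $u_k(x_k, x_{-k}) = x_k^\top \nabla^k_{x_k}$, which says $u_k$ is linear in its first argument with gradient $\nabla^k_{x_k}$. This gives
\begin{align*}
u_k(\abr_k, x_{-k}) - u_k(x_k, x_{-k}) = (\abr_k - x_k)^\top \nabla^k_{x_k} = \eta_k \, \Pi_{T\Delta}(\nabla^k_{x_k})^\top \nabla^k_{x_k}.
\end{align*}
Next I would use that $\Pi_{T\Delta} = I - \tfrac{1}{m_k}\mathbf{1}\mathbf{1}^\top$ is an orthogonal projection onto the tangent space of the simplex, hence symmetric and idempotent ($P^\top = P = P^2$). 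Writing $P := \Pi_{T\Delta}$ and applying $P = P^\top P$ inside the inner product yields
\begin{align*}
\Pi_{T\Delta}(\nabla^k_{x_k})^\top \nabla^k_{x_k} = (\nabla^k_{x_k})^\top P \nabla^k_{x_k} = (\nabla^k_{x_k})^\top P^\top P \nabla^k_{x_k} = ||\Pi_{T\Delta}(\nabla^k_{x_k})||^2.
\end{align*}
Summing over $k$ with weights $\eta_k$ identifies $\sum_k \bigl[u_k(\abr_k, x_{-k}) - u_k(\boldsymbol{x})\bigr]$ with $\mathcal{L}(\boldsymbol{x})$.

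There is no serious obstacle. The only subtlety worth flagging is that $\abr_k$ need not itself be a probability vector: since $\mathbf{1}^\top \Pi_{T\Delta}(\nabla^k_{x_k}) = 0$ it does lie in the affine hull of $\Delta^{m_k - 1}$, but for large $\eta_k$ some coordinates may be negative. This is harmless because $u_k$ is the multilinear extension, defined on all of $\mathbb{R}^{m_k}$, so the linearity step goes through without any back-projection onto the simplex. The substantive content of the proposition is precisely that switching from the full simplex projection $\Pi_{\Delta}$ to the tangent-space projection $\Pi_{T\Delta}$ is what converts a quadratic-in-gradient quantity (the squared projected-gradient norm) into an approximate exploitability, exactly the linearity-preserving feature that Section~\ref{subsec:unbiased_est} later exploits for unbiased estimation.
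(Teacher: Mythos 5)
Your proof is correct and follows essentially the same route as the paper's: substitute $\abr_k$ into the per-player gain, use linearity of $u_k$ in $x_k$, and collapse $(\nabla^k_{x_k})^\top \Pi_{T\Delta}(\nabla^k_{x_k})$ into the squared norm (the paper does this via orthogonality of the projected-gradient to $\mathbf{1}$, you via $P = P^\top P$ — the same fact, recorded as Lemma~\ref{lemma:idempotent}). Your remark that $\abr_k$ may leave the simplex while remaining in its affine hull is a fair observation the paper leaves implicit, but it does not alter the argument.
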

\begin{proof}
Define an approximate best response as the result of a player adjusting their strategy via a projected-gradient ascent step, i.e., $\abr_k = x_k + \eta_k \Pi_{T\Delta}(\nabla^k_{x_k})$ for player $k$.

In a normal form game, player $k$'s utility at this new strategy is 
\begin{align}
    u_k(\abr_k, x_{-k}) &= (\nabla^k_{x_k})^\top (x_k + \eta_k \Pi_{T\Delta}(\nabla^k_{x_k})) = u_k(\boldsymbol{x}) + \eta_k (\nabla^k_{x_k})^\top \Pi_{T\Delta}(\nabla^k_{x_k}).
\end{align}

Therefore, the amount player $k$ gains by playing $\abr$ is
\begin{subequations}
\begin{align}
    \hat{\epsilon}_k(\boldsymbol{x}) &= u_k(\abr_k, x_{-k}) - u_k(\boldsymbol{x})
    \\ &= \eta_k (\nabla^k_{x_k})^\top \Pi_{T\Delta}(\nabla^k_{x_k}) \label{eqn:loss_as_mixed_prod}
    \\ &= \eta_k \big( \nabla^k_{x_k} - \frac{1}{m_k} (\mathbf{1}^\top \nabla^k_{x_k}) \mathbf{1} \big)^\top \Pi_{T\Delta}(\nabla^k_{x_k})
    \\ &= \eta_k ||\Pi_{T\Delta}(\nabla^k_{x_k})||^2 \label{eqn:loss_as_grad_prod}
\end{align}
\end{subequations}
where the third equality follows from the fact that the projected-gradient, $\Pi_{T\Delta}(\nabla^k_{x_k})$, is orthogonal to the ones vector.
\end{proof}

\subsubsection{Connection to True Exploitability}

\begin{lemma}\label{lemma:exp_to_grad_norm}
The amount a player can gain by deviating is upper bounded by a quantity proportional to the norm of the projected-gradient:
\begin{align}
    \epsilon_k(\boldsymbol{x}) &\le \sqrt{2} ||\Pi_{T\Delta}(\nabla^{k}_{x_k})||.
\end{align}
\end{lemma}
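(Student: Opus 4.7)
The plan is to start from the definition of exploitability and use concavity/linearity of $u_k$ in $x_k$ to pass to a linear upper bound, then exploit the key geometric observation that the difference of two simplex points lies in the tangent space $T\Delta$, which lets us replace the raw gradient with its tangent-space projection at no cost.

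First, I would write $\epsilon_k(\boldsymbol{x}) = \max_{z \in \Delta} u_k(z, x_{-k}) - u_k(x_k, x_{-k})$ and apply concavity of $u_k$ in its own argument to bound each difference by a first-order term:
\begin{align*}
u_k(z, x_{-k}) - u_k(x_k, x_{-k}) \le (\nabla^k_{x_k})^\top (z - x_k).
\end{align*}
(For NFGs this is in fact an equality since $u_k$ is linear in $x_k$; the argument works identically under mere concavity using a subgradient.)

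Next, the crucial geometric step: since both $z$ and $x_k$ are probability vectors, the displacement $z - x_k$ satisfies $\mathbf{1}^\top (z - x_k) = 0$, i.e., $z - x_k \in T\Delta$. Because $\Pi_{T\Delta}$ is an orthogonal projector onto this subspace, inner products against vectors in $T\Delta$ are unchanged by the projection:
\begin{align*}
(\nabla^k_{x_k})^\top (z - x_k) = \Pi_{T\Delta}(\nabla^k_{x_k})^\top (z - x_k).
\end{align*}
Applying Cauchy--Schwarz yields $\epsilon_k(\boldsymbol{x}) \le \|\Pi_{T\Delta}(\nabla^k_{x_k})\| \cdot \max_{z \in \Delta} \|z - x_k\|$.

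Finally I would bound the simplex diameter: for any two probability vectors $z, x_k \in \Delta^{m_k - 1}$ we have $\|z - x_k\|_2 \le \sqrt{2}$, with equality achieved only between distinct vertices $e_i$ and $e_j$. This can be shown by noting that $\|z - x_k\|_2^2$ is a convex function of $(z, x_k)$ on the compact convex set $\Delta \times \Delta$, so the maximum is attained at an extreme point $(e_i, e_j)$, where $\|e_i - e_j\|_2^2 \le 2$. Combining gives the claimed $\sqrt{2}$ factor. The only place requiring any real care is the last bound; the tangent-space identity and the concavity step are both immediate, so I do not anticipate a serious obstacle. The proof is therefore essentially: concavity $\to$ tangent-space identity $\to$ Cauchy--Schwarz $\to$ simplex diameter.
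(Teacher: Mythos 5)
Your proposal is correct and follows essentially the same route as the paper's proof: concavity gives the first-order bound, the identity $\mathbf{1}^\top(z - x_k) = 0$ lets the raw gradient be replaced by its tangent-space projection for free, and Cauchy--Schwarz with the simplex diameter $\sqrt{2}$ finishes the argument. The only difference is cosmetic: you justify the diameter bound via convexity and extreme points, which the paper simply asserts, and you phrase the projection step abstractly via orthogonality where the paper subtracts the mean term explicitly.
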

\begin{proof}
Let $z$ be any point on the simplex. Then by concavity of $u_k$ with respect to $z$,
\begin{subequations}
\begin{align}
    u_k(z, x_{-k}) - u_k(\boldsymbol{x}) &\le (\nabla^{k}_{x_k})^\top (z - x_k)
    \\ &= (\nabla^{k}_{x_k})^\top (z - x_k) - \frac{1}{m_k} (\mathbf{1}^\top \nabla^k_{x_k}) \overbrace{\mathbf{1}^\top (z - x_k)}^{1 - 1 = 0}
    \\ &= (\Pi_{T\Delta}(\nabla^k_{x_k}))^\top \underbrace{(z - x_k)}_{\texttt{Diam}(\Delta) \le \sqrt{2}}
    \\ &\le \sqrt{2} ||\Pi_{T\Delta}(\nabla^{k}_{x_k})||. \qquad \text{(Cauchy-Schwarz)}
\end{align}
\end{subequations}
\end{proof}

Continuing, we can prove a bound on $\epsilon$ in terms of the projected-gradient loss:
\begin{lemma}\label{lemma:loss_to_eps}
The exploitability, $\epsilon$, of a joint strategy $\boldsymbol{x}$, is upper bounded as a function of our proposed loss:
\begin{align}
    \epsilon &\le \sqrt{\frac{2n}{\min_k \eta_k}} \sqrt{ \mathcal{L}(\boldsymbol{x}) }.
\end{align}
\end{lemma}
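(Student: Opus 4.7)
The plan is to chain Lemma~\ref{lemma:exp_to_grad_norm} (per-player bound) with a standard Cauchy-Schwarz/power-mean step, and then re-weight by $\eta_k$ to bring in the loss $\mathcal{L}$.

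First, I would apply Lemma~\ref{lemma:exp_to_grad_norm} to each player $k$ and square to obtain $\epsilon_k(\boldsymbol{x})^2 \le 2\,\|\Pi_{T\Delta}(\nabla^{k}_{x_k})\|^2$. Multiplying and dividing by $\eta_k$ on the right-hand side and using $\eta_k \ge \min_j \eta_j$ gives $\epsilon_k(\boldsymbol{x})^2 \le \frac{2}{\min_j \eta_j}\,\eta_k\,\|\Pi_{T\Delta}(\nabla^{k}_{x_k})\|^2$. Summing over $k$ and recognizing the right-hand side as the definition of $\mathcal{L}(\boldsymbol{x})$ yields
\begin{align*}
\sum_k \epsilon_k(\boldsymbol{x})^2 \;\le\; \frac{2}{\min_j \eta_j}\,\mathcal{L}(\boldsymbol{x}).
\end{align*}

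Next I would convert this $\ell^2$-style bound on the per-player exploitabilities into a bound on the aggregate quantity $\epsilon$. Reading the statement together with Table~\ref{tab:related_work}, the $\sqrt{n}$ factor indicates $\epsilon$ is being taken in the NashConv sense, i.e.\ $\epsilon = \sum_k \epsilon_k(\boldsymbol{x})$ (note that the looser $\max_k$ interpretation is subsumed, since $\max_k \epsilon_k \le \sum_k \epsilon_k$ once each $\epsilon_k \ge 0$). Cauchy–Schwarz applied to $\sum_k \epsilon_k \cdot 1$ with the $n$ ones gives $\bigl(\sum_k \epsilon_k\bigr)^2 \le n \sum_k \epsilon_k^2$, so
\begin{align*}
\epsilon^2 \;\le\; n \sum_k \epsilon_k(\boldsymbol{x})^2 \;\le\; \frac{2n}{\min_j \eta_j}\,\mathcal{L}(\boldsymbol{x}),
\end{align*}
and taking square roots yields the claimed $\epsilon \le \sqrt{2n / \min_k \eta_k}\,\sqrt{\mathcal{L}(\boldsymbol{x})}$.

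There is really no hard step here once Lemma~\ref{lemma:exp_to_grad_norm} is in hand; the only subtlety is being explicit about whether $\epsilon$ denotes $\max_k \epsilon_k$ or $\sum_k \epsilon_k$. The $\sqrt{n}$ in the statement forces the latter reading (and also makes the bound valid for the former \emph{a fortiori}), so the Cauchy–Schwarz step is the pivot that determines the final constant. I would also note that non-negativity of $\epsilon_k$, which is immediate because the deviation $z = x_k$ is always feasible in the definition of $\epsilon_k$, is used implicitly when passing between $\max$ and $\sum$ forms.
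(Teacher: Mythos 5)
Your proof is correct and follows essentially the same route as the paper's: Lemma~\ref{lemma:exp_to_grad_norm} applied per player, a Cauchy--Schwarz ($\ell_1$-to-$\ell_2$) step contributing the $\sqrt{n}$, and re-weighting via $\min_k \eta_k$; the paper merely performs these steps in a slightly different order, first bounding $\epsilon \le \sum_k \epsilon_k$ and then applying the norm inequality to the vector of projected-gradient norms rather than squaring first. One small note: the paper's $\epsilon$ is $\max_k \epsilon_k$ (see Table~\ref{tab:related_work}), with the $\sqrt{n}$ arising from the $\ell_1$/$\ell_2$ conversion rather than forcing a NashConv reading, but since your sum-based bound dominates the max, your argument proves the paper's statement \emph{a fortiori}.
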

\begin{proof}
\begin{subequations}
\begin{align}
    \epsilon &= \max_k \max_z u_k(z, x_{-k}) - u_k(\boldsymbol{x})
    \\ &\le \sum_k \max_z u_k(z, x_{-k}) - u_k(\boldsymbol{x})
    \\ &\le \sum_k \sqrt{2} ||\Pi_{T\Delta}(\nabla^{k}_{x_k})||_2 \qquad \text{(Lemma~\ref{lemma:exp_to_grad_norm})}
    \\ &= \sqrt{2} \Big|\Big| ||\Pi_{T\Delta}(\nabla^{1}_{x_1})||_2, \ldots, ||\Pi_{T\Delta}(\nabla^{n}_{x_n})||_2 \Big|\Big|_{1}
    \\ &\le \sqrt{2n} \Big|\Big| ||\Pi_{T\Delta}(\nabla^{1}_{x_1})||_2, \ldots, ||\Pi_{T\Delta}(\nabla^{n}_{x_n})||_2 \Big|\Big|_{2}
    \\ &= \sqrt{2n} \sqrt{\sum_k ||\Pi_{T\Delta}(\nabla^{k}_{x_k})||_2^2}%
    \\ &= \sqrt{2n} \sqrt{\sum_k \Big(\frac{1}{\eta_k}\Big) \eta_k ||\Pi_{T\Delta}(\nabla^{k}_{x_k})||_2^2}
    \\ &\le \sqrt{\frac{2n}{\min_k \eta_k}} \sqrt{\sum_k \eta_k ||\Pi_{T\Delta}(\nabla^{k}_{x_k})||_2^2}
    \\ &= \sqrt{\frac{2n}{\min_k \eta_k}} \sqrt{ \mathcal{L}(\boldsymbol{x}) }.
\end{align}
\end{subequations}
\end{proof}

\begin{lemma}\label{lemma:loss_to_qre_eps}
The entropy regularized exploitability, $\epsilon_{QRE}$, of a joint strategy $\boldsymbol{x}$, is upper bounded as a function of our proposed loss:
\begin{align}
    \epsilon_{QRE} &\le \sqrt{\frac{2n}{\min_k \eta_k}} \sqrt{ \mathcal{L}^{\tau}(\boldsymbol{x}) }.
\end{align}
\end{lemma}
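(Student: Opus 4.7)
The plan is to mimic the proof of Lemma~\ref{lemma:loss_to_eps} almost verbatim, replacing the unregularized gradient $\nabla^k_{x_k}$ with the entropy-regularized gradient $\nabla^{k\tau}_{x_k}$. The main observation that makes this substitution go through is that the entropy-augmented utility $u^\tau_k(\boldsymbol{x}) = u_k(\boldsymbol{x}) + \tau S(x_k)$ is still concave in $x_k$: $u_k$ is linear in $x_k$, Shannon entropy $S$ is concave (in fact $1$-strongly concave w.r.t.\ the $1$-norm as noted just before the statement of Lemma~\ref{lemma:loss_to_qre_eps}), and $\tau \ge 0$, so the sum is concave. Therefore the hypothesis of Lemma~\ref{lemma:exp_to_grad_norm} is satisfied for the regularized game.

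Concretely, I would first write out the per-player regularized exploitability $\epsilon_{QRE,k}(\boldsymbol{x}) = \max_{z \in \Delta^{m_k-1}} u^\tau_k(z, x_{-k}) - u^\tau_k(\boldsymbol{x})$ and invoke the concavity-based linearization trick from Lemma~\ref{lemma:exp_to_grad_norm}: for any $z \in \Delta^{m_k-1}$,
\begin{align*}
u^\tau_k(z, x_{-k}) - u^\tau_k(\boldsymbol{x}) &\le (\nabla^{k\tau}_{x_k})^\top (z - x_k) \\
&= \bigl(\Pi_{T\Delta}(\nabla^{k\tau}_{x_k})\bigr)^\top (z - x_k) \\
&\le \sqrt{2}\, \|\Pi_{T\Delta}(\nabla^{k\tau}_{x_k})\|,
\end{align*}
where the second line uses $\mathbf{1}^\top(z - x_k)=0$ to subtract off the component of $\nabla^{k\tau}_{x_k}$ along $\mathbf{1}$, and the third line is Cauchy--Schwarz combined with $\|z - x_k\| \le \mathrm{Diam}(\Delta) \le \sqrt{2}$.

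Then I would follow the chain of inequalities in the proof of Lemma~\ref{lemma:loss_to_eps} verbatim: bound $\epsilon_{QRE} = \max_k \epsilon_{QRE,k}(\boldsymbol{x})$ by $\sum_k \epsilon_{QRE,k}(\boldsymbol{x})$, apply the projected-gradient bound per player, convert an $\ell_1$ norm of scalars into an $\ell_2$ norm at the cost of a $\sqrt{n}$ factor, rewrite the sum of squared projected-gradient norms as $\sum_k (1/\eta_k)\cdot \eta_k \|\Pi_{T\Delta}(\nabla^{k\tau}_{x_k})\|^2$, and pull out $1/\min_k \eta_k$ to recognize the remaining sum as $\mathcal{L}^\tau(\boldsymbol{x})$ from definition~(\ref{def:loss_qre}).

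There is no real obstacle here: every step that worked in the unregularized proof carries over once concavity of $u^\tau_k$ in $x_k$ is established, which is immediate. The only thing to double-check is that the diameter bound $\|z - x_k\| \le \sqrt{2}$ still applies (it does, since $z, x_k \in \Delta^{m_k-1}$ regardless of the entropy term) and that the definition of $\nabla^{k\tau}_{x_k}$ as used in $\mathcal{L}^\tau$ in~(\ref{def:loss_qre}) matches the gradient of $u^\tau_k$ w.r.t.\ $x_k$ that appears in the linearization --- which it does, by construction in Subsection~\ref{subsec:interior_eq}.
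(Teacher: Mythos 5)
Your proposal is correct and follows essentially the same route as the paper's own proof: the paper likewise notes that $u^{\tau}_k$ remains concave in $x_k$, applies Lemma~\ref{lemma:exp_to_grad_norm} with the regularized gradient $\nabla^{k\tau}_{x_k}$, and then repeats the $\max \le \sum$, $\ell_1$-to-$\ell_2$, and $\min_k \eta_k$ steps of Lemma~\ref{lemma:loss_to_eps} verbatim. No gaps.
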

\begin{proof}
Recall that $u^{\tau}_k(x_k, x_{-k})$ is also concave with respect to $x_k$. Then
\begin{subequations}
\begin{align}
    \epsilon_{QRE} &= \max_k \max_z u^{\tau}_k(z, x_{-k}) - u^{\tau}_k(\boldsymbol{x})
    \\ &\le \sum_k \max_z u^{\tau}_k(z, x_{-k}) - u^{\tau}_k(\boldsymbol{x})
    \\ &\le \sum_k \sqrt{2} ||\Pi_{T\Delta}(\nabla^{k\tau}_{x_k})||_2 \qquad \text{(Lemma~\ref{lemma:exp_to_grad_norm})} \label{eqn:sum_of_pg_norms_to_loss_start}
    \\ &= \sqrt{2} \Big|\Big| ||\Pi_{T\Delta}(\nabla^{1\tau}_{x_1})||_2, \ldots, ||\Pi_{T\Delta}(\nabla^{n\tau}_{x_n})||_2 \Big|\Big|_{1}
    \\ &\le \sqrt{2n} \Big|\Big| ||\Pi_{T\Delta}(\nabla^{1\tau}_{x_1})||_2, \ldots, ||\Pi_{T\Delta}(\nabla^{n\tau}_{x_n})||_2 \Big|\Big|_{2}
    \\ &= \sqrt{2n} \sqrt{\sum_k ||\Pi_{T\Delta}(\nabla^{k\tau}_{x_k})||_2^2}%
    \\ &\le \sqrt{2n} \sqrt{\sum_k \Big(\frac{1}{\eta_k}\Big) \eta_k ||\Pi_{T\Delta}(\nabla^{k\tau}_{x_k})||_2^2}
    \\ &\le \sqrt{\frac{2n}{\min_k \eta_k}} \sqrt{\sum_k \eta_k ||\Pi_{T\Delta}(\nabla^{k\tau}_{x_k})||_2^2}
    \\ &= \sqrt{\frac{2n}{\min_k \eta_k}} \sqrt{ \mathcal{L}^{\tau}(\boldsymbol{x}) }. \label{eqn:sum_of_pg_norms_to_loss_end}
\end{align}
\end{subequations}
\end{proof}

\subsubsection{Unbiased Estimation}

\begin{lemma}\label{lemma:unbiased_estimation}
An unbiased estimate of $\mathcal{L}(\boldsymbol{x})$ can be obtained by drawing two samples (pure strategies) from each players' mixed strategy and observing payoffs.
\end{lemma}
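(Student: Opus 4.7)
The plan is to show that $\mathcal{L}(\boldsymbol{x}) = \sum_k \eta_k \|\Pi_{T\Delta}(\nabla^k_{x_k})\|^2$ can be rewritten as an expectation of an inner product of two independent random vectors, after which unbiasedness is immediate from the standard identity $\mathbb{E}[Y^{(1)}]^\top \mathbb{E}[Y^{(2)}] = \mathbb{E}[(Y^{(1)})^\top Y^{(2)}]$ for independent, identically distributed $Y^{(1)}, Y^{(2)}$. Concretely, I would draw two independent joint action profiles $\boldsymbol{a}^{(1)}, \boldsymbol{a}^{(2)} \sim \boldsymbol{x}$, and for each player $k$ define the two vectors $\hat{\nabla}^{k(p)}_{x_k}$ with $l$-th entry $u_k(a_{kl}, a^{(p)}_{-k})$ for $p \in \{1,2\}$. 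Because the $a^{(p)}_{-k}$ are independent samples from $x_{-k}$, the two random vectors $\hat{\nabla}^{k(1)}_{x_k}$ and $\hat{\nabla}^{k(2)}_{x_k}$ are also independent.

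The first key step is to observe that $\nabla^k_{x_k}$ is itself an expectation: its $l$-th entry equals $\mathbb{E}_{a_{-k} \sim x_{-k}}[u_k(a_{kl}, a_{-k})]$, so $\nabla^k_{x_k} = \mathbb{E}[\hat{\nabla}^{k(p)}_{x_k}]$. The second key step, which is the technical heart of the argument and the reason the paper chose $\Pi_{T\Delta}$ over $\Pi_{\Delta}$, is that $\Pi_{T\Delta}$ is a \emph{linear} operator (given explicitly by $z \mapsto [I - \tfrac{1}{m_k}\mathbf{1}\mathbf{1}^\top]z$). Linearity plus Fubini then yields $\Pi_{T\Delta}(\nabla^k_{x_k}) = \mathbb{E}[\Pi_{T\Delta}(\hat{\nabla}^{k(p)}_{x_k})]$ for each $p$.

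Combining these, for each player $k$ independence gives
\begin{align*}
\|\Pi_{T\Delta}(\nabla^k_{x_k})\|^2 &= \mathbb{E}[\Pi_{T\Delta}(\hat{\nabla}^{k(1)}_{x_k})]^\top \mathbb{E}[\Pi_{T\Delta}(\hat{\nabla}^{k(2)}_{x_k})] \\
&= \mathbb{E}\!\left[\Pi_{T\Delta}(\hat{\nabla}^{k(1)}_{x_k})^\top \Pi_{T\Delta}(\hat{\nabla}^{k(2)}_{x_k})\right].
\end{align*}
Multiplying by $\eta_k$, summing over $k$, and pulling the sum inside the expectation by linearity yields an unbiased estimator of $\mathcal{L}(\boldsymbol{x})$ whose only randomness is the two sampled action profiles, i.e.\ two pure strategies per player.

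I expect the only subtlety to be bookkeeping: one must be careful that the two samples used for player $k$'s two gradient estimates are drawn \emph{independently} (otherwise the cross-term $\mathbb{E}[(Y^{(1)})^\top Y^{(2)}]$ no longer factors), and that the same pair of samples can be recycled across players $k$ without breaking anything, because for each fixed $k$ the pair $(a^{(1)}_{-k}, a^{(2)}_{-k})$ is marginally still two i.i.d.\ draws from $x_{-k}$. No further calculation is required beyond observing the linearity of $\Pi_{T\Delta}$ and invoking independence; the nonlinear simplex projection $\Pi_{\Delta}$ that obstructed prior losses never appears.
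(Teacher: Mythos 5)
Your proposal is correct and follows essentially the same route as the paper's proof of Lemma~\ref{lemma:unbiased_estimation}: two independent joint action profiles $\boldsymbol{a}^{(1)}, \boldsymbol{a}^{(2)} \sim \boldsymbol{x}$, linearity of $\Pi_{T\Delta}$, factoring the expectation of the inner product via independence, and linearity of expectation over the sum in $k$. Your added remark that the same pair of sampled profiles can be recycled across players\textemdash because unbiasedness of each summand only requires the marginal pair $(a^{(1)}_{-k}, a^{(2)}_{-k})$ to be i.i.d.\textemdash is exactly what the paper's estimator in~\eqref{eqn:unbiased_loss_estimator} does implicitly.
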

\begin{proof}
Define $\hat{\nabla}^{k(1)}_{x_k}$ as a random, unbiased gradient estimate (see Table~\ref{tab:estimators}). Let $\hat{\nabla}^{k(2)}_{x_k}$ be independent and distributed identically to $\hat{\nabla}^{k(1)}_{x_k}$.
Then,
\begin{subequations}
\begin{align}
    &\mathbb{E}_{\boldsymbol{a}^{(1)} \sim \boldsymbol{x}, \boldsymbol{a}^{(2)} \sim \boldsymbol{x}}[\sum_k \eta_k (\underbrace{\hat{\nabla}^{k(1)}_{x_k} - \frac{\mathbf{1}}{m_k} (\mathbf{1}^\top \hat{\nabla}^{k(1)}_{x_k}) \mathbf{1}}_{\text{projected-gradient 1}})^\top (\underbrace{\hat{\nabla}^{k(2)}_{x_k} - \frac{\mathbf{1}}{m_k} (\mathbf{1}^\top \hat{\nabla}^{k(2)}_{x_k}) \mathbf{1}}_{\text{projected-gradient 2}})] \label{eqn:unbiased_loss_estimator}
    \\ &= \sum_k \eta_k \mathbb{E}_{\boldsymbol{a}^{(1)} \sim \boldsymbol{x}, \boldsymbol{a}^{(2)} \sim \boldsymbol{x}}[(\hat{\nabla}^{k(1)}_{x_k} - \frac{\mathbf{1}}{m_k} (\mathbf{1}^\top \hat{\nabla}^{k(1)}_{x_k}) \mathbf{1})^\top (\hat{\nabla}^{k(2)}_{x_k} - \frac{\mathbf{1}}{m_k} (\mathbf{1}^\top \hat{\nabla}^{k(2)}_{x_k}) \mathbf{1})]
    \\ &= \sum_k \eta_k \mathbb{E}_{\boldsymbol{a}^{(1)} \sim \boldsymbol{x}}[(\hat{\nabla}^{k(1)}_{x_k} - \frac{\mathbf{1}}{m_k} (\mathbf{1}^\top \hat{\nabla}^{k(1)}_{x_k}) \mathbf{1})]^\top \mathbb{E}_{\boldsymbol{a}^{(2)} \sim \boldsymbol{x}}[ (\hat{\nabla}^{k(2)}_{x_k} - \frac{\mathbf{1}}{m_k} (\mathbf{1}^\top \hat{\nabla}^{k(2)}_{x_k}) \mathbf{1})]
    \\ &= \sum_k \eta_k (\nabla^{k}_{x_k} - \frac{\mathbf{1}}{m_k} (\mathbf{1}^\top \nabla^{k}_{x_k}) \mathbf{1})^\top (\nabla^{k}_{x_k} - \frac{\mathbf{1}}{m_k} (\mathbf{1}^\top \nabla^{k}_{x_k}) \mathbf{1})
    \\ &= \sum_k \eta_k ||\Pi_{T \Delta}(\nabla^{k}_{x_k})||^2
    \\ &= \mathcal{L}(\boldsymbol{x})
\end{align}
\end{subequations}
where the first equality follows from linearity of expectation, the second from independence of random variables, and the third from $\mathbb{E}_{\boldsymbol{a}^{(p)} \sim \boldsymbol{x}}[\hat{\nabla}^{k(p)}_{x_k}] = \nabla^{k}_{x_k}$, i.e., $\hat{\nabla}^{k(p)}_{x_k}$ is an unbiased estimator of player $k$'s gradient. Therefore,~\eqref{eqn:unbiased_loss_estimator} comprises an unbiased estimate of $\mathcal{L}(\boldsymbol{x})$ proving the claim.

\end{proof}

\begin{lemma}\label{lemma:loss_decomp}
The loss formed as the sum of the squared norms of the projected-gradients, $\mathcal{L}^{\tau}$, can be decomposed into three terms as follows:
\begin{align}
    \mathcal{L}^{\tau}(\boldsymbol{x}) &= \underbrace{\sum_{k} \eta_k x_q^\top B_{kq}^\top B_{kq} x_q}_{(A)} + \underbrace{2 \sum_k \eta_k E_k^\top B_{kq} x_q}_{(B)} + \underbrace{\sum_k \eta_k E_k^\top E_k}_{(C)}
\end{align}
where $q$ is any player other than $k$.
\end{lemma}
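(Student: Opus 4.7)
The plan is to decompose the projected entropy-regularized gradient of each player into an affine piece in some chosen other player's strategy and a residual piece coming from the entropy bonus, and then to expand the squared norm term-by-term.

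First I would pick an arbitrary $q \neq k$ and write
\begin{align*}
\nabla^{k\tau}_{x_k} \;=\; H^k_{kq}\,x_q \;-\; \tau \ln(x_k) \;-\; \tau \mathbf{1},
\end{align*}
using the multilinear representation $u_k(\boldsymbol{x}) = x_k^\top H^k_{kq}\,x_q$ (with the remaining strategies $x_{-kq}$ already marginalized into $H^k_{kq}$) together with $-\tau\nabla_{x_k}\!\bigl[\sum_l x_{kl}\ln x_{kl}\bigr]$ for the Shannon-entropy contribution.

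Next I would apply the tangent-space projection $\Pi_{T\Delta} = I - \tfrac{1}{m_k}\mathbf{1}\mathbf{1}^\top$, which is linear and annihilates the constant $-\tau\mathbf{1}$ term. Identifying $B_{kq} = \Pi_{T\Delta}\, H^k_{kq}$ from Lemma~\ref{lemma:loss_grad} (the $k \neq q$ case) and taking $E_k \myeq -\tau\,\Pi_{T\Delta}\ln(x_k)$ as the entropy residual (a choice that is consistent with $\partial E_k / \partial x_k = B_{kk}$), the projected gradient splits additively as $\Pi_{T\Delta}(\nabla^{k\tau}_{x_k}) = B_{kq}\,x_q + E_k$. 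Squaring this vector, multiplying by $\eta_k$, expanding the three binomial terms, and summing over $k$ then produces exactly (A), (B), and (C) as claimed.

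The one point worth flagging is the apparent dependence on the choice of $q$: the statement quantifies over ``any $q \neq k$,'' and this is well-posed because $H^k_{kq}\,x_q = \nabla^k_{x_k}$ for every admissible $q$ (all other players' strategies are already absorbed into $H^k_{kq}$), so terms (A) and the $B_{kq}x_q$ factor of (B) evaluate to the same numerical quantities regardless of which $q$ is picked. Once this invariance is noted, the rest is purely algebraic bookkeeping and I anticipate no real obstacle.
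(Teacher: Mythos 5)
Your proof is correct and matches the paper's argument: the paper likewise writes $\Pi_{T\Delta}(\nabla^{k\tau}_{x_k}) = B_{kq}x_q + E_k$ with $B_{kq} = [I - \frac{1}{m_k}\mathbf{1}\mathbf{1}^\top]H^k_{kq}$ and $E_k = -\tau[I - \frac{1}{m_k}\mathbf{1}\mathbf{1}^\top]\ln(x_k)$ (the projection annihilating the $-\tau\mathbf{1}$ term) and expands the squared norm into the three terms, using the idempotency/symmetry of the projection (Lemma~\ref{lemma:idempotent}) where you instead project first and then square, which is the same computation. Your explicit remark that the result is independent of the choice of $q$, since $H^k_{kq}x_q = \nabla^k_{x_k}$ for every $q \neq k$, is a point the paper leaves implicit.
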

\begin{proof}
Let $S^{\tau} = -\tau \sum_l x_{kl} \ln(x_{kl})$ so that $\frac{\partial S^{\tau}}{\partial x_k} = -\tau(\ln(x_k) + \mathbf{1})$. Note that $\Pi_{T\Delta}[\frac{\partial S^{\tau}}{\partial x_k}] = -\tau \Pi_{T\Delta}[\ln(x_k)]$.
\begin{subequations}
\begin{align}
    \mathcal{L}^{\tau}(\boldsymbol{x}) &= \sum_k  \eta_k (\Pi_{T\Delta}(\nabla^{k\tau}_{x_k}))^\top \Pi_{T\Delta}(\nabla^{k\tau}_{x_k})
    \\ &= \sum_k  \eta_k [H^{k}_{kq} x_q + \frac{\partial S^{\tau}}{\partial x_k}]^\top [I - \frac{1}{m_k} \mathbf{1} \mathbf{1}^\top] [I - \frac{1}{m_k} \mathbf{1} \mathbf{1}^\top] [H^{k}_{kq} x_q + \frac{\partial S^{\tau}}{\partial x_k}]
    \\ &= \sum_k  \eta_k \Big( x_q^\top [H^{k}_{kq}]^\top [I - \frac{1}{m_k} \mathbf{1} \mathbf{1}^\top]^2 [H^{k}_{kq}] x_q + 2[\frac{\partial S^{\tau}}{\partial x_k}]^\top [I - \frac{1}{m_k} \mathbf{1} \mathbf{1}^\top]^2 [H^{k}_{kq} x_q]
    \\ &\qquad+ [\frac{\partial S^{\tau}}{\partial x_k}]^\top [I - \frac{1}{m_k} \mathbf{1} \mathbf{1}^\top]^2 [\frac{\partial S^{\tau}}{\partial x_k}] \Big)
    \\ &= \underbrace{\sum_k  \eta_k x_q^\top B_{kq}^\top B_{kq} x_q}_{(A)} + \underbrace{2 \sum_k  \eta_k E_k^\top B_{kq} x_q}_{(B)} + \underbrace{\sum_k  \eta_k E_k^\top E_k}_{(C)}
\end{align}
\end{subequations}
where $B_{kq} = [I - \frac{1}{m_k} \mathbf{1} \mathbf{1}^\top] H^{k}_{kq}$ and $E_k = [I - \frac{1}{m_k} \mathbf{1} \mathbf{1}^\top] [\frac{\partial S^{\tau}}{\partial x_k}] = -\tau [I - \frac{1}{m_k} \mathbf{1} \mathbf{1}^\top] \ln(x_k)$.

\end{proof}

\subsubsection{Bound on Loss}
By Proposition~\ref{prop:loss_to_approx_nashconv}, \Eqref{eqn:loss_as_grad_prod}, we can also rewrite this loss as a weighted sum of $2$-norms, $\mathcal{L}(\boldsymbol{x}) = \sum_k \eta_k ||\nabla^{k}_{x_{k}} - \mu_k||_2^2$ where $\mu_k = \frac{\mathbf{1}}{m_k} (\mathbf{1}^\top \nabla^{k}_{x_k}) \in [0, 1]$ for brevity. This will allow us to more easily analyze our loss.

\begin{lemma}\label{lemma:loss_bound}
Assume payoffs are bounded in $[0,1]$, then setting $\eta_k \le \frac{4}{nm_k}$ or $\eta_k \le \frac{4}{n\bar{m}}$ or $\sum_k \eta_k \le \frac{4}{\bar{m}}$ ensures $0 \le \mathcal{L}(x) \le 1$ for all $x \in \mathcal{X}$.
\end{lemma}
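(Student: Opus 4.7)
} The lower bound $\mathcal{L}(x) \ge 0$ is immediate because $\mathcal{L}$ is a weighted sum of squared norms with $\eta_k > 0$. All the work lies in the upper bound, for which I plan a two-stage argument: first a per-player bound on $\|\Pi_{T\Delta}(\nabla^k_{x_k})\|_2^2$, and then arithmetic to check that each of the three stated conditions on the step sizes implies $\tfrac{1}{4}\sum_k \eta_k m_k \le 1$.

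The key per-player estimate is $\|\Pi_{T\Delta}(\nabla^k_{x_k})\|_2^2 \le m_k/4$. To see this, note that the $l$-th entry of $\nabla^k_{x_k}$ equals $\mathbb{E}_{a_{-k}\sim x_{-k}}[u_k(a_{kl}, a_{-k})]$, which is a convex combination of payoffs in $[0,1]$ and therefore lies in $[0,1]$. Writing $v = \nabla^k_{x_k}$ and $\mu_k = \tfrac{1}{m_k}\mathbf{1}^\top v$, the projection satisfies
\begin{align}
\|\Pi_{T\Delta}(v)\|_2^2 \;=\; \sum_l (v_l - \mu_k)^2 \;=\; \sum_l v_l^2 - m_k \mu_k^2 \;\le\; \sum_l v_l - m_k \mu_k^2 \;=\; m_k \mu_k(1-\mu_k) \;\le\; \frac{m_k}{4},
\end{align}
where the first inequality uses $v_l^2 \le v_l$ for $v_l \in [0,1]$ and the last uses $\mu_k(1-\mu_k) \le 1/4$. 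Summing with the weights $\eta_k$ yields $\mathcal{L}(x) \le \tfrac{1}{4}\sum_k \eta_k m_k$, which matches the bound on $\hat{\mathcal{L}}$ tabulated in Table~\ref{tab:estimators}.

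Finally I verify that each stated condition forces $\sum_k \eta_k m_k \le 4$. If $\eta_k \le 4/(n m_k)$ for every $k$, then $\eta_k m_k \le 4/n$ and summing over $n$ players gives $\sum_k \eta_k m_k \le 4$. If $\eta_k \le 4/(n\bar m)$, then $\sum_k \eta_k m_k \le \tfrac{4}{n\bar m}\sum_k m_k = 4$ by definition of $\bar m$. For the third condition $\sum_k \eta_k \le 4/\bar m$, the bound follows in the uniform-action-count setting (e.g.\ symmetric games where $m_k = \bar m$), since $\sum_k \eta_k m_k = \bar m \sum_k \eta_k \le 4$; more generally one can combine this with Hölder-type reasoning when the $\eta_k$ and $m_k$ are suitably aligned. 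The main subtlety (and the step I would double-check most carefully) is ensuring the per-entry bound $v_l \in [0,1]$ really transfers unchanged through the tangent-space projection, since $\Pi_{T\Delta}$ can produce negative components; the display above handles this cleanly by working with the sum-of-squares identity rather than entrywise bounds.
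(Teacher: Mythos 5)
Your proof is correct and takes essentially the same route as the paper: both reduce the claim to the per-player bound $\|\Pi_{T\Delta}(\nabla^k_{x_k})\|_2^2 \le m_k/4$ (using that each gradient entry is a convex combination of payoffs in $[0,1]$) and then do the step-size arithmetic on $\tfrac{1}{4}\sum_k \eta_k m_k \le 1$. The one cosmetic difference is in the variance bound: the paper reads $\tfrac{1}{m_k}\sum_l(\nabla^k_{x_{kl}}-\mu_k)^2$ as the variance of the gradient entries and invokes Popoviciu's inequality $\Var \le \tfrac14(\max-\min)^2$, whereas you derive the slightly sharper Bhatia--Davis-type bound $\Var \le \mu_k(1-\mu_k)\le \tfrac14$ from $v_l^2 \le v_l$; both yield $m_k/4$. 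Regarding the third condition, your hesitation is warranted and is not a deficiency of your argument relative to the paper: the paper's proof only verifies $\max_k \eta_k \le 4/(n\bar m)$ and never addresses $\sum_k \eta_k \le 4/\bar m$, which indeed does \emph{not} imply $\sum_k \eta_k m_k \le 4$ for heterogeneous action counts --- e.g.\ with $n=2$, $m_1=1$, $m_2=3$ (so $\bar m = 2$), taking $\eta_1=\varepsilon$ and $\eta_2 = 2-\varepsilon$ satisfies $\sum_k \eta_k \le 4/\bar m$ yet gives $\sum_k \eta_k m_k = 6-2\varepsilon > 4$. That condition is sufficient only when the $m_k$ coincide (or for uniform $\eta_k$, where it reduces to the second condition), exactly as you observed.
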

\begin{proof}
\begin{subequations}
\begin{align}
    0 \le \mathcal{L}(\boldsymbol{x}) &= \sum_k \eta_k ||\nabla^{k}_{x_{k}} - \mu_k||_2^2
    \\ &= \sum_k \eta_k m_k \Big[ \frac{1}{m_k} \sum_{l} (\nabla^{k}_{x_{kl}} - \mu_k)^2 \Big]
    \\ &= \sum_k \eta_k m_k Var[\nabla^{k}_{x_{k}}]
    \\ &\le \frac{1}{4} \sum_k \eta_k m_k
    \\ &\le \frac{1}{4} (\max_k \eta_k) \big( \sum_k m_k \big)
    \\ &= \frac{1}{4} (\max_k \eta_k) n\bar{m} \le 1
    \\ \implies (\max_k \eta_k) &\le \frac{4}{n\bar{m}}
\end{align}
\end{subequations}
where the first inequality follows from Popoviciu's inequality~\citep{popoviciu1935equations}: the variance of a bounded random variable $X$ is upper bounded by $Var[X] \le \frac{1}{4}(\max_X - \min_X)^2$.
\end{proof}

Next, we establish the following useful lemmas.
\begin{lemma}\label{lemma:idempotent}
The matrix $I - \frac{1}{m_k} \mathbf{1} \mathbf{1}^\top$ is a projection matrix and therefore idempotent. It is also symmetric, which implies it is its own square root.
\end{lemma}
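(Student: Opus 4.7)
The plan is to verify the three asserted properties by direct computation, using only the identity $\mathbf{1}^\top \mathbf{1} = m_k$.

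First I would establish symmetry, which is immediate: since $I^\top = I$ and $(\mathbf{1}\mathbf{1}^\top)^\top = \mathbf{1}\mathbf{1}^\top$, the matrix $P := I - \frac{1}{m_k}\mathbf{1}\mathbf{1}^\top$ satisfies $P^\top = P$.

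Next I would check idempotence by expanding the product
\begin{align*}
P^2 &= \Bigl(I - \tfrac{1}{m_k}\mathbf{1}\mathbf{1}^\top\Bigr)\Bigl(I - \tfrac{1}{m_k}\mathbf{1}\mathbf{1}^\top\Bigr)
= I - \tfrac{2}{m_k}\mathbf{1}\mathbf{1}^\top + \tfrac{1}{m_k^2}\mathbf{1}(\mathbf{1}^\top\mathbf{1})\mathbf{1}^\top.
\end{align*}
Substituting $\mathbf{1}^\top \mathbf{1} = m_k$ collapses the third term to $\frac{1}{m_k}\mathbf{1}\mathbf{1}^\top$, yielding $P^2 = I - \frac{1}{m_k}\mathbf{1}\mathbf{1}^\top = P$. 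Idempotence is exactly the algebraic condition defining a projection, and combined with the symmetry from the previous step, it upgrades $P$ to an orthogonal projection (onto the hyperplane orthogonal to $\mathbf{1}$, i.e., the tangent space $T\Delta$ used throughout the paper).

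Finally, the ``own square root'' claim reduces to a one-line observation: the identity $P \cdot P = P$ already exhibits $P$ as a matrix whose square equals $P$, so $P$ itself is a (symmetric, PSD) square root of $P$. There is no genuine obstacle in this proof; the only point worth highlighting is that the dimension $m_k$ enters through exactly one place, the identity $\mathbf{1}^\top \mathbf{1} = m_k$, and this is precisely what makes the linear and quadratic corrections in the expansion of $P^2$ combine into a single $-\frac{1}{m_k}\mathbf{1}\mathbf{1}^\top$ term rather than leaving a residual.
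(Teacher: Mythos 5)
Your proposal is correct and matches the paper's proof essentially verbatim: the paper also expands the product $\bigl(I - \frac{1}{m_k}\mathbf{1}\mathbf{1}^\top\bigr)^\top \bigl(I - \frac{1}{m_k}\mathbf{1}\mathbf{1}^\top\bigr)$ and uses $\mathbf{1}^\top\mathbf{1} = m_k$ to collapse the quadratic term, with symmetry and the square-root observation handled exactly as you do. No gaps; your added remark identifying the image as the tangent space $T\Delta$ is consistent with how the matrix is used throughout the paper.
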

\begin{proof}
\begin{subequations}
\begin{align}
    [I - \frac{1}{m_k} \mathbf{1} \mathbf{1}^\top]^\top [I - \frac{1}{m_k} \mathbf{1} \mathbf{1}^\top] &= I - \frac{2}{m_k} \mathbf{1} \mathbf{1}^\top + \frac{1}{m_k^2} \mathbf{1} (\mathbf{1}^\top \mathbf{1}) \mathbf{1}^\top
    \\ &= I - \frac{2}{m_k} \mathbf{1} \mathbf{1}^\top + \frac{1}{m_k} \mathbf{1} \mathbf{1}^\top
    \\ &= [I - \frac{1}{m_k} \mathbf{1} \mathbf{1}^\top].
\end{align}
\end{subequations}
\end{proof}

\begin{lemma}\label{lemma:var_bnd}
The product $A [I_m - \frac{1}{m} \mathbf{1}_m \mathbf{1}_m^\top]^p B$ for any integer $p > 0$ has entries whose absolute value is bounded by $\frac{m}{4} (A_{\max} - A_{\min}) (B_{\max} - B_{\min})$ where $A_{\min}, A_{\max}, B_{\min}, B_{\max}$ represent the minima and maxima of the matrices respectively.
\end{lemma}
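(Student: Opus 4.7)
The plan is to reduce the $p$-th power to the $p=1$ case and then recognize the resulting entry as an (unnormalized) covariance between rows of $A$ and columns of $B$, which I can bound using Cauchy--Schwarz and Popoviciu's inequality.

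First I would apply Lemma~\ref{lemma:idempotent}, which states that $M \myeq I_m - \frac{1}{m}\mathbf{1}_m\mathbf{1}_m^\top$ is idempotent, so $M^p = M$ for all integers $p \ge 1$. Thus it suffices to bound the entries of $AMB$. Writing out the $(i,j)$ entry in coordinates gives
\begin{align*}
(AMB)_{ij} &= \sum_{k} A_{ik} B_{kj} - \frac{1}{m}\Big(\sum_k A_{ik}\Big)\Big(\sum_l B_{lj}\Big).
\end{align*}
Letting $a_k = A_{ik}$ and $b_k = B_{kj}$ denote the $i$-th row of $A$ and the $j$-th column of $B$, this is exactly $m$ times the sample covariance of $a$ and $b$ under the uniform distribution on $[m]$, i.e.\ $(AMB)_{ij} = m\,\mathrm{Cov}_{\mathcal{U}([m])}(a,b)$.

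Next I would bound this covariance. By Cauchy--Schwarz,
\begin{align*}
|\mathrm{Cov}(a,b)| \le \sqrt{\mathrm{Var}(a)\,\mathrm{Var}(b)},
\end{align*}
and by Popoviciu's inequality on variances (already invoked in Lemma~\ref{lemma:loss_bound}), $\mathrm{Var}(a) \le \tfrac{1}{4}(a_{\max} - a_{\min})^2$ and likewise for $b$. Combining these gives
\begin{align*}
|(AMB)_{ij}| \le \frac{m}{4}(a_{\max} - a_{\min})(b_{\max} - b_{\min}).
\end{align*}
Finally, since $a$ is a row of $A$ and $b$ is a column of $B$, $a_{\max} - a_{\min} \le A_{\max} - A_{\min}$ and $b_{\max} - b_{\min} \le B_{\max} - B_{\min}$, yielding the claim.

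There is no real obstacle here; the only minor subtlety is recognizing the entry of $AMB$ as an unnormalized covariance so that the standard variance bounds apply, and noting that idempotence of $M$ collapses the general $p$ case to $p=1$ immediately.
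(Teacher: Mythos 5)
Your proof is correct and follows essentially the same route as the paper's: both reduce the power $p$ via idempotence of $I_m - \frac{1}{m}\mathbf{1}_m\mathbf{1}_m^\top$, identify the resulting entry as $m$ times a covariance between a row of $A$ and a column of $B$ (the paper writes this as $m \cdot \mathrm{Corr} \cdot \sigma_A \sigma_B$ with $\vert\mathrm{Corr}\vert \le 1$, which is exactly your Cauchy--Schwarz step), and conclude with Popoviciu's inequality. The only cosmetic difference is that the paper splits the projector as $M = M \cdot M$ to mean-center $A$'s rows and $B$'s columns separately, whereas you expand a single $M$ in coordinates; the bounds obtained are identical.
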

\begin{proof}
The matrix $[I - \frac{1}{m} \mathbf{1} \mathbf{1}^\top]$ is idempotent (Lemma~\ref{lemma:idempotent}) so we can rewrite the product for any $p$ as
\begin{align}
    A [I - \frac{1}{m} \mathbf{1} \mathbf{1}^\top] [I - \frac{1}{m} \mathbf{1} \mathbf{1}^\top] B.
\end{align}
The matrix $[I - \frac{1}{m} \mathbf{1} \mathbf{1}^\top]$ has the property that it removes the mean from every row of a matrix when right multiplied against it, i.e., $A [I - \frac{1}{m} \mathbf{1} \mathbf{1}^\top]$ removes the means from the rows of $A$. Similarly, left multiplying it removes the means from the columns. Let $\tilde{A}$ and $\tilde{B}$ represent these mean-centered results respectively. The absolute value of the $ij$th entry in the resulting product can then be recognized as
\begin{subequations}
\begin{align}
    \Big\vert \sum_k \tilde{A}_{ik} \tilde{B}_{kj} \Big\vert &= \Big\vert \sum_k \Big(A_{ik} - \frac{1}{m} \sum_{k'} A_{ik'}\Big) \Big(B_{kj} - \frac{1}{m} \sum_{k'} B_{k'j}\Big) \Big\vert
    \\ &= \vert m \cdot Corr(A_{i,\cdot}, B_{\cdot,j}) \cdot \sigma_{A_{i,\cdot}} \sigma_{B_{\cdot, j}} \vert
    \\ &\le m \sigma_{A_{i,\cdot}} \sigma_{B_{\cdot, j}}. \label{eqn:stds}
\end{align}
\end{subequations}
By Popoviciu's inequality~\citep{popoviciu1935equations}, we know the variance of a bounded random variable $X$ is upper bounded by $Var[X] \le \frac{1}{4}(\max_X - \min_X)^2$. Hence its standard deviation is bounded by $Std[X] \le \frac{1}{2}(\max_X - \min_X)$. Plugging these bounds for $A$ and $B$ into~\eqref{eqn:stds} completes the claim.
\end{proof}

\begin{lemma}\label{lemma:loss_bound_sto_tau}
Assume payoffs are bounded in $[0,1]$, then
\begin{align}
    \vert \mathcal{L}^{\tau}(\boldsymbol{x}) \vert \le \frac{1}{4} \big(\max_k \eta_k\big) n \bar{m} \big(\tau \ln\Big(\frac{1}{x_{\min}}\Big) + 1\big)^2
\end{align}
for any $\boldsymbol{x}$ such that $x_{kl} \ge x_{\min} \,\, \forall k,l$.
\end{lemma}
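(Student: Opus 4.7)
The plan is to follow the same template as the proof of Lemma~\ref{lemma:loss_bound}, i.e., rewrite each squared projected-gradient norm as a (uniform) variance of the entries of $\nabla^{k\tau}_{x_k}$ and then bound that variance with Popoviciu's inequality. The only new ingredient is controlling how much the entropy correction can widen the range of $\nabla^{k\tau}_{x_k}$.

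First I would recall that for any $v \in \mathbb{R}^{m_k}$, $\Pi_{T\Delta}(v) = v - \bar{v}\mathbf{1}$ with $\bar{v} = \frac{1}{m_k}\mathbf{1}^\top v$, so that $\|\Pi_{T\Delta}(v)\|_2^2 = \sum_l (v_l - \bar{v})^2 = m_k \cdot \mathrm{Var}[v]$, where $\mathrm{Var}[v]$ is the variance of the empirical distribution that places mass $1/m_k$ on each entry. Thus
\begin{align*}
\mathcal{L}^{\tau}(\boldsymbol{x}) \;=\; \sum_k \eta_k m_k \, \mathrm{Var}\!\left[\nabla^{k\tau}_{x_k}\right],
\end{align*}
which is manifestly nonnegative, so $|\mathcal{L}^{\tau}(\boldsymbol{x})| = \mathcal{L}^{\tau}(\boldsymbol{x})$.

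Next I would bound the range of the entries of $\nabla^{k\tau}_{x_k}$. Per Section~\ref{subsec:interior_eq}, the $l$-th entry is $\nabla^{k\tau}_{x_{kl}} = \nabla^{k}_{x_{kl}} - \tau(\ln(x_{kl}) + 1)$. Since payoffs lie in $[0,1]$ we have $\nabla^{k}_{x_{kl}} \in [0,1]$, and since $x_{kl} \in [x_{\min}, 1]$ we have $-\tau(\ln(x_{kl})+1) \in [-\tau,\, \tau(\ln(1/x_{\min})-1)]$. Hence each entry of $\nabla^{k\tau}_{x_k}$ lies in an interval of width at most $1 + \tau\ln(1/x_{\min})$, and Popoviciu's inequality gives
\begin{align*}
\mathrm{Var}\!\left[\nabla^{k\tau}_{x_k}\right] \;\le\; \tfrac{1}{4}\bigl(1 + \tau\ln(1/x_{\min})\bigr)^2.
\end{align*}

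Plugging this in and using $\sum_k m_k = n\bar{m}$,
\begin{align*}
\mathcal{L}^{\tau}(\boldsymbol{x}) \;\le\; \tfrac{1}{4}\bigl(\tau\ln(1/x_{\min}) + 1\bigr)^2 \sum_k \eta_k m_k \;\le\; \tfrac{1}{4}\bigl(\max_k \eta_k\bigr)\, n\bar{m}\, \bigl(\tau\ln(1/x_{\min}) + 1\bigr)^2,
\end{align*}
which is the desired bound. The only mildly delicate step is the range computation for $\nabla^{k\tau}_{x_{kl}}$; everything else is a direct reduction to the unregularized case (Lemma~\ref{lemma:loss_bound}).
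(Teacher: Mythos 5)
Your proposal is correct and follows essentially the same route as the paper: both bound the range of the entropy-regularized gradient entries by $\tau\ln(1/x_{\min})+1$ and apply Popoviciu's inequality, with $\sum_k \eta_k m_k \le (\max_k \eta_k)\, n\bar{m}$ finishing the argument. The only cosmetic difference is that you track the $-\tau\mathbf{1}$ shift explicitly via the variance identity of Lemma~\ref{lemma:loss_bound} (the shift is harmless since variance and range are translation-invariant), whereas the paper discards it through $\Pi_{T\Delta}(\mathbf{1})=\mathbf{0}$ and then invokes Lemma~\ref{lemma:var_bnd}.
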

\begin{proof}
First note that for payoffs in $[0, 1]$, the entries in
\begin{align}
\Pi_{T\Delta}(\nabla^{k\tau}_{x_k}) &= \Pi_{T\Delta}\big(\nabla^{k}_{x_{k}} - \tau (\ln(x_k) + \mathbf{1})\big) = \Pi_{T\Delta}(\nabla^{k}_{x_{k}} - \tau \ln(x_k))
\end{align}
are bounded within $[0, \tau \ln(\frac{1}{x_{\min}}) + 1]$ with a range of $\tau \ln(\frac{1}{x_{\min}}) + 1$.
Then starting from the definition of $\mathcal{L}^{\tau}$ and applying Lemma~\ref{lemma:var_bnd}, we find
\begin{subequations}
\begin{align}
    \vert \mathcal{L}^{\tau}(\boldsymbol{x}) \vert &= \Big\vert \sum_k \eta_k ||\Pi_{T\Delta}(\nabla^{k\tau}_{x_k})||^2 \Big\vert
    \\ &\le \frac{1}{4} \sum_k \eta_k m_k \big(\tau \ln\Big(\frac{1}{x_{\min}}\Big) + 1\big)^2
    \\ &\le \frac{1}{4} \big(\tau \ln\Big(\frac{1}{x_{\min}}\Big) + 1\big)^2 \big(\max_k \eta_k\big) \sum_k m_k
    \\ &= \frac{1}{4} \big(\max_k \eta_k\big) n \bar{m} \big(\tau \ln\Big(\frac{1}{x_{\min}}\Big) + 1\big)^2.
\end{align}
\end{subequations}
\end{proof}

\subsection{QREs Approximate NEs at Low Temperature}

\begin{lemma}\label{lemma:set_tau}
Setting $\tau = \ln(1/p)^{-1}$ with $p \in (0, 1)$ and payoffs in $[0, 1]$ ensures that any QRE contains probabilities greater than $\frac{p}{\max_k m_k}$.
\end{lemma}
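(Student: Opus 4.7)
The plan is to exploit the closed-form description of QREs and directly lower-bound each coordinate $x_{kl}$ using the payoff bounds.

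First, I would recall from the preliminaries (specifically the characterization cited via \cite{leonardos2021exploration}) that any QRE satisfies the fixed-point condition
\begin{align*}
    x_{kl} \;=\; \frac{\exp\bigl(\nabla^{k}_{x_{kl}}/\tau\bigr)}{\sum_{j=1}^{m_k} \exp\bigl(\nabla^{k}_{x_{kj}}/\tau\bigr)}.
\end{align*}
Since payoffs lie in $[0,1]$ and $\nabla^{k}_{x_{kj}} = \mathbb{E}_{a_{-k}\sim x_{-k}}[u_k(e_j, a_{-k})]$ is a convex combination of payoffs, each entry $\nabla^{k}_{x_{kj}}$ also lies in $[0,1]$.

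Next, I would bound the numerator from below and the denominator from above. The worst case for the numerator is $\nabla^{k}_{x_{kl}} = 0$, giving $\exp(0) = 1$. The worst case for the denominator is $\nabla^{k}_{x_{kj}} = 1$ for every $j$, giving at most $m_k \exp(1/\tau)$. Hence
\begin{align*}
    x_{kl} \;\ge\; \frac{1}{m_k \exp(1/\tau)} \;=\; \frac{\exp(-1/\tau)}{m_k} \;\ge\; \frac{\exp(-1/\tau)}{\max_k m_k}.
\end{align*}

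Finally, I would plug in the chosen temperature: $\tau = \ln(1/p)^{-1}$ gives $1/\tau = \ln(1/p)$, so $\exp(-1/\tau) = \exp(-\ln(1/p)) = p$. Substituting yields $x_{kl} \ge p/\max_k m_k$ for every coordinate of every player, as claimed. There is no serious obstacle here; the only subtlety is noting that both the numerator bound (payoffs $\ge 0$) and denominator bound (payoffs $\le 1$) must be applied simultaneously, which is valid because $[0,1]$ bounds every expected payoff entry.
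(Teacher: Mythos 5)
Your proof is correct and takes essentially the same route as the paper: both start from the logit fixed point $x_k = \texttt{softmax}(\nabla^k_{x_k}/\tau)$ and lower-bound each coordinate by minimizing the numerator ($e^0$) and maximizing the denominator under the $[0,1]$ payoff constraint, with your denominator bound $m_k e^{1/\tau}$ being a slightly looser version of the paper's exact worst case $(m^*-1)e^{1/\tau}+e^0$, which conveniently lets you substitute $\tau = \ln(1/p)^{-1}$ directly instead of solving for $\tau$ as the paper does. Note only that since the $l$-th denominator term coincides with the numerator (it cannot equal $e^{1/\tau}$ while the numerator equals $e^0$), your inequality is in fact strict, so the lemma's ``greater than'' is fully recovered.
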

\begin{proof}
Let $m^* = \max_k m_k$ and $\nabla^k_{x_k}$ be player $k$'s gradient. Recall that a QRE($\tau$), specifically a logit equilibrium, satisfies the fixed point equation $x_k = \texttt{softmax}(\frac{\nabla^k_{x_k}}{\tau})$ for all $k$. Then the probability for any action in a QRE strategy profile, $\boldsymbol{x}$, is lower bounded as
\begin{subequations}
\begin{align}
    x_{kl} \ge \min_k \min_l \,\, \{ x_{kl} \} &\ge \min_{\boldsymbol{x}' \vert \boldsymbol{x}' \text{ is QRE}} \min_k \min_l \,\, \{ x'_{kl} \}
    \\ &= \min_{\nabla^k_{x_k}} \min_k \min_l \big[ \texttt{softmax}\Big(\frac{\nabla^k_{x_k}}{\tau}\Big) \big]_l
    \\ &= \frac{e^{0}}{(m^*-1) e^{\frac{1}{\tau}} + e^{0}} \label{eqn:intuitive_result}
    \\ &\myeq x_{\min}
\end{align}
where~\eqref{eqn:intuitive_result} follows from minimizing the numerator and maximizing the denominator of the softmax formula subject to the payoff constraints.
\end{subequations}

Rearranging terms, we find
\begin{align}
    e^{\frac{1}{\tau}} &= \frac{1}{m^*-1} \big(\frac{1}{x_{\min}} - 1 \big) \implies \tau = \frac{1}{\ln\Big( \frac{1}{m^*-1} \big(\frac{1}{x_{\min}} - 1 \big) \Big)}.
\end{align}

Let $p \in (0, 1)$ such that $x_{\min} = \frac{p}{\max_k m_k} = \frac{p}{m^*}$, then
\begin{subequations}
\begin{align}
    \tau &= \frac{1}{\ln\Big( \frac{1}{m^*-1} \big(\frac{1}{x_{\min}} - 1 \big) \Big)}
    \\ &= \frac{1}{\ln\Big( \frac{1}{m^*-1} \big(\frac{m^*}{p} - 1 \big) \Big)}
    \\ &= \frac{1}{\ln\Big( \frac{m^*-p}{m^*-1} \frac{1}{p} \Big)}
    \\ &\le \frac{1}{\ln(\frac{1}{p})}.
\end{align}
\end{subequations}
This implies if we set $\tau = \ln(1/p)^{-1}$, then we are guaranteed that all QREs contain probabilities greater than $x_{\min} = \frac{p}{\max_k m_k}$.
\end{proof}

\begin{corollary}\label{corr:loss_bound_sto}
Assume payoffs are bounded in $[0,1]$ and $\tau=\ln(1/p)^{-1}$, then
\begin{align}
    \vert \mathcal{L}^{\tau}(\boldsymbol{x}) \vert \le \frac{1}{4} (\max_k \eta_k) n \bar{m} \Big( \frac{\ln(m^*)}{\ln(1/p)} + 2 \Big)^2.
\end{align}
\end{corollary}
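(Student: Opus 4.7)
The plan is to chain Lemma~\ref{lemma:loss_bound_sto_tau} and Lemma~\ref{lemma:set_tau} with the specific temperature choice $\tau = 1/\ln(1/p)$. First I would invoke Lemma~\ref{lemma:set_tau} to justify evaluating the bound of Lemma~\ref{lemma:loss_bound_sto_tau} at $x_{\min} = p/m^*$: under the stated temperature, every QRE (which is what the surrounding BLiN analysis searches over) has all coordinates at least $p/m^*$, so this is the worst-case value of $x_{\min}$ on the relevant search domain.

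Second I would plug $x_{\min} = p/m^*$ and $\tau = 1/\ln(1/p)$ into the conclusion of Lemma~\ref{lemma:loss_bound_sto_tau}. The inner factor becomes
\begin{align*}
\tau \ln\!\Big(\frac{1}{x_{\min}}\Big) + 1
&= \frac{\ln(m^*/p)}{\ln(1/p)} + 1
= \frac{\ln(m^*) + \ln(1/p)}{\ln(1/p)} + 1
= \frac{\ln(m^*)}{\ln(1/p)} + 2,
\end{align*}
and substituting back into $\tfrac{1}{4}(\max_k \eta_k)\, n\bar{m}\,(\tau\ln(1/x_{\min})+1)^2$ gives exactly the claimed bound.

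The main obstacle is conceptual rather than technical: one must recognize that the corollary is implicitly a statement over the restricted domain $\{\boldsymbol{x} : x_{kl} \ge p/m^*\}$ (which is the portion of the simplex product containing every QRE at the chosen $\tau$), rather than the full simplex product where $\mathcal{L}^{\tau}$ would blow up at the boundary through the $\tau \ln(x_{kl})$ term inside $\nabla^{k\tau}_{x_k}$. Once that restriction is acknowledged via Lemma~\ref{lemma:set_tau}, the remainder is a one-line substitution.
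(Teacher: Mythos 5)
Your proof matches the paper's argument exactly: the paper likewise chains Lemma~\ref{lemma:loss_bound_sto_tau} with Lemma~\ref{lemma:set_tau}, substitutes $x_{\min} = p/m^*$ and $\tau = \ln(1/p)^{-1}$, and performs the same one-line simplification $\tau\ln(1/x_{\min}) + 1 = \frac{\ln(m^*)}{\ln(1/p)} + 2$. Your observation that the bound implicitly holds only on the restricted domain $\{\boldsymbol{x} : x_{kl} \ge p/m^*\}$ containing all QREs is correct and in fact makes explicit a hypothesis the paper's statement leaves tacit.
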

\begin{proof}
Starting with Lemma~\ref{lemma:loss_bound_sto_tau} and applying Lemma~\ref{lemma:set_tau} , we find
\begin{subequations}
\begin{align}
    \vert \mathcal{L}^{\tau}(\boldsymbol{x}) \vert &\le \frac{1}{4} (\max_k \eta_k) n \bar{m} \big(\tau \ln\Big(\frac{1}{x_{\min}}\Big) + 1\big)^2
    \\ &= \frac{1}{4} (\max_k \eta_k) n \bar{m} \Big( \frac{1}{\ln(1/p)} \ln\Big(\frac{m^*}{p}\Big) + 1 \Big)^2
    \\ &= \frac{1}{4} (\max_k \eta_k) n \bar{m} \Big( \frac{1}{\ln(1/p)} \big(\ln(m^*) + \ln(1/p)\big) + 1 \Big)^2
    \\ &= \frac{1}{4} (\max_k \eta_k) n \bar{m} \Big( \frac{\ln(m^*)}{\ln(1/p)} + 2 \Big)^2.
\end{align}
\end{subequations}
\end{proof}

\begin{lemma}[Low Temperature Approximate QREs are Approximate Nash Equilibria]\label{lemma:qre_to_exp}
Let $\nabla^{k\tau}_{x_k}$ be player $k$'s entropy regularized gradient and $\boldsymbol{x}$ be an approximate QRE. Then it holds that
\begin{align}
    u_k(\texttt{BR}_k, x_{-k}) - u_k(\boldsymbol{x}) &\le \tau \log(m_k) + \sqrt{2} ||\Pi_{T\Delta}(\nabla^{k\tau}_{x_k})||.
\end{align}
\end{lemma}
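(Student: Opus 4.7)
The plan is to relate the unregularized exploitability of player $k$ to the entropy-regularized one, applying Lemma~\ref{lemma:exp_to_grad_norm} (which already gives a projected-gradient-norm bound for concave utilities, and in particular for $u_k^{\tau}$) and then paying an entropy penalty for switching between the two best-response notions.

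First I would decompose the original utility using $u_k(y, x_{-k}) = u_k^{\tau}(y, x_{-k}) - \tau S(y)$. In particular, write
\begin{align*}
    u_k(\texttt{BR}_k, x_{-k}) - u_k(\boldsymbol{x}) &= \bigl[u_k^{\tau}(\texttt{BR}_k, x_{-k}) - u_k^{\tau}(\boldsymbol{x})\bigr] + \tau\bigl[S(x_k) - S(\texttt{BR}_k)\bigr].
\end{align*}
Next, since $\texttt{BR}_k$ is a best response to $u_k$, not to $u_k^{\tau}$, I would upper-bound the first bracket using $u_k^{\tau}(\texttt{BR}_k, x_{-k}) \le \max_z u_k^{\tau}(z, x_{-k})$, so that the first bracket becomes the \emph{QRE exploitability} of player $k$ at $\boldsymbol{x}$. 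Applying Lemma~\ref{lemma:exp_to_grad_norm} to $u_k^{\tau}$ (which is concave in $x_k$ since it is the sum of a linear function and the concave entropy) yields
\begin{align*}
    \max_z u_k^{\tau}(z, x_{-k}) - u_k^{\tau}(\boldsymbol{x}) \le \sqrt{2}\,\Vert \Pi_{T\Delta}(\nabla^{k\tau}_{x_k})\Vert.
\end{align*}

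Finally, I would handle the entropy correction term by using the elementary bounds $S(\texttt{BR}_k) \ge 0$ (entropy is nonnegative) and $S(x_k) \le \log(m_k)$ (maximum entropy over the $m_k$-simplex), so $\tau[S(x_k) - S(\texttt{BR}_k)] \le \tau\log(m_k)$. Combining the two bounds gives the claim. The only modest subtlety is being precise that Lemma~\ref{lemma:exp_to_grad_norm} really does apply to $u_k^{\tau}$ (not just $u_k$), which follows because its proof only used concavity in $x_k$ and the diameter of the simplex, both of which still hold after adding the entropy bonus. No step should be more than a couple of lines.
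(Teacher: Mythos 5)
Your proposal is correct and matches the paper's own proof essentially step for step: the same add-and-subtract entropy decomposition, the same bound of the first bracket by the QRE exploitability followed by Lemma~\ref{lemma:exp_to_grad_norm} applied to the concave $u_k^{\tau}$, and the same entropy bounds $S(\texttt{BR}_k) \ge 0$ and $S(x_k) \le \log(m_k)$. Your closing remark about why Lemma~\ref{lemma:exp_to_grad_norm} extends to $u_k^{\tau}$ is exactly the justification the paper invokes as well.
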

\begin{proof}
Beginning with the definition of exploitability, we find
\begin{subequations}
\begin{align}
    u_k(\texttt{BR}_k, x_{-k}) - u_k(\boldsymbol{x}) &= \big( u_k(\texttt{BR}_k, x_{-k}) + \tau S(\texttt{BR}_k) - \tau S(\texttt{BR}_k) \big)
    \\ &\qquad- \big( u_k(\boldsymbol{x}) + \tau S(x_k) - \tau S(x_k) \big) \nonumber
    \\ &= u_k^{\tau}(\texttt{BR}_k, x_{-k}) - u_k^{\tau}(\boldsymbol{x}) + \tau \big( S(x_k) - S(\texttt{BR}_k) \big)
    \\ &\le \max_{z \in \Delta^{m_k-1}} u_k^{\tau}(z, x_{-k}) - u_k^{\tau}(\boldsymbol{x}) + \tau \max_{z' \in \Delta^{m_k-1}} S(z')
    \\ &\le \sqrt{2} ||\Pi_{T\Delta}(\nabla^{k\tau}_{x_k})|| + \tau \max_{z' \in \Delta^{m_k-1}} S(z')
    \\ &\le \sqrt{2} ||\Pi_{T\Delta}(\nabla^{k\tau}_{x_k})|| + \tau \log(m_k)
\end{align}
\end{subequations}
where the second equality follows from the definition of player $k$'s entropy regularized utility $u_k^{\tau}$, the first inequality from nonnegativity of entropy $S$, the second inequality from concavity of $u_k^{\tau}$ with respect to its first argument (Lemma~\ref{lemma:exp_to_grad_norm}), and the last from the maximum possible value of Shannon entropy over distributions on $m_k$ actions.
\end{proof}

\begin{lemma}[$\mathcal{L}^{\tau}$ Scores Nash Equilibria]\label{lemma:qre_to_ne}
Let $\mathcal{L}^{\tau}(\boldsymbol{x})$ be our proposed entropy regularized loss function and $\boldsymbol{x}$ be an approximate QRE. Then it holds that
\begin{align}
    \epsilon &\le \tau \log\Big(\prod_k m_k\Big) + \sqrt{\frac{2n}{\min_k \eta_k}} \sqrt{\mathcal{L}^{\tau}(\boldsymbol{x}}).
\end{align}
\end{lemma}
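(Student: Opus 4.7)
The plan is to reduce the claim to Lemma~\ref{lemma:qre_to_exp} (which bounds the per-player exploitability $u_k(\texttt{BR}_k,x_{-k})-u_k(\boldsymbol{x})$ by $\tau\log(m_k)+\sqrt{2}\|\Pi_{T\Delta}(\nabla^{k\tau}_{x_k})\|$) and then collapse the per-player projected-gradient norms into $\mathcal{L}^{\tau}$ using exactly the chain of norm inequalities that appears in the proof of Lemma~\ref{lemma:loss_to_qre_eps}, equations~\eqref{eqn:sum_of_pg_norms_to_loss_start}--\eqref{eqn:sum_of_pg_norms_to_loss_end}.

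First I would write $\epsilon=\max_k\epsilon_k(\boldsymbol{x})\le\sum_k\epsilon_k(\boldsymbol{x})$ (since each $\epsilon_k\ge 0$), and then apply Lemma~\ref{lemma:qre_to_exp} term-by-term to obtain
\begin{align*}
\epsilon \;\le\; \sum_k \bigl[\tau\log(m_k)+\sqrt{2}\,\|\Pi_{T\Delta}(\nabla^{k\tau}_{x_k})\|\bigr]
\;=\; \tau\log\!\Bigl(\prod_k m_k\Bigr) + \sqrt{2}\sum_k\|\Pi_{T\Delta}(\nabla^{k\tau}_{x_k})\|.
\end{align*}
The log-product identity handles the first piece immediately, so what remains is to upper-bound the sum of projected-gradient norms by $\sqrt{2n/\min_k\eta_k}\sqrt{\mathcal{L}^{\tau}(\boldsymbol{x})}$.

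Next I would follow verbatim the manipulation already used in Lemma~\ref{lemma:loss_to_qre_eps}: view $\sum_k\|\Pi_{T\Delta}(\nabla^{k\tau}_{x_k})\|$ as the $\ell_1$-norm of the length-$n$ vector whose $k$-th entry is $\|\Pi_{T\Delta}(\nabla^{k\tau}_{x_k})\|$, bound that $\ell_1$ norm by $\sqrt{n}$ times the corresponding $\ell_2$ norm, and then insert the weights $\eta_k$ via $1=\eta_k/\eta_k \le \eta_k/\min_j\eta_j$ under the square root. This gives
\begin{align*}
\sqrt{2}\sum_k\|\Pi_{T\Delta}(\nabla^{k\tau}_{x_k})\|
 \;\le\; \sqrt{2n}\sqrt{\sum_k\|\Pi_{T\Delta}(\nabla^{k\tau}_{x_k})\|^2}
 \;\le\; \sqrt{\tfrac{2n}{\min_k\eta_k}}\sqrt{\sum_k\eta_k\|\Pi_{T\Delta}(\nabla^{k\tau}_{x_k})\|^2}
 \;=\; \sqrt{\tfrac{2n}{\min_k\eta_k}}\sqrt{\mathcal{L}^{\tau}(\boldsymbol{x})}.
\end{align*}
Combining the two pieces yields the stated bound $\epsilon\le\tau\log(\prod_k m_k)+\sqrt{2n/\min_k\eta_k}\sqrt{\mathcal{L}^{\tau}(\boldsymbol{x})}=f_\tau(\mathcal{L}^{\tau})$.

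There is no real obstacle here; the lemma is essentially a bookkeeping corollary that glues the per-player bound of Lemma~\ref{lemma:qre_to_exp} to the $\mathcal{L}^{\tau}$-to-$\epsilon_{QRE}$ reduction of Lemma~\ref{lemma:loss_to_qre_eps}. The only thing to be slightly careful about is that the $\tau\log m_k$ term is additive (it comes from $S(\texttt{BR}_k)-S(x_k)\le\log m_k$) rather than sitting under the square root, so it must be summed \emph{outside} the Cauchy--Schwarz step — this is what converts $\sum_k\log m_k$ into $\log\prod_k m_k$ cleanly.
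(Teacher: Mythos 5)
Your proposal is correct and matches the paper's own proof essentially step for step: the paper likewise starts from $\epsilon = \max_k \epsilon_k \le \sum_k \epsilon_k$, applies Lemma~\ref{lemma:qre_to_exp} per player to pull out $\tau\sum_k\log(m_k) = \tau\log(\prod_k m_k)$, and then invokes the exact chain of inequalities (\ref{eqn:sum_of_pg_norms_to_loss_start})--(\ref{eqn:sum_of_pg_norms_to_loss_end}) from Lemma~\ref{lemma:loss_to_qre_eps} to convert $\sqrt{2}\sum_k\|\Pi_{T\Delta}(\nabla^{k\tau}_{x_k})\|$ into $\sqrt{2n/\min_k\eta_k}\,\sqrt{\mathcal{L}^{\tau}(\boldsymbol{x})}$. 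Your closing remark about keeping the additive $\tau\log m_k$ terms outside the Cauchy--Schwarz step is exactly the right point of care and is consistent with how the paper handles it.
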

\begin{proof}
Beginning with the definition of exploitability and applying Lemma~\ref{lemma:qre_to_exp}, we find
\begin{subequations}
\begin{align}
    \epsilon &= \max_k u_k(\texttt{BR}_k, x_{-k}) - u_k(\boldsymbol{x}) \qquad \text{(recall each $\epsilon_k \ge 0$)}
    \\ &\le \sum_k u_k(\texttt{BR}_k, x_{-k}) - u_k(\boldsymbol{x})
    \\ &\le \sum_k \Big[ \tau \log(m_k) + \sqrt{2} ||\Pi_{T\Delta}(\nabla^{k\tau}_{x_k})|| \Big]
    \\ &= \tau \sum_k \log(m_k) + \sum_k \sqrt{2} ||\Pi_{T\Delta}(\nabla^{k\tau}_{x_k})||
    \\ &\le \tau \log\Big(\prod_k m_k\Big) + \sqrt{\frac{2n}{\min_k \eta_k}} \sqrt{\mathcal{L}^{\tau}(\boldsymbol{x}}).
\end{align}
\end{subequations}
where the last inequality follows from the same steps (\ref{eqn:sum_of_pg_norms_to_loss_start})-(\ref{eqn:sum_of_pg_norms_to_loss_end}) outlined in Lemma~\ref{lemma:loss_to_qre_eps}, which established the relationship between $\mathcal{L}(\boldsymbol{x})$ and $\epsilon$.
\end{proof}

\subsection{Gradient of Loss}

\begin{lemma}\label{lemma:loss_grad}
The gradient of $\mathcal{L}^{\tau}(\boldsymbol{x})$ with respect to player $l$'s strategy $x_l$ is
\begin{align}
    \nabla_{x_l} \mathcal{L}(\boldsymbol{x}) &= 2 \sum_{k} \eta_k B_{kl}^\top \Pi_{T\Delta}(\nabla^{k\tau}_{x_k})
\end{align}
where $B_{ll} = -\tau [I - \frac{1}{m_l} \mathbf{1} \mathbf{1}^\top] \texttt{diag}\big(\frac{1}{x_l}\big)$ and $B_{kl} = [I - \frac{1}{m_k} \mathbf{1} \mathbf{1}^\top] H^{k}_{kl}$ for $k \ne l$.
\end{lemma}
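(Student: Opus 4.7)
The plan is to apply the chain rule directly to the definition $\mathcal{L}^{\tau}(\boldsymbol{x}) = \sum_k \eta_k \|\Pi_{T\Delta}(\nabla^{k\tau}_{x_k})\|^2$ and then explicitly compute the Jacobian of $\Pi_{T\Delta}(\nabla^{k\tau}_{x_k})$ with respect to each $x_l$. Writing the entropy-augmented gradient as $\nabla^{k\tau}_{x_k} = \nabla^k_{x_k} - \tau(\ln(x_k) + \mathbf{1})$, differentiating the squared norm gives
\begin{align*}
\nabla_{x_l} \mathcal{L}^{\tau}(\boldsymbol{x}) \;=\; 2 \sum_k \eta_k \, J_{kl}^\top \, \Pi_{T\Delta}(\nabla^{k\tau}_{x_k}),
\end{align*}
where $J_{kl} = \frac{\partial}{\partial x_l} \Pi_{T\Delta}(\nabla^{k\tau}_{x_k})$. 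Since $\Pi_{T\Delta} = I - \tfrac{1}{m_k} \mathbf{1}\mathbf{1}^\top$ is a constant linear operator, $J_{kl}$ factors as $[I - \tfrac{1}{m_k}\mathbf{1}\mathbf{1}^\top]$ times the Jacobian of $\nabla^{k\tau}_{x_k}$ itself, so the whole computation reduces to identifying this inner Jacobian in two cases.

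For $k \neq l$, the entropy term $-\tau(\ln(x_k) + \mathbf{1})$ does not depend on $x_l$, so only $\nabla^k_{x_k}$ contributes. Using the polymatrix notation from the preliminaries, $\nabla^k_{x_k}$ is linear in $x_l$ with coefficient matrix $H^k_{kl}$ (more precisely, $\nabla^k_{x_k} = H^k_{kl} x_l + (\text{terms in } x_{-kl})$, which follows from multilinearity of $u_k$ in the players' strategies). Hence $\frac{\partial}{\partial x_l} \nabla^{k\tau}_{x_k} = H^k_{kl}$, and pre-multiplying by the projector yields the stated $B_{kl} = [I - \tfrac{1}{m_k}\mathbf{1}\mathbf{1}^\top] H^k_{kl}$.

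For $k = l$, the key observation is that $\nabla^k_{x_k}$ does \emph{not} depend on $x_k$ — it is an expectation over the other players' actions only — so the entire $x_l$-dependence comes from the entropy term. Differentiating $-\tau(\ln(x_l) + \mathbf{1})$ with respect to $x_l$ gives $-\tau \, \texttt{diag}(1/x_l)$, and pre-multiplying by the projector produces $B_{ll} = -\tau [I - \tfrac{1}{m_l}\mathbf{1}\mathbf{1}^\top] \texttt{diag}(1/x_l)$, matching the claim.

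Substituting these two Jacobians into the chain rule expression above gives the displayed formula. The main thing to watch is the case split and the observation that $\nabla^k_{x_k}$ is independent of $x_k$ (so on the diagonal only the entropy term survives); the rest is straightforward linear algebra, using that $\Pi_{T\Delta}$ is symmetric so that transposition of $J_{kl}$ leaves the projector intact on the correct side.
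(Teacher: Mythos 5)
Your proof is correct, and it takes a genuinely more direct route than the paper's. The paper first proves a separate decomposition lemma (Lemma~\ref{lemma:loss_decomp}) expanding $\mathcal{L}^{\tau}$ into three explicit pieces — a quadratic term $x_q^\top B_{kq}^\top B_{kq} x_q$, a cross term in $\ln(x_k)$ and $B_{kq}x_q$, and a pure-entropy term — then differentiates each piece and reassembles $\Pi_{T\Delta}(\nabla^{l}) + \Pi_{T\Delta}(-\tau\ln x_l) = \Pi_{T\Delta}(\nabla^{l\tau}_{x_l})$ at the end. You skip all of that bookkeeping by applying the chain rule to the composition $x \mapsto \|\Pi_{T\Delta}(\nabla^{k\tau}_{x_k})\|^2$ and computing the two inner Jacobians directly: $D_{x_l}\nabla^{k\tau}_{x_k} = H^k_{kl}$ for $k \ne l$ (since $H^k_{kl}$ marginalizes only over $x_{-kl}$ and hence is independent of $x_l$) and $-\tau\,\texttt{diag}(1/x_l)$ for $k = l$ (since $\nabla^l_{x_l}$ is independent of $x_l$ by multilinearity). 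These are exactly the auxiliary Jacobians $D_l[\Pi_{T\Delta}(\nabla^{k\tau}_{x_k})] = B_{kl}$ that the paper itself computes later for the Hessian (Lemma~\ref{lemma:hess}), so your route has the added benefit of unifying the gradient and Hessian derivations under one computation, whereas the paper's decomposition makes the quadratic-in-$x_q$ structure of the loss explicit at the cost of a longer recombination. Two small points to tighten: first, the chain rule literally gives $2\sum_k \eta_k J_{kl}^\top \Pi_{T\Delta}(\nabla^{k\tau}_{x_k})$ with $J_{kl} = \Pi_{T\Delta} \, D_{x_l}\nabla^{k\tau}_{x_k}$, and identifying this with the stated $B_{kl}^\top \Pi_{T\Delta}(\nabla^{k\tau}_{x_k})$ uses $\Pi_{T\Delta}^\top \Pi_{T\Delta} = \Pi_{T\Delta}$ (symmetry plus idempotence, the paper's Lemma~\ref{lemma:idempotent}) — you gesture at this but it deserves one explicit line; second, your parenthetical $\nabla^k_{x_k} = H^k_{kl}x_l + (\text{terms in } x_{-kl})$ is slightly off, since in fact $\nabla^k_{x_k} = H^k_{kl}x_l$ exactly, with the dependence on $x_{-kl}$ absorbed inside $H^k_{kl}$ rather than appearing additively — harmless here, because the Jacobian with respect to $x_l$ is $H^k_{kl}$ either way.
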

\begin{proof}
Recall from Lemma~\ref{lemma:loss_decomp} that the loss can be decomposed as $\mathcal{L}^{\tau}(\boldsymbol{x}) = (A) + (B) + (C)$.

Then
\begin{align}
    D_{x_l}[(A)] &= D_{x_l}[\sum_k \eta_k x_q^\top B_{kq}^\top B_{kq} x_q] = 2 \sum_{k \ne l} \eta_k B_{kl}^\top B_{kl} x_l \label{eqn:gradient}
\end{align}
where $q \ne k$ and $B_{kq} = [I - \frac{1}{m_k} \mathbf{1} \mathbf{1}^\top] [H^{k}_{kq}]$ does not depend on $x_k$.

Also, letting $B_{ll} = -\tau [I - \frac{1}{m_l} \mathbf{1} \mathbf{1}^\top] \texttt{diag}\big(\frac{1}{x_l}\big)$,
\begin{subequations}
\begin{align}
    D_{x_l}[(B)] &= D_{x_l}[-2\tau \sum_k \eta_k \ln(x_k)^\top B_{kq} x_q]
    \\ &= -2\tau \big[ \eta_l D_{x_l}[\ln(x_l)^\top B_{lq} x_q] + \sum_{k \ne l} \eta_k D_{x_l}[\ln(x_k)^\top B_{kl} x_l] \big]
    \\ &= -2\tau \big[ \eta_l \texttt{diag}\big(\frac{1}{x_l}\big) B_{lq} x_q + \sum_{k \ne l} \eta_k B_{kl}^\top \ln(x_k) \big]
    \\ &= -2\tau \big[ \eta_l \Big([I - \frac{1}{m_l} \mathbf{1} \mathbf{1}^\top] \texttt{diag}\big(\frac{1}{x_l}\big)\Big)^\top \Pi_{T\Delta}(\nabla^l) + \sum_{k \ne l} \eta_k B_{kl}^\top \ln(x_k) \big]
    \\ &= 2 \big[ \eta_l B_{ll}^\top \Pi_{T\Delta}(\nabla^l) - \tau \sum_{k \ne l} \eta_k B_{kl}^\top \ln(x_k) \big].
\end{align}
\end{subequations}

And
\begin{subequations}
\begin{align}
    D_{x_l}[(C)] &= D_{x_l}[ \sum_k \eta_k \tau^2 \ln(x_k)^\top [I - \frac{1}{m_k} \mathbf{1} \mathbf{1}^\top] \ln(x_k) ]
    \\ &= 2 \tau^2 \Big[ \eta_l \texttt{diag}\big(\frac{1}{x_l}\big) [I - \frac{1}{m_l} \mathbf{1} \mathbf{1}^\top] \ln(x_l) \Big]
    \\ &= -2 \tau \eta_l \Big([I - \frac{1}{m_l} \mathbf{1} \mathbf{1}^\top] \texttt{diag}\big(\frac{1}{x_l}\big)\Big)^\top \Pi_{T\Delta}(-\tau \ln(x_l))
    \\ &= 2 \eta_l B_{ll}^\top \Pi_{T\Delta}(-\tau \ln(x_l)).
\end{align}
\end{subequations}

Putting these together, we find
\begin{subequations}
\begin{align}
    \nabla_{x_l} \mathcal{L}(\boldsymbol{x}) &= 2 \sum_{k \ne l} \eta_k B_{kl}^\top (B_{kl} x_l - \tau \ln(x_k)) + 2 \eta_l B_{ll}^\top \big[ \Pi_{T\Delta}(\nabla^l) + \Pi_{T\Delta}(-\tau \ln(x_l)) \Big]
    \\ &= 2 \eta_l B_{ll}^\top \Pi_{T\Delta}(\nabla^{k\tau}_{x_k}) + 2 \sum_{k \ne l} \eta_k B_{kl}^\top \Pi_{T\Delta}(\nabla^{k\tau}_{x_k})
    \\ &= 2 \sum_{k} \eta_k B_{kl}^\top \Pi_{T\Delta}(\nabla^{k\tau}_{x_k}).
\end{align}
\end{subequations}

\end{proof}

\subsubsection{Unbiased Estimation}
In order to construct an unbiased estimate for each $B_{kl}$, we will need to form an independent unbiased estimate of $H^{k}_{kl}$. Recall that $H^{k}_{kl}$ is simply the expected bimatrix game between players $k$ and $l$ when all other players sample their actions according to their current strategies.

\subsection{Bound on Gradient / Lipschitz Property}

\begin{lemma}\label{lemma:loss_grad_inf_norm}
Assume payoffs are upper bounded by $1$, then the infinity norm of the gradient is bounded as
\begin{align}
    || \nabla_{\boldsymbol{x}} \mathcal{L}^{\tau}(\boldsymbol{x}) ||_{\infty} &\le \frac{1}{2} (\max_k \eta_k) \big(\tau \ln\Big(\frac{1}{x_{\min}}\Big) + 1\big) \Big[ \tau m^* \Big(\frac{1}{x_{\min}} - 1\Big) + n \bar{m} \Big].
\end{align}
\end{lemma}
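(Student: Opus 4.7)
The plan is to start from the closed-form gradient in Lemma~\ref{lemma:loss_grad}, $\nabla_{x_l}\mathcal{L}^\tau = 2\sum_k \eta_k B_{kl}^\top \Pi_{T\Delta}(\nabla^{k\tau}_{x_k})$, take the triangle inequality entry-wise,
\[
|\nabla_{x_{lj}}\mathcal{L}^\tau|\le 2(\max_k\eta_k)\sum_k \big|(B_{kl}^\top \Pi_{T\Delta}(\nabla^{k\tau}_{x_k}))_j\big|,
\]
and bound each summand using Lemma~\ref{lemma:var_bnd} after rewriting it in the canonical form $A[I-\tfrac{1}{m}\mathbf{1}\mathbf{1}^\top]^p B$. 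The prefactor $\tau\ln(1/x_{\min})+1$ will come from the entry range of $\nabla^{k\tau}_{x_k}=\nabla^k_{x_k}-\tau(\ln x_k+\mathbf{1})$: since $\nabla^k_{x_{kj}}\in[0,1]$ and $-\tau\ln x_{kj}\in[0,\tau\ln(1/x_{\min})]$, the entries span an interval of width at most $1+\tau\ln(1/x_{\min})$.

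For each off-diagonal summand ($k\neq l$), idempotency of $[I-\tfrac{1}{m_k}\mathbf{1}\mathbf{1}^\top]$ lets me rewrite $B_{kl}^\top \Pi_{T\Delta}(\nabla^{k\tau}_{x_k}) = (H^k_{kl})^\top [I-\tfrac{1}{m_k}\mathbf{1}\mathbf{1}^\top]^2 \nabla^{k\tau}_{x_k}$, and Lemma~\ref{lemma:var_bnd} with $A=(H^k_{kl})^\top$ (entry range $\le 1$) and $B=\nabla^{k\tau}_{x_k}$ (entry range $\le 1+\tau\ln(1/x_{\min})$) gives a per-entry bound of $\tfrac{m_k}{4}(1+\tau\ln(1/x_{\min}))$. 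Summed over $k\neq l$, this contributes the $n\bar m$ portion of the claimed bound after multiplication by $2\max_k\eta_k$.

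For the diagonal summand ($k=l$), the essential trick is the decomposition
\[
\texttt{diag}(1/x_l) = I + \texttt{diag}(1/x_l - \mathbf{1}),
\]
which is available precisely because $x_{lj}\in[x_{\min},1]$ on the simplex, so $1/x_{lj}-1\in[0,1/x_{\min}-1]$. Applied to $B_{ll}^\top\Pi_{T\Delta}(\nabla^{l\tau}_{x_l}) = -\tau\,\texttt{diag}(1/x_l)[I-\tfrac{1}{m_l}\mathbf{1}\mathbf{1}^\top]^2\nabla^{l\tau}_{x_l}$, the second sub-piece admits Lemma~\ref{lemma:var_bnd} with $A=\texttt{diag}(1/x_l-\mathbf{1})$ (entry range $1/x_{\min}-1$) and yields the leading per-entry bound $\tfrac{\tau m_l}{4}(1/x_{\min}-1)(1+\tau\ln(1/x_{\min}))$; bounding $m_l\le m^*$ and multiplying by $2\max_k\eta_k$ gives the $\tfrac{1}{2}(\max_k\eta_k)\tau m^*(1/x_{\min}-1)(1+\tau\ln(1/x_{\min}))$ term. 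The first sub-piece, $-\tau[I-\tfrac{1}{m_l}\mathbf{1}\mathbf{1}^\top]^2\nabla^{l\tau}_{x_l}$, is at worst $\tfrac{\tau m_l}{4}(1+\tau\ln(1/x_{\min}))$ per entry by Lemma~\ref{lemma:var_bnd} with $A=I$, and I would fold it into the off-diagonal sum using $\tau m_l + \sum_{k\neq l}m_k \le \sum_k m_k = n\bar m$ (valid in the low-temperature regime of interest) so that the combined non-leading contribution is exactly $\tfrac{1}{2}(\max_k\eta_k)(1+\tau\ln(1/x_{\min}))\,n\bar m$.

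The main obstacle is that diagonal decomposition: applying Lemma~\ref{lemma:var_bnd} naively to $\texttt{diag}(1/x_l)$ (whose off-diagonal zeros force $A_{\min}=0$, hence global range $1/x_{\min}$) only produces the weaker prefactor $\tau m^*/x_{\min}$. The tightening to $\tau m^*(1/x_{\min}-1)$ relies on exploiting the upper bound $x_{lj}\le 1$ implied by the simplex constraint, which effectively shifts the diagonal's range from $[0,1/x_{\min}]$ to $[1,1/x_{\min}]$ before it is fed into Lemma~\ref{lemma:var_bnd}. Everything after that decomposition is careful bookkeeping of $\max$-over-$k$ factors and applications of the Popoviciu-style bound already proved.
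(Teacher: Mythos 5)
Your proposal follows the same skeleton as the paper's proof: start from the closed form in Lemma~\ref{lemma:loss_grad}, bound the entry range of $\nabla^{k\tau}_{x_k}$ by $\tau\ln(1/x_{\min})+1$, apply Lemma~\ref{lemma:var_bnd} block by block, and finish with the $\max_k\eta_k$, $m^*$, $n\bar{m}$ bookkeeping; the off-diagonal blocks are handled identically in both. Where you genuinely diverge is the diagonal block, and there your version is actually \emph{more} careful than the paper's. The paper applies Lemma~\ref{lemma:var_bnd} after asserting that the entries of $-\tau\,\texttt{diag}(1/x_l)$ lie in $[-\tau/x_{\min},-\tau]$ with range $\tau(1/x_{\min}-1)$; read literally this is false, since the off-diagonal entries are $0\notin[-\tau/x_{\min},-\tau]$, so the honest global range is $\tau/x_{\min}$. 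This is not pedantry: for $m_l=2$ and $x_l=(\tfrac12,\tfrac12)$ one has $B_{ll}^\top\Pi_{T\Delta}(\nabla^{l\tau}_{x_l})=-2\tau\,\Pi_{T\Delta}(\nabla^{l\tau}_{x_l})$, whose entries have magnitude $\tau R$ (with $R$ the range of $\nabla^{l\tau}_{x_l}$), while the paper's per-entry cap for that block is $\tau R/2$. Your decomposition $\texttt{diag}(1/x_l)=I+\texttt{diag}(1/x_l-\mathbf{1})$ is exactly the right repair: the shifted matrix has \emph{all} entries, including the off-diagonal zeros, in $[0,1/x_{\min}-1]$, so Lemma~\ref{lemma:var_bnd} legitimately yields the $\tfrac{\tau m_l}{4}(1/x_{\min}-1)$ factor, and the residual $-\tau\Pi_{T\Delta}(\nabla^{l\tau}_{x_l})$ piece you track explicitly is precisely what the paper's one-shot range assignment silently discards (note $(1/x_{\min}-1)+1=1/x_{\min}$, the honest range).

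The one soft spot is your absorption of that residual: $\tau m_l+\sum_{k\ne l}m_k\le n\bar{m}$ requires $\tau\le 1$, which is not an assumption of the lemma (and $\tau=\ln(1/p)^{-1}>1$ whenever $p>e^{-1}$). You flag this yourself, and it is easily patched without the restriction: on a simplex with $m_l\ge 2$ actions some coordinate of $x_l$ is at most $\tfrac12$, so any valid lower bound satisfies $x_{\min}\le\tfrac12$, hence $1/x_{\min}-1\ge 1$ and
\begin{align*}
\frac{\tau m_l}{4}\Big(\frac{1}{x_{\min}}-1\Big) + \frac{\tau m_l}{4} \;\le\; \frac{\tau m_l}{2}\Big(\frac{1}{x_{\min}}-1\Big),
\end{align*}
which gives the stated inequality uniformly in $\tau$ with the $\tau m^*(1/x_{\min}-1)$ term doubled. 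So your argument yields exactly the stated constant when $\tau\le1$ and a factor-$2$ weaker leading term otherwise; given that the paper's own constant rests on the invalid range assignment above, yours is arguably the defensible form of the bound, and in any case the downstream uses (e.g., Corollary~\ref{corr:loss_grad_inf_norm_p} and the definition of $\hat{L}$) only change by an absolute constant.
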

\begin{proof}
Recall from Lemma~\ref{lemma:loss_grad} that the gradient of $\mathcal{L}(\boldsymbol{x})$ with respect to player $l$'s strategy $x_l$ is
\begin{align}
    \nabla_{x_l} \mathcal{L}(\boldsymbol{x}) &= 2 \sum_{k} \eta_k B_{kl}^\top \Pi_{T\Delta}(\nabla^{k\tau}_{x_k})
\end{align}
where $B_{ll} = -\tau [I - \frac{1}{m_l} \mathbf{1} \mathbf{1}^\top] \texttt{diag}\big(\frac{1}{x_l}\big)$ and $B_{kl} = [I - \frac{1}{m_k} \mathbf{1} \mathbf{1}^\top] H^{k}_{kl}$ for $k \ne l$.

As noted before in the derivation of Lemma~\ref{lemma:loss_bound_sto_tau}, for payoffs in $[0, 1]$, the entries in $\nabla^{k\tau}_{x_{k}} = \nabla^{k}_{x_{k}} - \tau \ln(x_k)$ are bounded within $[0, \tau \ln(\frac{1}{x_{\min}}) + 1]$ with a range $\tau \ln(\frac{1}{x_{\min}}) + 1$. Similarly, the entries in $-\tau \texttt{diag}\big(\frac{1}{x_l}\big)$ are bounded within $[-\tau \frac{1}{x_{\min}}, -\tau]$ with a range of $\tau (\frac{1}{x_{\min}} - 1)$.

The infinity norm of the gradient can then be bounded as
\begin{subequations}
\begin{align}
    || \nabla_{\boldsymbol{x}} \mathcal{L}^{\tau}(\boldsymbol{x}) ||_{\infty} &= \max_l || \nabla_{x_l} \mathcal{L}(\boldsymbol{x}) ||_{\infty}
    \\ &= \max_l ||2 \sum_{k} \eta_k B_{kl}^\top \Pi_{T\Delta}(\nabla^{k\tau}_{x_k})||_{\infty}
    \\ &\le 2 \sum_{k} \eta_k \max_l ||B_{kl}^\top \Pi_{T\Delta}(\nabla^{k\tau}_{x_k})||_{\infty}
    \\ &\le \frac{1}{2} \sum_{k \ne l^*} \eta_k m_k \big(\tau \ln\Big(\frac{1}{x_{\min}}\Big) + 1\big) + \frac{1}{2} \eta_{l^*} m_{l^*} \tau \Big(\frac{1}{x_{\min}} - 1\Big) \big(\tau \ln\Big(\frac{1}{x_{\min}}\Big) + 1\big)
    \\ &= \frac{1}{2} \big(\tau \ln\Big(\frac{1}{x_{\min}}\Big) + 1\big) \Big[ \eta_{l^*} m_{l^*} \tau \Big(\frac{1}{x_{\min}} - 1\Big) + \sum_{k \ne l^*} \eta_k m_k \Big]
    \\ &\le \frac{1}{2} (\max_k \eta_k) \big(\tau \ln\Big(\frac{1}{x_{\min}}\Big) + 1\big) \Big[ \tau m_{l^*} \Big(\frac{1}{x_{\min}} - 1\Big) + \sum_{k \ne l^*} m_k \Big]
    \\ &\le \frac{1}{2} (\max_k \eta_k) \big(\tau \ln\Big(\frac{1}{x_{\min}}\Big) + 1\big) \Big[ \tau m^* \Big(\frac{1}{x_{\min}} - 1\Big) + n \bar{m} \Big]
\end{align}
\end{subequations}
where the second inequality follows from Lemma~\ref{lemma:var_bnd}.

\end{proof}

\begin{corollary}\label{corr:loss_grad_inf_norm_p}
If $\tau$ is set according to Lemma~\ref{lemma:set_tau} as $\tau=\ln(1/p)^{-1}$ and payoffs are in $[0,1]$, then the infinity norm of the gradient is bounded as
\begin{align}
    || \nabla_{\boldsymbol{x}} \mathcal{L}^{\tau}(\boldsymbol{x}) ||_{\infty} &\le \frac{1}{2} (\max_k \eta_k) \Big[ \frac{\ln(m^*)}{\ln(1/p)} + 2 \Big] \Big[ \frac{m^{*2}}{p \ln(1/p)} + n \bar{m} \Big] = \frac{1}{2} (\max_k \eta_k) \hat{L}
\end{align}
where $m^* = \max_k m_k$ and $\hat{L}$ is defined implicitly for convenience in other derivations.
\end{corollary}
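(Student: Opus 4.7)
The plan is to obtain Corollary~\ref{corr:loss_grad_inf_norm_p} by direct substitution of the temperature-specific probability bound from Lemma~\ref{lemma:set_tau} into the general gradient infinity-norm bound from Lemma~\ref{lemma:loss_grad_inf_norm}. There are no new estimates to prove; the work is purely algebraic simplification of two factors.

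First, I would invoke Lemma~\ref{lemma:set_tau}, which guarantees that for $\tau = \ln(1/p)^{-1}$ every QRE satisfies $x_{kl} \ge p/m^*$, so $x_{\min} \ge p/m^*$ may be used throughout. (The infinity-norm bound of Lemma~\ref{lemma:loss_grad_inf_norm} is stated in terms of a generic lower bound $x_{\min}$ on the coordinates; substituting a smaller value only weakens the bound, which is fine.) Then I would simplify the first bracketed factor, noting
\[
\tau \ln\!\Big(\frac{1}{x_{\min}}\Big) + 1
\;=\; \frac{1}{\ln(1/p)}\,\ln\!\Big(\frac{m^*}{p}\Big) + 1
\;=\; \frac{\ln(m^*)+\ln(1/p)}{\ln(1/p)} + 1
\;=\; \frac{\ln(m^*)}{\ln(1/p)} + 2,
\]
which yields the first factor appearing in the claim.

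Next, I would handle the second bracketed factor. Using $x_{\min} \ge p/m^*$ and $\tau = \ln(1/p)^{-1}$,
\[
\tau m^*\Big(\tfrac{1}{x_{\min}}-1\Big) \;\le\; \frac{m^*}{\ln(1/p)}\Big(\frac{m^*}{p}-1\Big) \;\le\; \frac{m^{*2}}{p\,\ln(1/p)},
\]
so the second bracket in Lemma~\ref{lemma:loss_grad_inf_norm} is bounded above by $\frac{m^{*2}}{p\,\ln(1/p)} + n\bar m$. Multiplying the two simplified factors together with the $\tfrac12(\max_k \eta_k)$ prefactor, and defining $\hat L$ to be precisely the product of the two bracketed quantities, gives
\[
\|\nabla_{\boldsymbol x}\mathcal L^\tau(\boldsymbol x)\|_\infty \;\le\; \tfrac12(\max_k \eta_k)\,\hat L,
\]
which is the claimed bound.

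The only ``obstacle'' is being a bit careful with the direction of the inequalities (checking that relaxing $x_{\min}$ downward and dropping the $-1$ in $\tfrac{1}{x_{\min}}-1$ both preserve the upper bound), and with the arithmetic identity $\ln(m^*/p)/\ln(1/p)=\ln(m^*)/\ln(1/p)+1$; otherwise the corollary is just the specialization of Lemma~\ref{lemma:loss_grad_inf_norm} at the temperature prescribed by Lemma~\ref{lemma:set_tau}, and the definition of $\hat L$ records the resulting Lipschitz constant for later use in Theorem~\ref{theorem:blin_rate}.
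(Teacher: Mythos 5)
Your proposal is correct and matches the paper's own proof essentially step for step: the paper likewise substitutes $x_{\min} = p/m^*$ from Lemma~\ref{lemma:set_tau} into Lemma~\ref{lemma:loss_grad_inf_norm}, uses the identity $\ln(m^*/p)/\ln(1/p) + 1 = \ln(m^*)/\ln(1/p) + 2$ for the first factor, and drops the $-1$ in $\frac{m^*}{p}-1$ to obtain the second factor. Your extra remarks on the monotonicity of the bound in $x_{\min}$ are sound and only make explicit what the paper leaves implicit.
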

\begin{proof}
Starting with Lemma~\ref{lemma:loss_grad_inf_norm} and applying Lemma~\ref{lemma:set_tau} (i.e., $\tau = \ln(1/p)^{-1}$ and $x_{\min} = \frac{p}{m^*}$ with payoffs in $[0,1]$ and where $m^* = \max_k m_k$), we find
\begin{subequations}
\begin{align}
    || \nabla_{\boldsymbol{x}} \mathcal{L}^{\tau}(\boldsymbol{x}) ||_{\infty} &\le \frac{1}{2} (\max_k \eta_k) \big(\tau \ln\Big(\frac{1}{x_{\min}}\Big) + 1\big) \Big[ \tau m^* \big(\frac{1}{x_{\min}} - 1\big) + n \bar{m} \Big]
    \\ &= \frac{1}{2} (\max_k \eta_k) \Big[ \frac{\ln(m^*/p)}{\ln(1/p)} + 1 \Big] \Big[ \frac{m^*}{\ln(1/p)} \big(\frac{m^*}{p} - 1\big) + n \bar{m} \Big]
    \\ &\le \frac{1}{2} (\max_k \eta_k) \Big[ \frac{\ln(m^*)}{\ln(1/p)} + 2 \Big] \Big[ \frac{m^{*2}}{p \ln(1/p)} + n \bar{m} \Big].
\end{align}
\end{subequations}

As $p \rightarrow 0^+$, the norm of the gradient blows up because the gradient of Shannon entropy blows up for small probabilities. As $p \rightarrow 1$, the norm of the gradient blows up because we require infinite temperature $\tau$ to guarantee all QREs are nearly uniform; recall $\tau$ is the regularization coefficient on the entropy bonus terms which means our modified utilities blow up for large $\tau$.
\end{proof}

\subsection{Hessian of Loss}

We will now derive the Hessian of our loss. This will be useful in establishing properties about global minima that enable the application of tailored minimization algorithms. Let $D_z[f(z)]$ denote the differential operator applied to (possibly multivalued) function $f$ with respect to $z$. For example, $D_{x_q}[H_{lk}^{k}] = D_{x_q}[x_q T_{qlk}^{k}] = T_{qlk}^{k}$ where $T_{qlk}^k$ is player $k$'s payoff tensor according to the three-way approximation between players $k$, $l$, and $q$ to the game at $\boldsymbol{x}$.

\begin{lemma}\label{lemma:hess}
The Hessian of $\mathcal{L}^{\tau}(\boldsymbol{x})$ can be written
\begin{align}
    \texttt{Hess}(\mathcal{L}^{\tau}) &= 2 \tilde{B}^\top \tilde{B} + T \Pi_{T\Delta}(\tilde{\nabla}^{\tau})
\end{align}
where $\tilde{B}_{kl} = \sqrt{\eta_k} B_{kl}$, $\Pi_{T\Delta}(\tilde{\nabla}^{\tau}) = [\eta_1 \Pi_{T\Delta}(\nabla^{1\tau}_{x_1}), \ldots, \eta_n \Pi_{T\Delta}(\nabla^{n\tau}_{x_n})]$, and we augment $T$ (the $3$-player tensor approximation to the game, $T^k_{lqk}$) so that $T^l_{lll} = \tau \texttt{diag3}\big(\frac{1}{x_l^2}\big)$ and otherwise $0$.
\end{lemma}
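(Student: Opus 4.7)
The plan is to differentiate the gradient formula of Lemma~\ref{lemma:loss_grad} blockwise. By Lemma~\ref{lemma:loss_grad},
\begin{align*}
    \nabla_{x_l} \mathcal{L}^{\tau}(\boldsymbol{x}) &= 2 \sum_{k} \eta_k B_{kl}^\top \Pi_{T\Delta}(\nabla^{k\tau}_{x_k}),
\end{align*}
so the $(l,q)$ block of the Hessian is $D_{x_q}[\nabla_{x_l} \mathcal{L}^{\tau}]$. Since $\Pi_{T\Delta}$ is a fixed linear operator, the product rule gives two terms:
\begin{align*}
    D_{x_q}[\nabla_{x_l} \mathcal{L}^{\tau}] &= 2 \sum_{k} \eta_k \Big[ B_{kl}^\top \,\Pi_{T\Delta}\big(D_{x_q}[\nabla^{k\tau}_{x_k}]\big) + D_{x_q}[B_{kl}^\top] \cdot \Pi_{T\Delta}(\nabla^{k\tau}_{x_k}) \Big].
\end{align*}
My strategy is to recognize the first sum as the $(l,q)$ block of $2\tilde{B}^\top\tilde{B}$ and the second sum as the $(l,q)$ block of $T\,\Pi_{T\Delta}(\tilde{\nabla}^{\tau})$.

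For the first sum, I would verify the key identity $\Pi_{T\Delta}\big(D_{x_q}[\nabla^{k\tau}_{x_k}]\big) = B_{kq}$ in both cases. When $q \neq k$, the entropy piece of $\nabla^{k\tau}_{x_k}$ vanishes and $D_{x_q}[\nabla^{k}_{x_k}] = H^{k}_{kq}$, so projecting with $[I - \tfrac{1}{m_k}\mathbf{1}\mathbf{1}^\top]$ on the left yields $B_{kq}$ by definition. When $q = k$, the linear part of $\nabla^{k}_{x_k}$ is independent of $x_k$ in an NFG, so only the entropy contributes $-\tau\,\texttt{diag}(1/x_k)$, and projecting yields $B_{kk}$. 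With this identity, the first sum becomes $2\sum_k \eta_k B_{kl}^\top B_{kq}$, which is exactly $2[\tilde{B}^\top \tilde{B}]_{lq}$ since the $k$-th block-row of $\tilde{B}$ is $\sqrt{\eta_k}[B_{k1}\;\cdots\;B_{kn}]$.

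For the second sum, I would compute $D_{x_q}[B_{kl}]$ case by case and bundle the result into the augmented $3$-tensor $T$. If $k,l,q$ are all distinct, $D_{x_q}[B_{kl}] = [I-\tfrac{1}{m_k}\mathbf{1}\mathbf{1}^\top] D_{x_q}[H^{k}_{kl}]$, and $D_{x_q}[H^{k}_{kl}]$ is exactly the $3$-player tensor slice $T^{k}_{klq}$ (all other players still marginalized). If $q = l \neq k$, differentiating $H^{k}_{kl} x_l$ again produces a tensor contraction of the same form, and similarly when $k \in \{l,q\}$ aside from the all-diagonal case. The one remaining case is $k=l=q$, where $B_{ll} = -\tau[I-\tfrac{1}{m_l}\mathbf{1}\mathbf{1}^\top]\,\texttt{diag}(1/x_l)$, so $D_{x_l}[B_{ll}]$ produces a $3$-tensor whose only nonzero pattern is on the diagonal and equals $\tau\,\texttt{diag3}(1/x_l^2)$ up to the outer projection. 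After contracting against $\eta_k \Pi_{T\Delta}(\nabla^{k\tau}_{x_k})$, this is precisely the action of $T$ (with the stated diagonal augmentation) on $\Pi_{T\Delta}(\tilde{\nabla}^{\tau})$.

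The main obstacle is the tensor bookkeeping: keeping track of which of $k,l,q$ are allowed to coincide, how the outer $[I-\tfrac{1}{m_k}\mathbf{1}\mathbf{1}^\top]$ attached to each $B_{kl}$ interacts with the differentiation, and verifying that the diagonal $k=l=q$ contribution from the entropy Hessian ($\tau/x_l^2$) lines up with the $T^{l}_{lll}$ augmentation in the statement. Everything else (product rule, linearity of $\Pi_{T\Delta}$, symmetry of the Hessian as a cross-check between the $(l,q)$ and $(q,l)$ blocks) is routine.
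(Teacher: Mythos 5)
Your overall route is the paper's own: differentiate the gradient of Lemma~\ref{lemma:loss_grad} blockwise, split each block $D_{x_q}[\nabla_{x_l}\mathcal{L}^{\tau}]$ by the product rule, establish the key identity $\Pi_{T\Delta}\big(D_{x_q}[\nabla^{k\tau}_{x_k}]\big) = B_{kq}$ (for $q \ne k$ via $D_{x_q}[\nabla^{k}_{x_k}] = H^{k}_{kq}$, and for $q = k$ via the entropy term $-\tau\,\texttt{diag}(1/x_k)$, since $\nabla^k_{x_k}$ itself is independent of $x_k$ in an NFG), so that the first sum collapses to $2[\tilde{B}^\top\tilde{B}]_{lq}$, and then bundle the $D_{x_q}[B_{kl}]$ terms into the augmented tensor. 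These identities, including the all-diagonal case $D_{x_l}[B_{ll}] = \tau[I - \frac{1}{m_l}\mathbf{1}\mathbf{1}^\top]\texttt{diag3}(1/x_l^2)$, coincide exactly with the auxiliary Jacobians the paper computes.

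The genuine misstep is your handling of the mixed-coincidence cases in the second sum. You assert that for $q = l \ne k$ (and ``similarly when $k \in \{l,q\}$'') differentiation ``produces a tensor contraction of the same form.'' In fact every one of these derivatives vanishes identically: $H^{k}_{kl}$ is the marginalization over $x_{-kl}$, so it depends on neither $x_k$ nor $x_l$, giving $D_{x_l}[B_{kl}] = D_{x_k}[B_{kl}] = \mathbf{0}$ for $k \ne l$; and $B_{ll}$ depends only on $x_l$, so $D_{x_q}[B_{ll}] = \mathbf{0}$ for $q \ne l$. This is precisely why the lemma declares the augmented $T$ to be $0$ on every index pattern other than all-distinct and all-equal. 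Your phrasing (``differentiating $H^{k}_{kl} x_l$'') suggests a double count: the derivative of $\nabla^{k}_{x_k} = H^{k}_{kl} x_l$ with respect to $x_l$ is indeed $H^{k}_{kl}$, but that contribution already lives in your \emph{first} sum, through $\Pi_{T\Delta}\big(D_{x_l}[\nabla^{k\tau}_{x_k}]\big) = B_{kl}$, not in the second. Executed as written, your plan would insert spurious tensor terms (e.g., nonzero $k \ne l$ contributions to the diagonal block $D_{ll}[\mathcal{L}^{\tau}]$, which the paper's computation shows contains only $\tau\eta_l\,\texttt{diag}\big([\frac{1}{x_l^2}] \odot \Pi_{T\Delta}(\nabla^{l\tau}_{x_l})\big) + \sum_k \eta_k B_{kl}^\top B_{kl}$), and the resulting expression would not match the stated formula; the fix is the one-line observation that those Jacobians are zero.
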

\begin{proof}

Recall the gradient of our proposed loss:
\begin{align}
    \nabla_{x_l} \mathcal{L}(\boldsymbol{x}) &= 2 \sum_{k} \eta_k B_{kl}^\top \Pi_{T\Delta}(\nabla^{k\tau}_{x_k})
\end{align}
where $B_{ll} = -\tau [I - \frac{1}{m_l} \mathbf{1} \mathbf{1}^\top] \texttt{diag}\big(\frac{1}{x_l}\big)$ and $B_{kl} = [I - \frac{1}{m_k} \mathbf{1} \mathbf{1}^\top] H^{k}_{kl}$ for $k \ne l$.

Consider the following Jacobians, which will play an auxiliary role in our derivation of the Hessian:
\begin{subequations}
\begin{align}
    D_{l}[B_{ll}] &= \tau [I - \frac{1}{m_l} \mathbf{1} \mathbf{1}^\top] \texttt{diag3}\big(\frac{1}{x_l^2}\big)
    \\ D_{q}[B_{ll}] &= \mathbf{0}
    \\ D_{l}[B_{kl}] &= \mathbf{0}
    \\ D_{q}[B_{kl}] &= [I - \frac{1}{m_k} \mathbf{1} \mathbf{1}^\top] T^{k}_{klq}
    \\ D_{k}[\Pi_{T\Delta}(\nabla^{k\tau}_{x_k})] &= [I - \frac{1}{m_k} \mathbf{1} \mathbf{1}^\top] D_{k}[\nabla^{k\tau}_{x_k}]
    \\ &= [I - \frac{1}{m_k} \mathbf{1} \mathbf{1}^\top] D_{k}[\nabla^{k}_{x_k} - \tau \ln(x_k)]
    \\ &= [I - \frac{1}{m_k} \mathbf{1} \mathbf{1}^\top] [- \tau \texttt{diag}\big(\frac{1}{x_k}\big)]
    \\ &= B_{kk}
    \\ D_{l}[\Pi_{T\Delta}(\nabla^{k\tau}_{x_k})] &= [I - \frac{1}{m_k} \mathbf{1} \mathbf{1}^\top] D_{l}[\nabla^{k\tau}_{x_k}]
    \\ &= [I - \frac{1}{m_k} \mathbf{1} \mathbf{1}^\top] D_{l}[\nabla^{k}_{x_k} - \tau \ln(x_k)]
    \\ &= [I - \frac{1}{m_k} \mathbf{1} \mathbf{1}^\top] [H^k_{kl}]
    \\ &= B_{kl}.
\end{align}
\end{subequations}

We can derive the diagonal blocks of the Hessian as
\begin{subequations}
\begin{align}
    D_{ll}[\mathcal{L}(\boldsymbol{x})] &= D_{l}[\nabla_{x_l} \mathcal{L}(\boldsymbol{x})]
    \\ &= 2 D_{l}[\sum_{k} \eta_k B_{kl}^\top \Pi_{T\Delta}(\nabla^{k\tau}_{x_k})]
    \\ &= 2 \Big[ \eta_l D_{l}\big[ B_{ll}^\top \Pi_{T\Delta}(\nabla^{l\tau}_{x_l}) \big] + \sum_{k \ne l} \eta_k D_{l}\big[ B_{kl}^\top \Pi_{T\Delta}(\nabla^{k\tau}_{x_k}) \big] \Big]
    \\ &= 2 \Big[ \eta_l \big[ D_{l}[B_{ll}]^\top \Pi_{T\Delta}(\nabla^{l\tau}_{x_l}) + B_{ll}^\top D_{l}[\Pi_{T\Delta}(\nabla^{l\tau}_{x_l})] \big]
    \\ &+ \sum_{k \ne l} \eta_k \big[ \cancel{D_{l}[B_{kl}]}^\top \Pi_{T\Delta}(\nabla^{k\tau}_{x_k}) + B_{kl}^\top D_{l}[\Pi_{T\Delta}(\nabla^{k\tau}_{x_k})] \big] \Big]
    \\ &= 2 \Big[ \eta_l \big[ \tau \texttt{diag3}\big(\frac{1}{x_l^2}\big) [I - \frac{1}{m_l} \mathbf{1} \mathbf{1}^\top] \Pi_{T\Delta}(\nabla^{l\tau}_{x_l}) + B_{ll}^\top B_{ll} \big] + \sum_{k \ne l} \eta_k B_{kl}^\top B_{kl} \Big]
    \\ &= 2 \Big[ \tau \eta_l \texttt{diag}\big( [\frac{1}{x_l^2}] \odot \Pi_{T\Delta}(\nabla^{l \tau}_{x_l}) \big) + \sum_k \eta_k B_{kl}^\top B_{kl} \Big]
\end{align}
\end{subequations}

and the off-diagonal blocks as
\begin{subequations}
\begin{align}
    D_{lq}[\mathcal{L}(\boldsymbol{x})] &= D_{q}[\nabla_{x_l} \mathcal{L}(\boldsymbol{x})]
    \\ &= 2 D_{q}[\sum_{k} \eta_k B_{kl}^\top \Pi_{T\Delta}(\nabla^{k\tau}_{x_k})]
    \\ &= 2 \Big[ \eta_l D_{q}\big[ B_{ll}^\top \Pi_{T\Delta}(\nabla^{l\tau}_{x_l}) \big] + \sum_{k \ne l} \eta_k D_{q}\big[ B_{kl}^\top \Pi_{T\Delta}(\nabla^{k\tau}_{x_k}) \big] \Big]
    \\ &= 2 \Big[ \eta_l \big[ \cancel{D_{q}[B_{ll}]}^\top \Pi_{T\Delta}(\nabla^{l\tau}_{x_l}) + B_{ll}^\top D_{q}[\Pi_{T\Delta}(\nabla^{l\tau}_{x_l})] \big]
    \\ &+ \sum_{k \ne l} \eta_k \big[ D_{q}[B_{kl}]^\top \Pi_{T\Delta}(\nabla^{k\tau}_{x_k}) + B_{kl}^\top D_{q}[\Pi_{T\Delta}(\nabla^{k\tau}_{x_k})] \big] \Big]
    \\ &= 2 \Big[ \eta_l B_{ll}^\top B_{lq} + \sum_{k \ne l} \eta_k \big[ T^{k}_{lqk} [I - \frac{1}{m_k} \mathbf{1} \mathbf{1}^\top] \Pi_{T\Delta}(\nabla^{k\tau}_{x_k}) + B_{kl}^\top B_{kq} \big] \Big]
    \\ &= 2 \Big[ \sum_k \eta_k B_{kl}^\top B_{kq} + \sum_{k \ne l} \eta_k T^{k}_{lqk} \Pi_{T\Delta}(\nabla^{k\tau}_{x_k}) \Big].
\end{align}
\end{subequations}

Therefore, the Hessian can be written concisely as
\begin{align}
    2 \big[ \tilde{B}^\top \tilde{B} + T \Pi_{T\Delta}(\tilde{\nabla}^{\tau}) \big]
\end{align}
where $\tilde{B}_{kl} = \sqrt{\eta_k} B_{kl}$, $\Pi_{T\Delta}(\tilde{\nabla}^{\tau}) = [\eta_1 \Pi_{T\Delta}(\nabla^{1\tau}_{x_1}), \ldots, \eta_n \Pi_{T\Delta}(\nabla^{n\tau}_{x_n})]$, and we augment $T$ (the $3$-player tensor approximation to the game, $T^k_{lqk}$) so that $T^l_{lll} = \tau \texttt{diag3}\big(\frac{1}{x_l^2}\big)$ and otherwise $0$.

\end{proof}

\newpage
\section{Global Convergence Guarantees}
In this section, we analyze the application of optimization techniques such as bandit algorithms and gradient descent to our loss function.
\subsection{Maps from Hypercube to Simplex Product}\label{app:cube_to_simplex}

In this subsection, we derive properties of a map $s$ from the unit-hypercube to the simplex product. This map is necessary to to adapt our proposed loss $\mathcal{L}^{\tau}$ to the commonly assumed setting in the $\mathcal{X}$-armed bandit literature~\citep{bubeck2011x} where the feasible set is a hypercube. We derive relevant properties of two such maps: the \texttt{softmax} and a mapping that interprets dimensions of the hypercube as angles on a unit-sphere that are then $\ell_1$-normalized.

\begin{lemma}\label{lemma:lip_comp}
Let $f(x) = -\mathcal{L}(s(x))$. Then $||\nabla f(x)||_{\infty} \le ||J(s(x))^\top||_{\infty} ||\nabla \mathcal{L}(s(x)) ||_{\infty}$.
\end{lemma}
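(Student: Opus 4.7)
The plan is to reduce the claim to a routine application of the chain rule followed by the sub-multiplicative property of the induced $\ell_\infty$ operator norm. Since $f = -\mathcal{L} \circ s$, I would first write the composition explicitly and compute its gradient via the chain rule:
\begin{align*}
\nabla f(x) = -J(s(x))^\top \nabla \mathcal{L}(s(x)),
\end{align*}
where $J(s(x))$ denotes the Jacobian of $s$ evaluated at $x$ (so that $J(s(x))^\top$ maps gradients in the target simplex-product space back to gradients in the hypercube). The sign in front disappears once we pass to absolute values inside an $\ell_\infty$ norm, so
\begin{align*}
\|\nabla f(x)\|_\infty = \bigl\|J(s(x))^\top \nabla \mathcal{L}(s(x))\bigr\|_\infty.
\end{align*}

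The second and final step is to invoke the standard inequality $\|Av\|_\infty \le \|A\|_\infty \|v\|_\infty$, where $\|A\|_\infty$ on the right denotes the matrix $\ell_\infty$ operator norm (i.e., the maximum absolute row sum). Applying this with $A = J(s(x))^\top$ and $v = \nabla \mathcal{L}(s(x))$ yields exactly the claimed bound
\begin{align*}
\|\nabla f(x)\|_\infty \le \|J(s(x))^\top\|_\infty \,\|\nabla \mathcal{L}(s(x))\|_\infty.
\end{align*}

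There is essentially no obstacle here, since both ingredients are elementary. The only minor care needed is notational: making clear that $J(s(x))$ refers to the Jacobian of the map $s$ (not of $\mathcal{L}$) at the point $x$, and that the $\|\cdot\|_\infty$ appearing on a matrix on the right-hand side is the induced operator norm rather than the entrywise max. With that convention fixed, the result follows in two lines, and it cleanly sets up the subsequent subsections that instantiate this bound for the specific \texttt{softmax} and angle-based reparameterizations of the hypercube onto the simplex product.
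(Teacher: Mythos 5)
Your proof is correct and follows essentially the same route as the paper's: the chain rule $\nabla f(x) = -J(s(x))^\top \nabla \mathcal{L}(s(x))$ followed by sub-multiplicativity of the induced $\ell_\infty$ operator norm. Your explicit handling of the sign and the clarification that $\|\cdot\|_\infty$ on the matrix is the operator (max absolute row sum) norm are careful touches the paper leaves implicit, but the argument is the same two-line proof.
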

\begin{proof}
\begin{align}
    ||\nabla f(x)||_{\infty} &= ||J(s(x))^\top \nabla \mathcal{L}(s(x))||_{\infty} \le ||J(s(x))^\top||_{\infty} ||\nabla \mathcal{L}(s(x)) ||_{\infty}.
\end{align}
\end{proof}

\begin{lemma}
The $\infty$-norm of the Jacobian-transpose of a transformation $s(x)$ applied elementwise to a product space is bounded by the $\infty$-norm of the Jacobian-transpose of a single transformation from that product space, i.e., $||J(s(\boldsymbol{x}))^\top||_{\infty} \le \max_{x_i \in \mathcal{X}_i} ||J(s(x_i))^\top||_{\infty}$ for any $i$.
\end{lemma}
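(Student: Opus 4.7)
The plan is to leverage the block-diagonal structure induced by applying $s$ componentwise. Write $\boldsymbol{x} = (x_1, \ldots, x_n) \in \prod_i \mathcal{X}_i$, and by ``elementwise'' I understand that $s(\boldsymbol{x}) = (s(x_1), \ldots, s(x_n))$, so the Jacobian $J(s(\boldsymbol{x}))$ has the block-diagonal form $\mathrm{blockdiag}(J(s(x_1)), \ldots, J(s(x_n)))$. Taking the transpose preserves this block structure, giving $J(s(\boldsymbol{x}))^\top = \mathrm{blockdiag}(J(s(x_1))^\top, \ldots, J(s(x_n))^\top)$.

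From there, the proof reduces to a standard fact about the induced $\infty$-norm (maximum absolute row sum) of a block-diagonal matrix. First I would observe that every row of $J(s(\boldsymbol{x}))^\top$ belongs to exactly one diagonal block; the remaining entries in that row lie outside any block and are therefore zero. Hence the absolute row sum of any row of $J(s(\boldsymbol{x}))^\top$ equals the absolute row sum of the corresponding row of its block, which is at most the $\infty$-norm of that block. Maximizing over rows gives
\begin{align}
    \|J(s(\boldsymbol{x}))^\top\|_{\infty} = \max_{i} \|J(s(x_i))^\top\|_{\infty}.
\end{align}

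To finish, I would bound each $\|J(s(x_i))^\top\|_{\infty}$ by $\max_{x_i \in \mathcal{X}_i} \|J(s(x_i))^\top\|_{\infty}$. Because the \emph{same} map $s$ is applied to every factor, the function $x \mapsto \|J(s(x))^\top\|_{\infty}$ and its domain $\mathcal{X}_i$ do not depend on the index $i$ (the product space structure assumed in the previous subsection uses identical factors $[0,1]$ of the hypercube mapped into simplices of compatible shape). Consequently the bound $\max_i \|J(s(x_i))^\top\|_\infty \le \max_{x_i \in \mathcal{X}_i} \|J(s(x_i))^\top\|_\infty$ holds for any choice of $i$, yielding the claim.

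There is no real obstacle here: the content is essentially the identification of the block-diagonal structure together with the elementary fact that the induced $\infty$-norm of a block-diagonal matrix equals the maximum of its block norms. The only mildly subtle point is clarifying the ``for any $i$'' qualifier, which relies on the symmetry of the product-space construction in Appendix~\ref{app:cube_to_simplex}; once this is noted, the proof is three lines.
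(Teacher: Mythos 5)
Your proposal is correct and takes essentially the same approach as the paper's proof: identify the block-diagonal structure of $J(s(\boldsymbol{x}))^\top$, use the fact that the induced $\infty$-norm of a block-diagonal matrix is the maximum of its blocks' $\infty$-norms, and invoke the identical factors $\mathcal{X}_i = \mathcal{X}_j$ to justify the ``for any $i$'' qualifier. If anything, your final step, which bounds $\max_i \|J(s(x_i))^\top\|_{\infty}$ by the supremum over the common domain $\mathcal{X}_i$, is stated slightly more carefully than the paper's looser claim that the blocks' norms ``are the same.''
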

\begin{proof}
Let $\boldsymbol{x} \in \mathcal{X} = \prod_{i=1}^n \mathcal{X}_i$, $\mathcal{Z} = \prod_{i=1}^n \mathcal{Z}_i$ and $S: \mathcal{X} \rightarrow \mathcal{Z} = [s(x_1); \cdots; s(x_n)]^\top$ where $;$ denotes column-wise stacking, $x_i \in \mathcal{X}_i$. Also, $\mathcal{X}_i = \mathcal{X}_j$ and $\mathcal{Z}_i = \mathcal{Z}_j$for all $i$ and $j$. Then the Jacobian of $S(\boldsymbol{x})$ is
\begin{align}
    J(S(\boldsymbol{x}))^\top &= \begin{bmatrix}
    J(s(x_1))^\top & 0 \ldots & 0
    \\ 0 & J(s(x_2))^\top \ldots & 0
    \\ 0 & 0 \ddots & 0
    \\ 0 & 0 \ldots & J(s(x_n))^\top
    \end{bmatrix}.
\end{align}
The $\infty$-norm of this matrix is the max $1$-norm of any row. This matrix is diagonal, therefore, the $\infty$-norm of each elementwise Jacobian-transpose represents the max $1$-norm of the rows spanned by its block. Given that the domains, ranges, and transformations $s$ for all blocks are the same, their $\infty$-norms are also the same. The max $\infty$ over the blocks is then equal to the $\infty$-norm of any individual $J(s(x_i))^\top$.
\end{proof}

\subsubsection{Hessian of Bandit Reward Function}

\begin{lemma}
Let $s(x)$ be a function that maps the unit hypercube to the simplex product (mixed strategy space). Then the objective function $f(x)=-\mathcal{L}(s(x))$. The Hessian of $-f(x)$ at an optimum $x^*$ in direction $\Delta$ is $\Delta x^{\top} [D s(x)^\top H_{\mathcal{L}}(x) D s(x)]\Big\vert_{x^*} \Delta x$ where $H_{\mathcal{L}}$ is the Hessian of $\mathcal{L}$ and $Ds(x)$ is the Jacobian of $s(x)$.
\end{lemma}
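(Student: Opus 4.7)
The plan is to apply the chain rule twice and then exploit the first-order optimality condition at $x^*$. First I would write $-f(x) = \mathcal{L}(s(x))$ as the composition of two differentiable maps and compute the gradient using the standard chain rule: $\nabla(-f)(x) = Ds(x)^\top \nabla \mathcal{L}(s(x))$.

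Next I would differentiate this expression a second time using the product rule on the two $x$-dependent factors $Ds(x)^\top$ and $\nabla \mathcal{L}(s(x))$. This gives two contributions to $\texttt{Hess}(-f)(x)$: a ``linearized'' term $Ds(x)^\top\, H_{\mathcal{L}}(s(x))\, Ds(x)$ coming from differentiating $\nabla \mathcal{L}(s(x))$ (again by the chain rule), and a ``curvature-of-$s$'' term of the form $\sum_i \bigl(\partial_{s_i}\mathcal{L}\bigr)(s(x))\, H_{s_i}(x)$ coming from differentiating $Ds(x)^\top$. Evaluating the resulting quadratic form on a direction $\Delta x$ then reproduces the stated expression up to that second contribution.

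The key step is to kill the curvature-of-$s$ term at $x^*$. Here I would use the fact that $x^*$ is an optimum of $f$ \emph{and} that $\mathcal{L}$ is globally minimized by $s(x^*)$, since $\mathcal{L}^{\tau}(\boldsymbol{x}) = \sum_k \eta_k \lVert \Pi_{T\Delta}(\nabla^{k\tau}_{x_k})\rVert^2 \ge 0$ and the global minimizers are precisely the QREs in the interior (Lemmas~\ref{lemma:zero_exp_implies_zero_proj_grad_norm} and~\ref{lemma:loss_to_qre_eps}). Thus $\nabla \mathcal{L}(s(x^*)) = \mathbf{0}$, which wipes out the entire second contribution regardless of the second-order behavior of $s$, leaving only $\Delta x^\top \bigl[Ds(x)^\top H_{\mathcal{L}}(x)\, Ds(x)\bigr]\big\vert_{x^*} \Delta x$ as claimed.

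The main obstacle, and the reason the statement is non-trivial, is precisely justifying why the $H_{s_i}$ term vanishes: without the observation that $s(x^*)$ lies at a global (hence stationary) minimizer of $\mathcal{L}$, the Hessian of the composition would genuinely pick up curvature from $s$, and the clean pullback $Ds^\top H_{\mathcal{L}} Ds$ would be wrong. It is also worth noting in the write-up that this argument does not rely on any specific choice of $s$ (softmax or the angular $\ell_1$-normalization in Appendix~\ref{app:cube_to_simplex}); only first-order optimality of $x^*$ and $C^2$-smoothness of $s$ in a neighborhood of $x^*$ are needed.
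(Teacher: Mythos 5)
Your proof is correct and follows essentially the same route as the paper: both expand $D^2(\mathcal{L}\circ s)$ via the chain/product rule into the pullback term $Ds^\top H_{\mathcal{L}} Ds$ plus a curvature-of-$s$ term $\sum_i \partial_i\mathcal{L}(s(x))\,D^2 s_i(x)$, and kill the latter at $x^*$ using $\nabla\mathcal{L}(s(x^*))=\mathbf{0}$. The only difference is that the paper simply asserts this stationarity, whereas you justify it via nonnegativity of $\mathcal{L}^{\tau}$ and the interior-QRE characterization of its zeros --- a worthwhile addition, since $x^*$ is a priori only an optimum of the composition and the vanishing of the \emph{ambient} gradient of $\mathcal{L}$ genuinely needs that global-minimum argument (or, equivalently, Lemma~\ref{lemma:loss_grad}, whose formula visibly vanishes at a QRE).
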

\begin{proof}
\begin{subequations}
\begin{align}
    (D^2(\mathcal{L} \circ s)(x^*))(\Delta x, \Delta x) &= \Delta x^{\top} \Big[ \sum_i \overbrace{\partial_i \mathcal{L}(s(x))}^{=0 \text{ at } x=x^*} D^2 h_i(x) \Big] \Big\vert_{x^*} \Delta x
    \\ &\qquad+ \Delta x^{\top} [D s(x)^\top H_{\mathcal{L}}(x) D s(x)]\Big\vert_{x^*} \Delta x \nonumber
    \\ &= \Delta x^{\top} [D s(x)^\top H_{\mathcal{L}}(x) D s(x)]\Big\vert_{x^*} \Delta x.
\end{align}
\end{subequations}
\end{proof}

\begin{lemma}\label{lemma:jac_softmax_inf_norm}
Let $s(x): \mathcal{X} \rightarrow \prod_k \Delta^{m_k - 1}$ be an injective function, i.e., $x \ne y \implies s(x) \ne s(y)$. Also let $J = J(s(x))$ be the Jacobian of $s$ with respect to $x$ and $\Delta x$ be a nonzero vector in the tangent space of $\mathcal{X}$. Then
\begin{align}
    J \Delta x \ne \mathbf{0}.
\end{align}
\end{lemma}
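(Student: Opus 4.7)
The approach I would take is a proof by contradiction, assuming $J \Delta x = \mathbf{0}$ for some nonzero $\Delta x$ in the tangent space of $\mathcal{X}$ and deriving a violation of the injectivity hypothesis on $s$. The first thing I would do is a dimension count: since $\mathcal{X}$ is the unit hypercube of dimension $n(\bar{m}-1)$ and the tangent space of $\prod_k \Delta^{m_k-1}$ has dimension $\sum_k (m_k - 1) = n\bar{m} - n = n(\bar{m}-1)$, the linear map $J$ acts between spaces of equal dimension. Consequently, failure of injectivity of $J$ is equivalent to $J$ failing to be surjective onto the tangent space of the simplex product, which is the structural fact we will leverage.

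Next, I would invoke the Inverse Function Theorem. If $J(x^\star)$ is injective on $T_{x^\star} \mathcal{X}$, then $s$ is a local $C^1$-diffeomorphism at $x^\star$ and there is nothing to show. Conversely, if $J(x^\star) \Delta x = \mathbf{0}$ for some nonzero $\Delta x$, then the velocity of the curve $t \mapsto s(x^\star + t\Delta x)$ vanishes at $t = 0$, so the image of a short segment in the direction $\Delta x$ collapses to first order into a lower-dimensional subspace. I would then argue that, for the two specific parameterizations analyzed in Appendix~\ref{app:cube_to_simplex}---namely softmax with a pinned reference coordinate, and the angular-to-$\ell_1$-normalized-sphere map---the Jacobian block for player $k$ admits a direct computation. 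For the softmax construction, this block equals $\mathrm{diag}(s_k)_{1:m_k-1} - s_k^{(1:m_k-1)} (s_k^{(1:m_k-1)})^{\top}$, which is positive definite on the interior of $\Delta^{m_k-1}$ and hence has trivial kernel. For the angular map, the Jacobian block is a product of sines and cosines of the angles whose determinant is nonzero whenever the parameters lie in the interior of the box, giving the same conclusion. Assembled over players, $J$ has block-diagonal structure with full-rank blocks, so $J \Delta x \ne \mathbf{0}$.

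The main obstacle is that pure set-theoretic injectivity of $s$ does not, by itself, imply non-degeneracy of the Jacobian: the classical example $x \mapsto x^3$ is a $C^\infty$ injection with a vanishing derivative at the origin. Thus the lemma cannot be proved by a soft topological argument alone, and my plan has to fall back on direct verification for the specific maps considered. An alternative I would explore is strengthening the regularity assumption on $s$ to real-analyticity between manifolds of equal dimension, in which case injectivity does force the Jacobian to be non-degenerate almost everywhere by the Rank Theorem, but this still requires additional work to rule out isolated degeneracies; the direct calculation above sidesteps these subtleties and is the cleanest route.
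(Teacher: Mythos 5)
Your proposal is correct but takes a genuinely different---and in fact more rigorous---route than the paper. The paper's proof is precisely the soft argument you rule out: it writes $[J\Delta x]_i = ds_i$ and asserts that $J\Delta x = \mathbf{0}$ would mean a change in $x$ produces no change in $s$, contradicting injectivity. As your $x \mapsto x^3$ example shows, that inference conflates first-order vanishing with actual constancy, and the lemma as stated is false for general injective $C^1$ maps: $s(x) = \bigl(\tfrac{1}{2} + (x-\tfrac{1}{2})^3,\ \tfrac{1}{2} - (x-\tfrac{1}{2})^3\bigr)$ is a smooth injection $[0,1] \to \Delta^1$ with $J = 0$ at $x = \tfrac{1}{2}$. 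Your fallback---direct verification for the two maps of Appendix~\ref{app:cube_to_simplex}---is the correct repair. For the pinned softmax your block computation is sound: with the pinned coordinate contributing $S_d = 1/Z > 0$, one has for $v \ne 0$ that $v^\top\bigl(\mathrm{diag}(S_{1:d-1}) - S_{1:d-1}S_{1:d-1}^\top\bigr)v \ge \bigl(1 - \sum_{i<d} S_i\bigr)\sum_{i<d} S_i v_i^2 > 0$, so the block has trivial kernel everywhere. For the spherical map two refinements are needed: (i) injectivity, and hence your determinant claim, genuinely requires interior angles, since at $\psi_1 = 0$ the map collapses to a pole and is not injective on the closed box; and (ii) the composed Jacobian $J(n)J(l)$ needs one step beyond nondegeneracy of $J(l)$, because $J(n)$ has a kernel---spanned by the radial direction $z = l(\psi)$---while the image of $J(l)$ is $\ell_2$-orthogonal to $l$ (as $\|l\|_2 = 1$), so a nonzero $J(l)\Delta\psi$ cannot be radial and survives $J(n)$. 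In terms of what each approach buys: the paper's one-line argument nominally covers arbitrary injective parameterizations but is not valid as written; yours proves the claim only for the concrete maps the paper actually uses, which is all the downstream zooming-dimension and near-optimality constants require, and it is airtight.
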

\begin{proof}
Recall that the $ij$th entry of the Jacobian represents $\frac{\partial s_i}{\partial x_j}$ so that the $i$th entry of $J \Delta x$ is
\begin{align}
    [J \Delta x]_i &= \sum_j \frac{\partial s_i}{\partial x_j} \Delta x_j = ds_i.
\end{align}
Assume $J \Delta x = \mathbf{0}$. This would imply a change in $x \in \mathcal{X}$ results in no change in $s$ ($ds = \mathbf{0}$), contradicting the fact that $s$ is injective. Therefore, we must conclude the claim that $J \Delta x \ne \mathbf{0}$.
\end{proof}

\begin{lemma}
Let $J = J(s(x))$ be the Jacobian of any composition of transformations $s = s_t \circ \ldots s_1$ where $s_t(z) = [z_i / \sum_j z_j]_i$. Then $J \Delta x$ lies in the tangent space of the simplex.
\end{lemma}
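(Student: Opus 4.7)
The plan is to argue by the chain rule that only the \emph{final} normalization step $s_t$ matters for showing tangency, and then do a direct calculation showing that the Jacobian of $s_t$ maps any vector into the tangent space of the simplex. Concretely, the tangent space $T(\prod_i \Delta^{m_i-1})$ is characterized by the condition $\mathbf{1}_{m_i}^\top v^{(i)} = 0$ on each block $v^{(i)}$ of the vector $v$. So it suffices to check that each block of $J(s_t(z))\,w$ has zero sum, for arbitrary $w$.

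By the chain rule, $J(s(x)) = J(s_t(y))\,J(s_{t-1}\circ\cdots\circ s_1(x))$, evaluated at the appropriate intermediate point $y$. Setting $w = J(s_{t-1}\circ\cdots\circ s_1(x))\,\Delta x$, we have $J\Delta x = J(s_t(y))\,w$. Since $s_t$ acts block-wise on the product, its Jacobian is block-diagonal, so the per-block check reduces to analyzing a single normalization $y \mapsto [y_i / \sum_j y_j]_i$.

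The direct calculation is as follows. Let $S = \sum_k y_k$ and $u_i = y_i/S$. Then
\begin{align}
\frac{\partial u_i}{\partial y_j} = \frac{\delta_{ij}}{S} - \frac{y_i}{S^2} = \frac{1}{S}(\delta_{ij} - u_i).
\end{align}
Summing the $j$-th column of this Jacobian over $i$ gives
\begin{align}
\sum_i \frac{\partial u_i}{\partial y_j} = \frac{1}{S}\Big(1 - \sum_i u_i\Big) = \frac{1}{S}(1-1) = 0,
\end{align}
i.e., $\mathbf{1}^\top J(s_t(y))^{(i)} = \mathbf{0}^\top$ for each block. Therefore $\mathbf{1}_{m_i}^\top (J(s_t(y))\,w)^{(i)} = 0$ for every block, so $J\Delta x$ lies in $T(\prod_i \Delta^{m_i-1})$, completing the proof.

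There is no real obstacle: the argument is essentially a two-line chain rule reduction plus a column-sum computation. The only subtlety worth flagging is that the statement does not explicitly require $s_t$ to act block-wise, but this is forced by the codomain being a product of simplices (each block must be normalized separately to lie in its own simplex); once this is observed, the block-diagonal structure of $J(s_t)$ makes the per-block tangency check immediate.
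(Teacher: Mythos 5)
Your proof is correct and follows essentially the same route as the paper: a chain-rule reduction to the final normalization map, followed by the column-sum computation showing $\mathbf{1}^\top J(s_t) = \mathbf{0}^\top$, which is exactly the paper's auxiliary Lemma~\ref{lemma:norm_perp_ones}. Your extra remark about the block-diagonal structure over the simplex product is a harmless refinement the paper leaves implicit.
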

\begin{proof}
We aim to show $\mathbf{1}^\top J \Delta x = \mathbf{0}$ for any $\Delta x$ and $x$. By chain rule, the Jacobian of $s$ is $J = J(s) = \prod_{t'=t}^{t'=1} J(s_t')$. Therefore, $\mathbf{1}^\top J \Delta x = \mathbf{1}^\top (\prod_{t'=t}^{t'=1} J(s_t')) \Delta x$. Consider the first product:
\begin{align}
    \mathbf{1}^\top J(s_t) &= \mathbf{0}
\end{align}
by Lemma~\ref{lemma:norm_perp_ones}. Therefore $\mathbf{1}^\top J \Delta x = \mathbf{1}^\top J(s_t) (\prod_{t'=t-1}^{t'=1} J(s_t')) \Delta x = \mathbf{0}^\top (\prod_{t'=t-1}^{t'=1} J(s_t')) \Delta x = 0$. This implies $J \Delta x$ is orthogonal to $\mathbf{1}$ for any $x \in \mathcal{X}$ and $\Delta x$, therefore $J \Delta x$ lies in the tangent space of the simplex for any $x \in \mathcal{X}$ and $\Delta x$.
\end{proof}

\subsubsection{Softmax Map}
Let $s: [0,1]^{d-1} \rightarrow \Delta^{d-1} \in \mathbb{R}^d$ be the \texttt{softmax} function. See~\citep{gao2017properties} for an analysis of many of its properties and in the context of game theory. Note that $s$ maps a $(d-1)$ dimensional variable to a $d$ dimensional distribution. This can be practically handled by always appending a $0$ to the $(d-1)$-dimensional input prior to applying the standard \texttt{softmax}. We perform our analysis below in terms of the standard \texttt{softmax}, but note the norms we derive apply to our modified (invertible) \texttt{softmax}.

Standard:
\begin{align}
    s(x) &= \frac{1}{\sum_{j=1}^d e^{x_j}} \Big[ e^{x_1}, \ldots e^{x_{d}} \Big].
\end{align}

Modified:
\begin{align}
    s(x) &= \frac{1}{1 + \sum_{j=1}^{d-1} e^{x_j}} \Big[ e^{x_1}, \ldots e^{x_{d-1}}, 1 \Big].
\end{align}

\begin{lemma}\label{lemma:jac_softmax_inf}
Let $J$ be the Jacobian of the softmax operator. Then $||J||_{\infty} \le 2$ and $||J^\top||_{\infty} \le 2$.
\end{lemma}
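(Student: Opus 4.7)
The plan is to compute the Jacobian of softmax explicitly and then bound its infinity norms by direct row/column-sum estimates. Writing $s_i(x) = e^{x_i}/Z$ with $Z = \sum_j e^{x_j}$, a direct differentiation gives the standard identity
\begin{align}
J_{ij} \;=\; \frac{\partial s_i}{\partial x_j} \;=\; s_i(\delta_{ij} - s_j).
\end{align}
(For the modified softmax that appends a $1$, the resulting $d\times(d-1)$ Jacobian has the same entries for $i,j \le d-1$ and an extra last row $-s_d s_j$; in either case the entries below will dominate.)

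Next I would bound $\|J\|_\infty$, which equals the maximum absolute row sum. Using the triangle inequality $|\delta_{ij} - s_j| \le \delta_{ij} + s_j$ and the fact that $s$ is a probability vector (so $\sum_j s_j = 1$), the $i$-th absolute row sum satisfies
\begin{align}
\sum_j |J_{ij}| \;\le\; s_i \sum_j (\delta_{ij} + s_j) \;=\; s_i(1 + 1) \;=\; 2 s_i \;\le\; 2.
\end{align}
Taking the maximum over $i$ yields $\|J\|_\infty \le 2$.

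For $\|J^\top\|_\infty$, I would apply the same maneuver to the column sums of $J$ (equivalently the row sums of $J^\top$):
\begin{align}
\sum_i |J_{ij}| \;\le\; \sum_i s_i(\delta_{ij} + s_j) \;=\; s_j + s_j \sum_i s_i \;=\; 2 s_j \;\le\; 2,
\end{align}
again using $\sum_i s_i = 1$. For the extra last row appearing in the modified softmax, $|{-}s_d s_j|$ contributes $s_d s_j \le s_j$ to each column sum, which keeps the bound below $2$ after redoing the estimate analogously. There is no real obstacle here — the main step is just recognising the explicit softmax-Jacobian formula, after which both bounds fall out from the triangle inequality combined with $\mathbf{1}^\top s = 1$. (Incidentally, a slightly tighter argument that keeps the exact value $|\delta_{ij} - s_j|$ would give $2 s_i(1-s_i) \le 1/2$, but the bound $2$ stated in the lemma is all that is needed downstream.)
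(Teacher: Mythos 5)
Your proof is correct and is essentially identical to the paper's: both use the identity $J_{ij} = s_i(\delta_{ij} - s_j)$ and bound the absolute row and column sums via the triangle inequality together with $\sum_j s_j = 1$, yielding $2s_i \le 2$ and $2s_j \le 2$ respectively. Your parenthetical observation that the exact sums equal $2s_i(1-s_i) \le 1/2$ is also valid, though, as you note, the constant $2$ suffices for the paper's purposes.
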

\begin{proof}
Let $S_i$ represent the $i$th entry of $S = \texttt{softmax}(z)$ for any $z \in \mathbb{R}^{m}$. Then the $1$-norm of row $i$ is upper bounded as
\begin{subequations}
\begin{align}
    D_j S_i &= S_i (\delta_{ij} - S_j)
    \\ \implies \sum_j \vert D_j S_i \vert &= \sum_j \vert S_i (\delta_{ij} - S_j) \vert
    \\ &\le \sum_j \vert \delta_{ij} S_i \vert + \vert S_i S_j \vert
    \\ &= S_i + \sum_j S_i S_j
    \\ &= S_i + S_i \sum_j S_j
    \\ &= 2S_i
    \\ &\le 2 \,\, \forall i.
\end{align}
\end{subequations}
Also, the $1$-norm of column $j$ is upper bounded similarly as
\begin{subequations}
\begin{align}
    \\ \sum_i \vert D_j S_i \vert &= \sum_i \vert S_i (\delta_{ij} - S_j) \vert
    \\ &\le \sum_i \vert \delta_{ij} S_i \vert + \vert S_i S_j \vert
    \\ &= S_j + \sum_i S_i S_j
    \\ &= S_i + S_j \sum_i S_i
    \\ &= 2S_j
    \\ &\le 2 \,\, \forall j.
\end{align}
\end{subequations}
The $\infty$-norm of a matrix is the maximum $1$-norm of any row. Therefore, $||J||_{\infty}$ and $||J^\top||_{\infty}$ are both upper bounded by $2$.
\end{proof}

\begin{corollary}\label{lemma:jac_softmax_2}
Let $J$ be the Jacobian of the softmax operator. Then $||J||_{2} \le 2$ and $||J^\top||_{2} \le 2$.
\end{corollary}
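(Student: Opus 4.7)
The plan is to deduce the spectral-norm bound from the $\infty$-norm bounds already established in Lemma~\ref{lemma:jac_softmax_inf} via the standard matrix-norm interpolation inequality $||A||_2 \le \sqrt{||A||_1\,||A||_\infty}$.

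First, I would observe that the induced $1$-norm $||J||_1$ (maximum absolute column sum of $J$) coincides with the induced $\infty$-norm of the transpose, $||J^\top||_\infty$ (maximum absolute row sum of $J^\top$), since transposing a matrix swaps its rows and columns. Lemma~\ref{lemma:jac_softmax_inf} already provides $||J^\top||_\infty \le 2$, so $||J||_1 \le 2$ for free. Combining this with $||J||_\infty \le 2$ from the same lemma and plugging into the interpolation inequality yields $||J||_2 \le \sqrt{2\cdot 2} = 2$. The transpose bound is then immediate because the spectral norm equals the largest singular value, which is invariant under transposition: $||J^\top||_2 = ||J||_2 \le 2$.

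The interpolation inequality itself is a one-line consequence of submultiplicativity, which I would include for self-containedness: $||J||_2^2 = \lambda_{\max}(J^\top J) \le ||J^\top J||_\infty \le ||J^\top||_\infty\,||J||_\infty = ||J||_1\,||J||_\infty$, where I use that for the PSD symmetric matrix $J^\top J$ the spectral radius coincides with the spectral norm and is dominated by any induced operator norm. There is no substantive obstacle here; this corollary is essentially a bookkeeping consequence of the previous lemma. As an aside, one could alternatively exploit the fact that the standard softmax Jacobian is symmetric---$J_{ij} = S_i(\delta_{ij} - S_j)$ gives $J_{ij} = J_{ji}$ because the off-diagonal entries are both $-S_i S_j$---so that $||J||_2 = \rho(J) \le ||J||_\infty \le 2$ directly; I prefer the interpolation route because it also covers the modified (non-square) softmax Jacobian referenced earlier in the subsection.
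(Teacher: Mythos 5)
Your proof is correct, but it takes a genuinely different route from the paper's. The paper proves this corollary via the Gershgorin circle theorem: because the diagonal entries $S_i(1 - S_i)$ of $J$ are nonnegative, the largest magnitude permitted by any Gershgorin disc equals the corresponding full row $1$-norm, so $\rho(J) \le ||J||_{\infty} \le 2$ by Lemma~\ref{lemma:jac_softmax_inf}, and symmetry of $J$ then converts this eigenvalue bound into the singular-value bound $||J||_{2} \le 2$. Your interpolation argument $||J||_2 \le \sqrt{||J||_1\,||J||_\infty}$ instead consumes both halves of Lemma~\ref{lemma:jac_softmax_inf} (the column bound supplying $||J||_1 = ||J^\top||_\infty \le 2$) and never invokes symmetry or the sign of the diagonal; your one-line derivation of the interpolation inequality via $\lambda_{\max}(J^\top J) \le ||J^\top J||_\infty$ is sound. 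Your aside is also right that for symmetric $J$ one can skip Gershgorin entirely, since $||J||_2 = \rho(J) \le ||J||_\infty$ holds for any induced norm---the paper's nonnegative-diagonal observation is needed only because Gershgorin bounds $|\lambda - J_{ii}|$ rather than $|\lambda|$ directly. The concrete advantage of your route is the one you flag: it applies verbatim to rectangular matrices, in particular to the Jacobian of the modified (invertible) softmax, which is $d \times (d-1)$ and hence not symmetric. The paper asserts that its derived norms ``apply to our modified (invertible) softmax,'' but its symmetry-plus-Gershgorin proof of this corollary does not literally extend to that case, whereas yours does (the row and column $1$-norm bounds of Lemma~\ref{lemma:jac_softmax_inf} survive deletion of a column), so your argument quietly closes that small gap. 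The trade-off is minor: the paper's route needs only the row bound plus symmetry, while yours needs both the row and column bounds.
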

\begin{proof}
The Gershgorin circle theorem states that every eigenvalue of $J$ lies within one of the discs centered at the diagonal of $J$ with radius equal to the $1$-norm of the row (excluding the diagonal term). The Jacobian of the softmax operator contains only nonnegative diagonal terms, therefore, the $1$-norm of any entire row also represents the maximum magnitude of any eigenvalue allowed by any Gershgorin disc. In addition, $J$ is symmetric and therefore, the magnitude of its eigenvalues are equivalent to its singular values. Therefore by Lemma~\ref{lemma:jac_softmax_inf}, $||J||_2 \le 2$.
\end{proof}

\subsubsection{Spherical Map}
For spherical coordinates, let $s(x) = n(l(c(x)))$ where $c(x) = \pi / 2 x$, $l(\psi)$ maps angles to the unit sphere, and $n(z) = [z_i / \sum_j z_j]_i$.

\begin{definition}
Define $l(\psi)$ as the transformation to the unit-sphere using spherical coordinates:
\begin{subequations}
\begin{align}
    l_1(\psi) &= \cos(\psi_1)
    \\ l_2(\psi) &= \sin(\psi_1) \cos(\psi_2)
    \\ l_3(\psi) &= \sin(\psi_1) \sin(\psi_2) \cos(\psi_3)
    \\ \vdots &= \vdots
    \\ l_{m - 1}(\psi) &= \sin(\psi_1) \sin(\psi_2) \ldots \cos(\psi_{m - 1})
    \\ l_{m}(\psi) &= \sin(\psi_1) \sin(\psi_2) \ldots \sin(\psi_{m -1}).
\end{align}
\end{subequations}
\end{definition}

\begin{lemma}
Let $J$ be the Jacobian of the transformation to the unit-sphere using spherical coordinates, i.e. $z = l(\psi)$ where $||l||^2 = 1$ and $\psi_i \in [0, \frac{\pi}{2}]$ represents an angle for each $i$. Then $||J||_F \le \sqrt{m}$.
\end{lemma}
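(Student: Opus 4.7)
The plan is to compute the Frobenius norm column-by-column, exploiting the structure of spherical coordinates. Writing $\|J\|_F^2 = \sum_{j=1}^{m-1}\|\partial l/\partial\psi_j\|^2$, I would compute each partial derivative directly from the definitions. Setting $P_j \myeq \prod_{t=1}^{j-1}\sin^2(\psi_t)$ (with $P_1=1$), one sees $\partial l_i/\partial\psi_j = 0$ for $i<j$, while for $i>j$ the derivative of the $\sin(\psi_j)$ factor produces a $\cos(\psi_j)$, and for $i=j<m$ the derivative of $\cos(\psi_j)$ produces $-\sin(\psi_j)$. Squaring and summing over $i$ then factors cleanly through $P_j$.

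The key step is to collapse the leftover sum using a telescoping identity. Define
\begin{align*}
S_j \;\myeq\; \sum_{i=j+1}^{m-1}\Big[\prod_{t=j+1}^{i-1}\sin^2(\psi_t)\Big]\cos^2(\psi_i) \;+\; \prod_{t=j+1}^{m-1}\sin^2(\psi_t),
\end{align*}
so that $\|\partial l/\partial\psi_j\|^2 = P_j\sin^2(\psi_j) + P_j\cos^2(\psi_j)\,S_j$. A downward induction on $j$ using $\cos^2+\sin^2=1$ gives $S_{m-1}=1$ and $S_j = \cos^2(\psi_{j+1})+\sin^2(\psi_{j+1})\,S_{j+1}=1$. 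Hence $\|\partial l/\partial\psi_j\|^2 = P_j(\sin^2(\psi_j)+\cos^2(\psi_j)) = P_j$.

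The finish is immediate: since $\psi_t\in[0,\pi/2]$ gives $\sin^2(\psi_t)\le 1$, each $P_j\le 1$, so
\begin{align*}
\|J\|_F^2 \;=\; \sum_{j=1}^{m-1} P_j \;\le\; m-1 \;\le\; m,
\end{align*}
which yields the claim. The main obstacle is bookkeeping the boundary case $i=m$ (which has no trailing $\cos$ factor) consistently throughout the derivative, sum, and induction; everything else is routine. An alternative and even slicker route is to observe that the columns $\partial l/\partial\psi_j$ form the standard orthogonal coordinate frame for the round metric on $S^{m-1}$, whose diagonal metric coefficients are precisely the $P_j$, so $\|J\|_F^2 = \mathrm{tr}(g) = \sum_j P_j$; I would mention this as a sanity check but prove the identity directly since the paper does not invoke Riemannian geometry.
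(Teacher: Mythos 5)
Your proof is correct, but it runs along the opposite axis of the matrix from the paper's. The paper bounds the Frobenius norm row-by-row: it introduces partial row sums $R_{i,k}$ of squared entries and proves the recursive \emph{inequality} $R_{i,k} \le \cos^2(\psi_k) + \sin^2(\psi_k)\,R_{i,k+1}$ with base case $R_{i,i} = t_i \le 1$, concluding $R_{i,1} \le 1$ for each of the $m$ rows and hence $\|J\|_F^2 \le m$. You instead compute the columns $\partial l/\partial\psi_j$ exactly: your telescoping identity $S_j = \cos^2(\psi_{j+1}) + \sin^2(\psi_{j+1})\,S_{j+1} = 1$ is the same Pythagorean collapse the paper uses, but applied along the axis where it closes with \emph{equality}, so each column has squared norm exactly $P_j = \prod_{t<j}\sin^2(\psi_t)$. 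This buys you an exact expression $\|J\|_F^2 = \sum_{j=1}^{m-1} P_j$ and the marginally sharper bound $\|J\|_F \le \sqrt{m-1}$; the paper's row sums are genuinely not equal to $1$ in general, so its argument can only deliver $\le \sqrt{m}$, which is all the lemma claims. Your Riemannian sanity check is also sound: the columns are mutually orthogonal, the pullback of the round metric is diagonal with entries $g_{jj} = P_j$, and $\|J\|_F^2 = \mathrm{tr}(J^\top J) = \mathrm{tr}(g)$, though as you say the direct computation is the right thing to write down since the paper never invokes this machinery. The only delicate bookkeeping is the $i=m$ row (which ends in a sine rather than a cosine), and you fold it into $S_j$ correctly as the trailing pure-sine product term, so the induction base $S_{m-1}=1$ (empty sum plus empty product) goes through.
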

\begin{proof}
The Jacobian of the transformation is
\begin{align}
    J(l) &= \begin{bmatrix}
    -\sin(\psi_1) & 0 & \cdots & 0
    \\ \cos(\psi_1) \cos(\psi_2) & -\sin(\psi_1)\sin(\psi_2) & \cdots & 0
    \\ \vdots & \vdots & \ddots & \vdots
    \\ \cos(\psi_1) \sin(\psi_2) \ldots \cos(\psi_{m - 1}) & \cdots & \cdots & -\sin(\psi_1) \ldots \sin(\psi_{m - 2}) \sin(\psi_{m - 1})
    \\ \cos(\psi_1) \sin(\psi_2) \ldots \sin(\psi_{m -1}) & \cdots & \cdots & \sin(\psi_1) \ldots \sin(\psi_{m - 2}) \cos(\psi_{m - 1})
    \end{bmatrix}
\end{align}
and it square is
\begin{align}
    J(l) &= \begin{bmatrix}
    t_1 & 0 & \cdots & 0
    \\ \cos(\psi_1)^2 \cos(\psi_2)^2 & \sin(\psi_1)^2 t_2 & \cdots & 0
    \\ \vdots & \vdots & \ddots & \vdots
    \\ \cos(\psi_1)^2 \sin(\psi_2)^2 \ldots \cos(\psi_{m - 1})^2 & \cdots & \cdots & \sin(\psi_1)^2 \ldots \sin(\psi_{m - 2})^2 t_{m-1}
    \\ \cos(\psi_1)^2 \sin(\psi_2)^2 \ldots \sin(\psi_{m -1})^2 & \cdots & \cdots & \sin(\psi_1)^2 \ldots \sin(\psi_{m - 2})^2 t_m
    \end{bmatrix}
\end{align}
where
\begin{subequations}
\begin{align}
    \delta_{im} &= 1 \text{ if } i=m, 0 \text{ else}
    \\ t_i &= \delta_{im} \cos^2(\psi_{i - 1}) + (1 - \delta_{im}) \sin^2(\psi_{i}) \le 1.
\end{align}
\end{subequations}

To compute the Frobenius norm, we will need the sum of the squares of all entries. We will consider the sum of each row individually using the following auxiliary variable $R_{i,k\le i}$ where $\sum_{j} J_{ij}^2 = R_{i,1}$ and apply a recursive inequality.
\begin{subequations}
\begin{align}
    \\ R_{i,k \le i} &= \sum_{k'=k}^{i-1} \cos^2(\psi_{k'}) \Big[ \prod_{l=k,l \ne k'}^{i-1} \sin^2(\psi_l) \Big] \cos^2(\psi_i) + t_i \prod_{l=k}^{i-1} \sin^2(\psi_{l})
    \\ &= \cos^2(\psi_{k}) \underbrace{\Big[ \prod_{l=k+1}^{i-1} \sin^2(\psi_l) \Big] \cos^2(\psi_i)}_{\le 1}
    \\ &+ \sin^2(\psi_k) \sum_{k'=k+1}^{i-1} \cos^2(\psi_{k'}) \Big[ \prod_{l=k+1,l \ne k'}^{i-1} \sin^2(\psi_l) \Big] \cos^2(\psi_i)
    \\ &+ \sin^2(\psi_{k}) t_i \prod_{l=k+1}^{i-1} \sin^2(\psi_{l})
    \\ &\le \cos^2(\psi_{k})
    \\ &+ \sin^2(\psi_k) \Big( \sum_{k'=k+1}^{i-1} \cos^2(\psi_{k'}) \Big[ \prod_{l=k+1,l \ne k'}^{i-1} \sin^2(\psi_l) \Big] \cos^2(\psi_i) + t_i \prod_{l=k+1}^{i-1} \sin^2(\psi_{l}) \Big)
    \\ &= \cos^2(\psi_k) + \sin^2(\psi_k) R_{i,k+1}.
\end{align}
\end{subequations}
Note then that $R_{i,k+1} \le 1 \implies R_{i,k} \le 1$. We know $R_{i,i} = t_i \le 1$, therefore, $R_{i,1} \le 1$ by applying the inequality recursively. Finally, $\sum_{j} J_{ij}^2 = R_{i,1} \le 1$ implies the claim $||J||_F^2 = \sum_{i} R_{i,1} \le m$.
\end{proof}

\begin{lemma}\label{lemma:norm_perp_ones}
Let $J$ be the Jacobian of $n(z) = z / Z$ where $Z = \sum_k z_k$. Then $\mathbf{1}^\top J = \mathbf{0}^\top$.
\end{lemma}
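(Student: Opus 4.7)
The plan is to exploit the single observation that $n(z)$ always lands in the affine hyperplane $\{y : \mathbf{1}^\top y = 1\}$, so the derivative of the map in any direction must be tangent to that hyperplane, i.e.\ orthogonal to $\mathbf{1}$. Formally, for any $z$ with $Z := \sum_k z_k \neq 0$ one has $\mathbf{1}^\top n(z) = \sum_i z_i / Z = Z/Z = 1$, a constant function of $z$. Differentiating both sides with respect to $z$ yields $\mathbf{1}^\top J(z) = \mathbf{0}^\top$, which is exactly the claim. This one-liner is the cleanest route and I would present it as the main proof.

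As a sanity check (and in case an explicit entrywise argument is preferred), I would also carry out the direct calculation: the quotient rule gives $J_{ij} = \partial n_i / \partial z_j = \delta_{ij}/Z - z_i/Z^2$, so summing over $i$,
\[
[\mathbf{1}^\top J]_j \;=\; \sum_i J_{ij} \;=\; \frac{1}{Z} - \frac{\sum_i z_i}{Z^2} \;=\; \frac{1}{Z} - \frac{Z}{Z^2} \;=\; 0,
\]
for every index $j$. Either argument completes the proof, and the two agree.

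There is no real obstacle here: the only thing worth flagging is the tacit assumption $Z \neq 0$, which is automatic in the intended use case where $n$ is composed after a nonnegative map whose coordinates are not all zero (e.g.\ the spherical map $l(\psi)$ preceding it in the spherical-coordinates construction of $s$). The result is then used one step up the chain to show that $J(s) \Delta x$ is orthogonal to $\mathbf{1}$, so nothing more than the orthogonality statement above is needed.
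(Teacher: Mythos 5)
Your proposal is correct. Your ``sanity check'' computation is precisely the paper's proof: the paper writes $J(n)_{ij} = \frac{1}{Z^2}(-z_i + \delta_{ij} Z)$ and sums over $i$ to get $\frac{1}{Z^2}(-Z + Z) = 0$, which is your quotient-rule calculation in slightly different notation. Your primary one-liner---differentiating the constant identity $\mathbf{1}^\top n(z) = 1$---is a cleaner, more conceptual route that the paper does not take; it buys you the result with no entrywise bookkeeping and makes transparent \emph{why} it holds (the image of $n$ lies in the affine hyperplane $\{y : \mathbf{1}^\top y = 1\}$, so every directional derivative is tangent to it), and it generalizes immediately to any map with constant coordinate sum. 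Your flagging of the tacit hypothesis $Z \neq 0$ is also apt: the paper handles this only implicitly, via the closing remark that $z$ is a point on the unit sphere in the positive orthant (coming from the preceding spherical map $l(\psi)$ with angles in $[0, \pi/2]$), exactly as you observe.
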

\begin{proof}
The $ij$th entry of the Jacobian of $n(z)$ is
\begin{align}
    J(n)_{ij} &= \frac{1}{Z^2} (-z_i + \delta_{ij} Z).
\end{align}
Therefore $[\mathbf{1}^\top J]_j = \sum_i J(n)_{ij} = \frac{1}{Z^2} (-Z + Z) = 0$ where $z$ is a point on the unit-sphere in the positive orthant.
\end{proof}

\subsection{Near Optimality \& Zooming Dimension}
In this subsection, we derive bounds on the near-optimality dimension and zooming dimension of the associated bandit optimization problems we use to derive global convergence guarantees to Nash equilibria. This is written in terms of maximizing a function $f$ rather than minimizing a loss to better match the bandit literature~\citep{bubeck2011x,valko2013stochastic,fenglipschitz}. 

\begin{assumption}\label{assump:local_poly_bnd}
Locally around each interior $x^* \in \mathcal{X}^*$, $-f(x)$ is lower bounded by $-f(x^*) + \sigma_{-} ||x - x^*||^{\alpha_{hi}}$. In other words, for all $f(x) \ge f(x^*) - \eta$:
\begin{align}
    f(x^*) - f(x) &\ge \sigma_{-} ||x - x^*||^{\alpha_{hi}}
\end{align}
where we have left the precise norm unspecified for generality. Let $\ell(x, x^*) = \sigma_{+} ||x - x^*||^{\alpha_{lo}}$.
\end{assumption}

\begin{definition}
$\mathcal{X}_{\epsilon} \myeq \{ x \in \mathcal{X} \,\, \vert \,\, \exists x^* \in \mathcal{X}^* \,\, s.t. \,\, f(x) \ge f(x^*) - \epsilon \}$
\end{definition}

\begin{definition}
$\mathcal{X}^{lower}_{\epsilon} \myeq \{ x \in \mathcal{X} \,\, \vert \,\, \exists x^* \in \mathcal{X}^* \,\, s.t. \,\, f(x^*)-\sigma_{-} ||x - x^*||^{\alpha_{hi}} \ge f(x^*) - \epsilon \}$
\end{definition}

\begin{corollary}\label{corr:x_eps_subset}
$\mathcal{X}_{\epsilon} \subseteq \mathcal{X}^{lower}_{\epsilon}$.
\end{corollary}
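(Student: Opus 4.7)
The plan is to unwind the two set definitions and observe that the inclusion is a one-line consequence of Assumption \ref{assump:local_poly_bnd}. First I would fix an arbitrary $x \in \mathcal{X}_{\epsilon}$ and, by definition, extract a witness $x^* \in \mathcal{X}^*$ with $f(x) \ge f(x^*) - \epsilon$, i.e.\ $f(x^*) - f(x) \le \epsilon$. The goal is to exhibit a witness $x^{**} \in \mathcal{X}^*$ (I expect $x^{**} = x^*$ works) certifying that $x \in \mathcal{X}^{lower}_{\epsilon}$, which after rearrangement is simply the inequality $\sigma_{-} \|x - x^*\|^{\alpha_{hi}} \le \epsilon$.

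Next I would invoke Assumption \ref{assump:local_poly_bnd} applied to the pair $(x, x^*)$. The hypothesis of the assumption is $f(x) \ge f(x^*) - \eta$, which is met as soon as $\epsilon \le \eta$ (the implicit locality radius); under this regime the assumption supplies the polynomial lower bound $f(x^*) - f(x) \ge \sigma_{-} \|x - x^*\|^{\alpha_{hi}}$. Chaining this with the bound from the preceding step yields
\begin{align*}
\sigma_{-} \|x - x^*\|^{\alpha_{hi}} \;\le\; f(x^*) - f(x) \;\le\; \epsilon,
\end{align*}
which is exactly the condition that puts $x$ into $\mathcal{X}^{lower}_{\epsilon}$ with witness $x^*$.

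The main (and only) obstacle is a bookkeeping one: the assumption is stated only locally, in a neighborhood of each $x^* \in \mathcal{X}^*$ of radius $\eta$, so strictly speaking the corollary needs the standing restriction $\epsilon \le \eta$ (or the understanding that we work in the ``sufficiently small $\epsilon$'' regime used by the bandit analyses downstream). Beyond checking this, no further work is needed — there is no geometric subtlety and no quantitative estimate to optimize, since the direction of the inclusion is precisely the direction in which the assumption is phrased (it lower-bounds the loss gap by a function of distance, whereas the reverse inclusion would require an upper bound like $\ell(x,x^*)$ from the assumption). I would end the proof by noting this one-sided nature to motivate the separate role of $\ell(x,x^*)$ in subsequent near-optimality computations.
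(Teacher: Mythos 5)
Your proof is correct and takes essentially the same route as the paper's: fix $x \in \mathcal{X}_{\epsilon}$ with witness $x^*$, apply Assumption~\ref{assump:local_poly_bnd} to get $f(x^*) - \sigma_{-}\|x - x^*\|^{\alpha_{hi}} \ge f(x) \ge f(x^*) - \epsilon$, and reuse the same witness $x^*$ for membership in $\mathcal{X}^{lower}_{\epsilon}$. Your bookkeeping caveat that the inclusion only holds in the $\epsilon \le \eta$ regime is a fair point the paper's one-line proof leaves implicit, though the paper does acknowledge it just after Lemma~\ref{lemma:n_eps_le_eta} and treats $\epsilon \ge \eta$ separately via Lemma~\ref{lemma:n_eps_ge_eta}.
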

\begin{proof}
By Assumption~\ref{assump:local_poly_bnd}, $f(x^*) - \sigma_{-} ||x - x^*||^{\alpha_{hi}} \ge f(x)$. Therefore, any $x \in \mathcal{X}$ that satisfies the requirement for an element of $\mathcal{X}_{\epsilon}$, $f(x) \ge f(x^*) - \epsilon$, will also satisfy the requirement for an element of $\mathcal{X}_{\epsilon}^{lower}$. 
\end{proof}

\begin{definition}[$\psi$-near Optimality Dimension]
The $\psi$-near optimality dimension is the smallest $d' > 0$ such that there exists $C > 0$ such that for any $\epsilon > 0$, the maximum number of disjoint $\ell$-balls of radius $\psi \epsilon$ and center in $\mathcal{X}_{\epsilon}$ is less than $C \epsilon^{-d'}$.
\end{definition}

\begin{definition}[Zooming Dimension]
The zooming dimension is the smallest $d_z > 0$ such that there exists $C_z > 0$ such that for any $r > 0$, the maximum number of disjoint $\ell$-balls of radius $\frac{r}{2}$ and center in $\mathcal{X}_{16r}$ is less than $C_z r^{-d_z}$.
\end{definition}

\begin{corollary}[Zooming Dimension from $\psi$-near Optimality Dimension]\label{corr:zoom_dim}
The zooming dimension of $f: x \in [0,1]^d \rightarrow [-1, 1]$ under $\ell(x,y) = \sigma_{+} ||x - y||^{\alpha_{lo}}$ is $d_z = d (\frac{\alpha_{hi} - \alpha_{lo}}{\alpha_{lo} \alpha_{hi}})$ with constant $C_z = 16^{-d'} C$ with $\psi = \frac{1}{32}$ ($\psi$ is needed to compute $C$).
\end{corollary}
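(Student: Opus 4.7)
The plan is to reduce the zooming-dimension statement to the $\psi$-near optimality-dimension statement via a simple change of variable, and then to derive the $\psi$-near optimality dimension by a standard volume-packing argument based on Assumption~\ref{assump:local_poly_bnd} and Corollary~\ref{corr:x_eps_subset}.

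First, I would unify the two definitions by the substitution $\epsilon = 16r$ with $\psi = 1/32$. Under this choice, the center constraint ``$\mathcal{X}_{16r}$'' matches ``$\mathcal{X}_{\epsilon}$'' and the ball radius $\psi\epsilon = r/2$ matches the zooming radius exactly. Consequently, any upper bound of the form ``the number of disjoint $\ell$-balls of radius $\psi\epsilon$ with centers in $\mathcal{X}_{\epsilon}$ is at most $C\epsilon^{-d'}$'' translates immediately into ``the number of disjoint $\ell$-balls of radius $r/2$ with centers in $\mathcal{X}_{16r}$ is at most $C(16r)^{-d'} = (16^{-d'}C)\,r^{-d'}$.'' This yields $d_z = d'$ and $C_z = 16^{-d'}C$, which is the part of the claim concerning the constant.

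Next, to pin down the value $d' = d(\alpha_{hi}-\alpha_{lo})/(\alpha_{lo}\alpha_{hi})$, I would argue by volume. By Corollary~\ref{corr:x_eps_subset}, $\mathcal{X}_\epsilon \subseteq \mathcal{X}_\epsilon^{lower}$, and by Assumption~\ref{assump:local_poly_bnd} any point of $\mathcal{X}_\epsilon^{lower}$ lies within $\|\cdot\|$-distance $(\epsilon/\sigma_-)^{1/\alpha_{hi}}$ of some $x^* \in \mathcal{X}^*$. Since $\ell(x,y) = \sigma_+\|x-y\|^{\alpha_{lo}}$, an $\ell$-ball of radius $\psi\epsilon$ is a $\|\cdot\|$-ball of radius $(\psi\epsilon/\sigma_+)^{1/\alpha_{lo}}$. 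Therefore every disjoint $\ell$-ball of radius $\psi\epsilon$ with center in $\mathcal{X}_\epsilon$ is contained in the union over $x^* \in \mathcal{X}^*$ of $\|\cdot\|$-balls of radius
\[
R_\epsilon \;=\; (\epsilon/\sigma_-)^{1/\alpha_{hi}} + (\psi\epsilon/\sigma_+)^{1/\alpha_{lo}}.
\]
A volume-packing inequality then gives a count of order $|\mathcal{X}^*|\cdot R_\epsilon^{\,d}\,/\,(\psi\epsilon/\sigma_+)^{d/\alpha_{lo}}$. In the relevant small-$\epsilon$ regime, and under the natural assumption $\alpha_{lo}\le\alpha_{hi}$ (needed for the two bounds in Assumption~\ref{assump:local_poly_bnd} to be mutually compatible near $x^*$), the first term in $R_\epsilon$ dominates, and the count scales as $\epsilon^{d/\alpha_{hi}-d/\alpha_{lo}} = \epsilon^{-d(\alpha_{hi}-\alpha_{lo})/(\alpha_{lo}\alpha_{hi})}$, identifying $d'$ as claimed.

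The main obstacle is really just bookkeeping: tracking the multiplicative constants ($|\mathcal{X}^*|$, $\sigma_\pm^{\pm d/\alpha_{hi,lo}}$, $\psi^{-d/\alpha_{lo}}$, and the Lebesgue volume $V_d$ of the $\|\cdot\|$-unit ball) so that they all get absorbed into the single constant $C$, and then verifying that the ``small $\epsilon$'' restriction is harmless because large $\epsilon$ makes the exponent on the right-hand side of the near-optimality bound only more permissive (the bound is only required to hold for some $C$, uniformly in $\epsilon$). Everything else is mechanical given Assumption~\ref{assump:local_poly_bnd} and the matching of the two definitions in paragraph one.
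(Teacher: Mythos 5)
Your proposal is correct and takes essentially the same approach as the paper: your first paragraph (substituting $\epsilon = 16r$, $\psi\epsilon = r/2$, hence $\psi = 1/32$, so that $N_{16r} \le C(16r)^{-d'} = 16^{-d'}C\,r^{-d'}$, giving $d_z = d'$ and $C_z = 16^{-d'}C$) is precisely the paper's entire proof of Corollary~\ref{corr:zoom_dim}. Your remaining paragraphs re-derive the value $d' = d\big(\frac{\alpha_{hi}-\alpha_{lo}}{\alpha_{lo}\alpha_{hi}}\big)$ by the same volume-packing argument the paper delegates to Lemma~\ref{lemma:n_eps_le_eta} and Theorem~\ref{theorem:near_opt_dim} (with the $\epsilon \ge \eta$ regime handled there explicitly by Lemma~\ref{lemma:n_eps_ge_eta} rather than by your informal absorption-into-$C$ remark), so this is supporting material already present in the paper rather than a genuinely different route.
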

\begin{proof}
Mapping the definition of $\psi$-near optimality onto zooming dimension, we find $\psi \epsilon = r / 2$ and $\epsilon = 16r$. Then we can infer $\psi = 1 / 32$; this is used to compute $C$ (see Theorem~\ref{theorem:near_opt_dim}). Rewriting the bound from $\psi$-near optimality in terms of $r$, we find
\begin{align}
    N_{\epsilon} = N_{16r} &\le C (16r)^{-d'} = C 16^{-d'} r^{-d'} = C_z r^{-d_z}
\end{align}
where $N_{16r}$ denotes the number of $\ell$-balls of radius $r/2$ with center in $\mathcal{X}_{16r}$, $d_z = d'$, $C_z = C 16^{-d'} = C 16^{-d_z}$. Therefore, this translation only effects the constant $C_z$, not the zooming dimension. 
\end{proof}

\begin{lemma}[$N_{\epsilon \le \eta} \le C_{\epsilon \le \eta} \epsilon^{-d'}$]\label{lemma:n_eps_le_eta}
The maximum number of disjoint $\ell$-balls with radius $\psi \epsilon$ and center in $\mathcal{X}_{\epsilon \le \eta}$, $N_{\epsilon \le \eta}$, is upper bounded by $C_{\epsilon \le \eta} \epsilon^{-d'}$ where $C_{\epsilon \le \eta} = \vert \mathcal{X}^* \vert \Big( \frac{\sigma_{+}}{\psi \sigma_{-}^{\alpha_{lo} / \alpha_{hi}}} \Big)^{d / \alpha_{lo}}$ and $d' = d (\frac{\alpha_{hi} - \alpha_{lo}}{\alpha_{lo} \alpha_{hi}})$.
\end{lemma}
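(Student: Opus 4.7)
The plan is to reduce the problem to a standard Euclidean packing count by inverting both polynomial bounds (the lower bound on $f(x^*) - f(x)$ and the definition of $\ell$). First I would use Corollary~\ref{corr:x_eps_subset} to pass from $\mathcal{X}_\epsilon$ to the explicit set $\mathcal{X}^{lower}_\epsilon$. Unpacking the defining inequality, $x \in \mathcal{X}^{lower}_\epsilon$ iff there exists $x^* \in \mathcal{X}^*$ with $\sigma_- \|x - x^*\|^{\alpha_{hi}} \le \epsilon$, i.e., $\|x - x^*\| \le R := (\epsilon/\sigma_-)^{1/\alpha_{hi}}$. Thus $\mathcal{X}^{lower}_\epsilon \subseteq \bigcup_{x^* \in \mathcal{X}^*} B(x^*, R)$, a union of at most $|\mathcal{X}^*|$ Euclidean balls of radius $R$.

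Next I would translate $\ell$-balls into Euclidean balls. Since $\ell(x, c) = \sigma_+ \|x - c\|^{\alpha_{lo}}$, an $\ell$-ball of radius $\psi\epsilon$ is exactly the Euclidean ball of radius $r := (\psi \epsilon / \sigma_+)^{1/\alpha_{lo}}$, and two such balls are disjoint iff their centers are at Euclidean distance greater than $2r$. By pigeonhole, some $x^* \in \mathcal{X}^*$ has at least $N_{\epsilon \le \eta}/|\mathcal{X}^*|$ of the ball-centers inside $B(x^*, R)$; these centers form a $2r$-separated subset of $B(x^*, R)$, whose cardinality is bounded by the classical packing number $(R/r)^d$. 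Summing over $x^*$ gives $N_{\epsilon \le \eta} \le |\mathcal{X}^*| (R/r)^d$. A short algebraic calculation then reveals
\begin{align*}
\left(\frac{R}{r}\right)^{d} = \left(\frac{\sigma_{+}}{\psi\,\sigma_{-}^{\alpha_{lo}/\alpha_{hi}}}\right)^{d/\alpha_{lo}} \epsilon^{-\,d(\alpha_{hi}-\alpha_{lo})/(\alpha_{hi}\alpha_{lo})},
\end{align*}
and identifying the exponent as $-d'$ yields the claim $N_{\epsilon \le \eta} \le C_{\epsilon \le \eta}\,\epsilon^{-d'}$ with the stated constant.

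The main obstacle is the familiar ``$+r$ enlargement'' issue in the packing step: the naive volume bound only gives $N \le |\mathcal{X}^*| ((R+r)/r)^d$, so matching the clean constant $C_{\epsilon \le \eta}$ requires either restricting to the regime $R \ge r$ (which holds for $\epsilon$ small enough since $\alpha_{lo} < \alpha_{hi}$ forces $R/r \to \infty$ as $\epsilon \to 0$) or absorbing a harmless dimensional factor (at most $2^d$) into the constant. Given that the lemma is stated for $\epsilon \le \eta$ and used asymptotically in Corollary~\ref{corr:zoom_dim}, working in the small-$\epsilon$ regime is the cleanest route; the leading-order $\epsilon^{-d'}$ dependence is unaffected.
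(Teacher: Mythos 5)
Your proposal matches the paper's proof essentially step for step: both pass from $\mathcal{X}_{\epsilon}$ to $\mathcal{X}^{lower}_{\epsilon}$ via Corollary~\ref{corr:x_eps_subset}, identify the two Euclidean radii $r_{\epsilon} = (\epsilon/\sigma_{-})^{1/\alpha_{hi}}$ and $r_{\ell} = (\psi\epsilon/\sigma_{+})^{1/\alpha_{lo}}$, and bound the packing count by the volume ratio $\vert \mathcal{X}^* \vert (r_{\epsilon}/r_{\ell})^d$, arriving at the same constant and exponent $d' = d(\frac{\alpha_{hi}-\alpha_{lo}}{\alpha_{lo}\alpha_{hi}})$. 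Your pigeonhole-over-optima phrasing is a cosmetic variant of the paper's summed-volume argument, and your explicit treatment of the $+r$ enlargement in the packing step is in fact more careful than the paper's proof, which silently writes $N_{\epsilon \le \eta} \le Vol(\mathcal{X}_{\epsilon}^{lower})/Vol(\mathcal{B}_{\ell})$ without addressing that the disjoint $\ell$-balls need only have their \emph{centers} in the set.
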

\begin{proof}
The number of disjoint $\ell$-balls of radius $\psi \epsilon$ and center in $\mathcal{X}_{\epsilon \le \eta}$ can be upper bounded as follows.

Rewrite $\mathcal{X}^{lower}_{\epsilon}$ by rearranging terms as
\begin{align}
    \mathcal{X}^{lower}_{\epsilon} = \{ x \in \mathcal{X} \vert \exists x^* \in \mathcal{X}^* \,\, s.t. \,\, ||x - x^*|| &\le \Big( \frac{\epsilon}{\sigma_{-}} \Big)^{1/\alpha_{hi}} \myeq r_{\epsilon} \} \label{def:r_eps}
\end{align}
and recall that from Corollary~\ref{corr:x_eps_subset} that $\mathcal{X}_{\epsilon} \subseteq \mathcal{X}_{\epsilon}^{lower}$. 
Furthermore, an $\ell$-ball of radius $\psi \epsilon$ implies
\begin{align}
    \ell(x, y) = \sigma_{+} ||x - y||^{\alpha_{lo}} &\le \psi \epsilon \implies ||x - y|| \le \Big( \frac{\psi \epsilon}{\sigma_{+}} \Big)^{1 / \alpha_{lo}} \myeq r_{\ell}. \label{def:rl}
\end{align}

The number of disjoint $\ell$-balls that can pack into a set $\mathcal{X}_{\epsilon}$, $N_{\epsilon \le \eta}$, is upper bounded by the ratio of the volumes of the two sets:
\begin{subequations}
\begin{align}
    N_{\epsilon \le \eta} &\le \frac{Vol(\mathcal{X}_{\epsilon})}{Vol(\mathcal{B}_{\ell})}
    \\ &\le \frac{Vol(\mathcal{X}_{\epsilon}^{lower})}{Vol(\mathcal{B}_{\ell})}
    \\ &= \frac{\vert \mathcal{X}^* \vert S_d r_{\epsilon}^d}{S_d r_{\ell}^d}
    \\ &\le \frac{\vert \mathcal{X}^* \vert \Big( \frac{\epsilon}{\sigma_{-}} \Big)^{d / \alpha_{hi}}}{\Big( \frac{\psi \epsilon}{\sigma_{+}} \Big)^{d / \alpha_{lo}}}
    \\ &= \vert \mathcal{X}^* \vert \Big( \frac{\sigma_{+}^{1 / \alpha_{lo}} \psi^{-1 / \alpha_{lo}}}{\sigma_{-}^{1 / \alpha_{hi}}} \Big)^d \epsilon^{d (1 / \alpha_{hi} - 1 / \alpha_{lo})}
    \\ &= \vert \mathcal{X}^* \vert \Big( \frac{\sigma_{+}}{\psi \sigma_{-}^{\alpha_{lo} / \alpha_{hi}}} \Big)^{d / \alpha_{lo}} \epsilon^{- d (\frac{\alpha_{hi} - \alpha_{lo}}{\alpha_{lo} \alpha_{hi}})}
    \\ &= C_{\epsilon \le \eta} \epsilon^{-d'} %
\end{align}
\end{subequations}
where $C_{\epsilon \le \eta} = \vert \mathcal{X}^* \vert \Big( \frac{\sigma_{+}}{\psi \sigma_{-}^{\alpha_{lo} / \alpha_{hi}}} \Big)^{d / \alpha_{lo}}$, $d' = d (\frac{\alpha_{hi} - \alpha_{lo}}{\alpha_{lo} \alpha_{hi}})$, $\vert \mathcal{X}^* \vert$ is the number of distinct global optima, and $S_d$ is the volume constant for a $d$-sphere under the given norm $||\cdot||$.
\end{proof}

Recall, these results apply when $f(x) \ge f(x^*) - \eta$, i.e., when $\epsilon \le \eta$. Otherwise, we can upper bound the number of $\ell$-balls by considering the entire set $\mathcal{X}$ which has volume $1$. First, we will bound the constant associated with the volume of a $d$-sphere.

\begin{lemma}[$N_{\epsilon \ge \eta} \le C_{\epsilon \ge \eta}$]\label{lemma:n_eps_ge_eta}
The maximum number of disjoint $\ell$-balls with radius $\psi \epsilon$ and center in $\mathcal{X}_{\epsilon \ge \eta}$, $N_{\epsilon \ge \eta}$, is upper bounded by $C_{\epsilon \ge \eta}$ where $C_{\epsilon \ge \eta} = S_d^{-1} \Big( \frac{\sigma_{+}}{\psi \eta} \Big)^{d / \alpha_{lo}}$ and $S_d$ is the volume constant for a $d$-sphere under a given norm.
\end{lemma}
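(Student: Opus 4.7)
The plan is to mirror the volume-ratio argument used for Lemma~\ref{lemma:n_eps_le_eta}, but to replace the ``numerator'' set. When $\epsilon \le \eta$ the local polynomial lower bound from Assumption~\ref{assump:local_poly_bnd} forced every center to sit in the small neighborhood $\mathcal{X}_{\epsilon}^{lower}$ of an optimum, giving an $\epsilon$-dependent volume. Once $\epsilon > \eta$ that local bound is no longer guaranteed, so we simply fall back to the crudest containment: every center lies in $\mathcal{X} \subseteq [0,1]^d$, which has Lebesgue volume at most $1$.

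Concretely, I would first reuse the calculation in equation~\eqref{def:rl}: an $\ell$-ball of radius $\psi \epsilon$ under $\ell(x,y) = \sigma_{+}\|x-y\|^{\alpha_{lo}}$ corresponds to a metric ball of radius $r_{\ell} = (\psi \epsilon / \sigma_{+})^{1/\alpha_{lo}}$, whose (norm-dependent) $d$-dimensional volume is $S_d r_{\ell}^{d}$. Next I would exploit $\epsilon \ge \eta$ to get the uniform lower bound $r_{\ell} \ge (\psi \eta / \sigma_{+})^{1/\alpha_{lo}}$, hence $\mathrm{Vol}(\mathcal{B}_{\ell}) \ge S_d (\psi \eta / \sigma_{+})^{d/\alpha_{lo}}$. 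Finally, a standard packing argument (disjoint balls fit inside the enclosing set) gives
\[
N_{\epsilon \ge \eta} \;\le\; \frac{\mathrm{Vol}(\mathcal{X})}{\mathrm{Vol}(\mathcal{B}_{\ell})} \;\le\; \frac{1}{S_d (\psi \eta / \sigma_{+})^{d/\alpha_{lo}}} \;=\; S_d^{-1}\Bigl(\frac{\sigma_{+}}{\psi \eta}\Bigr)^{d/\alpha_{lo}} \;=\; C_{\epsilon \ge \eta},
\]
which is exactly the claimed bound.

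This is essentially a one-line volume computation, so there is no genuine obstacle; the work is almost entirely bookkeeping. The only minor subtlety worth a sentence in the write-up is the packing convention: I would assume, as in the proof of Lemma~\ref{lemma:n_eps_le_eta}, that the $\ell$-balls are contained in the set whose volume we divide by (standard for packing numbers). If one prefers the weaker ``centers only in $\mathcal{X}$'' convention, the bound still holds up to a dimension-dependent constant absorbed into $S_d$, so the statement is unchanged. Observe that the final constant is independent of $\epsilon$, as it should be: once we give up the local geometry near optima, the number of disjoint balls is controlled purely by the floor radius $\psi \eta$.
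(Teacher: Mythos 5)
Your proof matches the paper's argument exactly: both use the packing bound $N_{\epsilon \ge \eta} \le \mathrm{Vol}(\mathcal{X})/\mathrm{Vol}(\mathcal{B}_{\ell})$ with $\mathrm{Vol}(\mathcal{X}) \le 1$ and the floor radius $r_{\ell} \ge (\psi\eta/\sigma_{+})^{1/\alpha_{lo}}$ obtained from $\epsilon \ge \eta$ via the definition in~\eqref{def:rl}. Your remark on the packing convention is a fair observation but does not change the substance; the result stands as claimed.
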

\begin{proof}
We can upper bound the number of $\ell$-balls needed to pack the entire space by considering the smallest possible radius $\psi \eta$:
\begin{subequations}
\begin{align}
    N_{\epsilon \ge \eta} &\le \frac{Vol(\mathcal{X})}{Vol(\mathcal{B}_{\ell})}
    \\ &= \frac{1}{S_d r_{\ell}^d}
    \\ &\le \frac{1}{S_d \Big( \frac{\psi \eta}{\sigma_{+}} \Big)^{d / \alpha_{lo}}}
    \\ &= S_d^{-1} \Big( \frac{\sigma_{+}}{\psi \eta} \Big)^{d / \alpha_{lo}}
    \\ &= C_{\epsilon \ge \eta}
\end{align}
\end{subequations}
where $r_l$ was defined in~\eqref{def:rl}.
\end{proof}

\begin{theorem}\label{theorem:near_opt_dim}
The $\psi$-near optimality dimension of $f: x \in [0,1]^d \rightarrow [-1, 1]$ under $\ell$ is $d' = d (\frac{\alpha_{hi} - \alpha_{lo}}{\alpha_{lo} \alpha_{hi}})$ with constant
\begin{align}
    C &= \max \Big\{ 1, S_d^{-1} \Big( r_{\eta}^{\frac{\alpha_{hi}}{\alpha_{lo}}} \sigma_{-}^{\big( \frac{\alpha_{hi} - \alpha_{lo}}{\alpha_{lo} \alpha_{hi}} \big)}\Big)^{-d} \Big\} \Big( \frac{\sigma_{+}}{\psi \sigma_{-}^{\alpha_{lo} / \alpha_{hi}}} \Big)^{d / \alpha_{lo}}
\end{align}
where $S_d$ is the volume constant for a $d$-sphere under the same norm as $\ell$.
\end{theorem}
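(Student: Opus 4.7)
The plan is to stitch the two regime-specific bounds from Lemmas~\ref{lemma:n_eps_le_eta} and~\ref{lemma:n_eps_ge_eta} into a single statement $N_\epsilon \le C\,\epsilon^{-d'}$ valid for every $\epsilon>0$. All of the packing geometry and the invocation of Assumption~\ref{assump:local_poly_bnd} has already been absorbed into those two lemmas, so the exponent $d' = d(\alpha_{hi}-\alpha_{lo})/(\alpha_{lo}\alpha_{hi})$ is pinned down for free by Lemma~\ref{lemma:n_eps_le_eta}. The only real task is to choose a constant $C$ large enough to cover both regimes simultaneously; that is precisely what the outer $\max\{\cdot,\cdot\}$ in the statement encodes.

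For the local regime $\epsilon \le \eta$, I would quote Lemma~\ref{lemma:n_eps_le_eta} directly to get $N_\epsilon \le C_{\epsilon\le\eta}\,\epsilon^{-d'}$ with $C_{\epsilon\le\eta} = |\mathcal{X}^*|\bigl(\sigma_+/(\psi\sigma_-^{\alpha_{lo}/\alpha_{hi}})\bigr)^{d/\alpha_{lo}}$, and observe that $|\mathcal{X}^*|\ge 1$ supplies the ``$1$'' branch of the max together with the common outer factor $(\sigma_+/(\psi\sigma_-^{\alpha_{lo}/\alpha_{hi}}))^{d/\alpha_{lo}}$. For the far regime $\epsilon \ge \eta$, Lemma~\ref{lemma:n_eps_ge_eta} delivers a bound $N_\epsilon \le S_d^{-1}\bigl(\sigma_+/(\psi\eta)\bigr)^{d/\alpha_{lo}}$ that is constant in $\epsilon$. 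Because the range of $f$ is bounded and $\mathcal{X}_\epsilon$ stabilizes at $\mathcal{X}$ for $\epsilon$ beyond that range, $\epsilon^{-d'}$ is bounded below on the relevant interval, and one can absorb a bounded factor of $\epsilon^{d'}$ into the constant to rewrite this as $C'_{\epsilon\ge\eta}\,\epsilon^{-d'}$. Then the substitution $r_\eta = (\eta/\sigma_-)^{1/\alpha_{hi}}$ from~\eqref{def:r_eps} is used to eliminate $\eta$, converting $S_d^{-1}\eta^{-d/\alpha_{lo}}$ into exactly $S_d^{-1}\bigl(r_\eta^{\alpha_{hi}/\alpha_{lo}}\sigma_-^{(\alpha_{hi}-\alpha_{lo})/(\alpha_{lo}\alpha_{hi})}\bigr)^{-d}$, so that the far-regime bound shares the same outer factor as the local-regime bound.

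Taking $C$ to be the larger of the two branch constants then satisfies the defining inequality of the near-optimality dimension in both regimes, and yields the $\max\{\cdot,\cdot\}$ expression in the claim. I expect the main obstacle to be the bookkeeping in the algebraic step of the second paragraph: the exponents of $\sigma_-$, $\sigma_+$, $\psi$, and $\eta$ all have to collapse just so for the two branches to pull out the same outer factor, and it is easy to drift by a sign in one of the fractions $(\alpha_{hi}-\alpha_{lo})/(\alpha_{lo}\alpha_{hi})$ versus $\alpha_{lo}/\alpha_{hi}$. A secondary care-point is the transition at $\epsilon=\eta$: the two bounds must be compatible there, which tacitly uses boundedness of $\epsilon$ so that the ``absorb a power of $\epsilon$'' maneuver in the far regime does not smuggle in an unbounded factor.
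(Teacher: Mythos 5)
Your proposal follows the paper's proof essentially verbatim: it stitches together Lemma~\ref{lemma:n_eps_le_eta} and Lemma~\ref{lemma:n_eps_ge_eta}, uses the same substitution $\eta = \sigma_{-} r_{\eta}^{\alpha_{hi}}$ from~\eqref{def:r_eps} to rewrite $C_{\epsilon \ge \eta}$ so that it shares the common outer factor $\big( \sigma_{+} / (\psi \sigma_{-}^{\alpha_{lo}/\alpha_{hi}}) \big)^{d/\alpha_{lo}}$ with $C_{\epsilon \le \eta}$, and takes the maximum of the two branch constants, exactly as the paper does. Your additional care in the far regime (absorbing a bounded power of $\epsilon$, justified by $f$ having range $[-1,1]$) makes explicit a step the paper silently elides, and your slightly imprecise remark about $\vert \mathcal{X}^* \vert \ge 1$ supplying the ``$1$'' branch mirrors a looseness already present in the paper, whose own final line drops the $\vert \mathcal{X}^* \vert$ factor carried by $C_{\epsilon \le \eta}$.
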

\begin{proof}
First, let us define $r_{\eta} = \Big( \frac{\eta}{\sigma_{-}} \Big)^{1/\alpha_{hi}}$ as in~\eqref{def:r_eps} which implies $\eta = \sigma_{-} r_{\eta}^{\alpha_{hi}}$. Then apply Lemmas~\ref{lemma:n_eps_le_eta} ($N_{\epsilon \le \eta} \le C_{\epsilon \le \eta} \epsilon^{-d'}$) and ~\ref{lemma:n_eps_ge_eta} ($N_{\epsilon \ge \eta} \le C_{\epsilon \ge \eta}$) which bound the number of $\ell$-balls required to pack $\mathcal{X}_{\epsilon}$ when $\epsilon$ is less than and greater than $\eta$ respectively:

\begin{align}
    C_{\epsilon \le \eta} &= \vert \mathcal{X}^* \vert \Big( \frac{\sigma_{+}}{\psi \sigma_{-}^{\alpha_{lo} / \alpha_{hi}}} \Big)^{d / \alpha_{lo}}
    \\ d' &= d (\frac{\alpha_{hi} - \alpha_{lo}}{\alpha_{lo} \alpha_{hi}})
\end{align}
and
\begin{subequations}
\begin{align}
    C_{\epsilon \ge \eta} &= S_d^{-1} \Big( \frac{\sigma_{+}}{\psi \eta} \Big)^{d / \alpha_{lo}}
    \\ &= \vert \mathcal{X}^* \vert^{-1} S_d^{-1} \eta^{-d/\alpha_{lo}} \sigma_{-}^{d / \alpha_{hi}} \vert \mathcal{X}^* \vert \Big( \frac{\sigma_{+}}{\psi \sigma_{-}^{\alpha_{lo} / \alpha_{hi}}} \Big)^{d / \alpha_{lo}}
    \\ &= \vert \mathcal{X}^* \vert^{-1}S_d^{-1} \eta^{-d/\alpha_{lo}} \sigma_{-}^{d / \alpha_{hi}} C_{\epsilon \le \eta}
    \\ &= \vert \mathcal{X}^* \vert^{-1}S_d^{-1} r_{\eta}^{-d \alpha_{hi} / \alpha_{lo}} \sigma_{-}^{-d/\alpha_{lo}} \sigma_{-}^{d / \alpha_{hi}} C_{\epsilon \le \eta}
    \\ &= \vert \mathcal{X}^* \vert^{-1}S_d^{-1} r_{\eta}^{-d \frac{\alpha_{hi}}{\alpha_{lo}}} \sigma_{-}^{-d \big( \frac{\alpha_{hi} - \alpha_{lo}}{\alpha_{lo} \alpha_{hi}} \big)} C_{\epsilon \le \eta}
    \\ &= \vert \mathcal{X}^* \vert^{-1}S_d^{-1} \Big( r_{\eta}^{\frac{\alpha_{hi}}{\alpha_{lo}}} \sigma_{-}^{\big( \frac{\alpha_{hi} - \alpha_{lo}}{\alpha_{lo} \alpha_{hi}} \big)}\Big)^{-d} C_{\epsilon \le \eta}
\end{align}
\end{subequations}
where $S_d$ is the volume constant for a $d$-sphere under the given norm. $S_d^{-1}$ has been upper bounded for the $2$-norm in Lemma~\ref{lemma:sphere}. For the $\infty$-norm, $S_d^{-1} = 2^{-d}$. We have written $C_{\epsilon \ge \eta}$ in terms of $C_{\epsilon \le \eta}$ to clarify which is larger.

Therefore,
\begin{subequations}
\begin{align}
    C &= \max \Big\{ 1, \vert \mathcal{X}^* \vert^{-1}S_d^{-1} \Big( r_{\eta}^{\frac{\alpha_{hi}}{\alpha_{lo}}} \sigma_{-}^{\big( \frac{\alpha_{hi} - \alpha_{lo}}{\alpha_{lo} \alpha_{hi}} \big)}\Big)^{-d} \Big\} C_{\epsilon \le \eta}
    \\ &= \max \Big\{ 1, \vert \mathcal{X}^* \vert^{-1} S_d^{-1} \Big( r_{\eta}^{\frac{\alpha_{hi}}{\alpha_{lo}}} \sigma_{-}^{\big( \frac{\alpha_{hi} - \alpha_{lo}}{\alpha_{lo} \alpha_{hi}} \big)}\Big)^{-d} \Big\} \Big( \frac{\sigma_{+}}{\psi \sigma_{-}^{\alpha_{lo} / \alpha_{hi}}} \Big)^{d / \alpha_{lo}}.
\end{align}
\end{subequations}

Intuitively, if the radius for which the polynomial bounds hold ($r_{\eta}$) is large and the minimum curvature constant $\sigma_{-}$ is also large, then the bound $C_{\epsilon \le \eta}$ holds for large deviations from optimality $\eta$. The number of $\eta$-radius $\ell$-balls required to cover the remaining space, $C_{\epsilon \ge \eta}$, will be comparatively small.
\end{proof}

\subsection{D-BLiN}

The regret bound for Doubling BLiN~\citep{fenglipschitz} was originally proved assuming a standard normal distribution, however, the authors state their proof can be easily adapted to any sub-Gaussian distribution, which includes bounded random variables. This matches our setting with bounded payoffs, so we repeat their analysis here for that setting.

In what follows, in an effort to match notation in~\citep{fenglipschitz}, let $\mu(x) = -\mathcal{L}^{\tau}(s(x))$ denote the expected \emph{reward} for evaluating $x \in [0, 1]^d$; $\hat{\mu}(x)$ is an unbiased estimate of $\mu(x)$. Here, $s: [0,1]^{n(\bar{m}-1)} \rightarrow \prod_i \Delta^{m_i - 1}$ is any function that maps from the unit hypercube to a product of simplices; we analyze two such maps in Appendix~\ref{app:cube_to_simplex}. Also let $\mathcal{A}_m$ denote the collection of hypercubes to be investigated during the $m$th batch of arm pulls, $C \in \mathcal{A}_m$ denote a hypercube partition in $\mathcal{X}$, $n_m$ indicates the number of times each cube in $\mathcal{A}_m$ is played in batch $m$, and $B_{stop}$ denote the last batch.

\begin{definition}[Global Arm Accuracy]
$\mathcal{E} \myeq \Big\{ \vert \mu(x) - \hat{\mu}_m(C) \vert \le r_m + \sqrt{c_1 \frac{\ln T}{n_m} }, \,\, \forall \, 1 \le m \le B_{stop} - 1, \,\, \forall C \in \mathcal{A}_m, \,\, \forall x \in C \Big\}$.
\end{definition}

Define: $n_m = c_2 \frac{\ln T}{r_m^2} \implies r_m = \sqrt{c_2 \frac{\ln T}{n_m}}$.

\begin{definition}[Elimination Rule]\label{def:elim_rule}
Eliminate $C \in \mathcal{A}_m$ if $\hat{\mu}^{\max}_m - \hat{\mu}_m(C) \ge 2(1 + \sqrt{c_1/c_2}) r_m = 2(\sqrt{c_2} + \sqrt{c_1}) \sqrt{\frac{\ln T}{n_m}}$ where $\hat{\mu}^{\max}_m \myeq \max_{C \in \mathcal{A}_m} \hat{\mu}_m(C)$.
\end{definition}

\begin{lemma}\label{lemma:global_arm_acc_likely}
$Pr[\mathcal{E}] \ge 1 - 2 T^{-2(c_1/c^2-1)}$.
\end{lemma}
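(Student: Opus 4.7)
The plan is to reduce the event $\mathcal{E}$ to a Hoeffding-type concentration combined with a per-cell Lipschitz/diameter control, and then to close out with a union bound over all (batch, cube) pairs produced by D-BLiN. The forward-looking structure is: (i) decompose via triangle inequality into an approximation term and a sampling term; (ii) bound the approximation term by $r_m$ using the cell-diameter property that governs $r_m$ in BLiN's partitioning; (iii) apply Hoeffding to the sampling term using the boundedness $2|\mathcal{L}^\tau|\le c$ established in Corollary~\ref{corr:loss_bound_sto}; (iv) union bound.

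First, I would fix a batch index $m\le B_{stop}-1$, a cube $C\in\mathcal{A}_m$, and an arbitrary point $x\in C$. Letting $x_C$ denote the representative arm chosen for $C$ (so that $\mathbb{E}[\hat{\mu}_m(C)]=\mu(x_C)$), I would write
\begin{align*}
|\mu(x)-\hat{\mu}_m(C)| \;\le\; \underbrace{|\mu(x)-\mu(x_C)|}_{\text{bias from cube diameter}} \;+\; \underbrace{|\mu(x_C)-\hat{\mu}_m(C)|}_{\text{sampling noise}}.
\end{align*}
The bias term is controlled by the defining invariant of D-BLiN's partition: every surviving cube at batch $m$ has $\ell$-diameter at most $r_m$, and $\mu$ is Lipschitz with respect to $\ell$ (the compatible Lipschitz constant is exactly what $r_m$ is calibrated against in Definition~\ref{def:elim_rule}), so $|\mu(x)-\mu(x_C)|\le r_m$ for every $x\in C$.

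For the sampling term, each of the $n_m$ independent unbiased evaluations of $\mu(x_C)$ takes values in an interval of length at most $c$, by Corollary~\ref{corr:loss_bound_sto}. Hoeffding's inequality (the standard sub-Gaussian version that the authors of~\citep{fenglipschitz} invoke) then gives, for any $t>0$,
\begin{align*}
\Pr\!\Big[\,|\mu(x_C)-\hat{\mu}_m(C)|\ge t\,\Big] \;\le\; 2\exp\!\Big(-\tfrac{2 n_m t^2}{c^2}\Big).
\end{align*}
Choosing $t=\sqrt{c_1\ln T/n_m}$ yields a per-cube failure probability of at most $2T^{-2c_1/c^2}$, and combined with the previous step this establishes $|\mu(x)-\hat{\mu}_m(C)|\le r_m+\sqrt{c_1\ln T/n_m}$ on the corresponding good event.

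Finally, I would take a union bound over all pairs $(m,C)$. Since every cube in every batch receives at least one arm pull and the total budget is $T$, the number of such pairs across $m\le B_{stop}-1$ is at most $T^2$ (at most $T$ batches, each with at most $T$ cubes). Hence
\begin{align*}
\Pr[\mathcal{E}^c] \;\le\; T^2\cdot 2T^{-2c_1/c^2} \;=\; 2\,T^{-2(c_1/c^2-1)},
\end{align*}
which gives the claim. The main obstacle I foresee is making the first step rigorous: D-BLiN's partition scheme and the relationship between $r_m$, the cell diameter, and the Lipschitz constant of $\mu$ must line up so that $|\mu(x)-\mu(x_C)|\le r_m$ holds uniformly over $x\in C$ without an extra multiplicative constant; this is essentially an invariant of the algorithm that would need to be stated explicitly (or imported from~\citep{fenglipschitz}) before the Hoeffding step and union bound become routine.
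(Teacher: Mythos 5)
Your proposal is correct and follows essentially the same route as the paper's own proof: the same triangle-inequality decomposition into a Lipschitz/cell-diameter bias term bounded by $r_m$ and a sampling term controlled by Hoeffding with range $c$ at level $\sqrt{c_1 \ln T / n_m}$, followed by the same union bound over at most $T^2$ (batch, cube) pairs using $\vert \mathcal{A}_m \vert \le T$ and $B_{stop} \le T$. The obstacle you flag at the end is handled in the paper exactly as you anticipate, by calibrating $r_m$ to the cube edge length and the ($1$-)Lipschitzness of $\mu$ imported from the BLiN setup of~\citep{fenglipschitz}.
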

\begin{proof}
Let $\hat{\mu}_m(C) = \frac{1}{n_m} \sum_{i=1}^{n_m} y_{C,i}$ be the unbiased estimate of $\mu(x \in C)$ using $n_m$ samples. Assume each $y_{C,i} \in [a, b]$ with $c = b - a$ and $\hat{\mu}(C) = \frac{1}{n_m} \sum_{i=1}^{n_m} y_{C, i}$. Applying a Hoeffding inequality gives
\begin{subequations}
\begin{align}
    Pr\Big[ \vert \hat{\mu}(C) - \mathbb{E}[\hat{\mu}(C)] \vert \ge \sqrt{c_1 \frac{\ln T}{n_m} } \Big] &\le 2 e^{-2c_1 \ln T / c^2}
    \\ &= 2 (e^{\ln T})^{-2c_1 / c^2}
    \\ &= 2 T^{-2c_1 / c^2} \,\, \forall C. \label{eqn:hoeffding}
\end{align}
\end{subequations}
By Lipschitzness of $\mu$ we also have
\begin{align}
    \vert \mathbb{E}[\hat{\mu}(C)] - \mu(x) \vert \le r_m, \,\, \forall x \in C. \label{eqn:lipschitz}
\end{align}

Then consider
\begin{subequations}
\begin{align}
    \sup_{x \in C} \vert \mu(x) - \hat{\mu}(C) \vert &= \sup_{x \in C} \vert \mu(x) - \mathbb{E}[\hat{\mu}(C)] + \mathbb{E}[\hat{\mu}(C)] - \hat{\mu}(C) \vert
    \\ &\le \sup_{x \in C}\Big( \vert \mu(x) - \mathbb{E}[\hat{\mu}(C)] \vert + \vert \mathbb{E}[\hat{\mu}(C)] - \hat{\mu}(C) \vert \Big)
    \\ &= \sup_{x \in C} \vert \mu(x) - \mathbb{E}[\hat{\mu}(C)] \vert + \vert \mathbb{E}[\hat{\mu}(C)] - \hat{\mu}(C) \vert
    \\ &\le \sqrt{c_1 \frac{\ln T}{n_m}} + r_m
\end{align}
\end{subequations}
with probability $1 - 2 T^{-2c_1 / c^2}$. The first inequality follows by triangle inequality and the second follows from~\eqref{eqn:lipschitz} and considering the complement of~\eqref{eqn:hoeffding}.

The complement of this result occurs with probability
\begin{align}
    Pr\Big[ \sup_{x \in C} \vert \mu(x) - \hat{\mu}(C) \vert \ge r_m + \sqrt{c_1 \frac{\ln T}{n_m} } \Big] &\le 2 T^{-2c_1 / c^2}.
\end{align}

At least $1$ arm is played in each cube $C \in \mathcal{A}_m$ for $1 \le m \le B_{stop} - 1$, therefore, $\vert \mathcal{A}_m \vert \le T$ must be true given the exit condition of the algorithm. In addition, assume $B_{stop} \le T$ ($B_{stop}$ will be defined such that this is true). Then a union bound over all $T^2$ events gives
\begin{subequations}
\begin{align}
    Pr\Big[ &\exists m \in [1, B_{stop} - 1], C \in \mathcal{A}_m \,\, s.t. \,\, \sup_{x \in C} \vert \mu(x) - \hat{\mu}(C) \vert \ge r_m + \sqrt{c_1 \frac{\ln T}{n_m} } \Big]
    \\ &\le \sum_{m=1}^{B_{stop}-1} \sum_{C \in \mathcal{A}_m} Pr\Big[ \sup_{x \in C} \vert \mu(x) - \hat{\mu}(C) \vert \ge r_m + \sqrt{c_1 \frac{\ln T}{n_m} } \Big]
    \\ &\le \sum_{m=1}^{B_{stop}-1} \sum_{C \in \mathcal{A}_m} 2 T^{-2c_1 / c^2}
    \\ &\le 2 T^{-2c_1 / c^2} T^2.
\end{align}
\end{subequations}
Taking the complement of this event and noting that $\sup_{x \in C} \vert \mu(x) - \hat{\mu}(C) \vert \le r_m + \sqrt{c_1 \frac{\ln T}{n_m} } \implies \vert \mu(x) - \hat{\mu}(C) \vert \le r_m + \sqrt{c_1 \frac{\ln T}{n_m} } \,\, \forall x \in C$ gives the desired result.
\end{proof}

\begin{lemma}[Optimal Arm Survives]\label{lemma:opt_survives}
Under event $\mathcal{E}$, the optimal arm $x^* = \argmax \mu(x)$ is not eliminated after the first $B_{stop} - 1$ batches.
\end{lemma}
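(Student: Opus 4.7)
The plan is a standard inductive elimination argument: show that under the good event $\mathcal{E}$, the cube currently containing $x^*$ always satisfies the empirical gap to the winning cube strictly below the elimination threshold, so it survives to the next batch.

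First, I would set up notation. Let $C^*_m \in \mathcal{A}_m$ denote the surviving cube that contains $x^*$ (existence of such a cube at batch $m$ will be the inductive hypothesis; it holds trivially at $m=1$ since $\mathcal{A}_1$ partitions $\mathcal{X}$). Then the goal is to show $C^*_m$ is not eliminated at the end of batch $m$, for every $1 \le m \le B_{stop}-1$.

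Next, I would apply the definition of $\mathcal{E}$ twice. Taking $x = x^* \in C^*_m$ gives the lower bound
\[
\hat{\mu}_m(C^*_m) \;\ge\; \mu(x^*) - r_m - \sqrt{c_1 \tfrac{\ln T}{n_m}}.
\]
For any other cube $C \in \mathcal{A}_m$, picking any point $x_C \in C$ and using $\mu(x_C) \le \mu(x^*)$ by optimality of $x^*$, event $\mathcal{E}$ gives
\[
\hat{\mu}_m(C) \;\le\; \mu(x_C) + r_m + \sqrt{c_1 \tfrac{\ln T}{n_m}} \;\le\; \mu(x^*) + r_m + \sqrt{c_1 \tfrac{\ln T}{n_m}}.
\]
Taking the maximum over $C \in \mathcal{A}_m$ yields the same upper bound on $\hat{\mu}^{\max}_m$.

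Subtracting the two displays and substituting $r_m = \sqrt{c_2 \ln T / n_m}$ gives
\[
\hat{\mu}^{\max}_m - \hat{\mu}_m(C^*_m) \;\le\; 2 r_m + 2 \sqrt{c_1 \tfrac{\ln T}{n_m}} \;=\; 2\bigl(\sqrt{c_2} + \sqrt{c_1}\bigr)\sqrt{\tfrac{\ln T}{n_m}},
\]
which is exactly the elimination threshold of Definition~\ref{def:elim_rule}. Hence $C^*_m$ fails to meet the (strict-sense) elimination criterion and survives into $\mathcal{A}_{m+1}$, carrying $x^*$ with it. Iterating this inductive step through $m = 1, \ldots, B_{stop}-1$ completes the proof.

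There is no real technical obstacle here: the whole argument is a two-line sandwich of $\mu(x^*)$ between the lower bound on $\hat{\mu}_m(C^*_m)$ and the upper bound on $\hat{\mu}^{\max}_m$, both supplied directly by $\mathcal{E}$. The only minor subtlety is the strict-versus-weak inequality in the elimination rule; this is handled by observing that the threshold in Definition~\ref{def:elim_rule} is designed precisely to match $2 r_m + 2\sqrt{c_1 \ln T / n_m}$, so the inequality above rules out the elimination condition.
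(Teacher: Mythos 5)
Your proof is correct and takes essentially the same approach as the paper: both sandwich $\mu(x^*)$ between the $\mathcal{E}$-bounds on $\hat{\mu}_m(C)$ and $\hat{\mu}_m(C_m^*)$, invoke optimality of $x^*$, and identify the resulting gap bound $2(\sqrt{c_1}+\sqrt{c_2})\sqrt{\ln T / n_m}$ with the threshold in Definition~\ref{def:elim_rule} (your explicit induction on the surviving cube $C^*_m$ is a small tidiness the paper leaves implicit). The borderline-equality caveat you flag is present in the paper's proof as well, since its bound is likewise non-strict against the $\ge$ in the elimination rule, so your treatment is no weaker than the original.
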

\begin{proof}
Let $C_m^*$ denote the cube containing $x^*$ in $\mathcal{A}_m$. Under event $\mathcal{E}$, for any cube $C \in \mathcal{A}_m$ and $x \in C$, the following relation shows that $C_m^*$ avoids the elimination rule in round $m$:
\begin{subequations}
\begin{align}
    \hat{\mu}(C) - \hat{\mu}(C_m^*) &\le \Big( \mu(x) + r_m + \sqrt{c_1 \frac{\ln T}{n_m}} \Big) + \Big( -\mu(x^*) + r_m + \sqrt{c_1 \frac{\ln T}{n_m}} \Big)
    \\ &= \underbrace{(\mu(x) - \mu(x^*))}_{\le 0} + 2r_m + 2\sqrt{c_1 \frac{\ln T}{n_m}}
    \\ &\le 2\sqrt{c_2 \frac{\ln T}{n_m}} + 2\sqrt{c_1 \frac{\ln T}{n_m}}
    \\ &= 2(\sqrt{c_1} + \sqrt{c_2})\sqrt{\frac{\ln T}{n_m}}
\end{align}
\end{subequations}
where the first inequality follows from applying Lemma~\ref{lemma:global_arm_acc_likely} to upper bound $\hat{\mu}(C)$ and $\hat{\mu}(C_m^*)$ individually. The remaining steps use the optimality of $x^*$, the definition of $r_m$, and the elimination rule.
\end{proof}

\begin{lemma}\label{lemma:delta_bnd}
Under event $\mathcal{E}$, for any $1 \le m \le B_{stop}$, any $C \in A_m$ and any $x \in C$, $\Delta_x$ satisfies
\begin{align}
    \Delta_x \le 4(1 + \sqrt{c_1 / c_2}) r_{m-1}
\end{align}
\end{lemma}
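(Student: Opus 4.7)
The plan is to exploit two consequences of event $\mathcal{E}$: (i) the cube $C \in \mathcal{A}_m$ descends from a parent cube $C_{m-1} \in \mathcal{A}_{m-1}$ that survived round $m-1$'s elimination step, and (ii) by Lemma~\ref{lemma:opt_survives}, some cube $C^*_{m-1} \in \mathcal{A}_{m-1}$ containing the global maximizer $x^*$ also survived into round $m-1$. The bridge between empirical means and true means is the uniform concentration statement baked into $\mathcal{E}$.

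First I would introduce the shorthand $\rho_{m-1} = r_{m-1} + \sqrt{c_1 \ln T / n_{m-1}}$ and note that the scheduling choice $n_{m-1} = c_2 \ln T / r_{m-1}^2$ gives $\rho_{m-1} = (1 + \sqrt{c_1/c_2}) r_{m-1}$. Applying the definition of $\mathcal{E}$ to the parent cube $C_{m-1} \supseteq C \ni x$ and to $C^*_{m-1} \ni x^*$ yields
\begin{align*}
\mu(x^*) &\le \hat{\mu}_{m-1}(C^*_{m-1}) + \rho_{m-1}, \\
\mu(x) &\ge \hat{\mu}_{m-1}(C_{m-1}) - \rho_{m-1},
\end{align*}
so that $\Delta_x = \mu(x^*) - \mu(x) \le \hat{\mu}_{m-1}(C^*_{m-1}) - \hat{\mu}_{m-1}(C_{m-1}) + 2\rho_{m-1}$.

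Next I would invoke non-elimination of $C_{m-1}$: by Definition~\ref{def:elim_rule}, $C_{m-1}$ surviving into $\mathcal{A}_m$ forces $\hat{\mu}^{\max}_{m-1} - \hat{\mu}_{m-1}(C_{m-1}) \le 2(1+\sqrt{c_1/c_2}) r_{m-1} = 2\rho_{m-1}$. Since $\hat{\mu}^{\max}_{m-1} \ge \hat{\mu}_{m-1}(C^*_{m-1})$, chaining with the previous display gives $\Delta_x \le 2\rho_{m-1} + 2\rho_{m-1} = 4(1+\sqrt{c_1/c_2}) r_{m-1}$, which is exactly the claim.

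The only delicate point I anticipate is the boundary case $m=1$, where there is no previous round from which to draw a parent cube. This can be handled either by the convention that $r_0$ exceeds the diameter of $[0,1]^d$, making the bound trivial since $\mu \in [-1,1]$, or by stating the core argument for $2 \le m \le B_{stop}$ and treating $m=1$ as a routine edge case relying on Lipschitz continuity within the initial cells. Aside from this bookkeeping, the proof amounts to a two-step telescoping of ``$\hat{\mu}$-to-$\mu$'' concentration with the elimination-rule inequality, and I expect no substantive obstacle.
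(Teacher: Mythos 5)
Your proof is correct and follows essentially the same route as the paper's: you bound $\mu(x^*)$ and $\mu(x)$ via the concentration in $\mathcal{E}$ applied to the surviving optimal cube $C^*_{m-1}$ (Lemma~\ref{lemma:opt_survives}) and to the parent of $C$, then chain through $\hat{\mu}^{\max}_{m-1}$ and the non-elimination inequality of Definition~\ref{def:elim_rule} to get $4(1+\sqrt{c_1/c_2})r_{m-1}$, exactly as in the paper. Your treatment of the $m=1$ edge case via Lipschitzness within the initial cell also matches the paper's handling.
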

\begin{proof}
For $m = 1$, recall that $r_m$ is the side length of a cube $C \in \mathcal{A}_m$, therefore, $\Delta_x \le r_{m-1} \le 4(1 + \sqrt{c_1 / c_2}) r_{m-1}$ holds directly from the Lipschitzness of $\mu$.

For $m > 1$, let $C_{m-1}^* \in \mathcal{A}_{m-1}$ be the cube containing $x^*$. From Lemma~\ref{lemma:opt_survives}, this cube has not been eliminated under event $\mathcal{E}$. For any cube $C \in \mathcal{A}_m$ and $x \in C$, it is clear that $x$ is also in the parent of $C$, denoted $C_{par}$ ($x \in C \subset C_{par}$). Then for any $x \in C$, it holds that
\begin{subequations}
\begin{align}
    \Delta_x = \mu(x^*) - \mu(x) &\le \Big( \hat{\mu}_{m-1}(C_{m-1}^*) + r_{m-1} + \sqrt{c_1 \frac{\ln T}{n_{m-1}}} \Big)
    \\ &\qquad+ \Big( -\hat{\mu}_{m-1}(C_{par}) + r_{m-1} + \sqrt{c_1 \frac{\ln T}{n_{m-1}}} \Big) \nonumber
    \\ &= (\hat{\mu}_{m-1}(C_{m-1}^*) - \hat{\mu}_{m-1}(C_{par})) + 2(\sqrt{c_1} + \sqrt{c_2})\sqrt{\frac{\ln T}{n_{m-1}}}
    \\ &\le (\hat{\mu}_{m-1}^{\max} - \hat{\mu}_{m-1}(C_{par})) + 2(\sqrt{c_1} + \sqrt{c_2})\sqrt{\frac{\ln T}{n_{m-1}}}
    \\ &\le 2(\sqrt{c_1} + \sqrt{c_2})\sqrt{\frac{\ln T}{n_{m-1}}} + 2(\sqrt{c_1} + \sqrt{c_2})\sqrt{\frac{\ln T}{n_{m-1}}}
    \\ &= 4(\sqrt{c_1} + \sqrt{c_2})\sqrt{\frac{\ln T}{n_{m-1}}}
    \\ &= 4(1 + \sqrt{c_1 / c_2}) r_{m-1}
\end{align}
\end{subequations}
where we have applied Lemma~\ref{lemma:global_arm_acc_likely} similarly as in Lemma~\ref{lemma:opt_survives} and also used the definition of $r_{m-1}$. The last two inequalities use the fact that $\hat{\mu}_{m-1}(C_{m-1}^*) \le \hat{\mu}_{m-1}^{\max}$ and $C_{par}$ was not eliminated.
\end{proof}

\begin{theorem}[BLiN Regret Rate]\label{theorem:batched_bandit}
With probability exceeding $1 - 2 T^{-2(c_1/c^2-1)}$ , the $T$-step total regret $R(T)$ of BLiN with Doubling Edge-length Sequence (D-BLiN)~\cite{fenglipschitz} satisfies
\begin{align}
    R(T) \le 8 (1 + \sqrt{c_1 / c_2}) (2c_2 + 1) \ln(T)^{\frac{1}{d_z + 2}} T^{\frac{d_z + 1}{d_z + 2}}
\end{align}
where $d_z$ is the zooming dimension of the problem instance. In addition, D-BLiN only needs no more than $B^*= \frac{\log2(T) - \log2(\ln(T))}{d_z + 2} + 2$ rounds of communications to achieve this regret rate.
\end{theorem}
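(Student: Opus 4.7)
The plan is to follow the standard batched-bandit analysis used in~\cite{fenglipschitz}, adapted to the bounded-range reward setting established by our Hoeffding step. Throughout I condition on the high-probability event $\mathcal{E}$ defined prior to Lemma~\ref{lemma:global_arm_acc_likely}; that lemma immediately supplies the failure probability $2T^{-2(c_1/c^2-1)}$ in the final statement, so the remainder of the argument is deterministic under $\mathcal{E}$.

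The structural decomposition is to write $R(T) = \sum_{m=1}^{B_{stop}} R_m$ and bound each batch-regret $R_m$ as the product of three quantities: the number of pulls per surviving cube $n_m = \lceil c_2 \ln T / r_m^2 \rceil$, the number of surviving cubes $\lvert \mathcal{A}_m \rvert$, and the maximum per-pull regret. Lemma~\ref{lemma:delta_bnd} gives the per-pull regret bound $\Delta_x \le 4(1+\sqrt{c_1/c_2})\, r_{m-1}$ for every $x$ in every $C\in\mathcal{A}_m$, so
\[
R_m \;\le\; 4(1+\sqrt{c_1/c_2})\, r_{m-1} \cdot n_m \cdot \lvert \mathcal{A}_m\rvert.
\]

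To bound $\lvert \mathcal{A}_m\rvert$ I would invoke the zooming-dimension machinery from Corollary~\ref{corr:zoom_dim}. Under $\mathcal{E}$ every surviving cube $C\in\mathcal{A}_m$ has all of its points at suboptimality at most $16 r_{m-1}$ (up to the constants baked into Definition~\ref{def:elim_rule}, which is exactly why the zooming dimension is defined using the constant $16$). Since the cubes in $\mathcal{A}_m$ are disjoint with edge-length $r_m = r_{m-1}/2$, they host disjoint $\ell$-balls of radius $r_{m-1}/2$ whose centers lie in $\mathcal{X}_{16 r_{m-1}}$, so the definition of zooming dimension yields $\lvert \mathcal{A}_m\rvert \le C_z r_{m-1}^{-d_z}$. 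Plugging this, $n_m \le 2c_2 \ln T / r_m^2 + 1 \le (2c_2+1)\,4\ln T / r_{m-1}^2$ (which is where the constant $(2c_2+1)$ arises) gives
\[
R_m \;\le\; 16 (1+\sqrt{c_1/c_2})(2c_2+1)\, C_z \ln T \cdot r_{m-1}^{-(d_z+1)}.
\]

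The final step is to identify $B_{stop}$ from the pull budget $\sum_{m} n_m \lvert \mathcal{A}_m\rvert \le T$, which forces $r_{B_{stop}-1} \gtrsim (\ln T / T)^{1/(d_z+2)}$ and $B_{stop}\le \frac{\log_2(T/\ln T)}{d_z+2}+2$; summing the geometric series $\sum_m r_{m-1}^{-(d_z+1)}$ leaves only the dominant last term (a factor-of-$2$ geometric series in $r_{m-1}^{-(d_z+1)}$ sums to at most twice its largest summand), and substituting the value of $r_{B_{stop}-1}$ yields the claimed rate $8(1+\sqrt{c_1/c_2})(2c_2+1)\ln(T)^{1/(d_z+2)} T^{(d_z+1)/(d_z+2)}$. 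The main obstacle is not the asymptotic order, which falls out immediately from the three ingredients above, but the bookkeeping of the multiplicative constants: one must align the $16$ in the elimination rule with the zooming-dimension definition, propagate the $\lceil \cdot \rceil$ in $n_m$ through to the $(2c_2+1)$ factor, and carefully collapse the geometric series so that the final constant matches the stated $8(1+\sqrt{c_1/c_2})(2c_2+1)$ rather than a slightly larger expression. The batch-count bound $B^*$ is then read off directly from the relationship $2^{-B_{stop}} = r_{B_{stop}} \asymp (\ln T/T)^{1/(d_z+2)}$.
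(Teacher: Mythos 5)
There is a genuine gap, and it sits exactly where your proof deviates from the paper's. Your first two ingredients match the paper: conditioning on $\mathcal{E}$ via Lemma~\ref{lemma:global_arm_acc_likely}, the per-pull bound $\Delta_x \le 4(1+\sqrt{c_1/c_2})\,r_{m-1}$ from Lemma~\ref{lemma:delta_bnd}, and the cube count $\vert\mathcal{A}_m\vert \le C_z r_m^{-d_z}$ from the zooming dimension (Corollary~\ref{corr:zoom_dim}). The failure is in the final step, where you claim the pull budget $\sum_m n_m \vert\mathcal{A}_m\vert \le T$ ``forces'' $r_{B_{stop}-1} \gtrsim (\ln T/T)^{1/(d_z+2)}$. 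It does not: the zooming dimension gives only an \emph{upper} bound on $\vert\mathcal{A}_m\vert$, and to extract a lower bound on $r_{B_{stop}-1}$ from the budget you would need a matching \emph{lower} bound $\vert\mathcal{A}_m\vert = \Omega(r_m^{-d_z})$, which is false in general. All the budget guarantees is $\vert\mathcal{A}_m\vert \ge 1$ (the optimal cube survives, Lemma~\ref{lemma:opt_survives}), yielding only $r_{B_{stop}-1} = \Omega(\sqrt{c_2 \ln T/T})$. On an instance with very few near-optimal cubes the algorithm zooms far deeper than your target depth, $r_{B_{stop}-1}$ is much smaller than $(\ln T/T)^{1/(d_z+2)}$, and your geometric series, dominated by $r_{B_{stop}-1}^{-(d_z+1)}$, can reach order $T^{(d_z+1)/2}\ln T$ --- vacuous for $d_z \ge 1$. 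The paper avoids this by never touching $B_{stop}$: it fixes an \emph{arbitrary} $B$, bounds all regret incurred after batch $B$ by the crude tail term $8(1+\sqrt{c_1/c_2})\,r_B\,T$ (every cube played after batch $B$ has edge length at most $r_B$, so Lemma~\ref{lemma:delta_bnd} bounds each of at most $T$ pulls uniformly), and then optimizes the resulting two-term bound at $B^* = 1 + \log_2(T/\ln T)/(d_z+2)$. That truncation-plus-tail device is the missing idea; your plan has no analogue of the tail term, and without it the data-dependent $B_{stop}$ cannot be controlled.

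A symptom of the same omission is your bookkeeping of the constant: you attribute the $+1$ in $(2c_2+1)$ to the ceiling in $n_m = \lceil c_2 \ln T/r_m^2\rceil$, whereas in the paper's derivation $n_m = c_2 \ln T/r_m^2$ is taken as an equality and the $+1$ is precisely the tail term $2^{-(B-1)}T$, which at $B = B^*$ equals $T^{(d_z+1)/(d_z+2)}(\ln T)^{1/(d_z+2)}$ and merges with the batch sum. Note also that what the paper's proof actually establishes is $R(T) \le 8(1+\sqrt{c_1/c_2})(2c_2 \textcolor{highlight}{C_z}+1)\,T^{(d_z+1)/(d_z+2)}(\ln T)^{1/(d_z+2)}$ --- the theorem statement's $(2c_2+1)$ drops the $C_z$ (apparently a typo, since Corollary~\ref{corr:regret_set_c} then chooses $c_2$ specifically so that $2c_2 C_z = \sqrt{c_1/c_2}$). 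Your intermediate per-batch bound correctly carries $C_z$, but it silently disappears from your final expression; in a repaired proof you should keep it.
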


\begin{proof}
Since $r_m = \frac{r_{m-1}}{2} \implies r_{m-1} = 2 r_m$ for the Doubling Edge-length Sequence, Lemma~\ref{lemma:delta_bnd} implies that every cube $C \in A_m$ is a subset of $S(8(1 + \sqrt{c_1 / c_2}) r_m)$. Thus from the definition of zooming dimension (Corollary~\ref{corr:zoom_dim} with appropriate condition), we have
\begin{align}
    \vert \mathcal{A}_m \vert \le N_{r_m} \le C_z r_m^{-d_z}. \label{eqn:A_bnd}
\end{align}

Fix any positive number $B$. Also by Lemma~\ref{lemma:delta_bnd}, we know that any arm played after batch $B$ incurs a regret bounded by $8(1 + \sqrt{c_1 / c_2}) r_B$, since the cubes played after batch $B$ have edge length no larger than $r_B$. Then the total regret that occurs after batch $B$ is bounded by $8(1 + \sqrt{c_1 / c_2}) r_B T$ (where $T$ is an upper bound on the number of arms).

Thus the regret can be bounded as
\begin{align}
    R(T) &\le \sum_{m=1}^B \sum_{C \in \mathcal{A}_m} \sum_{i=1}^{n_m} \Delta_{x_{C, i}} + 8(1 + \sqrt{c_1 / c_2}) r_B T \label{eqn:regret}
\end{align}
where the first term bounds the regret in the first $B$ batches of D-BLiN, and the second term bounds the regret after the first $B$ batches.  If the algorithm stops at batch $\tilde{B} < B$ , we define $\mathcal{A}_m = \empty$ for any $\tilde{B} < m \le B$ and inequality~\eqref{eqn:regret} still holds.

By Lemma~\ref{lemma:delta_bnd}, we have $\Delta_{x_{C, i}} \le 8(1 + \sqrt{c_1 / c_2}) r_m$ for all $C \in \mathcal{A}_m$. We can thus bound~\eqref{eqn:regret} by
\begin{subequations}
\begin{align}
    R(T) &\le \sum_{m=1}^B \vert \mathcal{A}_m \vert \cdot n_m \cdot 8(1 + \sqrt{c_1 / c_2}) r_m + 8(1 + \sqrt{c_1 / c_2}) r_B T
    \\ &\le \sum_{m=1}^B N_{r_m} \cdot n_m \cdot 8(1 + \sqrt{c_1 / c_2}) r_m + 8(1 + \sqrt{c_1 / c_2}) r_B T \label{eqn:use_A_bnd}
    \\ &= \sum_{m=1}^B N_{r_m} \cdot c_2 \frac{\ln T}{r_m^2} \cdot 8(1 + \sqrt{c_1 / c_2}) r_m + 8(1 + \sqrt{c_1 / c_2}) r_B T \label{eqn:use_nm_def}
    \\ &= \sum_{m=1}^B N_{r_m} \cdot \frac{\ln T}{r_m} \cdot 8c_2(1 + \sqrt{c_1 / c_2}) + 8(1 + \sqrt{c_1 / c_2}) r_B T
\end{align}
\end{subequations}
where~\eqref{eqn:use_A_bnd} uses~\eqref{eqn:A_bnd}, and~\eqref{eqn:use_nm_def} uses equality $n_m = c_2 \frac{\ln T}{r_m^2}$. Since $r_m = 2^{-m+1}$ and $N_{r_m} \le C_z r_m^{-d_z} \le C_z 2^{(m-1)d_z}$, we have
\begin{subequations}
\begin{align}
    R(T) &\le \sum_{m=1}^B C_z 2^{(m-1)d_z} \cdot \frac{\ln T}{2^{-m+1}} \cdot 8c_2(1 + \sqrt{c_1 / c_2}) + 8(1 + \sqrt{c_1 / c_2}) 2^{-B+1} T
    \\ &= 8 (1 + \sqrt{c_1 / c_2}) \Big[ c_2 C_z \ln T \sum_{m=1}^B 2^{(m-1)(d_z+1)} + 2^{-B+1} T \Big].
\end{align}
\end{subequations}
Continuing we find
\begin{subequations}
\begin{align}
    R(T) &\le 8 (1 + \sqrt{c_1 / c_2}) \Big[ c_2 C_z \ln T \sum_{m=1}^B 2^{(m-1)(d_z+1)} + 2^{-B+1} T \Big]
    \\ &= 8 (1 + \sqrt{c_1 / c_2}) \Big[ c_2 C_z \ln T \sum_{m=1}^B \big(2^{d_z+1}\big)^{m-1} + 2^{-B+1} T \Big]
    \\ &= 8 (1 + \sqrt{c_1 / c_2}) \Big[ c_2 C_z \ln T \sum_{m=0}^{B - 1} \big(2^{d_z+1}\big)^{m} + 2^{-B+1} T \Big]
    \\ &= 8 (1 + \sqrt{c_1 / c_2}) \Big[ c_2 C_z \ln T \Big( \frac{2^{B(d_z+1)} - 1}{2^{d_z+1} - 1} \Big) + 2^{-B+1} T \Big] \,\, \text{ via geometric series}
    \\ &\le 8 (1 + \sqrt{c_1 / c_2}) \Big[ c_2 C_z \ln T \Big( \frac{2^{B(d_z+1)}}{2^{d_z+1} - 1} \Big) + 2^{-B+1} T \Big]
    \\ &\le 8 (1 + \sqrt{c_1 / c_2}) \Big[ c_2 C_z \ln T \Big( 2 \cdot \frac{2^{B(d_z+1)}}{2^{d_z+1}} \Big) + 2^{-B+1} T \Big]
    \\ &= 8 (1 + \sqrt{c_1 / c_2}) \Big[ 2c_2 C_z 2^{(B-1)(d_z+1)} \ln T + 2^{-(B-1)} T \Big].
\end{align}
\end{subequations}
This inequality holds for any positive $B$. By choosing $B^* = 1 + \frac{\log_2(\frac{T}{\ln T})}{d_z+2}$, we have
\begin{subequations}
\begin{align}
    R(T) &\le 8 (1 + \sqrt{c_1 / c_2}) \Big[ 2c_2 C_z \Big(\frac{T}{\ln T}\Big)^{\frac{(d_z+1)}{(d_z + 2)}} \ln T + \Big(\frac{\ln T}{T}\Big)^{\frac{1}{(d_z+2)}} T \Big]
    \\ &= 8 (1 + \sqrt{c_1 / c_2}) \Big[ 2c_2 C_z T^{\frac{(d_z+1)}{(d_z + 2)}} \ln T^{1-\frac{(d_z+1)}{(d_z + 2)}} + T^{1 - \frac{1}{(d_z+2)}} \ln T^{\frac{1}{(d_z+2)}} \Big]
    \\ &= 8 (1 + \sqrt{c_1 / c_2}) \Big[ 2c_2 C_z T^{\frac{(d_z+1)}{(d_z + 2)}} \ln T^{\frac{1}{(d_z + 2)}} + T^{\frac{(d_z+1)}{(d_z+2)}} \ln T^{\frac{1}{(d_z+2)}} \Big]
    \\ &= 8 (1 + \sqrt{c_1 / c_2}) (2c_2 C_z + 1) T^{\frac{(d_z+1)}{(d_z + 2)}} \ln T^{\frac{1}{(d_z + 2)}}.
\end{align}
\end{subequations}
\end{proof}

\begin{corollary}[BLiN Regret Rate Refined]\label{corr:regret_set_c}
Setting $c_1 = 2 c^2$ and $c_2 = 2 \Big(\frac{c}{4C_z}\Big)^{2/3}$ simplifies Theorem~\ref{theorem:batched_bandit} such that
\begin{align}
    R(T) &\le 8 (1 + (4c^2 C_z)^{1/3})^2 T^{\frac{(d_z+1)}{(d_z + 2)}} \ln T^{\frac{1}{(d_z + 2)}}.
\end{align}
with probability $1 - 2 T^{-2}$.
\end{corollary}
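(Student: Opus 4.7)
The plan is a direct substitution into Theorem~\ref{theorem:batched_bandit}, specifically into the final form $8(1+\sqrt{c_1/c_2})(2c_2 C_z + 1)\, T^{(d_z+1)/(d_z+2)} \ln T^{1/(d_z+2)}$ that appears at the very end of that theorem's proof (before $C_z$ is absorbed into the constant). The entire argument is algebraic bookkeeping, with no new probabilistic content. The key design observation is that the fractional exponent $2/3$ on $c_2$ is chosen precisely so that the two distinct factors $(1+\sqrt{c_1/c_2})$ and $(2c_2 C_z + 1)$ collapse to the \emph{same} expression $1 + (4c^2 C_z)^{1/3}$, which is why the simplified bound features the square $(1+(4c^2 C_z)^{1/3})^2$.

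First I would handle the probability. Theorem~\ref{theorem:batched_bandit} gives success probability $1 - 2T^{-2(c_1/c^2 - 1)}$. Plugging in $c_1 = 2c^2$ makes $c_1/c^2 - 1 = 1$, so the exponent becomes $-2$ and the success probability is $1 - 2T^{-2}$, matching the claim.

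Next I would simplify each of the two leading factors separately. For the first, $\sqrt{c_1/c_2} = \sqrt{c^2 / (c/(4C_z))^{2/3}} = c \cdot (4C_z/c)^{1/3} = (4c^2 C_z)^{1/3}$, so $1 + \sqrt{c_1/c_2} = 1 + (4c^2 C_z)^{1/3}$. For the second, $2c_2 C_z = 4C_z \cdot (c/(4C_z))^{2/3} = (4C_z)^{1 - 2/3}\, c^{2/3} = (4C_z)^{1/3} c^{2/3} = (4c^2 C_z)^{1/3}$, so $2c_2 C_z + 1$ equals the same expression. Multiplying the two identical factors yields $(1 + (4c^2 C_z)^{1/3})^2$, and the remaining $8$, $T^{(d_z+1)/(d_z+2)}$, and $\ln T^{1/(d_z+2)}$ pieces pass through unchanged.

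There is no real obstacle beyond keeping the fractional-exponent arithmetic straight. The only thing to watch is that $c$ (the payoff range bound inherited from the Hoeffding step in Lemma~\ref{lemma:global_arm_acc_likely}) and $C_z$ (the zooming constant) are treated as fixed problem quantities; the whole point of the corollary is to expose the optimal trade-off between the two constants $c_1$ and $c_2$, and the stated choice is the minimizer of the product $(1+\sqrt{c_1/c_2})(2 c_2 C_z + 1)$ subject to the probability constraint $c_1 \ge 2c^2$.
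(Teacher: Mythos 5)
Your proof is correct and is essentially the paper's own: the paper's entire argument is the single computation $\sqrt{c_1/c_2} = c\,(4C_z/c)^{1/3} = (4c^2C_z)^{1/3} = 2c_2C_z$ together with reading off the probability exponent $-2(c_1/c^2-1)=-2$, and you rightly substitute into the final displayed line of the proof of Theorem~\ref{theorem:batched_bandit} (where the factor is $2c_2C_z+1$; the theorem statement's $2c_2+1$ drops the $C_z$). One small caveat: your closing remark that this choice \emph{minimizes} $(1+\sqrt{c_1/c_2})(2c_2C_z+1)$ is not right---writing $x=\sqrt{c_1/c_2}$ and $y=2c_2C_z$, the constraint fixes $x^2y=2c_1C_z$, and the Lagrange conditions give the minimizer at $y=x/(x+2)$ rather than $y=x$ (e.g.\ for $c=C_z=1$, taking $c_2=0.6$ gives a smaller product than the stated $c_2=2^{-1/3}$)---but this aside is nowhere used in the argument, which only needs the stated choice to collapse the two factors into $(1+(4c^2C_z)^{1/3})^2$.
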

\begin{proof}
If we set $c_1 = 2 c^2$ and $c_2 = 2 \Big(\frac{c}{4C_z}\Big)^{2/3}$, then $\sqrt{c_1 / c_2} = c \Big(\frac{4C_z}{c}\Big)^{1/3} = \Big( 4c^2 C_z \Big)^{1/3} = 2c_2 C_z$.
\end{proof}

\begin{lemma}\label{lemma:zooming_constant}
The zooming dimension and zooming constant under the $\ell(x,y) = ||x - y||_{\infty}$ norm are
\begin{align}
    d_z &= \frac{1}{2}n\bar{m}
    \\ C_z &= \vert \mathcal{X}^* \vert^{-1} \Big( \frac{4}{r_{\eta}^{2} \sigma_{-\infty}} \Big)^{n\bar{m}}
\end{align}
where $\sigma_{-\infty} = ||\texttt{Hess}(-f(x))^{-1}||_{\infty}$ is an upper bound on the infinity norm of the inverse Hessian matrix of the function at every equilibrium.
\end{lemma}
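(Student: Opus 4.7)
The plan is to derive both $d_z$ and $C_z$ by matching the present setup with Assumption~\ref{assump:local_poly_bnd} and the metric $\ell$, and then plugging the resulting parameters into Theorem~\ref{theorem:near_opt_dim} (the $\psi$-near optimality dimension bound) and Corollary~\ref{corr:zoom_dim} (the conversion from $\psi$-near optimality to zooming dimension).

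First, I would translate the local strong-convexity hypothesis into the form required by Assumption~\ref{assump:local_poly_bnd}. Since $\mathcal{L}(s(x))$ is locally $\sigma_{-\infty}$-strongly convex with respect to $\|\cdot\|_\infty$ within a ball of radius $r_\eta$ about each global optimum $x^*$, and since $\nabla \mathcal{L}(s(x^*)) = \mathbf{0}$ at any interior optimum, the definition of strong convexity gives
\begin{align*}
-f(x) - (-f(x^*)) \;=\; \mathcal{L}(s(x)) - \mathcal{L}(s(x^*)) \;\ge\; \tfrac{\sigma_{-\infty}}{2}\, \|x-x^*\|_\infty^{2}
\end{align*}
for every $x$ within $r_\eta$ of $x^*$. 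Thus Assumption~\ref{assump:local_poly_bnd} holds with $\alpha_{hi}=2$ and $\sigma_{-} = \sigma_{-\infty}/2$. The Lipschitz metric is $\ell(x,y) = \|x-y\|_\infty$, giving $\alpha_{lo}=1$ and $\sigma_{+}=1$.

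Next, applying Corollary~\ref{corr:zoom_dim} reads off the zooming dimension immediately as
\begin{align*}
d_z \;=\; d' \;=\; d\Bigl(\tfrac{\alpha_{hi}-\alpha_{lo}}{\alpha_{lo}\alpha_{hi}}\Bigr) \;=\; \tfrac{d}{2},
\end{align*}
and with the ambient domain dimension $d = n\bar{m}$ this yields $d_z = \tfrac{1}{2} n\bar{m}$ as claimed. For the zooming constant, I would substitute the same parameters into the expression for $C$ in Theorem~\ref{theorem:near_opt_dim}, using $\psi = 1/32$ per Corollary~\ref{corr:zoom_dim}, the $\infty$-norm unit-ball volume constant $S_d = 2^d$ (hypercubes of side $2r$ have volume $(2r)^d$), and the substitution $\sigma_{-} = \sigma_{-\infty}/2$. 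The factor $|\mathcal{X}^*|$ in $C_{\epsilon \le \eta}$ becomes $|\mathcal{X}^*|^{-1}$ in the final expression after we include the $16^{-d_z}$ rescaling and verify which branch of the $\max$ dominates in the regime of interest.

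The main obstacle will be the exponent bookkeeping: tracking how the $(r_\eta^{\alpha_{hi}/\alpha_{lo}} \sigma_{-}^{(\alpha_{hi}-\alpha_{lo})/(\alpha_{lo}\alpha_{hi})})^{-d}$ factor and the $(32 \sigma_{-}^{-1/2})^{d}$ factor combine, together with the $16^{-d/2} = 4^{-d}$ from Corollary~\ref{corr:zoom_dim} and the $2^{-d}$ from $S_d^{-1}$, to collapse to the claimed product $(4/(r_\eta^{2}\sigma_{-\infty}))^{n\bar{m}}$. Once the exponent arithmetic is laid out carefully and $\sigma_{-\infty}/2$ is plugged in, the final identity should follow from elementary algebra, completing the derivation.
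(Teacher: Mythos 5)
Your route is the same as the paper's: match the local quadratic lower bound to Assumption~\ref{assump:local_poly_bnd} with $\alpha_{lo}=\sigma_{+}=1$ (since $\ell$ is the plain $\infty$-norm, not a power of it), $\alpha_{hi}=2$, volume constant $S_d=2^d$, $\psi=\tfrac{1}{32}$, plug into Theorem~\ref{theorem:near_opt_dim}, and rescale by $16^{-d_z}$ via Corollary~\ref{corr:zoom_dim}. The paper also resolves your ``which branch of the $\max$ dominates'' question exactly as you anticipate, but by fiat rather than verification: its proof marks the relevant equalities ``hard,'' meaning it \emph{assumes} $r_{\eta}$ and $\sigma_{-\infty}$ are small enough that the non-trivial operand dominates. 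Your reading of $d_z = d/2$ with $d$ replaced by $n\bar{m}$ also matches (the paper notes $d = n(\bar{m}-1) \le n\bar{m}$ ``for simplicity''; your ``ambient dimension $d = n\bar{m}$'' silently makes the same substitution).

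The one concrete discrepancy is your factor of $\tfrac12$. You translate $\sigma_{-\infty}$-strong convexity as $\mathcal{L}(s(x)) - \mathcal{L}(s(x^*)) \ge \tfrac{\sigma_{-\infty}}{2}\|x-x^*\|_{\infty}^2$ and set $\sigma_{-} = \sigma_{-\infty}/2$; the paper instead plugs $\sigma_{-} = \sigma_{-\infty}$ directly into Assumption~\ref{assump:local_poly_bnd} (whose lower bound $\sigma_{-}\|x-x^*\|^{\alpha_{hi}}$ carries no $\tfrac12$), i.e., it treats $\sigma_{-\infty}$ as the coefficient of the quadratic lower bound itself, consistent with its loose definition via $\|\texttt{Hess}(-f)^{-1}\|_{\infty}$. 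This matters for the constant you are asked to prove: carried through with your normalization, the hard-branch arithmetic gives $C = \vert \mathcal{X}^* \vert^{-1}\bigl(2 r_{\eta}^{2}(\sigma_{-\infty}/2)^{1/2}\bigr)^{-d}\bigl(32(\sigma_{-\infty}/2)^{-1/2}\bigr)^{d} = \vert \mathcal{X}^* \vert^{-1}\bigl(32/(r_{\eta}^{2}\sigma_{-\infty})\bigr)^{d}$, hence $C_z = 16^{-d/2}C = \vert \mathcal{X}^* \vert^{-1}\bigl(8/(r_{\eta}^{2}\sigma_{-\infty})\bigr)^{n\bar{m}}$, overshooting the lemma's constant by $2^{n\bar{m}}$. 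With the paper's convention one gets $C = \vert \mathcal{X}^* \vert^{-1}\bigl(16/(r_{\eta}^{2}\sigma_{-\infty})\bigr)^{n\bar{m}}$ and then $C_z = \vert \mathcal{X}^* \vert^{-1}\bigl(4/(r_{\eta}^{2}\sigma_{-\infty})\bigr)^{n\bar{m}}$ exactly as stated. So your closing claim that ``the final identity should follow from elementary algebra'' once $\sigma_{-\infty}/2$ is plugged in is false as written; to land on the lemma you must adopt the paper's normalization of $\sigma_{-\infty}$ (or redefine it to absorb the Taylor $\tfrac12$), and you should state explicitly, as the paper does, that the second operand of the $\max$ is assumed to dominate.
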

\begin{proof}
Recall from Theorem~\ref{theorem:near_opt_dim} that $d_z = d (\frac{\alpha_{hi} - \alpha_{lo}}{\alpha_{lo} \alpha_{hi}})$ with constant $C_z = 16^{-d'} C$ where $C$ is defined below. In addition, BLiN assumes $\ell(x, y) = ||x - y||_{\infty}$. Matching to Assumption~\ref{assump:local_poly_bnd}, we see that $\sigma_{+} = \alpha_{lo} = 1$. Under the infinity norm, the volume constant $S_d = 2^d$.

We will define the other constants with respect to properties of the Hessian of $f(x)$ about each equilibrium, specifically the infinity norm of the inverse Hessian so that $\sigma_{-} = \sigma_{-\infty}$. This means we will bound the function locally with a quadratic, i.e., $\alpha_{hi}=2$. Lastly, recall from Corollary~\ref{corr:zoom_dim} that $\psi=\frac{1}{32}$ and the dimension of our search space (the product space of player mixed strategies) is $d= n(\bar{m}-1) \le n\bar{m}$ for simplicity.

Plugging this information into Theorem~\ref{theorem:near_opt_dim}, we find
\begin{align}
    d_z &= d' = \frac{1}{2}n\bar{m}
\end{align}
and
\begin{subequations}
\begin{align}
    C &= \max \Big\{ 1, \vert \mathcal{X}^* \vert^{-1} S_d^{-1} \Big( r_{\eta}^{\frac{\alpha_{hi}}{\alpha_{lo}}} \sigma_{-}^{\big( \frac{\alpha_{hi} - \alpha_{lo}}{\alpha_{lo} \alpha_{hi}} \big)}\Big)^{-d} \Big\} \Big( \frac{\sigma_{+}}{\psi \sigma_{-}^{\alpha_{lo} / \alpha_{hi}}} \Big)^{d / \alpha_{lo}}
    \\ &= \max \Big\{ 1, \vert \mathcal{X}^* \vert^{-1} 2^{-d} \Big( r_{\eta}^{2} \sigma_{-\infty}^{\frac{1}{2}}\Big)^{-d} \Big\} \Big( \frac{32}{\sigma_{-\infty}^{1 / 2}} \Big)^{d}
    \\ &= \max \Big\{ 1, \vert \mathcal{X}^* \vert^{-1} \Big( 2r_{\eta}^{2} \sigma_{-\infty}^{\frac{1}{2}}\Big)^{-d} \Big\} \Big( \frac{32}{\sigma_{-\infty}^{1 / 2}} \Big)^{d}
    \\ &\stackrel{hard}{=} \vert \mathcal{X}^* \vert^{-1} \Big( 2r_{\eta}^{2} \sigma_{-\infty}^{\frac{1}{2}}\Big)^{-d} \Big( \frac{32}{\sigma_{-\infty}^{1 / 2}} \Big)^{d}
    \\ &\stackrel{hard}{=} \vert \mathcal{X}^* \vert^{-1} \Big( \frac{16}{r_{\eta}^{2} \sigma_{-\infty}} \Big)^{n\bar{m}}
\end{align}
\end{subequations}
where $hard$ indicates we are assuming $r_{\eta}$ and $\sigma_{-\infty}$ are small enough to dominate the other operand of the $\max$.

Finally, converting the near optimality constant to a zooming constant, we find
\begin{subequations}
\begin{align}
    C_z &= 16^{-d_z} \vert \mathcal{X}^* \vert^{-1} \Big( \frac{16}{r_{\eta}^{2} \sigma_{-\infty}} \Big)^{n\bar{m}}
    \\ &= \vert \mathcal{X}^* \vert^{-1} \Big( \frac{4}{r_{\eta}^{2} \sigma_{-\infty}} \Big)^{n\bar{m}}.
\end{align}
\end{subequations}
\end{proof}

\subsection{Bounded Diameters and Well-shaped Cells}

We assume the feasible set is a unit-hypercube of dimensionality $d$ where cells are evenly split along the longest edge to give $b$ new partitions and $x_{h, i}$ represents the center of each cell.

There exists a decreasing sequence $w(h) > 0$, such that for any depth $h \ge 0$ and for any cell $\mathcal{X}_{h, i}$ of depth $h$, we have $\sup_{x \in \mathcal{X}_{h, i}} \ell(x_{h, i}, x) \le w(h)$. Moreover, there exists $\nu > 0$ such that for any depth $h \ge 0$, any cell $\mathcal{X}_{h, i}$ contains an $\ell$-ball of radius $\nu w(h)$ centered at $x_{h, i}$.

\renewcommand{\arraystretch}{1.5}
\begin{table}[ht!]
    \centering
    \begin{tabular}{c||c|c|c}
        $\ell(x, y)$ & $c$ & $\gamma$ & $\nu$ \\ \hline\hline
        $\ell(x, y) = ||x - y||_2^{\alpha}$ & $d^{\alpha/2} \big( \frac{b}{2} \big)^{\alpha}$ & $b^{-\alpha/d}$ & $d^{-\alpha/2} b^{-2\alpha}$ \\ \hline
        $\ell(x, y) = ||x - y||_{\infty}^{\alpha}$ & $\big( \frac{b}{2} \big)^{\alpha}$ & $b^{-\alpha/d}$ & $b^{-2\alpha}$
    \end{tabular}
    \caption{Bounding Constants: $\sup_{x \in \mathcal{X}_{h, i}} \ell(x_{h, i}, x) \le w(h) = c \gamma^h$.}
    \label{tab:a2_constants}
\end{table}

\subsubsection{$L_2$-Norm}

\begin{lemma}[$L_2$-Norm Bounding Ball]
Let $\ell(x, y) = ||x - y||_2^{\alpha}$. Then $\sup_{x \in \mathcal{X}_{h, i}} \ell(x_{h, i}, x) \le w_2(h) = c \gamma^h$ where $c = \big( \frac{d b^2}{4} \big)^{\alpha / 2}$ and $\gamma = b^{-\alpha/d}$.
\end{lemma}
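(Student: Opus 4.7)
The plan is to proceed in three steps: first understand the geometry of a depth-$h$ cell under the ``split the longest edge into $b$ parts'' rule, then compute the farthest point in the cell from its center under the $L_2$ norm, and finally verify that the resulting bound has the stated form $c\gamma^h$.

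For the first step, I start with the unit hypercube and observe that splitting the longest edge into $b$ pieces replaces one side of length $s$ with $b$ cells whose length in that coordinate is $s/b$. Since the rule always picks a coordinate of currently-maximal length, after $h$ splits the side lengths of any cell $\mathcal{X}_{h,i}$ are as balanced as possible: after $k$ complete ``sweeps'' of $d$ splits (i.e., when $h = kd$) all sides equal $b^{-k}$, and for a general $h$, writing $k = \lfloor h/d \rfloor$, every side length $s_j$ satisfies $s_j \le b^{-k} = b^{-\lfloor h/d \rfloor}$ (with at most $h - kd$ sides shortened further to $b^{-(k+1)}$). This is the only slightly delicate point; a clean induction on $h$ using the invariant ``the maximum side length equals $b^{-\lceil h/d\rceil + [\text{cycle incomplete}]}$, and the minimum is one factor of $b$ below the maximum'' makes it rigorous.

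For the second step, $\mathcal{X}_{h,i}$ is an axis-aligned box with side lengths $s_1,\ldots,s_d$ centered at $x_{h,i}$, so for any $x\in\mathcal{X}_{h,i}$,
\begin{align*}
\|x - x_{h,i}\|_2 \le \tfrac12\sqrt{\textstyle\sum_j s_j^2} \le \tfrac12 \sqrt{d}\, b^{-\lfloor h/d\rfloor},
\end{align*}
where the first inequality is the center-to-corner distance of a box and the second uses the side bound from step one.

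For the third step, raise to the $\alpha$ power and absorb the floor. Using $\lfloor h/d\rfloor \ge h/d - 1$, we have $b^{-\lfloor h/d\rfloor} \le b \cdot b^{-h/d}$, so
\begin{align*}
\ell(x_{h,i},x) \le \Bigl(\tfrac12 \sqrt{d}\,b^{-\lfloor h/d\rfloor}\Bigr)^{\alpha}
\le \Bigl(\tfrac{\sqrt{d}}{2}\Bigr)^{\alpha} b^{\alpha}\, b^{-\alpha h/d}
= \Bigl(\tfrac{db^2}{4}\Bigr)^{\alpha/2} \bigl(b^{-\alpha/d}\bigr)^h = c\gamma^h,
\end{align*}
which matches the claimed constants. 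The main obstacle is the bookkeeping in step one: verifying that the ``longest-edge'' splitting rule really does cycle through coordinates so that all sides stay within a single factor of $b$ of each other at every depth; once that invariant is pinned down, steps two and three are routine.
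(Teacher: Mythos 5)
Your proof is correct and follows essentially the same route as the paper's: the paper also writes $h = qd + r$ via divmod so that $r$ sides have length $b^{-(q+1)}$ and $d-r$ sides have length $b^{-q}$, bounds the center-to-corner distance by $\frac{1}{2}\sqrt{d}\,b^{-q}$, and then applies $q = \lfloor h/d\rfloor \ge h/d - 1$ to absorb the floor into the constant $c = (db^2/4)^{\alpha/2}$. The only cosmetic difference is that you relax every side length to $b^{-\lfloor h/d\rfloor}$ up front, while the paper first computes the exact mixed-length sum $r(1/b)^{2(q+1)}(1/4) + (d-r)(1/b)^{2q}(1/4)$ and then drops the $r$-dependent correction.
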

\begin{proof}
\begin{subequations}
\begin{align}
    w(0) &= \big[ \sum_{i=1}^d (1/2)^2 \big]^{\alpha / 2} = \big( \frac{d}{4} \big)^{\alpha/2}
    \\ w(1) &= \big[ (1/b \cdot 1/2)^2 + \sum_{i=2}^d (1/2)^2 \big]^{\alpha / 2} = [(1/b^2)(1/4) + (d-1) (1/4)]^{\alpha /2}
    \\ &= \big( \frac{d - 1 + 1/b^2}{4} \big)^{\alpha / 2}
    \\ w(d) &= \big[ \sum_{i=1}^d (1/b \cdot 1/2)^2 \big]^{\alpha / 2} = \big( \frac{d}{4 \cdot b^2} \big)^{\alpha/2}
    \\ w(h) &= \big[ r (1/b)^{2(q+1)} (1/2)^2 + \sum_{i=r}^d (1/b)^{2q} (1/2)^2 \big]^{\alpha / 2}
    \\ &= \big[ (1/b)^{2q} (1/2)^2 \big( r (1/b)^{2} + (d-r) \big]^{\alpha / 2}
    \\ &= \big[ (1/b^2)^{q} (1/4) \big( d - r (1 - \frac{1}{b^2}) \big) \big]^{\alpha / 2}
    \\ &\le \big[ (1/b^2)^{q} (1/4) d \big]^{\alpha / 2}
    \\ &\le \big[ (1/b^2)^{h/d - 1} (1/4) d \big]^{\alpha / 2}
    \\ &= \big[ (1/b^2)^{h/d} (b^2/4) d \big]^{\alpha / 2}
    \\ &= \big( \frac{d b^2}{4} \big)^{\alpha / 2} (1/b)^{\frac{\alpha}{d} h}
    \\ &= c \gamma^h
\end{align}
\end{subequations}
where $q, r = divmod(h, d) \implies q \ge h/d - 1$, $c = \big( \frac{d b^2}{4} \big)^{\alpha / 2}$, and $\gamma = (1/b)^{\alpha/d} = b^{-\alpha/d}$.
\end{proof}

\begin{lemma}[$L_2$-Norm Inner Ball]
Let $\ell(x, y) = ||x - y||_2^{\alpha}$. Any cell $\mathcal{X}_{h, i}$ contains an $\ell$-ball of radius $\nu w_2(h)$ where $\nu = (d b^4)^{-\alpha / 2}$.
\end{lemma}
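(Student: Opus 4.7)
The plan is to exploit the explicit description of cell shapes that was already implicit in the proof of the bounding-ball lemma and then observe that the largest inscribed Euclidean ball in a hyperrectangle has radius equal to half of its shortest edge.

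First I would describe the cell precisely. Writing $h = qd + r$ with $q = \lfloor h/d\rfloor$ and $0 \le r < d$, the depth-$h$ cell $\mathcal{X}_{h,i}$ is a hyperrectangle with $r$ sides of length $b^{-(q+1)}$ and $d-r$ sides of length $b^{-q}$; indeed, after each full round of $d$ splits every coordinate is halved by a factor of $1/b$, and the partial round of $r$ splits refines $r$ of the coordinates by an additional factor of $1/b$. The largest Euclidean ball inscribed in such a hyperrectangle (centered at $x_{h,i}$) has radius equal to one half of the shortest edge, namely $\frac{1}{2} b^{-(q+1)}$.

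Next I would translate this into an $\ell$-ball radius. Since $\ell(x,y) = \|x-y\|_2^\alpha$, an $\ell$-ball of radius $\rho$ about $x_{h,i}$ is exactly the Euclidean ball of radius $\rho^{1/\alpha}$ about $x_{h,i}$. So it suffices to show
\begin{align*}
\bigl(\nu\, w_2(h)\bigr)^{1/\alpha} \;\le\; \tfrac{1}{2}\,b^{-(q+1)}.
\end{align*}
Using $w_2(h) = \big(\tfrac{db^2}{4}\big)^{\alpha/2} b^{-\alpha h/d}$ this reduces to $\nu^{1/\alpha} \big(\tfrac{db^2}{4}\big)^{1/2} b^{-h/d} \le \tfrac{1}{2} b^{-(q+1)}$. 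Since $q \le h/d$, we have $b^{-(q+1)} \ge b^{-1}\, b^{-h/d}$, so it is enough to verify $\nu^{1/\alpha} \big(\tfrac{db^2}{4}\big)^{1/2} \le \tfrac{1}{2b}$, i.e.\ $\nu^{1/\alpha} \le (d b^4)^{-1/2}$, which is exactly the stated choice $\nu = (db^4)^{-\alpha/2}$.

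There is essentially no hard step here; the only thing to be careful about is the bookkeeping of how many coordinates get refined at depth $h$, to correctly identify the shortest edge as $b^{-(q+1)}$ rather than $b^{-q}$ (this is where the extra factor of $b^2$ inside $\nu$ comes from, combined with the $b^{-h/d}$ vs.\ $b^{-(q+1)}$ slack). Everything else is an algebraic verification matching the constants from the preceding lemma.
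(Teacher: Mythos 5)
Your proof is correct and follows essentially the same route as the paper's: both identify the cell's shortest edge (the paper writes its half-length as $\tfrac{1}{2}b^{-\lceil h/d\rceil}$, matching your $\tfrac{1}{2}b^{-(q+1)}$ up to the harmless $r=0$ case where your value merely underestimates the inscribed radius), take the inscribed ball of that radius, and absorb the slack $\lceil h/d\rceil \le h/d+1$ into the constant to arrive at $\nu = (db^4)^{-\alpha/2}$. The only cosmetic difference is that you convert the $\ell$-radius to a Euclidean radius explicitly, whereas the paper carries the exponent $\alpha$ through the computation directly.
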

\begin{proof}
Any cell $\mathcal{X}_{h, i}$ contains an $\ell$-ball of radius equal to its shortest axis:
\begin{subequations}
\begin{align}
    r_{\min} &= \big[ (1/4) (1/b^2)^{\lceil h/d \rceil} \big]^{\alpha/2}
    \\ &\ge \big[ (1/4) (1/b^2)^{h/d + 1} \big]^{\alpha/2}
    \\ &= \big( \frac{1}{b^2 \cdot 4} \big)^{\alpha / 2} (1/b)^{\frac{\alpha}{d} h}
    \\ &= w(h) \cdot \big( \frac{1}{d b^4} \big)^{\alpha/2}.
\end{align}
\end{subequations}
\end{proof}

\subsubsection{$L_{\infty}$-Norm}

\begin{lemma}[$L_{\infty}$-Norm Bounding Ball]
Let $\ell(x, y) = ||x - y||_{\infty}^{\alpha}$. Then $\sup_{x \in \mathcal{X}_{h, i}} \ell(x_{h, i}, x) \le w_{\infty}(h) = c \gamma^h$ where $c = \big( \frac{b}{2} \big)^{\alpha}$ and $\gamma = b^{-\alpha/d}$.
\end{lemma}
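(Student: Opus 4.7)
The plan is to mirror the structure of the preceding $L_2$ lemma, but exploit the simplification that the $L_\infty$ distance from the center of an axis-aligned box to any point in the box equals half of the longest edge. Since cells are split evenly into $b$ pieces along the longest edge, after $h$ splits the edge lengths of $\mathcal{X}_{h,i}$ are determined by how many times each dimension has been halved.

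First, I would write $h = qd + r$ with $q = \lfloor h/d \rfloor$ and $0 \le r < d$. A standard round-robin argument on ``longest-edge-first'' subdivision shows that $\mathcal{X}_{h,i}$ has $r$ edges of length $b^{-(q+1)}$ and $d-r$ edges of length $b^{-q}$, so the longest edge has length $b^{-q}$. Consequently the center $x_{h,i}$ satisfies
\begin{align*}
    \sup_{x \in \mathcal{X}_{h,i}} \|x - x_{h,i}\|_\infty \;=\; \tfrac{1}{2} b^{-q}.
\end{align*}

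Next I would raise this to the power $\alpha$ and use the crude lower bound $q \ge h/d - 1$, which follows from $q = (h-r)/d \ge (h-(d-1))/d > h/d - 1$. This gives
\begin{align*}
    \sup_{x \in \mathcal{X}_{h,i}} \ell(x_{h,i}, x)
    \;=\; \bigl(\tfrac{1}{2} b^{-q}\bigr)^{\alpha}
    \;\le\; \bigl(\tfrac{1}{2}\bigr)^{\alpha} b^{\alpha} \bigl(b^{-1/d}\bigr)^{\alpha h}
    \;=\; \Bigl(\tfrac{b}{2}\Bigr)^{\alpha} \bigl(b^{-\alpha/d}\bigr)^{h}
    \;=\; c\gamma^h,
\end{align*}
which is exactly the claimed bound with $c = (b/2)^{\alpha}$ and $\gamma = b^{-\alpha/d}$.

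There is no serious obstacle here; the only moving part is identifying the distribution of edge lengths after $h$ ``longest-edge-first'' splits, and that is standard. The $L_\infty$ setting is strictly simpler than the $L_2$ case already proved, because one need only track the single longest side rather than sum squares over all dimensions. The looseness $q \ge h/d - 1$ is what produces the extra factor of $b$ in the constant $c$ relative to the naive value $(1/2)^\alpha$, which matches the statement.
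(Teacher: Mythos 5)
Your proof is correct and follows essentially the same route as the paper's: both identify the longest edge after $h$ longest-edge-first splits as $b^{-\lfloor h/d \rfloor}$ (half-width $\tfrac{1}{2}b^{-\lfloor h/d \rfloor}$), then apply the bound $\lfloor h/d \rfloor \ge h/d - 1$ to obtain $c = (b/2)^{\alpha}$ and $\gamma = b^{-\alpha/d}$. Your write-up merely makes the round-robin edge-length bookkeeping ($r$ edges of length $b^{-(q+1)}$, $d-r$ of length $b^{-q}$) explicit where the paper states the longest axis directly.
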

\begin{proof}
Any cell $\mathcal{X}_{h, i}$ is contained by an $\ell$-ball of radius equal to its longest axis:
\begin{subequations}
\begin{align}
    r_{\max} &= \big[ (1/4) (1/b^2)^{\lfloor h/d \rfloor} \big]^{\alpha/2}
    \\ &\le \big[ (1/4) (1/b^2)^{h/d - 1} \big]^{\alpha/2}
    \\ &= \big( \frac{b^2}{4} \big)^{\alpha / 2} (1/b)^{\frac{\alpha}{d} h}
    \\ &= c \gamma^h
\end{align}
\end{subequations}
where $c = \big( \frac{b^2}{4} \big)^{\alpha / 2}$, and $\gamma = (1/b)^{\alpha/d} = b^{-\alpha/d}$.
\end{proof}

\begin{lemma}[$L_{\infty}$-Norm Inner Ball]
Let $\ell(x, y) = ||x - y||_{\infty}^{\alpha}$. Any cell $\mathcal{X}_{h, i}$ contains an $\ell$-ball of radius $\nu w_{\infty}(h)$ where $\nu = b^{-2\alpha}$.
\end{lemma}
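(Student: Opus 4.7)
The plan is to mirror the argument used for the $L_2$ inner ball, but with the geometry of an $\ell_\infty$-ball, which is an axis-aligned hypercube of side length $2r$. Since cells are formed by repeatedly halving (into $b$ pieces) along the longest edge, after $h$ splits the side-lengths along each of the $d$ axes are $b^{-q_i}$ with nonnegative integers $q_i$ satisfying $\sum_i q_i = h$ and $\vert q_i - q_j \vert \le 1$ (the ``as evenly as possible'' balance maintained by always cutting the longest edge). Consequently the shortest side has length $b^{-\lceil h/d \rceil}$, and the largest $\ell_\infty$-ball inscribed in the cell is centered at $x_{h,i}$ with radius equal to half of that shortest side.

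Translating into the distance $\ell(x,y) = \|x - y\|_\infty^\alpha$ used by the algorithm, the inscribed $\ell$-ball has radius
\begin{align*}
r_{\min} &= \Big[ \tfrac{1}{2} b^{-\lceil h/d \rceil} \Big]^{\alpha}
\\ &\ge \Big[ \tfrac{1}{2} b^{-(h/d + 1)} \Big]^{\alpha}
\\ &= \Big( \tfrac{1}{2b} \Big)^{\alpha} b^{-\alpha h/d}.
\end{align*}
Now I pull out the previously established upper bound $w_\infty(h) = (b/2)^\alpha b^{-\alpha h/d}$ on the outer radius:
\begin{align*}
r_{\min} &\ge \Big( \tfrac{b}{2} \Big)^{\alpha} b^{-2\alpha} \cdot b^{-\alpha h/d} = b^{-2\alpha} \, w_\infty(h),
\end{align*}
which identifies $\nu = b^{-2\alpha}$ as claimed.

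The only conceptual step worth pausing over is the bound $\lceil h/d\rceil \le h/d + 1$, which lets me trade the integer ceiling for the continuous exponent appearing in $w_\infty(h)$; this is precisely the same slack absorbed by the factor $b^{-\alpha}$ in the $L_2$ version of the lemma, and it is why the final constant picks up $b^{-2\alpha}$ rather than $b^{-\alpha}$ (one power of $b^{-\alpha}$ for the ``half a side'' and one for the ceiling slack). There is no real obstacle here: once the longest-edge splitting rule pins down that side lengths differ by at most a factor of $b$, both the outer and inner bounds follow from the $\ell_\infty$ geometry, in direct parallel to the two $L_2$ lemmas just above.
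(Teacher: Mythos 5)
Your proof is correct and takes essentially the same route as the paper's: both identify the inscribed $\ell$-ball radius as half the shortest side raised to the $\alpha$, i.e.\ $\big[\tfrac{1}{2} b^{-\lceil h/d \rceil}\big]^{\alpha}$ (the paper writes this as $\big[(1/4)(1/b^2)^{\lceil h/d \rceil}\big]^{\alpha/2}$), apply $\lceil h/d \rceil \le h/d + 1$, and factor out $w_{\infty}(h) = (b/2)^{\alpha} b^{-\alpha h/d}$ to read off $\nu = b^{-2\alpha}$. One small quibble with your closing aside only (it does not affect the proof): the ``half a side'' factors $(1/2)^{\alpha}$ cancel between the inner and outer radii, so the two powers of $b^{-\alpha}$ in $\nu$ come from the ceiling slack here together with the floor slack already absorbed into the constant $c = (b/2)^{\alpha}$ of the bounding-ball lemma, not one from the half-side and one from the ceiling.
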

\begin{proof}
Any cell $\mathcal{X}_{h, i}$ contains an $\ell$-ball of radius equal to its shortest axis:
\begin{subequations}
\begin{align}
    r_{\min} &= \big[ (1/4) (1/b^2)^{\lceil h/d \rceil} \big]^{\alpha/2}
    \\ &\ge \big[ (1/4) (1/b^2)^{h/d + 1} \big]^{\alpha/2}
    \\ &= \big( \frac{1}{b^2 \cdot 4} \big)^{\alpha / 2} (1/b)^{\frac{\alpha}{d} h}
    \\ &= w(h) \cdot \big( \frac{1}{b^4} \big)^{\alpha/2}.
\end{align}
\end{subequations}
\end{proof}

\subsection{Stochastic Simultaneous Optimistic Optimization}

StoSOO is flexible in its choice of metric and partitioning structure. In StoSOO, we may choose $\ell(x,y) = ||x - y||_{2}^{2}$.

\begin{lemma}\label{lemma:sphere}
The volume of a $d$-sphere with radius $r$ and $d$ even is given by $S_d r^d$ where $S_d^{-1} \le \sqrt{2 \pi d} \Big( \frac{d}{2 \pi e}\Big)^{d/2}$.
\end{lemma}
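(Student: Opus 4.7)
The plan is to reduce the statement to the classical formula for the volume of a Euclidean $d$-ball combined with a non-asymptotic form of Stirling's inequality. Recall that the volume of a ball of radius $r$ in $\mathbb{R}^d$ under the $\ell_2$-norm is $V_d(r) = \frac{\pi^{d/2}}{\Gamma(d/2+1)}\, r^d$, so the claimed volume constant is $S_d = \pi^{d/2}/\Gamma(d/2+1)$. Since $d$ is even, $d/2$ is a positive integer and $\Gamma(d/2+1) = (d/2)!$, hence
\begin{align*}
S_d^{-1} &= \frac{(d/2)!}{\pi^{d/2}}.
\end{align*}

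Next I would apply the standard explicit Stirling bound
\begin{align*}
n! &\le \sqrt{2\pi n}\,\Big(\frac{n}{e}\Big)^{n} e^{1/(12n)}
\end{align*}
with $n = d/2$, which gives $(d/2)! \le \sqrt{\pi d}\,\bigl(d/(2e)\bigr)^{d/2} e^{1/(6d)}$. Dividing both sides by $\pi^{d/2}$ yields
\begin{align*}
S_d^{-1} &\le \sqrt{\pi d}\,\Big(\frac{d}{2\pi e}\Big)^{d/2} e^{1/(6d)}.
\end{align*}

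To conclude, I would absorb the residual Stirling correction into the leading square-root factor. For every $d \ge 1$ one has $e^{1/(6d)} \le e^{1/6} \approx 1.18 < \sqrt{2}$, so $\sqrt{\pi d}\cdot e^{1/(6d)} \le \sqrt{2\pi d}$, which gives
\begin{align*}
S_d^{-1} &\le \sqrt{2\pi d}\,\Big(\frac{d}{2\pi e}\Big)^{d/2},
\end{align*}
as claimed. There is no real obstacle here beyond bookkeeping; the only subtle choice is picking the Stirling form whose $\sqrt{2\pi n}$ prefactor exactly pairs with the $\pi^{d/2}$ in the denominator (so that $\sqrt{2\pi (d/2)} = \sqrt{\pi d}$) and then noting that the correction factor $e^{1/(6d)}$ is small enough to be swallowed by upgrading $\sqrt{\pi d}$ to $\sqrt{2\pi d}$.
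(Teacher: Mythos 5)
Your proof is correct and follows essentially the same route as the paper: both start from the exact volume constant $S_d^{-1} = (d/2)!/\pi^{d/2}$ for even $d$, apply the explicit Stirling upper bound $n! \le \sqrt{2\pi n}\,(n/e)^n e^{1/(12n)}$ with $n = d/2$, and absorb the correction $e^{1/(6d)} < \sqrt{2}$ by upgrading $\sqrt{\pi d}$ to $\sqrt{2\pi d}$. Your write-up is in fact slightly more explicit about that final absorption step than the paper's.
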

\begin{proof}
First, we recall Stirling's bounds on the factorial: $\sqrt{2 \pi n} (\frac{n}{e})^n e^{\frac{1}{12n+1}} < n! < \sqrt{2 \pi n} (\frac{n}{e})^n e^{\frac{1}{12n}}$. This will be useful for bounding the Gamma function: $\Gamma(d) = (d-1)!$ for even $d$.

Given $d$ is even, we start with the exact formula for $S_d$:
\begin{subequations}
\begin{align}
    S_d^{-1} &= \frac{\Gamma(d/2 + 1)}{\pi^{d/2}}
    \\ &= \frac{(d/2)!}{\pi^{d/2}}
    \\ &< \frac{\sqrt{2\pi (d/2)} (\frac{d/2}{e})^{d/2} e^{\frac{1}{12(d/2)}}}{\pi^{d/2}}
    \\ &= \frac{\pi^{1/2} d^{1/2} (\frac{d}{2e})^{d/2} e^{\frac{1}{6d}}}{\pi^{d/2}}
    \\ &= \frac{\pi^{1/2} d^{(d+1)/2} e^{\frac{1}{6d}}}{(2\pi e)^{d/2}}
    \\ &\le \sqrt{2 \pi d} \Big( \frac{d}{2 \pi e}\Big)^{d/2}.
\end{align}
\end{subequations}
\end{proof}

\begin{lemma}\label{lemma:near_opt_constant}
The near optimality dimension and constant under the $\ell(x,y) = ||x - y||^2_{2}$ norm are
\begin{align}
    d' &= 0
    \\ C &= \vert \mathcal{X}^* \vert^{-1} \sqrt{2 \pi n\bar{m}} \Big( \frac{n\bar{m}}{5\nu r_{\eta}^{2} \sigma_{-2}} \Big)^{n\bar{m} / 2}
\end{align}
where $\sigma_{-2}$ is a lower bound on the singular value of the Hessian matrix of the function at every equilibrium and $\nu$ is defined in Table~\ref{tab:a2_constants} for the corresponding $\ell(x,y)$-ball.
\end{lemma}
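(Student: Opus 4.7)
The plan is to derive this as a direct corollary of Theorem~\ref{theorem:near_opt_dim}, specializing the exponents and sharpening the volume constant via Lemma~\ref{lemma:sphere}. First I would pin down the two exponents that enter the general formula. Since here $\ell(x,y) = \|x-y\|_2^2$ matches the form $\sigma_+ \|x-y\|^{\alpha_{lo}}$ with $\sigma_+ = 1$ and $\alpha_{lo} = 2$. The local lower bound in Assumption~\ref{assump:local_poly_bnd} comes from expanding $-f$ to second order around each equilibrium: if $\sigma_{-2}$ is a lower bound on the smallest singular value of the Hessian of $-f$ at every $x^* \in \mathcal{X}^*$, then locally $-f(x) \ge -f(x^*) + \tfrac{1}{2}\sigma_{-2}\|x-x^*\|_2^2$, so Assumption~\ref{assump:local_poly_bnd} holds with $\alpha_{hi} = 2$ (and $\sigma_{-} = \tfrac{1}{2}\sigma_{-2}$ or $\sigma_{-2}$ up to an absorbable constant).

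Plugging $\alpha_{hi} = \alpha_{lo} = 2$ into Theorem~\ref{theorem:near_opt_dim} immediately gives
\begin{equation*}
d' \;=\; d\Big(\tfrac{\alpha_{hi}-\alpha_{lo}}{\alpha_{lo}\alpha_{hi}}\Big) \;=\; 0,
\end{equation*}
which is the first claim. For the constant, the same substitution collapses the $\sigma_-^{(\alpha_{hi}-\alpha_{lo})/(\alpha_{lo}\alpha_{hi})}$ factor to $1$ and reduces the two candidate constants to $C_{\epsilon \le \eta} = |\mathcal{X}^*|(\psi\sigma_{-2})^{-d/2}$ and $C_{\epsilon \ge \eta} = S_d^{-1}(\psi\eta)^{-d/2}$ with $\eta = \sigma_{-2} r_\eta^2$. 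For the regime of interest (small $r_\eta$ and small $\sigma_{-2}$), the second term dominates, yielding $C = S_d^{-1}(\psi r_\eta^2 \sigma_{-2})^{-d/2}$ times an $|\mathcal{X}^*|^{-1}$ bookkeeping factor that arises from factoring $C_{\epsilon \ge \eta}$ through $C_{\epsilon \le \eta}$ exactly as in the derivation of Lemma~\ref{lemma:zooming_constant}. Substituting $d = n(\bar m -1) \le n\bar m$ and the Stirling bound $S_d^{-1} \le \sqrt{2\pi d}\big(\tfrac{d}{2\pi e}\big)^{d/2}$ from Lemma~\ref{lemma:sphere} then yields an expression of the form $|\mathcal{X}^*|^{-1}\sqrt{2\pi n\bar m}\,\big(\tfrac{n\bar m}{c_0\, \psi\, r_\eta^2\,\sigma_{-2}}\big)^{n\bar m/2}$ for an absolute constant $c_0$.

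The final step is to identify $\psi$ with the cell-shape parameter $\nu$ from Table~\ref{tab:a2_constants}. In the StoSOO framework the packing in the near-optimality dimension is done with $\ell$-balls of radius $\nu$ times the cell diameter (this is exactly the well-shaped cell property), so the free scale $\psi$ in Theorem~\ref{theorem:near_opt_dim} is instantiated as $\psi = \nu$ here. Folding $\nu$ into the constant and absorbing $2\pi e$ into the constant $5$ (as $2\pi e \approx 17.08$, any looser bound of this shape suffices) produces the stated $5\nu r_\eta^2 \sigma_{-2}$ denominator.

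The main obstacle I expect is the bookkeeping: correctly identifying which of $C_{\epsilon \le \eta}$ and $C_{\epsilon \ge \eta}$ dominates for the parameter regime of interest (so that the $|\mathcal{X}^*|^{-1}$ factor appears rather than $|\mathcal{X}^*|$), and cleanly matching the $\psi$ in the general near-optimality definition to the StoSOO cell constant $\nu$. Everything else is substitution into Theorem~\ref{theorem:near_opt_dim} and Lemma~\ref{lemma:sphere}; the rates themselves collapse to the $d' = 0$ case because the matching exponents $\alpha_{hi} = \alpha_{lo}$ make the packing problem essentially dimension-free up to the multiplicative constant.
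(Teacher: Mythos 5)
Your proposal is correct and follows essentially the same route as the paper's proof: specialize Theorem~\ref{theorem:near_opt_dim} with $\sigma_{+}=1$, $\alpha_{lo}=\alpha_{hi}=2$ (so $d'=0$), take the ``hard'' branch of the $\max$ so the $\vert\mathcal{X}^*\vert^{-1} S_d^{-1} r_\eta^{-d}$ factor survives, apply the Stirling bound $S_d^{-1}\le\sqrt{2\pi d}\,(d/(2\pi e))^{d/2}$ from Lemma~\ref{lemma:sphere}, bound $d=n(\bar m-1)\le n\bar m$, and absorb $2\pi e$ into the constant $5$. One small correction: the paper instantiates $\psi=\nu/3$ (per Corollary~1 of \cite{valko2013stochastic}), not $\psi=\nu$ as you state, but your final bound is unaffected since the required inequality becomes $3/(2\pi e)\le 1/5$, which still holds.
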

\begin{proof}
Recall from Theorem~\ref{theorem:near_opt_dim} that $d' = d (\frac{\alpha_{hi} - \alpha_{lo}}{\alpha_{lo} \alpha_{hi}})$ with constant $C$ defined below. In addition, we will analyze StoSOO with the choice $\ell(x, y) = ||x - y||^2_{2}$. Matching to Assumption~\ref{assump:local_poly_bnd}, we see that $\sigma_{+} = 1$ and $\alpha_{lo} = 2$. Under the $2$-norm, the volume constant $S_d^{-1} \le \sqrt{2 \pi d} \Big( \frac{d}{2 \pi e}\Big)^{d/2}$ (see Lemma~\ref{lemma:sphere}).

We will bound the function locally with a quadratic, i.e., $\alpha_{hi}=2$. Lastly, from~\citep[Corollary 1]{valko2013stochastic}, $\psi=\frac{\nu}{3}$ with $\nu=\frac{1}{db^2}$ defined in Table~\ref{tab:a2_constants} and the dimension of our search space (the product space of player mixed strategies) is $d= n(\bar{m}-1) \le n\bar{m}$ for simplicity.

Plugging this information into Theorem~\ref{theorem:near_opt_dim}, we find $d'=0$ and
\begin{subequations}
\begin{align}
    C &= \max \Big\{ 1, \vert \mathcal{X}^* \vert^{-1} S_d^{-1} \Big( r_{\eta}^{\frac{\alpha_{hi}}{\alpha_{lo}}} \sigma_{-}^{\big( \frac{\alpha_{hi} - \alpha_{lo}}{\alpha_{lo} \alpha_{hi}} \big)}\Big)^{-d} \Big\} \Big( \frac{\sigma_{+}}{\psi \sigma_{-}^{\alpha_{lo} / \alpha_{hi}}} \Big)^{d / \alpha_{lo}}
    \\ &= \max \Big\{ 1, \vert \mathcal{X}^* \vert^{-1} S_d^{-1} r_{\eta}^{-d} \Big\} \Big( \frac{1}{\psi \sigma_{-2}} \Big)^{d / 2}
    \\ &\stackrel{hard}{=} \vert \mathcal{X}^* \vert^{-1} S_d^{-1} r_{\eta}^{-d} \Big( \frac{3}{\nu \sigma_{-2}} \Big)^{d / 2}
    \\ &\stackrel{hard}{\le} \vert \mathcal{X}^* \vert^{-1} \sqrt{2 \pi d} \Big( \frac{d}{2 \pi e}\Big)^{d/2} \Big( \frac{3}{\nu r_{\eta}^{2} \sigma_{-2}} \Big)^{d / 2}
    \\ &\stackrel{hard}{\le} \vert \mathcal{X}^* \vert^{-1} \sqrt{2 \pi d} \Big( \frac{b^2 d^2}{5 r_{\eta}^{2} \sigma_{-2}} \Big)^{d / 2}
    \\ &\stackrel{hard}{=} \vert \mathcal{X}^* \vert^{-1} \sqrt{2 \pi n\bar{m}} \Big( \frac{b^2 n^2 \bar{m}^2}{5 r_{\eta}^{2} \sigma_{-2}} \Big)^{n\bar{m} / 2}
\end{align}
\end{subequations}
where $hard$ indicates we are assuming $r_{\eta}$, the radius of the ball under which the local polynomial bounds are accurate, is small enough to dominate the other operand of the $\max$.
\end{proof}

\begin{theorem}[StoSOO Regret Rate]\label{theorem:stosoo_regret}
Corollary $1$ of~\cite{valko2013stochastic} implies that with probability $1 - \delta$, the regret, $R_t$, of StoSOO after $t$ pulls is upper bounded as
\begin{align}
    (2 + b^{2/d}) \sqrt{\frac{\log_b(tk/\delta)}{2\log_b(e) k}} + \frac{1}{4} d b^{2(1+2/d)} b^{-\frac{1}{dC} \sqrt{t/k}}
\end{align}
where $d=n(\bar{m}-1)$, $b$ is the branching factor for partitioning cells, $C$ is the near-optimality constant, and $k$ is the maximum number of evaluations per node.
\end{theorem}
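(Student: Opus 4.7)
The plan is to apply Corollary~1 of~\cite{valko2013stochastic} as a black box and instantiate every problem-specific constant in their generic StoSOO regret bound. Concretely, I take the objective to be $f(x) = -\mathcal{L}^{\tau}(s(x))$ on $\mathcal{X}=[0,1]^d$ with $d=n(\bar{m}-1)$, where $s$ is one of the hypercube-to-simplex maps from Appendix~\ref{app:cube_to_simplex}.

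The first step is to check that the three hypotheses required by Valko et al.\ are satisfied. \emph{Sub-Gaussian, bounded noise:} Corollary~\ref{corr:loss_bound_sto} bounds $|\mathcal{L}^{\tau}|$ uniformly, so any Monte-Carlo estimator from Table~\ref{tab:estimators} is bounded and therefore sub-Gaussian. \emph{Well-shaped cells with bounded diameters:} choosing the metric $\ell(x,y)=\|x-y\|_2^2$ and the standard axis-aligned partitioning with branching factor $b$, the $L_2$ rows of Table~\ref{tab:a2_constants} (with $\alpha=2$) give $c_{\mathrm{cell}}=db^2/4$, $\gamma=b^{-2/d}$, and $\nu=(db^4)^{-1}$, so $\sup_{x\in\mathcal{X}_{h,i}}\ell(x_{h,i},x)\le c_{\mathrm{cell}}\gamma^h$ and each cell contains an inner $\ell$-ball of radius $\nu\,c_{\mathrm{cell}}\gamma^h$. \emph{Near-optimality dimension:} Lemma~\ref{lemma:near_opt_constant} shows $d'=0$ with the closed-form constant $C$, using Assumption~\ref{assump:local_poly_bnd} (local quadratic bounds from the Hessian of $-\mathcal{L}^{\tau}$ around each polymatrix-isolated QRE, via Lemma~\ref{lemma:hess}).

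The second step is direct substitution. Valko et al.'s simple-regret bound has the generic shape $w(h^\star) + \text{(stochastic term)}$, where $w(h^\star)$ is the diameter at the deepest explored level and the stochastic term is a Hoeffding-style $\tilde{O}(1/\sqrt{k})$ concentration with $k$ evaluations per node. Plugging in $w(h)=c_{\mathrm{cell}}\gamma^h$ together with their depth lower bound $h^\star \ge \tfrac{1}{dC}\sqrt{t/k} + O(1)$ (which uses $d'=0$) yields $w(h^\star)\le \tfrac{1}{4}\,d\, b^{2(1+2/d)}\, b^{-\sqrt{t/k}/(dC)}$, which is exactly the second term in the statement. For the stochastic term, converting natural logarithms to base-$b$ via $\log x = \log_b(x)/\log_b(e)$ and absorbing the $\gamma^{-1}=b^{2/d}$ factor that appears when bounding the gap between the selected leaf and its parent produces the prefactor $(2+b^{2/d})$ and the factor $\sqrt{\log_b(tk/\delta)/(2\log_b(e)\,k)}$.

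The main obstacle I anticipate is not the substitution itself but the bookkeeping: carefully matching our definition of regret, the simple-vs.-cumulative form used by Valko et al., the parent/child recursion inside their StoSOO analysis, and the base-of-log conversion. None of this is conceptually difficult once the mapping of $c_{\mathrm{cell}},\gamma,\nu$ from Table~\ref{tab:a2_constants} and $d',C$ from Lemma~\ref{lemma:near_opt_constant} is fixed, but a sloppy constant anywhere changes the $(2+b^{2/d})$ and $\tfrac{1}{4}d\,b^{2(1+2/d)}$ prefactors that appear verbatim in the bound.
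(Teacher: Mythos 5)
Your proposal is correct and takes essentially the same route as the paper: the paper's proof is exactly the black-box substitution you describe, plugging the Table~\ref{tab:a2_constants} constants for $\ell(x,y)=\|x-y\|_2^{2}$ (i.e., $c=\tfrac{1}{4}db^{2}$, $\gamma=b^{-2/d}$, $\alpha=2$) into Corollary~1 of~\cite{valko2013stochastic}, which reads $R_t \le (2+1/\gamma)\epsilon + c\,\gamma^{\frac{1}{2C}\sqrt{t/k}-2}$ with $\epsilon=\sqrt{\log(tk/\delta)/(2k)}$, followed by the base-$b$ log conversion, with your hypothesis checks delegated to the surrounding results (Corollary~\ref{corr:loss_bound_sto}, Lemma~\ref{lemma:near_opt_constant}). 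One cosmetic slip: the depth bound should scale as $h^\star \ge \tfrac{1}{2C}\sqrt{t/k}-O(1)$, not $\tfrac{1}{dC}\sqrt{t/k}+O(1)$ (it is $\gamma^{h^\star}=b^{-\frac{2h^\star}{d}}$ that produces the $b^{-\frac{1}{dC}\sqrt{t/k}}$ factor, with $\gamma^{-2}=b^{4/d}$ absorbed into the prefactor $\tfrac{1}{4}db^{2(1+2/d)}$), but your final expression is the correct one and matches the paper's.
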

\begin{proof}
Plugging the constants $c$, $\gamma$, and $\nu$ defined in Table~\ref{tab:a2_constants} for the $2$-norm into Corollary $1$ of~\cite{valko2013stochastic}, we find with probability $1-\delta$:
\begin{subequations}
\begin{align}
    R_t &\le (2 + 1/\gamma) \epsilon + c \gamma^{\frac{1}{2C} \sqrt{t/k} - 2}
    \\ &= (2 + 1/\gamma) \sqrt{\frac{\log(tk/\delta)}{2k}} + c \gamma^{\frac{1}{2C} \sqrt{t/k} - 2}
    \\ &= (2 + b^{2/d}) \sqrt{\frac{\log(tk/\delta)}{2k}} + \frac{1}{4} d b^2 (b^{-2/d})^{\frac{1}{2C} \sqrt{t/k} - 2}
    \\ &= (2 + b^{2/d}) \sqrt{\frac{\log_b(tk/\delta)}{2\log_b(e) k}} + \frac{1}{4} d b^{2(1+2/d)} (b^{-2/d})^{\frac{1}{2C} \sqrt{t/k}}
    \\ &= (2 + b^{2/d}) \sqrt{\frac{\log_b(tk/\delta)}{2\log_b(e) k}} + \frac{1}{4} d b^{2(1+2/d)} b^{-\frac{1}{dC} \sqrt{t/k}}.
\end{align}
\end{subequations}
\end{proof}

\begin{proposition}\label{prop:stosoo_burnin}
Assume the same conditions as Theorem~\ref{theorem:stosoo_regret} and let $k = t \log_b(t)^{-\rho}$ where $\rho \ge 3$. Then the StoSOO (Algorithm $1$ of~\cite{valko2013stochastic}) bound requires $t = b^{(\frac{dC}{2})^{\frac{2}{\rho-2}}}$ pulls before exhibiting a $\tilde{\mathcal{O}}(T^{-1/2})$ decay in regret.
\end{proposition}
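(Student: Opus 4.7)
The plan is to substitute the proposed schedule $k = t \log_b(t)^{-\rho}$ directly into the two-term regret bound of Theorem~\ref{theorem:stosoo_regret} and identify the regime in which the second (exponential-in-$\sqrt{t/k}$) term is dominated by the first (polynomial-in-$k$) term. Since the claim concerns when the bound as a whole decays at rate $\tilde{\mathcal{O}}(T^{-1/2})$, I only need to show that (i) the first term is automatically $\tilde{\mathcal{O}}(t^{-1/2})$ under the stated $k$, and (ii) the second term drops below $t^{-1/2}$ precisely once $t$ exceeds the stated threshold.

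First I would handle the polynomial term. With $k = t \log_b(t)^{-\rho}$, the ratio inside the square root becomes
\begin{align*}
\frac{\log_b(tk/\delta)}{2\log_b(e)\,k} \;=\; \frac{\log_b(t^2 \log_b(t)^{-\rho}/\delta)}{2\log_b(e)\, t\log_b(t)^{-\rho}} \;=\; \Theta\!\left(\frac{\log_b(t)^{1+\rho}}{t}\right),
\end{align*}
so the first summand is of order $\tilde{\mathcal{O}}(t^{-1/2})$ up to $\text{polylog}(t)$ factors, regardless of $\rho \ge 3$. This gives half of the claim for free.

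Next I would turn to the exponential term $\tfrac{1}{4} d b^{2(1+2/d)} b^{-\frac{1}{dC}\sqrt{t/k}}$. Under the chosen schedule, $\sqrt{t/k} = \log_b(t)^{\rho/2}$, so the exponent becomes $-\tfrac{1}{dC}\log_b(t)^{\rho/2}$. To force this term below $t^{-1/2}$, it suffices to require
\begin{align*}
\tfrac{1}{dC}\log_b(t)^{\rho/2} \;\ge\; \tfrac{1}{2}\log_b(t),
\end{align*}
which, after dividing by $\log_b(t)$ (valid for $t > 1$) and solving for $\log_b(t)$, yields $\log_b(t) \ge (dC/2)^{2/(\rho-2)}$, i.e., $t \ge b^{(dC/2)^{2/(\rho-2)}}$. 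This recovers the stated burn-in exactly. Note that the restriction $\rho \ge 3$ (in particular $\rho > 2$) is essential here, as it guarantees that the exponent $2/(\rho-2)$ is a finite positive number so that solving for $\log_b(t)$ is meaningful.

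Combining the two parts, once $t$ exceeds the threshold both summands are $\tilde{\mathcal{O}}(t^{-1/2})$, proving the proposition. The main subtlety, though not an obstacle, is tracking the $\text{polylog}(t)$ factors carefully in step (i) so that the overall rate is indeed $\tilde{\mathcal{O}}(t^{-1/2})$ as claimed; the hard part is really only conceptual, namely noting that the burn-in is driven entirely by forcing the $\sqrt{t/k}$ exponent to be at least $(dC/2)\log_b(t)$, which requires $\log_b(t)^{\rho/2-1}$ to dominate the constant $dC/2$ and thus becomes astronomically large when $C$ (the near-optimality constant from Lemma~\ref{lemma:near_opt_constant}) is large, as indeed is the case for $\mathcal{L}^\tau$.
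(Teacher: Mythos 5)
Your proof is correct and takes essentially the same route as the paper: both substitute $k = t \log_b(t)^{-\rho}$, rewrite the exponential term as $t^{-\log_b(t)^{(\rho-2)/2}/(dC)}$ (your inequality $\tfrac{1}{dC}\log_b(t)^{\rho/2} \ge \tfrac{1}{2}\log_b(t)$ is the same condition after dividing by $\log_b(t)$), and solve to obtain the burn-in threshold $t \ge b^{(dC/2)^{2/(\rho-2)}}$. Your $\Theta$-level treatment of the first term matches the paper's explicit polylogarithmic simplification showing it is $\tilde{\mathcal{O}}(t^{-1/2})$ throughout, so nothing is missing.
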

\begin{proof}
We know from the given definition of $k$ that $\sqrt{t / k} = \log_b(t)^{\rho/2}$. We will analyze the second term in the regret formula of Corollary~\ref{theorem:stosoo_regret}. Note that the second term, $b^{-\frac{1}{dC} \sqrt{t/k}} = (b^{\sqrt{t/k}})^{-\frac{1}{dC}} = (b^{\log_b(t)^{\rho/2}})^{-\frac{1}{dC}} = (b^{\log_b(t) \cdot \log_b(t)^{(\rho-2)/2}})^{-\frac{1}{dC}} = \Big(t^{\log_b(t)^{(\rho-2)/2}}\Big)^{-\frac{1}{dC}} = t^{-\frac{\log_b(t)^{(\rho-2)/2}}{dC}}$. To achieve a $t^{-1/2}$ convergence rate (or better), we need $t \ge b^{(\frac{dC}{2})^{\frac{2}{\rho-2}}}$. For convenience, we report the entire simplification of the bound below:
\begin{subequations}
\begin{align}
    R_t &\le (2 + b^{2/d}) \sqrt{\frac{\log_b(tk/\delta)}{2\log_b(e) k}} + \frac{1}{4} d b^{2(1+2/d)} b^{-\frac{1}{dC} \sqrt{t/k}}
    \\ &= (2 + b^{2/d}) \sqrt{\frac{\log_b(t^2 / (\delta \log_b(t)^\rho)) \log_b(t)^\rho}{2\log_b(e) t}} + \frac{1}{4} d b^{2(1+2/d)} t^{-\frac{\log_b(t)^{(\rho-2)/2}}{dC}}
    \\ &\le (2 + b^{2/d}) \sqrt{\frac{\log_b(t^2 / \delta) \log_b(t / \delta)^\rho}{2\log_b(e) t}} + \frac{1}{4} d b^{2(1+2/d)} t^{-\frac{\log_b(t)^{(\rho-2)/2}}{dC}}
    \\ &= (2 + b^{2/d}) \sqrt{\frac{2 \log_b(t / \delta)^{\rho+1}}{2\log_b(e) t}} + \frac{1}{4} d b^{2(1+2/d)} t^{-\frac{\log_b(t)^{(\rho-2)/2}}{dC}}
    \\ &= \frac{(2 + b^{2/d})}{\sqrt{\log_b(e)}} \frac{\log_b(t / \delta)^{(\rho+1)/2}}{\sqrt{t}} + \frac{1}{4} d b^{2(1+2/d)} t^{-\frac{\log_b(t)^{(\rho-2)/2}}{dC}}
    \\ &= \frac{(2 + b^{2/d})}{\sqrt{\log_b(e)}} \frac{[\log_b(t) - \log_b(\delta)]^{(\rho+1)/2}}{\sqrt{t}} + \frac{1}{4} d b^{2(1+2/d)} t^{-\frac{\log_b(t)^{(\rho-2)/2}}{dC}}.
\end{align}
\end{subequations}
\end{proof}

\subsection{Regret to PAC Bounds}

\begin{lemma}\label{lemma:loss_regret_to_eps_regret}[Loss Regret to Exploitability Regret]
Assume exploitability of a joint strategy $\boldsymbol{x}$ is upper bounded by $f(\mathcal{L}^{\tau}(\boldsymbol{x}))$ where $f$ is a concave function and $\mathcal{L}^{\tau}$ is a loss function. Let $\boldsymbol{x}_t$ be a joint strategy randomly drawn from the set of predictions made by an online learning algorithm $\mathcal{A}$ over $T$ steps. Then the expected exploitability of $\boldsymbol{x}_t$ is bounded by the average regret of $\mathcal{A}$:
\begin{align}
    \mathbb{E}_t[\epsilon_t] &\le f(\frac{1}{T} R(T)).
\end{align}
\end{lemma}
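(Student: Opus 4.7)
The plan is to chain three standard steps: the given upper bound, Jensen's inequality for concave $f$, and an identification of the empirical average of $\mathcal{L}^{\tau}$ with $\tfrac{1}{T}R(T)$. Concretely, I would first write
\[
    \mathbb{E}_t[\epsilon_t] \le \mathbb{E}_t[f(\mathcal{L}^{\tau}(\boldsymbol{x}_t))]
\]
directly from the assumed bound $\epsilon(\boldsymbol{x}) \le f(\mathcal{L}^{\tau}(\boldsymbol{x}))$ and monotonicity of expectation.

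Next, since $t$ is drawn uniformly from $[T]$ and $f$ is concave, I would apply Jensen's inequality to pull $f$ outside the expectation:
\[
    \mathbb{E}_t[f(\mathcal{L}^{\tau}(\boldsymbol{x}_t))] \le f\big(\mathbb{E}_t[\mathcal{L}^{\tau}(\boldsymbol{x}_t)]\big) = f\Big(\tfrac{1}{T}\sum_{t=1}^{T}\mathcal{L}^{\tau}(\boldsymbol{x}_t)\Big).
\]

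The final step is recognizing the average loss as $\tfrac{1}{T}R(T)$. Here the regret of $\mathcal{A}$ with respect to the loss $\mathcal{L}^{\tau}$ is $R(T) = \sum_{t=1}^{T}\mathcal{L}^{\tau}(\boldsymbol{x}_t) - \min_{\boldsymbol{x}}\sum_{t=1}^{T}\mathcal{L}^{\tau}(\boldsymbol{x})$. Since $\mathcal{L}^{\tau} \ge 0$ everywhere and attains $0$ at a QRE of the entropy-regularized game (which exists and lies in the interior of the simplex product), the comparator term is $0$ and $R(T) = \sum_{t=1}^{T}\mathcal{L}^{\tau}(\boldsymbol{x}_t)$. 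Combining the three inequalities yields $\mathbb{E}_t[\epsilon_t] \le f(\tfrac{1}{T}R(T))$.

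There is no real obstacle, but the one subtle point worth flagging in the write-up is the identification of $R(T)$ with $\sum_t \mathcal{L}^{\tau}(\boldsymbol{x}_t)$: this uses that $\min_{\boldsymbol{x}} \mathcal{L}^{\tau}(\boldsymbol{x}) = 0$ via the existence of an interior QRE, which is exactly why the entropy bonus was introduced in Section~\ref{subsec:interior_eq}. One should also briefly note that concavity of $f$ (not convexity) is what makes Jensen's inequality go in the useful direction here, and that $f$ as instantiated by Lemma~\ref{lemma:qre_to_ne}, which takes the form $a + b\sqrt{\cdot}$, is indeed concave on $[0,\infty)$, so the lemma applies to our setting.
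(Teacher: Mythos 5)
Your proposal is correct and follows essentially the same route as the paper's proof: pointwise application of the assumed bound, Jensen's inequality for concave $f$, and identification of the average loss with $\tfrac{1}{T}R(T)$. Your explicit justification that the comparator term vanishes (since $\mathcal{L}^{\tau} \ge 0$ attains $0$ at an interior QRE) is left implicit in the paper, and your use of ``$\le$'' in the first step is actually more careful than the paper's, which writes it as an equality.
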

\begin{proof}
\begin{align}
    \mathbb{E}_t[\epsilon_t] &= \mathbb{E}_t[f(\mathcal{L}(\boldsymbol{x}_t))]
    \le f(\mathbb{E}_t[\mathcal{L}(\boldsymbol{x}_t)])
    = f(\frac{1}{T} \sum_t \mathcal{L}(\boldsymbol{x}_t))
    = f(\frac{1}{T} R(T))
\end{align}
where the inequality follows from Jensen's inequality.
\end{proof}

\begin{theorem}[BLiN PAC Rate]\label{theorem:blin_rate}
Assume $\eta_k = \eta = 2 / \hat{L}$ as defined in Lemma~\ref{corr:loss_grad_inf_norm_p}, $\tau = \frac{1}{\ln(1/p)}$ so that all equilibria place at least $\frac{p}{m^*}$ mass on each strategy, and a previously pulled arm is returned uniformly at random (i.e., $t \sim U(T)$). Then for any $w > 0$,
\begin{align}
    \epsilon_t &\le w \Big[ \tau \log\Big(\prod_k m_k\Big) + 2 (1 + (4c^2 C_z)^{1/3}) \sqrt{2n \hat{L}} \Big(\frac{\ln T}{T}\Big)^{\frac{1}{2(d_z + 2)}} \Big]
\end{align}
with probability $(1 - w^{-1})(1 - 2 T^{-2})$ where
$m^* = \max_k m_k$, and $c \le \frac{n \bar{m}}{\hat{L}} \Big( \frac{\ln(m^*)}{\ln(1/p)} + 2 \Big)^2$ is an upper bound on the range of $\hat{\mathcal{L}}^{\tau}$ (Corollary~\ref{corr:loss_bound_sto}), $\hat{L} = \Big( \frac{\ln(m^*)}{\ln(1/p)} + 2 \Big) \Big( \frac{m^{*2}}{p \ln(1/p)} + n \bar{m} \Big)$ (Corollary~\ref{corr:loss_grad_inf_norm_p}), the zooming dimension $\textcolor{highlight}{d_z} = \frac{1}{2}n\bar{m}$, and the zooming constant $\textcolor{highlight}{C_z} = \vert \mathcal{X}^* \vert^{-1} \Big( \frac{4}{r_{\eta}^{2} \sigma_{-\infty}} \Big)^{n\bar{m}}$ (Corollary~\ref{lemma:zooming_constant}).
\end{theorem}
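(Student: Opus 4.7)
The plan is to compose three ingredients that are already established earlier: the exploitability-to-loss inequality (Lemma~\ref{lemma:qre_to_ne}), the regret-to-expected-exploitability conversion (Lemma~\ref{lemma:loss_regret_to_eps_regret}), and the BLiN cumulative-regret rate (Corollary~\ref{corr:regret_set_c}). Concretely, I would first note that Lemma~\ref{lemma:qre_to_ne} gives the pointwise bound $\epsilon(\boldsymbol{x}) \le f_\tau(\mathcal{L}^\tau(\boldsymbol{x}))$ with $f_\tau(y) = \tau\log(\prod_k m_k) + \sqrt{2n/\min_k \eta_k}\sqrt{y}$. Under the assumption $\eta_k = \eta = 2/\hat{L}$ this simplifies to $f_\tau(y) = \tau\log(\prod_k m_k) + \sqrt{n\hat{L}}\sqrt{y}$, which is concave in $y$ (constant plus square root). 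Since BLiN is run on the reward $\mu(x) = -\mathcal{L}^\tau(s(x))$ and any interior QRE satisfies $\mu(x^*)=0$, the cumulative regret equals the cumulative loss, so for $t \sim U([T])$ we have $\mathbb{E}_t[\mathcal{L}^\tau(\boldsymbol{x}_t)] = R(T)/T$. Applying Lemma~\ref{lemma:loss_regret_to_eps_regret} with the concave $f_\tau$ then yields $\mathbb{E}_t[\epsilon_t] \le \tau\log(\prod_k m_k) + \sqrt{n\hat{L}}\sqrt{R(T)/T}$.

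Second, I would substitute the high-probability regret bound of Corollary~\ref{corr:regret_set_c}, namely $R(T) \le 8(1+(4c^2 C_z)^{1/3})^2 T^{(d_z+1)/(d_z+2)}(\ln T)^{1/(d_z+2)}$, which holds with probability at least $1 - 2T^{-2}$. Dividing by $T$, taking the square root, and pulling constants out of the radical gives $\sqrt{R(T)/T} \le 2\sqrt{2}(1+(4c^2 C_z)^{1/3})(\ln T/T)^{1/(2(d_z+2))}$, so on the regret-bound event
\[
\mathbb{E}_t[\epsilon_t] \le \tau\log\Big(\prod_k m_k\Big) + 2(1+(4c^2 C_z)^{1/3})\sqrt{2n\hat{L}}\Big(\frac{\ln T}{T}\Big)^{\frac{1}{2(d_z+2)}},
\]
which matches the bracketed quantity in the theorem statement.

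Finally, to turn this in-expectation statement into the advertised high-probability bound, I would invoke Markov's inequality on the random arm index $t$. Conditional on the regret-bound event, $\epsilon_t$ is a nonnegative function of $t \sim U([T])$, so for any $w > 0$ we have $\Pr[\epsilon_t \le w\,\mathbb{E}_t[\epsilon_t]] \ge 1 - w^{-1}$. Multiplying this conditional probability against the BLiN success probability $1 - 2T^{-2}$ yields the stated joint probability $(1 - w^{-1})(1 - 2T^{-2})$, completing the bound. The main obstacle, and the only delicate part of the argument, is the careful sequencing of the two independent sources of randomness: the bandit sampling, over which the Hoeffding-based regret bound of Corollary~\ref{corr:regret_set_c} concentrates, must be handled first so that $R(T)/T$ is a deterministic upper bound inside the conditioning, after which Jensen's inequality (inside Lemma~\ref{lemma:loss_regret_to_eps_regret}) and Markov's inequality are applied over the index $t$. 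Once this conditioning order is pinned down, the remainder is just algebraic substitution of the explicit forms of $\hat{L}$, $c$, $d_z$, and $C_z$ supplied by Corollaries~\ref{corr:loss_grad_inf_norm_p} and~\ref{corr:loss_bound_sto} and Lemma~\ref{lemma:zooming_constant}.
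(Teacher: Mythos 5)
Your proposal is correct and follows essentially the same route as the paper's own proof: Lemma~\ref{lemma:qre_to_ne} combined with the Jensen-based conversion of Lemma~\ref{lemma:loss_regret_to_eps_regret}, the BLiN regret bound of Corollary~\ref{corr:regret_set_c}, and a final Markov inequality over $t \sim U([T])$, with matching algebra ($\sqrt{2n/\eta} = \sqrt{n\hat{L}}$ and $\sqrt{8} = 2\sqrt{2}$ absorbed into $\sqrt{2n\hat{L}}$). Your explicit remark about conditioning on the regret-bound event before applying Jensen and Markov makes precise a sequencing the paper leaves implicit, but it is the same argument.
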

\begin{proof}
Assume $\eta_k = \eta = \frac{2}{\hat{L}}$ as defined in Lemma~\ref{corr:loss_grad_inf_norm_p} so that $\mathcal{L}^{\tau}$ is $1$-Lipschitz with respect to $||\cdot||_{\infty}$. Also assume a previously pulled arm is returned uniformly at random. Starting with Lemma~\ref{lemma:qre_to_ne} and applying Corollary~\ref{corr:regret_set_c}, we find
\begin{subequations}
\begin{align}
    \mathbb{E}[\epsilon_t] &\le \tau \log\Big(\prod_k m_k\Big) + \sqrt{\frac{2n}{\min_k \eta_k}} \sqrt{\frac{1}{T} \sum_t \mathcal{L}^{\tau}(\boldsymbol{x}_t)}
    \\ &\le \frac{1}{\ln(1/p)} \log\Big(\prod_k m_k\Big) + \sqrt{n \hat{L}} \sqrt{ 8 (1 + (4c^2)^{1/3})^2 T^{\frac{-1}{(d_z + 2)}} \ln T^{\frac{1}{(d_z + 2)}} }
    \\ &= \frac{1}{\ln(1/p)} \log\Big(\prod_k m_k\Big) + 2 (1 + (4c^2 C_z)^{1/3}) \sqrt{2n \hat{L}} \Big(\frac{\ln T}{T}\Big)^{\frac{1}{2(d_z + 2)}}
\end{align}
\end{subequations}
with probability $1 - 2 T^{-2}$ where
$m^* = \max_k m_k$, and $c \le \frac{n \bar{m}}{\hat{L}} \Big( \frac{\ln(m^*)}{\ln(1/p)} + 2 \Big)^2$ is an upper bound on the range of sampled values from $\hat{\mathcal{L}}^{\tau}$ (see Corollary~\ref{corr:loss_bound_sto}).

Markov's inequality then allows us to bound the pointwise exploitability of any arm returned by the algorithm as
\begin{align}
    \epsilon_t &\le w \Big[ \frac{1}{\ln(1/p)} \log\Big(\prod_k m_k\Big) + 2 (1 + (4c^2 C_z)^{1/3}) \sqrt{2n \hat{L}} \Big(\frac{\ln T}{T}\Big)^{\frac{1}{2(d_z + 2)}} \Big]
\end{align}
with probability $(1 - w^{-1})(1 - 2 T^{-2})$ for any $w > 0$.
\end{proof}

\begin{theorem}[StoSOO PAC Rate]\label{theorem:stosoo_rate}
Corollary $1$ of~\cite{valko2013stochastic} implies that with probability $(1 - w^{-1})(1 - \delta)$ for any $w > 0$, a uniformly randomly drawn arm (i.e., $t \sim U([T])$) achieves
\begin{align}
    \epsilon_t &\le w \Big[ \frac{1}{\ln(1/p)} \log\Big(\prod_k m_k\Big) + \sqrt{n \hat{L}} \sqrt{ \xi_1 \sqrt{\frac{\log_b(Tk/\delta)}{2\log_b(e) k}} + \xi_2 b^{-\frac{1}{dC} \sqrt{T/k}} } \Big]
\end{align}
where $d=n(\bar{m}-1)$, $\xi_1 = (2 + 2^{2/d})$, $\xi_2 = \frac{1}{4} d b^{2(1+2/d)}$, $k = T \log_b(T)^{-3}$, $b$ is the branching factor for partitioning cells, and the near-optimality constant $C = \vert \mathcal{X}^* \vert^{-1} \sqrt{2 \pi d} \Big( \frac{b^2 d^2}{5 r_{\eta}^{2} \sigma_{-2}} \Big)^{d / 2}$ (Lemma~\ref{lemma:near_opt_constant}).
\end{theorem}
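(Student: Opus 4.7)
The plan is to mirror the derivation of Theorem~\ref{theorem:blin_rate} exactly, swapping out the BLiN cumulative regret bound for the StoSOO simple regret bound of Theorem~\ref{theorem:stosoo_regret}. First I would fix $\tau = \ln(1/p)^{-1}$ via Lemma~\ref{lemma:set_tau} and set $\eta_k = \eta = 2/\hat{L}$, where $\hat{L}$ is the Lipschitz constant from Corollary~\ref{corr:loss_grad_inf_norm_p}. This normalization makes $\mathcal{L}^{\tau}\circ s$ $1$-Lipschitz under the metric used by Valko et al., and simultaneously collapses $\sqrt{2n/\min_k \eta_k}$ into the $\sqrt{n\hat{L}}$ factor that appears in the theorem statement.

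Next I would invoke Lemma~\ref{lemma:qre_to_ne} at the randomly drawn arm $\boldsymbol{x}_t$ to obtain
\[
\epsilon_t \le \tfrac{1}{\ln(1/p)}\log\big(\textstyle\prod_k m_k\big) + \sqrt{n\hat{L}}\,\sqrt{\mathcal{L}^{\tau}(\boldsymbol{x}_t)}.
\]
Taking expectation over $t \sim U([T])$ and pulling the square root outside by Jensen's inequality (Lemma~\ref{lemma:loss_regret_to_eps_regret}, using concavity of $\sqrt{\cdot}$) replaces $\mathcal{L}^{\tau}(\boldsymbol{x}_t)$ under the root by the average loss over the StoSOO trajectory. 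Substituting the high-probability bound of Theorem~\ref{theorem:stosoo_regret}, which holds with probability $1-\delta$ and uses the near-optimality constant $C$ computed in Lemma~\ref{lemma:near_opt_constant} for the squared $\ell_2$ metric, introduces the inner quantity $\xi_1\sqrt{\log_b(Tk/\delta)/(2\log_b(e)k)} + \xi_2 b^{-\sqrt{T/k}/(dC)}$ stated in the theorem.

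Finally I would apply Markov's inequality to lift the expected bound into a high-probability pointwise statement at the random arm: for any $w>0$, $\Pr[\epsilon_t > w\,\mathbb{E}_t[\epsilon_t]] \le w^{-1}$. A union bound against the $1-\delta$ confidence event from the StoSOO regret bound yields the claimed joint probability $(1-w^{-1})(1-\delta)$.

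The hard part will be purely bookkeeping rather than conceptual: I need to check that the squared $\ell_2$ metric assumed by StoSOO lines up with the constants $c,\gamma,\nu$ tabulated in Table~\ref{tab:a2_constants} (which feed Lemma~\ref{lemma:near_opt_constant}), and to verify that $\eta = 2/\hat{L}$ indeed makes $\mathcal{L}^{\tau}\circ s$ Lipschitz once one accounts for the Jacobian of the hypercube-to-simplex map $s$ studied in Appendix~\ref{app:cube_to_simplex}. Beyond this algebra, no genuinely new ingredient is required: the exploitability-to-loss conversion, the Lipschitz/variance bounds, and the Markov-with-union-bound argument are already assembled for Theorem~\ref{theorem:blin_rate}.
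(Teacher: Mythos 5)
Your proposal is correct and follows essentially the same route as the paper's proof: Lemma~\ref{lemma:qre_to_ne} converts exploitability to loss, Lemma~\ref{lemma:loss_regret_to_eps_regret} (Jensen) passes to the average loss over the trajectory, Theorem~\ref{theorem:stosoo_regret} supplies the StoSOO bound with the constant $C$ from Lemma~\ref{lemma:near_opt_constant}, and Markov's inequality combined with the $1-\delta$ confidence event gives the $(1-w^{-1})(1-\delta)$ guarantee, all with the same parameter choices $\tau = \ln(1/p)^{-1}$ and $\eta = 2/\hat{L}$. The bookkeeping concerns you flag (the Jacobian of the map $s$ and the Table~\ref{tab:a2_constants} constants) are likewise left implicit in the paper's own proof.
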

\begin{proof}
Assume $\eta_k = \eta = \frac{2}{\hat{L}}$ as defined in Lemma~\ref{corr:loss_grad_inf_norm_p}. Also assume a previously pulled arm is returned uniformly at random. Starting with Lemma~\ref{lemma:qre_to_ne} and applying Theorem~\ref{theorem:stosoo_regret} and Lemma~\ref{lemma:loss_regret_to_eps_regret}, we find
\begin{subequations}
\begin{align}
    \mathbb{E}[\epsilon_t] &\le \tau \log\Big(\prod_k m_k\Big) + \sqrt{\frac{2n}{\min_k \eta_k}} \sqrt{\frac{1}{T} \sum_t \mathcal{L}^{\tau}(\boldsymbol{x}_t)}
    \\ &\le \frac{1}{\ln(1/p)} \log\Big(\prod_k m_k\Big) + \sqrt{n \hat{L}} \sqrt{ \xi_1 \sqrt{\frac{\log_b(Tk/\delta)}{2\log_b(e) k}} + \xi_2 b^{-\frac{1}{dC} \sqrt{T/k}} }
\end{align}
\end{subequations}
with probability $1 - \delta$ where $m^* = \max_k m_k$, $d=n(\bar{m}-1)$, $\xi_1 = (2 + 2^{2/d})$, $\xi_2 = \frac{1}{4} d b^{2(1+2/d)}$, $k = T \log_b(T)^{-3}$, $b$ is the branching factor for partitioning cells, and the near-optimality constant $C = \vert \mathcal{X}^* \vert^{-1} \sqrt{2 \pi d} \Big( \frac{b^2 d^2}{5 r_{\eta}^{2} \sigma_{-2}} \Big)^{d / 2}$ (Lemma~\ref{lemma:near_opt_constant}).

Markov's inequality then allows us to bound the pointwise exploitability of any arm returned by the algorithm as
\begin{align}
    \epsilon_t &\le w \Big[ \frac{1}{\ln(1/p)} \log\Big(\prod_k m_k\Big) + \sqrt{n \hat{L}} \sqrt{ \xi_1 \sqrt{\frac{\log_b(Tk/\delta)}{2\log_b(e) k}} + \xi_2 b^{-\frac{1}{dC} \sqrt{T/k}} } \Big]
\end{align}
with probability $(1 - w^{-1})(1 - \delta)$ for any $w > 0$.
\end{proof}

\subsection{Complexity of Polymatrix Games}

\begin{lemma}
For a polymatrix game defined by the set of bimatrix games with payoff matrix $P^k_{kl}$ for every player $k$ and $l \ne k$, the rank of the matrix $M(x)$ defined in~\eqref{eqn:test_matrix} can be equivalently studied by replacing all instances of $H^k_{kl}$ with $P^k_{kl}$.
\end{lemma}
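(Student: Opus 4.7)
The plan is to establish the claim by proving that $\ker M_H(x) = \ker M_P(x)$, where $M_H$ denotes the matrix in~\eqref{eqn:test_matrix} built from $H^k_{kl}$ and $M_P$ denotes the matrix built by replacing each $H^k_{kl}$ with $P^k_{kl}$. Since both matrices share the same dimensions $n(\bar{m}+1)\times n\bar{m}$, equal kernels yield equal rank via rank--nullity.

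First, I would exploit the additive structure of polymatrix utilities, $u_k(\boldsymbol{x}) = \sum_{l\neq k} x_k^\top P^k_{kl} x_l$, to compute $H^k_{kl}$ explicitly. Marginalizing every player other than $k$ and $l$ gives
\begin{align*}
H^k_{kl} \;=\; P^k_{kl} \;+\; c_{kl}(x)\,\mathbf{1}_{m_l}^\top,
\qquad
c_{kl}(x) \;:=\; \sum_{q\neq k,l} P^k_{kq} x_q \;\in\; \mathbb{R}^{m_k},
\end{align*}
since the contributions from pairs $(k,q)$ with $q\neq l$ depend on the action $a_k$ only and are therefore constant along the $a_l$-axis. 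Next, I would invoke the block structure of $M(x)$: the lower $n$ block-rows are the patterns $\mathbf{1}_l^\top$, which force every $z \in \ker M(x)$ to satisfy $\mathbf{1}_{m_l}^\top z_l = 0$ for each $l$; hence the kernel is confined to the simplex tangent space $\prod_l T\Delta^{m_l-1}$. On this subspace the $H$-vs-$P$ discrepancy disappears, because
\begin{align*}
\bigl(\Pi_{T\Delta}(H^k_{kl}) - \Pi_{T\Delta}(P^k_{kl})\bigr)\, z_l
\;=\; \Pi_{T\Delta}(c_{kl})\,(\mathbf{1}_{m_l}^\top z_l)
\;=\; 0,
\end{align*}
while the diagonal entropy blocks $-\tau\sqrt{\eta_k}\,\Pi_{T\Delta}(1/x_k)$ and the bottom identity block do not involve $H$ or $P$ at all and are identical in $M_H$ and $M_P$.

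Combining these observations, $M_H z = 0$ first pins $z$ to the simplex tangent space (via the bottom identity rows) and then, restricted to that subspace, coincides with $M_P z = 0$; the implications reverse symmetrically, so $\ker M_H(x) = \ker M_P(x)$ and the ranks agree. The only step that is not purely mechanical is the derivation of the rank-one form $H^k_{kl} = P^k_{kl} + c_{kl}\mathbf{1}^\top$, but this follows immediately from the definition of a polymatrix utility together with linearity of expectation; I do not anticipate a genuine obstacle.
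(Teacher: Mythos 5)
Your proposal is correct and takes essentially the same route as the paper: both rest on the rank-one identity $H^k_{kl} = P^k_{kl} + c_{kl}\,\mathbf{1}^\top$ (the paper's ``constant-rows'' matrix $C_k$, which, as you note more carefully, depends on $l$ as well) and on the observation that the bottom ones-rows of $M(x)$ neutralize this discrepancy. The only difference is bookkeeping: the paper eliminates the $c_{kl}\mathbf{1}^\top$ terms by elementary row operations using those bottom rows, whereas you show $\ker M_H = \ker M_P$ (any kernel vector satisfies $\mathbf{1}^\top z_l = 0$, on which subspace the two matrices act identically) and conclude equal rank by rank--nullity --- the null-space dual of the same step.
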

\begin{proof}
Consider the polymatrix game given by the set of matrices $P^k_{kl}$ for every $k$ and $l \ne k$. The polymatrix game can be equivalently written in normal form, albeit, less concisely. Note that the polymatrix approximation, $H^k_{kl}$, as we have defined it (see Section~\ref{sec:prelims} Preliminaries) of this normal-form representation between players $k$ and $l$ with all other players' strategies marginalized out is related to the true underlying bimatrix game between them as follows:
\begin{subequations}
\begin{align}
    H^k_{kl}[a_k, a_l] &= \mathbb{E}_{x_{-kl}}[u_k(a_k, a_l, x_{-kl})] \,\, \forall a_k, a_l
    \\ &= a_k^\top \Big( P^k_{kl} a_l + \sum_{j \not\in \{k, l\}}  P^k_{kj} x_j \Big) \,\, \forall a_k, a_l
    \\ &= a_k^\top P^k_{kl} a_l + \underbrace{a_k^\top \Big( \sum_{j \not\in \{k, l\}}  P^k_{kj} x_j \Big)}_{p_k} \,\, \forall a_k, a_l
\end{align}
\end{subequations}
where $P^k_{kl}$ is player $k$'s payoff matrix for the bimatrix game between players $k$ and $l$ in a polymatrix game and $p_k$ does not depend on player $l$'s strategy.

This implies that $H^k_{kl}$ is equal to $P^k_{kl}$ up to a constant offset of the rows, i.e.,
\begin{align}
    H^k_{kl} &= P^k_{kl} + C_k
\end{align}
where $C_k$ is a matrix with constant rows.

Consider the matrix $M(x)$ which contains $H^k_{kl}$ blocks. Recall that the bottom rows of $M(x)$ contain rows of $1$'s matching each column of $H^k_{kl}$ blocks. Consider multiplying the $l$th row of $1$'s (which contains $0$'s on all columns not in the $l$th block) by $\sqrt{\eta_k} [I - \frac{1}{m_k} \mathbf{1}_k \mathbf{1}_k^\top] C_k$ and subtracting it from the block containing $\sqrt{\eta_k} [I - \frac{1}{m_k} \mathbf{1}_k \mathbf{1}_k^\top] H^k_{kl}$,
\begin{align}
    \sqrt{\eta_k} [I - \frac{1}{m_k} \mathbf{1}_k \mathbf{1}_k^\top] [H^k_{kl} - C_k] &= \sqrt{\eta_k} [I - \frac{1}{m_k} \mathbf{1}_k \mathbf{1}_k^\top] P^k_{kl}.
\end{align}
Note that $\sqrt{\eta_k} [I - \frac{1}{m_k} \mathbf{1}_k \mathbf{1}_k^\top] C_k$ still remains a matrix with constant rows (the preconditioner effectively subtracts a constant matrix from $C_k$). This multiplying and subtracting a row from another is an elementary operation on the matrix, meaning it does not change its row rank. Therefore, for a polymatrix game, we can reason about the positive definiteness of the Hessian at equilibria by examining the matrix $M(x)$ with all $H^k_{kl}$'s swapped for $P^k_{kl}$'s.
\end{proof}

Interestingly, at zero temperature (where QRE = Nash), $M$ is constant for a polymatrix game, so the rank of this matrix can be computed just once to extract information about all possible interior equilibria in the game.
Furthermore, the Hessian is positive semi-definite over the entire joint strategy space, implying the loss function is convex (see Figure~\ref{fig:loss_surface} (left) for empirical support). This indicates, by convex optimization theory, 1) all mixed Nash equilibria in polymatrix games form a convex set (i.e., they are connected) and 2) assuming mixed equilibria exist, they can be computed simply by stochastic gradient descent on $\mathcal{L}$. If $M$ is rank-$n\bar{m}$, then this interior equilibrium is unique.

\paragraph{Complexity}
Approximation of Nash equilibria in polymatrix games is known to be PPAD-hard~\citep{deligkas2022pure}. In contrast, if we restrict our class of polymatrix games to those with at least one interior Nash equilibrium, our analysis proves we can find an approximate Nash equilibrium in deterministic, polynomial time (Corollary~\ref{corr:poly_fptas}). This follows directly from the fact that $\mathcal{L}$ is convex, our decision set $\mathcal{X} = \prod_k \mathcal{X}_k$ is convex, and convex optimization theory admits polynomial time approximation algorithms (e.g., gradient descent). We consider the assumption of the existence of an interior Nash equilibrium to be relatively mild\footnote{\citet{marris2023equilibrium} shows $2$-player, $2$-action polymatrix games with interior Nash equilibria make up a non-trivial $\nicefrac{1}{4}$ of the space of possible $2 \times 2$ games.}, so this positive complexity result is surprising.

Also, note that the Hessian of the loss at Nash equilibria is encoded entirely by the polymatrix approximation at the equilibrium. Therefore, approximating the Hessian of $\mathcal{L}$ about the equilibrium (which amounts to observing near-equilibrium behavior~\citep{ling2018game}) allows one to recover this polymatrix approximation (up to constant offsets of the columns which equilibria are invariant to~\citep{marris2023equilibrium}).

\begin{corollary}[Approximating Nash Equilibria of Polymatrix Games with Interior Equilibria]\label{corr:poly_fptas}
Consider the class of polymatrix games with interior Nash equilibria. This class of games admits a fully polynomial time deterministic approximation scheme (FPTAS).
\end{corollary}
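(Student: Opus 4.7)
The plan is to exhibit a deterministic, polynomial-time algorithm that, on input a polymatrix game with an interior Nash equilibrium and a target accuracy $\epsilon>0$, outputs an $\epsilon$-Nash equilibrium. The natural candidate is projected gradient descent on the loss $\mathcal{L}(\boldsymbol{x})$ from \eqref{eqn:loss_notau} (at temperature $\tau=0$) over the convex product-of-simplices $\mathcal{X}=\prod_k \Delta^{m_k-1}$.

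First, I would assemble the three structural facts already in place: (i) $\mathcal{L}$ is convex on $\mathcal{X}$ for polymatrix games, as argued in the preceding paragraph (the Hessian in Lemma~\ref{lemma:hess} reduces at $\tau=0$ to $2\tilde{B}^\top \tilde{B}+T\Pi_{T\Delta}(\tilde{\nabla})$, and the polymatrix replacement of $H^k_{kl}$ by the constant matrices $P^k_{kl}$ kills the $T$ term so the Hessian equals $2\tilde{B}^\top\tilde{B}\succeq 0$ everywhere, not just at equilibria); (ii) by Lemma~\ref{lemma:zero_exp_implies_zero_proj_grad_norm}, an interior Nash equilibrium $\boldsymbol{x}^*$ satisfies $\Pi_{T\Delta}(\nabla^k_{x_k^*})=\mathbf{0}$ for every $k$, so $\mathcal{L}(\boldsymbol{x}^*)=0$, meaning the global minimum is attained and equals zero; (iii) Lemma~\ref{lemma:loss_to_eps} converts any loss bound into an exploitability bound: $\epsilon\le\sqrt{2n/\min_k\eta_k}\sqrt{\mathcal{L}(\boldsymbol{x})}$.

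Second, I would verify polynomial-time oracle access. The polymatrix game is specified by $O(n^2 m^{*2})$ payoff entries, and for polymatrix games the player gradient $\nabla^k_{x_k}=\sum_{l\ne k}P^k_{kl}x_l$ is computable exactly in $O(nm^{*2})$ time, so by Lemma~\ref{lemma:loss_grad} the full gradient $\nabla_{\boldsymbol{x}}\mathcal{L}$ is computable exactly in $\mathrm{poly}(n,m^*)$ time. Projection onto $\mathcal{X}$ decomposes into per-player simplex projections, each implementable in $O(m^*\log m^*)$. Lemma~\ref{lemma:loss_bound} bounds $\mathcal{L}\in[0,1]$ when $\eta_k\le 4/(n\bar m)$, and Corollary~\ref{corr:loss_grad_inf_norm_p} (specialized to $\tau=0$, or a direct analogue from Lemma~\ref{lemma:loss_grad_inf_norm} in the limit) gives a $\mathrm{poly}(n,m^*)$ bound on the gradient norm.

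Third, I would invoke the standard convergence rate for projected gradient descent on a convex, $G$-Lipschitz function over a diameter-$D$ convex set: after $T=O(D^2 G^2/\tilde\epsilon^2)$ steps, the average iterate $\bar{\boldsymbol{x}}_T$ satisfies $\mathcal{L}(\bar{\boldsymbol{x}}_T)-\mathcal{L}(\boldsymbol{x}^*)=\mathcal{L}(\bar{\boldsymbol{x}}_T)\le\tilde\epsilon$. Setting $\tilde\epsilon=\epsilon^2\min_k\eta_k/(2n)$ and applying Lemma~\ref{lemma:loss_to_eps} produces an $\epsilon$-Nash equilibrium in $\mathrm{poly}(n,m^*,1/\epsilon)$ iterations, each of polynomial cost, yielding an FPTAS.

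The main obstacle will be item (i): establishing global (not merely local) convexity of $\mathcal{L}$ on all of $\mathcal{X}$ in the polymatrix case. The paper's text asserts this, but a clean proof needs to show that the $T\Pi_{T\Delta}(\tilde\nabla)$ term in the Hessian of Lemma~\ref{lemma:hess} identically vanishes for polymatrix games at $\tau=0$ (since $T^k_{lqk}\equiv 0$ when utilities are sums of bimatrix terms), so that $\mathrm{Hess}(\mathcal{L})=2\tilde B^\top\tilde B\succeq 0$ uniformly on $\mathcal{X}$. With that in hand, the remainder is a bookkeeping exercise in standard convex-optimization rates combined with the already-proved loss-to-exploitability conversion.
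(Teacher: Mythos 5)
Your overall architecture is exactly the paper's proof: reduce Nash approximation to minimizing $\mathcal{L}$ over the convex set $\mathcal{X}=\prod_k \Delta^{m_k-1}$ via Lemma~\ref{lemma:loss_to_eps}, note that an interior equilibrium makes the global minimum value zero (Lemma~\ref{lemma:zero_exp_implies_zero_proj_grad_norm}), observe that the gradient (Lemma~\ref{lemma:loss_grad} at $\tau=0$) is computable in time polynomial in the polymatrix description, and invoke projected gradient descent on a convex Lipschitz objective. Your quantitative bookkeeping (diameter, Lipschitz constant, $T=O(D^2G^2/\tilde\epsilon^2)$, the choice $\tilde\epsilon=\epsilon^2\min_k\eta_k/(2n)$) is in fact more explicit than the paper's bare appeal to ``convex optimization theory.''

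The one flaw is in your justification of step (i), which you yourself flag as the crux. It is \emph{not} true that $T^k_{klq}\equiv 0$ for polymatrix games: by the paper's definition $T^k_{klq}=\mathbb{E}_{x_{-klq}}[u_k(a_k,a_l,a_q,x_{-klq})]$ is a marginal payoff tensor, not a third derivative of a multilinear-free utility, and the object actually entering the Hessian, $D_{x_q}[H^k_{kl}]$, equals the broadcast of $P^k_{kq}$ over the $a_l$ index --- generically nonzero. What is true is weaker but sufficient: since $D_{x_q}[H^k_{kl}][a_k,a_l,a_q]=P^k_{kq}[a_k,a_q]$ is independent of $a_l$, the contracted matrix $T^k_{lqk}\,\Pi_{T\Delta}(\nabla^{k}_{x_k})$ appearing in the off-diagonal blocks of Lemma~\ref{lemma:hess} has entries constant in $a_l$, so any tangent vector $z_l\perp\mathbf{1}$ annihilates it from the left; hence the $T$-term vanishes \emph{as a quadratic form on the tangent space of the simplex product} (the only directions relevant for optimization over $\mathcal{X}$), leaving $z^\top\texttt{Hess}(\mathcal{L})\,z = 2\,z^\top\tilde{B}^\top\tilde{B}\,z\ge 0$. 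A cleaner repair avoids the Hessian entirely: for a polymatrix game at $\tau=0$, $\nabla^k_{x_k}=\sum_{l\ne k}P^k_{kl}x_l$ is affine in $\boldsymbol{x}$, so each $||\Pi_{T\Delta}(\nabla^k_{x_k})||^2$ is the squared norm of an affine map and $\mathcal{L}$ is convex outright. With either fix your plan goes through and coincides with the paper's argument --- whose own wording (``positive definite'' everywhere) is, for what it is worth, also an overstatement: positive semi-definiteness on the tangent space is what holds and all that is needed.
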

\begin{proof}
Lemma~\ref{lemma:loss_to_eps} relates the approximation of Nash equilibria to the minimization of the loss function $\mathcal{L}(\boldsymbol{x})$. By Lemma~\ref{lemma:zero_exp_implies_zero_proj_grad_norm}, this loss function attains its minimum value of zero if and only if $\boldsymbol{x}$ is a Nash equilibrium. For polymatrix games, the Hessian of this loss function is everywhere finite and positive definite (Lemma~\ref{lemma:hess}), therefore, this loss function is convex. The decision set for this minimization problem is the product space of simplices, therefore it is also convex. Given that we only consider polymatrix games with interior Nash equilibria, we know that our loss function attains a global minimum within this set. By convex optimization theory, this function can be approximately minimized in a polynomial number of steps by, for example, (projected) gradient descent~\citep{boyd2004convex}. Gradient descent requires computing the gradient of the loss function at each step. From Lemma~\ref{lemma:loss_grad}, we see that computing the gradient (at zero temperature) simply requires reading the polymatrix description of the game (i.e., each bi-matrix game $H^k_{kl}$ between players), which is clearly polynomial in the size of the input (the polymatrix description). The remaining computational steps of gradient descent (e.g., projection onto simplices) are polynomial as well. In conclusion, gradient descent approximates a Nash equilibrium in polynomial number of steps (logarithmic if strongly-convex~\citep{mairal2013optimization}), each of which costs polynomial time, therefore the entire scheme is polynomial.
\end{proof}

\newpage
\section{Experimental Setup and Details}

Here we provide further details on the experiments.

\subsection{GAMBIT}
The seven methods from the \texttt{gambit}~\citep{mckelvey2014gambit} library that we tested on the $3$-player and $4$-player Blotto games are listed below (with runtimes). Only \texttt{gambit-enumpoly} and \texttt{gambit-enumpure} are able to return any NE for $3$-player Blotto within a $1$ hour time limit (and only pure equilibria). And only \texttt{gambit-enumpure} returns any NE for the $4$-player game.
\begin{itemize}
    \item \texttt{gambit-enumpoly} [73 sec 3-player, timeout 4-player]
    \item \texttt{gambit-enumpure} [72 sec 3-player, 45 sec 4-player]
    \item \texttt{gambit-gnm}
    \item \texttt{gambit-ipa}
    \item \texttt{gambit-liap}
    \item \texttt{gambit-logit}
    \item \texttt{gambit-simpdiv}
\end{itemize}

\subsection{Loss Visualization and Rank Test}
Figure~\ref{fig:loss_surface} and claims made in Section~\ref{sec:analysis} analyze several classical matrix games. We report the payoff matrices in standard row-player / column-player payoff form below. All games are then shifted and scaled so payoffs lie in $[0, 1]$ (i.e., first by subtracting the minimum and then scaling by the max).

RPS:
\begin{align}
    \begin{bmatrix}
        0/0 & -1/1 & 1/-1 \\
        1/-1 & 0/0 & -1/1 \\
        -1/1 & 1/-1 & 0/0
    \end{bmatrix}.
\end{align}

Chicken:
\begin{align} \label{eqn:chicken}
    \begin{bmatrix}
        0/0 & -1/1 \\
        1/-1 & -3/-3
    \end{bmatrix}.
\end{align}

Matching Pennies:
\begin{align}
    \begin{bmatrix}
        1/-1 & -1/1 \\
        -1/1 & 1/-1
    \end{bmatrix}.
\end{align}

Modified-Shapleys:
\begin{align}
    \begin{bmatrix}
        1/-0.5 & 0/1 & 0.5/0 \\
        0.5/0 & 1/-0.5 & 0/1 \\
        0/1 & 0.5/0 & 1/-0.5
    \end{bmatrix}.
\end{align}

Prisoner's Dilemma:
\begin{align} \label{eqn:pd}
    \begin{bmatrix}
        -1/-1 & -3/0 \\
        0/-3 & -2/-2
    \end{bmatrix}.
\end{align}

\subsubsection{Loss on Familiar Games}
We visualize our proposed loss $\mathcal{L}^{\tau}$ on two classic $2$-player, general-sum games: Chicken, payoff matrix (\ref{eqn:chicken}), and the Prisoner's Dilemma, payoff matrix (\ref{eqn:pd}). Each plot in Figure~\ref{fig:loss_surface} visualizes the loss at various strategy profiles in probability-space; each strategy profile is represented by each player's probability of playing action $1$ of $2$ (top row / first column of payoff matrix). Temperature $\tau$ is varied across the plots. Figure~\ref{fig:loss_surface} repeats this same visualization but in logit-space to better show the equilibria closest to the boundaries.

\begin{figure}[ht]
    \centering
    \includegraphics[width=\textwidth]{figures/chicken/chicken_tau_sum_crc.png}
    \includegraphics[width=\textwidth]{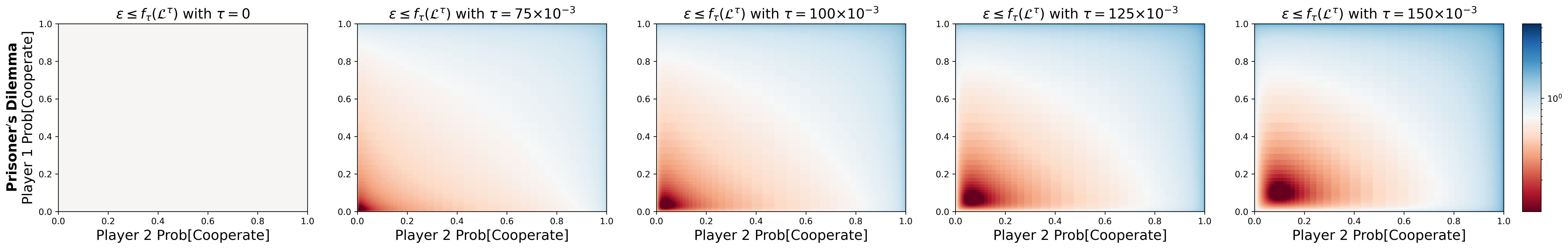}
    \caption{Upper Bound ($\epsilon \le f_{\tau}(\mathcal{L}^{\tau})$) Heatmap Visualization. The first row examines the loss landscape for the classic anti-coordination game of Chicken (Nash equilibria: $(0,1), (1,0), (\nicefrac{2}{3}, \nicefrac{1}{3})$) while the second row examines the Prisoner's dilemma (Unique Nash equilibrium: $(0,0)$). For improved visibility, we subtract the offset $\tau \log(m^2)$ from $f_{\tau}(\mathcal{L}^{\tau})$ per Lemma~\ref{lemma:qre_to_ne}, which relates the exploitability at positive temperature to that at zero temperature. Temperature increases for each plot moving to the right. For high temperatures, interior (fully-mixed) strategies are incentivized while for lower temperatures, nearly pure strategies can achieve minimum exploitability. For zero temperature, pure strategy equilibria (e.g., defect-defect) are not captured by the loss as illustrated by the bottom-left Prisoner's Dilemma plot with a constant loss surface.}
    \label{fig:loss_surface}
\end{figure}

\begin{figure}[ht]
    \centering
    \includegraphics[width=\textwidth]{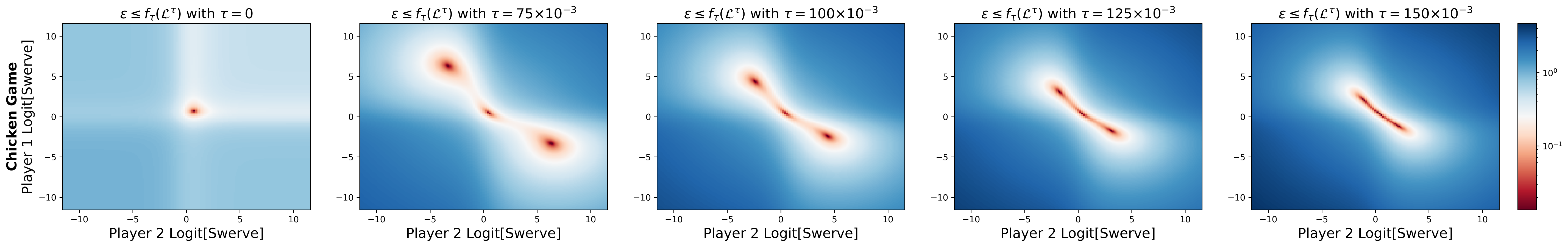}
    \includegraphics[width=\textwidth]{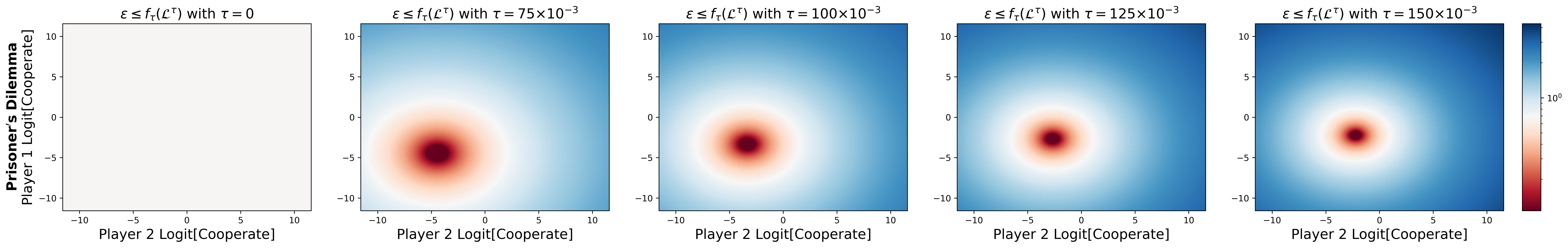}
    \caption{[Figure~\ref{fig:loss_surface} Repeated in Logit-Space ($\ln(\frac{p}{1-p})$) Rather than Probability-Space ($p$)] Upper Bound ($\epsilon \le f_{\tau}(\mathcal{L}^{\tau})$) Heatmap Visualization. The first row examines the loss landscape for the classic anti-coordination game of Chicken (Nash equilibria: $(0,1), (1,0), (\nicefrac{2}{3}, \nicefrac{1}{3})$) while the second row examines the Prisoner's dilemma (Unique Nash equilibrium: $(0,0)$). Temperature increases for each plot moving to the right. For improved visibility, we subtract the offset $\tau \log(m^2)$ from $f_{\tau}(\mathcal{L}^{\tau})$ per Lemma~\ref{lemma:qre_to_ne}, which relates the exploitability at positive temperature to that at zero temperature. For high temperatures, interior (fully-mixed) strategies are incentivized while for lower temperatures, nearly pure strategies can achieve minimum exploitability. For zero temperature, pure strategy equilibria (e.g., defect-defect) are not captured by the loss as illustrated by the bottom-left Prisoner's Dilemma plot with a constant loss surface.}
    \label{fig:loss_surface:logits}
\end{figure}

\subsection{Saddle Point Analysis}
To generate Figure~\ref{fig:saddlepoints}, we follow a procedure similar to the study of MNIST in~\citep{dauphin2014identifying} (Section 3 of Supp.). Their recommended procedure searches for critical points in two ways. The first repeats a randomized, iterative optimization process $20$ times. They then sample one these $20$ trials at random, select a random point along the descent trajectory, and search for a critical point (using Newton's method) nearby. They repeat this sampling process $100$ times. The second approach randomly selects a feasible point in the decision set and searches for a critical point nearby (again using Newton's method). They also perform this $100$ times.

Our protocol differs from theirs slightly in a few respects. One, we use SGD, rather than the saddle-free Newton algorithm to trace out an initial descent trajectory. Two, we do not add noise to strategies along the descent trajectory prior to looking for critical points. Thirdly, we minimize gradient norm rather than use Newton's method to look for critical points. Lastly, we use different experimental hyperparameters. We run SGD for $1000$ iterations rather than $20$ epochs and rerun SGD $100$ times rather than $20$. We sample $1000$ points for each of the two approaches for finding critical points.

\subsection{SGD on Classical Games}
The games examined in Figure~\ref{fig:sgd} were all taken from~\citep{gemp2022sample}. Each is available via open source implementations in OpenSpiel~\citep{LanctotEtAl2019OpenSpiel} or GAMUT~\citep{nudelman2004run}.

We compare against several other baselines, replicating the experiments in~\citep{gemp2022sample}. RM indicates regret-matching and FTRL indicates follow-the-regularized-leader. These are, arguably, the two most popular scalable stochastic algorithms for approximating Nash equilibria. $^y\text{QRE}^{auto}$ is a stochastic algorithm developed in~\citep{gemp2022sample}.

For each of the experiments, we sweep over learning rates in log-space from $10^{-3}$ to $10^2$ in increments of $1$. We also consider whether to run SGD with the projected-gradient and whether to constrain iterates to the simplex via Euclidean projection or entropic mirror descent~\citep{beck2003mirror}. We then presented the results of the best performing hyperparameters. This was the same approach taken in~\citep{gemp2022sample}.

\paragraph{Saddle Points in Blotto} To confirm the existence of saddle points, we computed the Hessian of $\mathcal{L}(\boldsymbol{x_{10k}})$ for SGD ($s=\infty$), deflated the matrix by removing from its eigenvectors all directions orthogonal to the simplex, and then computed its top-$(n\bar{m}-n)$ eigenvalues. We do this because there always exists a $n$-dimensional nullspace of the Hessian at zero temperature that lies outside the tangent space of the simplex, and we only care about curvature within the tangent space. Specificaly, at an equilibrium $\boldsymbol{x}$, if we compute $z^\top Hess(\mathcal{L}) z$ where $z$ is formed as a linear combination of the vectors $\{[x_1, 0, \ldots, 0]^\top, \ldots, [0, \ldots, x_n]^\top$, then each block $\tilde{B}_{kl}$ is identically zero at an equilibrium: $\tilde{B}_{kl} x_l = \sqrt{\eta_k} [I - \frac{1}{m_k} \mathbf{1} \mathbf{1}^\top] H^{k}_{kl} x_l = \sqrt{\eta_k} \Pi_{T\Delta}(\nabla^{k}_{x_k}) = 0$. By Lemma~\ref{lemma:hess}, this implies there is zero curvature of the loss in the direction $z$: $z^\top Hess(\mathcal{L}) z = 0$.

\subsection{BLiN on Artificial Game}
To construct the $7$-player, $2$-action, symmetric, artifical game in Figure~\ref{fig:7p_sym_2a}, we used the following coefficients (discovered by trial-and-error):
\begin{align}
\begin{bmatrix}
    0.09906873 & 0 & 0.23116037 & 0 & 0.62743528 & 0 & 0.19813746 \\
    0 & 0.33022909 & 0 & 0.03302291 & 0 & 0.62743528 & 0
\end{bmatrix}.
\end{align}
The first row indicates the payoffs received when player $i$ plays action $0$ and the background population plays any of the possible joint actions (number of combinations with replacement). For example, the first column indicates the payoff when all background players play action $0$. The second column indicates all background players play action $0$ except for one which plays action $1$, and so on. The last column indicates all background players play action $1$. These $2n$ scalars uniquely define the payoffs of a symmetric game.

Given that this game only has two actions, we represent a mixed strategy by a single scalar $p \in [0, 1]$, i.e., the probability of the first action. Furthermore, this game is symmetric and we seek a symmetric equilibrium, so we can represent a full Nash equilibrium by this single scalar $p$. This reduces our search space from $7 \times 2 = 14$ variables to $1$ variable (and obviates any need for a map $s$ from the unit hypercube to the simplex\textemdash see Lemma~\ref{lemma:jac_softmax_inf}).

\end{document}